\xpatchcmd{\@ssect@ltx}{\@xsect}{\protected@edef\@currentlabelname{#8}\@xsect}{}{}
\xpatchcmd{\@sect@ltx}{\@xsect}{\protected@edef\@currentlabelname{#8}\@xsect}{}{}
\pgfplotsset{compat=1.18} 
\newcommand{\U}{\operatorname{U}}
\newcommand{\SU}{\operatorname{SU}}
\newcommand\rk{\operatorname{rk}}
\newcommand{\e}{\operatorname{e}}
\renewcommand{\d}{\mathrm{d}}
\renewcommand{\i}{\mathrm{i}}
\renewcommand{\P}{\mathbf{P}}
\newcommand{\id}{\mathbb{I}}
\newcommand{\irreps}{\mathrm{Irreps}}
\newcommand{\tr}{\operatorname{Tr}}
\newcommand{\V}{\mathcal{V}}
\newcommand{\W}{\mathcal{W}}
\renewcommand{\S}{\mathbb{S}}
\newcommand{\T}{\mathrm{T}}
\renewcommand\onecolumngrid{%
  \do@columngrid{one}{\@ne}%
  \def\set@footnotewidth{\onecolumngrid}%
  \def\footnoterule{\kern-6pt\hrule width 1.5in\kern6pt}%
}
\newlength\dlf 
\theoremstyle{definition}
\newtheorem*{definition*}{Definition}
\theoremstyle{plain}
\newtheorem{theorem}{Theorem}
\newtheorem*{theorem*}{Theorem}
\newtheorem{corollary}{Corollary}
\newtheorem*{corollary*}{Corollary}
\newtheorem{lemma}{Lemma}
\newtheorem*{lemma*}{Lemma}
\newtheorem{proposition}{Proposition}
\newtheorem*{proposition*}{Proposition}
\theoremstyle{remark}
\newtheorem{remark*}{Remark}
\newtheorem{fact}{Fact}
\newtheorem*{fact*}{Fact}
\newcommand{\bes} {\begin{subequations}}
  \newcommand{\ees} {\end{subequations}}
\newcommand{\bea} {\begin{eqnarray}}
  \newcommand{\eea} {\end{eqnarray}}
\newcommand{\be} {\begin{equation}}
  \newcommand{\ee} {\end{equation}}
\def\>{\rangle}
\def\<{\langle}
\def\Tr{\operatorname{Tr}}
\newcommand{\ident}{\mathbb{I}}
\newcommand{\ignore}[1]{}
\newcommand{\real}{\mathbb{R}}
\newcommand{\complex}{\mathbb{C}}
\newcommand{\hilbert}[1][H]{\mathcal{#1}}
\newcommand{\diff}{\mathop{}\!\mathrm{d}}
\newcommand\givensymbol[1][]{%
  \nonscript#1\vert
  \allowbreak
  \nonscript
  \mathopen{}
}
\newcommand\ngivensymbol[1][]{%
  \nonscript:
  \allowbreak
  \nonscript
  \mathopen{}
}
\DeclarePairedDelimiterX{\parens}[1]{\lparen}{\rparen}{%
 
 #1
}
\DeclarePairedDelimiterX{\bracks}[1]{\lbrack}{\rbrack}{%
 
 #1
}
\DeclarePairedDelimiterX{\braces}[1]{\lbrace}{\rbrace}{%
 
 #1
}
\DeclarePairedDelimiterX{\angles}[1]{\langle}{\rangle}{%
 
 #1
}
\DeclarePairedDelimiter{\verts}{\lvert}{\rvert}
\newcommand{\p}{\parens}
\newcommand{\set}{\braces}
\newcommand{\abs}{\verts}
\DeclarePairedDelimiter{\floor}{\lfloor}{\rfloor}
\DeclarePairedDelimiterX{\inn}[2]{\langle}{\rangle}{#1,#2}
\DeclarePairedDelimiterX{\comm}[2]{\lbrack}{\rbrack}{#1,#2}
\DeclarePairedDelimiterX{\anti}[2]{\lbrace}{\rbrace}{#1,#2}
\DeclarePairedDelimiter{\ket}{\lvert}{\rangle}
\DeclarePairedDelimiterX{\qout}[2]{\lvert}{\rvert}{%
 #1\delimsize\rangle\delimsize\langle\mathopen{}#2
}
\DeclarePairedDelimiterX{\qproj}[1]{\lvert}{\rvert}{%
 #1\delimsize\rangle\delimsize\langle\mathopen{}#1
}
\DeclarePairedDelimiterX{\qinn}[2]{\langle}{\rangle}{%
 #1\givensymbol[\delimsize]#2
}
\DeclarePairedDelimiterX{\qamp}[3]{\langle}{\rangle}{%
 #1\givensymbol[\delimsize]#2\givensymbol[\delimsize]#3
}
\DeclarePairedDelimiterX{\qavg}[2]{\langle}{\rangle}{%
 #2\givensymbol[\delimsize]#1\givensymbol[\delimsize]#2
}
\renewcommand\P[1][]{%
 \ifstrempty{#1}{%
 \mathbf{P}
 }{%
 \mathbf{p}_{#1}
 }%
 }
\newcommand{\ydiag}[1]{\vcenter{\hbox{\tiny \ydiagram{#1}}}}
\NewDocumentCommand{\yket}{om}{%
 \IfValueTF{#1}{%
 \ket[#1]{\vcenter{\hbox{\small \begin{ytableau} #2 \end{ytableau}}}}
 }{%
 \ket*{\vcenter{\hbox{\small \begin{ytableau} #2 \end{ytableau}}}}
 }
}
\NewDocumentCommand{\tyket}{om}{%
 \IfValueTF{#1}{%
 \ket[#1]{\vcenter{\hbox{\tiny \begin{ytableau} #2 \end{ytableau}}}}
 }{%
 \ket*{\vcenter{\hbox{\tiny \begin{ytableau} #2 \end{ytableau}}}}
 }
}
\crefname{section}{Sec.}{Secs.}
\crefname{claim}{Claim}{Claims}
\begin{document}

\title{Unitary Designs from Random Symmetric Quantum Circuits}

\author{Hanqing Liu\,\orcidlink{0000-0003-3544-6048}}
\email{hanqing.liu@lanl.gov}
\affiliation{Theoretical Division, Los Alamos National Laboratory, Los Alamos, New Mexico 87545, USA}

\author{Austin Hulse}
\email{austin.hulse@duke.edu}
\affiliation{Duke Quantum Center and Department of Physics, Duke University, Durham, NC 27708, USA}
\author{Iman Marvian}
\email{iman.marvian@duke.edu}
\affiliation{Duke Quantum Center and Department of Physics, Duke University, Durham, NC 27708, USA}\affiliation{Department of Electrical and Computer Engineering, Duke University, Durham, NC 27708, USA}

\begin{abstract}
  In this work, we study distributions of unitaries generated by random quantum circuits containing only symmetry-respecting gates. We develop a unified approach applicable to all symmetry groups and obtain an equation that determines the exact design properties of such distributions. It has been recently shown that the locality of gates imposes various constraints on realizable unitaries, which in general, significantly depend on the symmetry under consideration. These constraints typically include restrictions on the relative phases between sectors with inequivalent irreducible representations of the symmetry. We call a set of symmetric gates semi-universal if they realize all unitaries that respect the symmetry, up to such restrictions. For instance, while 2-qubit gates are semi-universal for $\mathbb{Z}_2$, U(1), and SU(2) symmetries in qubit systems, SU($d$) symmetry with $d\ge 3$ requires 3-qudit gates for semi-universality. Failure of semi-universality precludes the distribution generated by the random circuits from being even a 2-design for the Haar distribution over symmetry-respecting unitaries. On the other hand, when semi-universality holds, under mild conditions, satisfied by U(1) and SU(2) for example, the distribution becomes a $t$-design for $t$ growing polynomially with the number of qudits, where the degree is determined by the locality of gates. More generally, we present a simple linear equation that determines the maximum integer $t_{\max}$ for which the uniform distribution of unitaries generated by the circuits is a $t$-design for all $t\leq t_{\max}$. Notably, for U(1), SU(2) and cyclic groups, we determine the exact value of $t_{\max}$ as a function of the number of qubits and locality of the gates, and for $\SU(d)$, we determine the exact value of $t_{\max}$ for up to $4$-qudit gates.
\end{abstract}

\maketitle

\section{Introduction}

Random quantum circuits \cite{emerson2003pseudo, emerson2005convergence, ambainis2007quantum, dankert2009exact, harrow2009random, brown2010convergence, brandao2016efficient, brandao2016local, Hamma:2012ov, hamma2012quantum} have become a standard tool with a wide range of applications in quantum information science. The primary motivation to study random circuits was their application for generating random states and unitaries, which play a crucial role in many quantum algorithms and protocols 
\cite{divincenzo2002quantum, harrow2004superdense, hayden2004randomizing, ambainis2004small}. In recent years, random quantum circuits have also become a standard framework to model many-body systems with thermalizing, chaotic, and information-scrambling dynamics \cite{Gharibyan_2018,nahum2017quantum,nahum2018operator,von2018operator,maldacena2016bound,shenker2014black,cotler2017chaos}. They also provide insights into quantifying the complexity of unitaries \cite{roberts2017chaos,brandao2019models,Liu_2018,Liu1_2018}, with motivations including the investigation of the role of complexity in quantum gravity \cite{susskind2016computational, brown2016holographic, stanford2014complexity}.

Likewise, symmetric circuits, i.e., those that conserve charges, have prolific applications to quantum computation and control \cite{lidar1998decoherence, Bacon:2000qf, divincenzo2000universal, Zanardi:97c, kempe2002exact, brod2013computational}, quantum thermodynamics and resource theories \cite{janzing2000thermodynamic, FundLimitsNature, brandao2013resource, guryanova2016thermodynamics, lostaglio2015quantumPRX, halpern2016microcanonical, halpern2016beyond, lostaglio2017thermodynamic,gour2008resource, Marvian_thesis, marvian2013theory}, quantum refrence frames \cite{QRF_BRS_07,marvian2008building}, covariant error-correcting codes \cite{faist2020continuous, hayden2021error, woods2020continuous}, and quantum machine learning and variational eigensolvers \cite{meyer2023exploiting, nguyen2022theory, sauvage2022building, zheng2023speeding,barron2021preserving, shkolnikov2021avoiding, gard2020efficientsymmetry, streif2021quantum, wang2020x, barkoutsos2018quantum}. Random symmetric circuits are especially relevant for modeling complex systems with conserved charges 
\cite{khemani2018operator, rakovszky2018diffusive, nakata2020generic, kong2021charge,hearth2023unitary,Li:2023mek,Agrawal2022,McCulloch_2023, majidy2023noncommuting}.

In a series of recent works \cite{Marvian2022Restrict,Marvian2024Rotationally,marvian2022quditcircuit,marvian2024theoryAbelian,hulse2024framework, marvian2024synthesis}, it has been found that locality and symmetry together impose emergent constraints on the realizable unitaries, and these constraints imply that the statistical properties of random local symmetric quantum circuits cannot simulate those of all symmetric unitaries. One way of phrasing these constraints quantitatively is to ask, for which $t$ are the expectations of the moments $V^{\otimes t} \otimes V^{\ast \otimes t}$ equal, where $V$ is a random symmetric unitary on the one hand, and a unitary generated from random local symmetric circuits on the other. When these expectations are equal, we say that random local symmetric circuits are a $t$-design for all symmetric unitaries. 

Interestingly, for $d\ge 3$, it turns out that  2-qudit $\SU(d)$-invariant unitaries do not generate even a $2$-design for all $\SU(d)$-invariant unitaries  \cite{marvian2022quditcircuit}. On the other hand, recent work reveals that 3-qudit $\SU(d)$-invariant unitaries generate a $t$ design for $t$ that grows, at least, quadratically in the number of qudits $n$ \cite{hulse2024framework} (Ref. \cite{Li:2023mek} also shows quadratic scaling, albeit with 4-qudit gates.). This difference between 2-qudit and 3-qudit symmetric unitaries can be understood as a consequence of the \emph{semi-universality} \cite{marvian2024theoryAbelian,hulse2024framework} of 3-qudit symmetric gates: they generate all symmetric unitaries, up to relative phases between subspaces with different $\SU(d)$ charges (i.e., inequivalent irreducible representations of $\SU(d)$). On the other hand, 2-qudit $\SU(d)$-invariant unitaries have further constraints for $d \geq 3$, and are not semi-universal. For general symmetry, failure of semi-universality implies that the distribution generated by random symmetric circuits is not even a $2$-design for the uniform distribution over the group of all symmetric unitaries (This can be shown using the result of \cite{robert2015squares}. See also \cref{App:fail} for a slightly different proof). As another example, in the case of U(1) symmetry, using the fact 2-qubit gates are semi-universal \cite{Marvian2022Restrict}, Ref. \cite{hearth2023unitary} shows that random circuits with 2-qubit gates are $t$-design with $t$ that grows, at least, linearly with the number of qubits.

In this paper, we develop a unified approach that is applicable to any symmetry group $G$ with arbitrary finite-dimensional unitary representation, including (but not restricted to) the so-called on-site representation of symmetry. Using this approach we obtain an equation that determines the highest $t = t_{\max}$ such that $t$-design can be achieved, assuming semi-universality holds. Here, we briefly explain the result in the case of symmetric quantum circuits with the representation of compact group $G$ as $u(g)^{\otimes n} : g\in G$ on $n$ qudits (See \cref{thm:tdesign} and \cref{corf} for more general and accurate statements of results). 

Let $|\irreps_G(n)|$ denote the number of inequivalent irreducible representations of $G$ that appear in this representation. Then, we find that $k$-qudit $G$-invariant unitaries form a $t_{\max}$-design (but not a $(t_{\max} + 1)$-design) with
\begin{align}\label{Eq1}
  t_{\max} &=\frac{1}{2}\min_{{\mathbf{q}}}(\mathbf{m}\cdot {|\mathbf{q}|})-1 \ ,
\end{align}
where the minimization is over all integer vectors
with $|\irreps_G(n)|$ components, satisfying
\be\label{tyq}
{\mathbf{q}}\in \mathbb{Z}^{|\irreps_G(n)|} \text{\ \ s.t.\ \ } \mathbf{q}\neq \mathbf{0}\ \text{\ and\ } \ 
\mathbf{S} {\mathbf{q}}={\mathbf{0}}\ .
\ee 
Here, $|\mathbf{q}|$ denotes the vector obtained by taking element-wise absolute value of $\mathbf{q}$, $\mathbf{m}$ is the vector of positive integers given by the multiplicity of each irrep inside $\irreps_G(n)$, and $\mathbf{S}$ is a $|\irreps_G(k)|\times |\irreps_G(n)|$ matrix with non-negative integer elements, defined in \cref{eq:S-def}. 

In particular, when $G$ is an Abelian group, the matrix element $\mathbf{S}_{\mu,\nu}$ is the multiplicity of irrep $\mu^\ast\otimes \nu$ in the representation of group $G$ on $n-k$ qudits, i.e.,  $u(g)^{\otimes (n-k)}: g\in G$, where $\mu^\ast$ is the dual (complex conjugate) of irrep $\mu$. For example, as discussed in \cref{Sec:U(1)}, for a system of $n$ qubits with $\U(1)$ symmetry represented as $\e^{\i\theta Z}: \theta\in[0,2\pi)$ on each qubit, the irreps can be labeled by the eigenvalues of $\sum_{j=1}^n Z_j$, which are in the form $n-2w$ for integers $w=0,\cdots, n$, or equivalently, by the corresponding Hamming weight $w$. In this case, the multiplicity of the irrep is the binomial coefficient $m_w=\binom{n}{w}$, so the vector $\mathbf{m} = (m_0, m_1, \dots, m_n)^\T$.  Then, as shown in \cref{eq:S-U1}, $\mathbf{S}$ is a $(k+1)\times (n+1)$ matrix with matrix elements 
\begin{align}\label{S:U(1)}
    \mathbf{S}_{l,w}= \binom{n - k}{w - l}\ \ \ : l=0,\cdots, k ; \, w=0, \cdots, n\ ,
\end{align}
which is the dimension of the sector with Hamming weight $w-l$ in a system with $n-k$ qubits. 
 Then, finding $t_{\max}$ amounts to find non-zero vectors $\mathbf{q}$ in $\mathbb{Z}^{n+1}$, which are in the null space of $\mathbf{S}$ and minimize $\mathbf{m}\cdot |\mathbf{q}|$. (See \cref{Sec:U(1)} for further details).

Recall that semi-universality means that the only constraints on the realizable unitaries are on the relative phases between the sectors with inequivalent irreps of the symmetry.  Matrix $\mathbf{S}$ determines exactly these constraints. In particular, each $\mathbf{q}$ satisfying \cref{tyq} determines a symmetry-respecting Hamiltonian $Q$ that cannot be realized with $k$-local $G$-invariant gates (See \cref{thm:tdesign} and discussion below it). At the same time, each $\mathbf{q}$ also determines a set of operators that commute with $V^{\otimes t}$ for any $V$ that can be realized with $k$-local symmetry-respecting gates, but not for general $G$-invariant unitaries $V$ (See \cref{Sec:comm}).




\begin{table}
  \begin{tblr}{colspec={c|c|c|c}}
    \toprule
    $G$ & $k$ & $t_{\max} + 1$ & $n$ \\ 
    \midrule
    $\mathbb{Z}_p$ & $\geq p$ & $\displaystyle \begin{cases} 2^{n - 1} & p \text{ even} \\ \infty & p \text{ odd} \end{cases}$ & $> k$ \\
    \midrule
    $\U(1)$ & $\geq 2$ & $\displaystyle \begin{cases} 2^k \binom{(n-1)/2}{k/2} & k \text{ even} \\ 2^k \binom{n/2}{(k+1)/2} & k \text{ odd} \end{cases}$ & $\geq 2^{\floor{\frac{k}{2}}} \floor{\frac{k + 3}{2}}$ \\
    \midrule
    $\SU(2)$ & $\geq 2$ & $2^{2\floor{\frac{k}{2}}+1} \binom{(n-1)/2}{\floor{k/2}+1}$ & $\geq 2^{\floor{\frac{k}{2}}}(\floor{\frac{k}{2}}+2)$ \\
    \bottomrule
  \end{tblr}
  \caption{\textbf{Design properties of circuits with $\mathbb{Z}_p$, $\U(1)$ and $\SU(2)$ symmetries.} For a system with $n$ qubits, we consider the uniform distribution over the group generated by $k$-qubit gates respecting $G = \mathbb{Z}_p$, $\U(1)$ and $\SU(2)$ symmetries, denoted as $\mathcal{V}_{n, k}^G$. The second column determines the minimum $k$ for which semi-universality is achieved (this is proven in \cite{marvian2024theoryAbelian} for $\mathbb{Z}_p$, in \cite{Marvian2022Restrict} for $\U(1)$, and in \cite{Marvian2024Rotationally} for $\SU(2)$). $t_{\max}+1$ is the minimum $t$ such that the uniform distribution over the group $\mathcal{V}_{n, k}^G$ is not a $t$-design for the uniform distribution over the group $\mathcal{V}_{n, n}^G$ of all symmetric unitaries. With the exception of $k=2, 3$ in the case of $\SU(2)$ symmetry, the third column is the exact value of $t_{\max}+1$ provided that the number of qubits $n$ satisfies the lower bound given in the fourth column (for smaller $n$, the third column is an upper bound but not necessarily the exact value of $t_{\max}+1$). For the cases of $k=2, 3$ with $\SU(2)$ symmetry, the third column is still the exact value of $t_{\max}+1$, provided that $n\ge 13$. Note that while in some cases the upper index of the binomial coefficient becomes half-integer,  $t_{\max}+1$ is always an integer (See \cref{Sec:U(1)} for further discussion and definition). }
   \label{tab:tmax}
\end{table}

\begin{table}
  \begin{tblr}{colspec={c|c|c}}
    \toprule
    $G$ & $k$ & $t_{\max} + 1$ \\ \midrule
    \SetCell[r=5]{c} $\U(1)$ & $2$ & $2(n - 1)$ \\
    & $3$ & $n(n-2)$ \\
    & $4$ & $2(n-1)(n-3)$ \\
    & $5$ & $\frac{2}{3}n(n-2)(n-4)$ \\
    & $6$ & $\frac{4}{3} (n - 1)(n-3)(n-5)$ \\ \midrule
    \SetCell[r=3]{c} $\SU(2)$ & $2, 3$ & $(n - 1) (n - 3)$ \\
    & $4, 5$ & $\frac{2}{3} (n-1)(n-3)(n-5)$ \\
    & $6, 7$ & $\frac{1}{3}(n-1)(n-3)(n-5)(n-7)$ \\ \midrule
    \SetCell[r=2]{c} $\SU(d)$ & $3$ & $(n - 1) (n - 3)$ \\
    & $4$ & $\frac{2}{3} (n - 1) (n - 3)(n - 5)$ \\
    \bottomrule
  \end{tblr}
  \caption{\textbf{Explicit expressions of $t_{\max} + 1$ for small $k$ and sufficiently large $n$.} See the caption of \cref{tab:tmax} for further details. Lower bounds on $n$ for which $t_{\max}$ is exact for $\SU(d)$ can be found in \cref{tab:SUd}.}
   \label{tab:tmax-example}
\end{table}

It is worth emphasizing that from the point of view of \cref{Eq1,tyq}, the exact form of matrix $\mathbf{S}$ is not important. Rather, the only relevant property of this matrix is its kernel (null space), which determines the symmetric Hamiltonians that are not realizable. In practice, we often use this freedom and never need to directly use the matrix $\mathbf{S}$, as defined in \cref{eq:S-def}.

Using this result, we completely determine the exact value of $t_{\max}$ in the case of $\U(1)$, $\SU(2)$, and the cyclic group $\mathbb{Z}_p: p\ge 2$ for arbitrary locality $k$ and a sufficiently large number of qubits $n$, as summarized in \cref{tab:tmax}. These results imply that when the number of qubits $n\gg 1$ and $2\le k \le \log n $, for both $\U(1)$ and $\SU(2)$ symmetry, $t_{\max}$ grows polynomially with the number qubits $n$, i.e.,
\begin{align}
    t_{\max} \sim 
    \begin{cases}
        a_k\times  n^{\lfloor\frac{k+1}{2}\rfloor} & \text{\ \ for } \U(1), \\
        b_k\times  n^{\lfloor\frac{k}{2}\rfloor+1} & \text{\ \ for } \SU(2)\ , \\
    \end{cases}
\end{align}
where
\begin{align}
    a_k = \frac{2^{\lfloor\frac{k}{2}\rfloor}}{\lfloor \frac{k+1}{2}\rfloor!}\quad \text{and} \quad b_k = \frac{2^{\lfloor\frac{k}{2}\rfloor}}{(\lfloor\frac{k}{2}\rfloor+1)!}\ ,
\end{align}
and symbol $\sim$ here means the ratio of the two sides goes to one in the limit $n\rightarrow\infty$. Remarkably, we find that for the cases of $\U(1)$ and $\SU(2)$ the dominant behaviors of $t_{\max}$ are very similar, and in fact, identical when $k$ is odd.  We also consider $\SU(d)$ symmetry for arbitrary $d\ge 3$ and determine $t_{\max}$ for $k=3$ and $4$ (recall that semi-universality is achieved with $k$-local $\SU(d)$-invariant gates only  when $k \geq 3$ \cite{hulse2024framework}). \cref{tab:tmax-example} shows this result as well as several examples for small $k$ in the case of $\U(1)$ and $\SU(2)$ symmetries.
 
Here, we highlight one of the new tools developed in this paper with applications extending beyond $t$-designs to areas such as Hamiltonian learning \cite{zhukas2024observation}. Specifically, the optimal integers $\mathbf{q}$ that determine $t_\text{max}$ via \cref{Eq1} have an interesting interpretation, which makes them relevant for these applications, such as detecting $k$-body interactions. 

As an example, we focus on $n$-qubit systems with $\U(1)$ symmetry (See \cref{Sec:U(1)} for further details). Then, the optimal integers $\mathbf{q}$ are the eigenvalues of a set of operators $\{F_k: k=0,\cdots, n\}$ defined in \cref{eq:Fk}, where each operator $F_k$ is the unique operator that satisfies the following properties:
\begin{enumerate}[(i)]
    \item  It is a U(1)-invariant Hermitian operator commuting with all U(1)-invariant Hamiltonians (In other words, it is in the center of the Lie algebra of U(1)-invariant Hamiltonians). 
    \item It is orthogonal to all $(k-1)$-local operators $O$, i.e., $\Tr(F_k O)=0$. This, in particular, implies that $F_k$ is orthogonal to any U(1)-invariant Hamiltonian $H$ that is realizable with $(k-1)$-local $\U(1)$-invariant gates.
    \item Its support is restricted to $k+1$ irreps with the smallest multiplicities (namely, Hamming weights $w$ with either $w \leq \floor{\frac{k}{2}}$ or $w \geq n - \floor{\frac{k-1}{2}}$).
    \item Its eigenvalues are integers with the greatest common divisor being 1. Namely, they are the components of vector $\mathbf{q}$ achieving the minimum in \cref{Eq1}. More precisely, for the Haar measure over the group generated by $k$-qubit gates, $t_\text{max}=\frac{1}{2}\|F_{k+1}\|_1-1$, provided that $n$ is sufficiently large (e.g., $\log n>k$).
\end{enumerate}

\noindent{\textbf{Application to Hamiltonian learning:}}  The above properties, in particular, imply that while operator $F_k$ is orthogonal to $(k-1)$-local operators, it is not orthogonal to $k$-local operator $Z^{\otimes k} \otimes \mathbb{I}^{\otimes (n-k)}$. This, together with property (iii) listed above, makes operators $\{F_k\}$ useful for applications in the context of Hamiltonian learning \cite{degen2017quantum, huang2023learning}. Specifically, it enables the direct detection of $k$-body interactions, such as $Z^{\otimes k}\otimes \mathbb{I}^{\otimes (n-k)}$, under $\U(1)$ symmetry, without requiring prior knowledge of any existing $l$-body interactions with $l < k$, and without the need for full process tomography.  Taking advantage of these properties, a recent experiment has demonstrated the direct detection of 3-body interactions $Z^{\otimes 3}$ \cite{zhukas2024observation}. In particular, this work introduces a method for measuring quantities such as
\begin{align}\label{Eq:delt}
  \Delta_k := -\int_0^T \d t~  \Tr[H(t) F_k]  \pmod {2\pi}\ ,
\end{align}
for $k\ge 1$ for time evolution under U(1)-invariant Hamiltonian $H(t): 0\le t\le T$.\footnote{We note that \cite{zhukas2024observation}  introduces $F_3$ and shows the above properties for this operator. In this work, we extend this to 
the full basis $\{F_k\}$ and prove its aforementioned properties. }.
Thanks to property (iii) listed above, measuring  $\Delta_k$ only requires probing the system in the  $k+1$ sectors with lowest multiplicities, whose dimension grows polynomially with $n$ as $n^{\floor{\frac{k}{2}}}$, rendering this measurement exponentially more efficient than full process tomography. In particular,  \cite{zhukas2024observation} reports the measurement of $\Delta_3$ on 4 qubit systems.\footnote{In particular, for $H(t)=Z^{\otimes l}\otimes \mathbb{I}^{\otimes (n-l)}$, we find $\Delta_k= T\times 2^{k} \delta_{k,l} \pmod {2\pi}$.} \\

\noindent\textbf{Outline:} The rest of this work is organized as follows. In \cref{sec:prelim}, we review some basic definitions of symmetric unitaries, $t$-designs, semi-universality, as well as some previous results that will be useful in this work. In \cref{sec:t-design}, we present our general result on characterizing the design properties of symmetric unitaries, where we provide both an easy-to-determine lower bound on $t_{\max}$, and a set of equations to determine the exact value of $t_{\max}$. Then, in \cref{Sec:Examples}, we use these results to fully determine the exact value of $t_{\max}$ for arbitrary locality $k$ and a sufficiently large number of qubits $n$ in the case of $\U(1)$, $\SU(2)$ and cyclic groups $\mathbb{Z}_p$. We also determine the exact value of $t_{\max}$ for qudit systems with $\SU(d)$ symmetry, for up to $k=4$-local gates. Finally, in \cref{sec:proofs}, we present the proofs of  \cref{prop1,thm:tdesign}, which are stated in \cref{sec:t-design}. Further details on the studied examples are presented in the appendices.

\section{Preliminaries}\label{sec:prelim}

In this section, we review some basic definitions including symmetric unitaries, symmetric quantum circuits, $t$-designs, and semi-universality. Assuming semi-universality holds, the only constraints on the realizable unitaries are the relative phases. We review previous results on characterizing the constraints on these relative phases in the context of symmetric unitaries and symmetric quantum circuits, as well as a no-go theorem for unitary designs. 

\subsection{Basic definitions: symmetric unitaries and \texorpdfstring{$t$}{t}-designs}

Consider a quantum system with a finite-dimensional Hilbert space $\mathcal{H}$. Suppose $U(g): g\in G$ 
is a given unitary representation of a compact group $G$ on $\mathcal{H}$. We are interested in unitary transformations on $\mathcal{H}$ that respect this symmetry, i.e.,
\be
\mathcal{V}^G:= \{V: [V,U(g)]=0\ , \forall g\in G\}
\ .
\ee
We refer to such unitaries as $G$-invariant or symmetric unitaries. $G$-invariant Hamiltonians can be defined similarly. Under the action of this representation the Hilbert space $\mathcal{H}$ decomposes into irreducible representations (irreps) of group $G$, as 
\be
\mathcal{H}\cong\bigoplus_{\lambda\in\irreps_G} (\mathcal{Q}_\lambda\otimes \mathcal{M}_\lambda)\ , 
\ee
where $\irreps_G$ is the set of inequivalent irreducible representations of symmetry $G$ that appear in this representation, 
$\mathcal{Q}_\lambda$ corresponds to irrep $\lambda\in \irreps_G$, and $\mathcal{M}_\lambda$ corresponds to the space of multiplicity of irrep $\lambda$ in this representation (This is known as \emph{isotypic} decomposition). Then with respect to this decomposition, any $G$-invariant unitary $V$ decomposes as
\be\label{sym-unit}
V=\bigoplus_{\lambda\in\irreps_G} (\mathbb{I}_\lambda \otimes v_\lambda)\ ,
\ee
where $v_\lambda\in \U(\mathcal{M}_\lambda)$ for all $\lambda\in\irreps_G$ (we use $\U(\mathcal{M}_\lambda)$ and $\SU(\mathcal{M}_\lambda)$ to denote the unitary group and the special uniatry group on $\mathcal{M}_\lambda$).

Clearly, $\mathcal{V}^G$ is a compact Lie group, and therefore has a unique normalized left and right-invariant Haar (uniform) measure, denoted by $\mu_\mathrm{Haar}$. We say a distribution $\nu$ is a $t$-design for $\mu_\mathrm{Haar}$, if 
its first $t$ moments are identical with the first $t$ moments of $\mu_\mathrm{Haar}$, such that
\begin{align}\label{design1}
  \mathbb{E}_{V\sim\nu }[V^{\otimes t}\otimes {V^\ast}^{\otimes t}]= \mathbb{E}_{V\sim\mu_\mathrm{Haar}}[V^{\otimes t}\otimes {V^\ast}^{\otimes t}]\ ,
\end{align}
where ${V^\ast}$ is the complex conjugate of $V$ with respect to an orthonormal basis. Equivalently, $\nu$ is a $t$-design if
\begin{align}\label{design2}
  \mathbb{E}_{V\sim\nu }[V^{\otimes t} A {V^\dag}^{\otimes t}]=\mathbb{E}_{V\sim\mu_\mathrm{Haar} }[V^{\otimes t} A {V^\dag}^{\otimes t}]\ ,
\end{align}
for any operator $A$ on $\mathcal{H}^{\otimes t}$. 
It is straightforward to show that if this equation holds for $t$, then it also holds for all $t'\le t$. 
When this equation does not hold for all integer $t$, we define $t_{\max}$ to be the maximum value of $t$ for which the equation holds. Otherwise, we define $t_{\max}=\infty$.

Suppose $\nu$ is the uniform distribution over a compact subgroup of $\mathcal{V}^G$, denoted by $\mathcal{W}^G$. Then, the super-operators defined on the left-hand and the right-hand sides of \cref{design2} are the projections to the commutants of $V^{\otimes t}: V\in \mathcal{W}^G$ and $V^{\otimes t}: V\in \mathcal{V}^G$, respectively. Therefore, an equivalent way to phrase the condition in \cref{design2} is to say the two commutants are identical, i.e.,
\be
\mathrm{Comm}\{V^{\otimes t}: V\in \mathcal{W}^G\}=\mathrm{Comm}\{V^{\otimes t}: V\in \mathcal{V}^G\}\ .
\ee

In the context of random quantum circuits, the Haar measure arises as the distribution of unitaries realized by circuits with random gates, in the limit of infinite depth. Formally, this is a consequence of the following general theorem which applies even when the gates are chosen from a symmetry-respecting (possibly discrete)  set.\\


\begin{fact}[Theorem 8.2 of \cite{kawada1940probability}]\label{fact:Haar}
    Let $p$ be a probability distribution over the compact Lie group $\mathcal{W}$. Suppose unitaries $W_1, \cdots, W_T \in \mathcal{W}$ are sampled independently according to $p$. Then, in the limit $T \to \infty$ the distribution of their product $W_T \cdots W_1$ converges weakly\footnote{Recall that a sequence of distributions $\nu_n$ is said to converge weakly to $\nu$ if $\lim_{n \to \infty} \mathbb{E}_{V\sim \nu_n}[f(V)] = \mathbb{E}_{V \sim \nu}[f(V)]$ for all bounded continuous functions $f$ \cite{durrett2019probability,billingsley1968convergence}.} to the uniform Haar distribution over group $\mathcal{W}$ provided that (1) the support of $p$ generates a dense subgroup of $\mathcal{W}$, and (2) the support of $p$ is not contained in any coset of any closed  {proper} subgroup of $\mathcal{W}$.\footnote{Ref. \cite{kawada1940probability} proves this more generally for any compact group which is also topologically first-countable. See also \cite{kloss1958}.}
\end{fact}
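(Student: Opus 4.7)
The plan is to combine the Peter--Weyl theorem with a spectral analysis of the operator-valued Fourier coefficient of $p$. Let $\nu_T$ denote the law of $W_T \cdots W_1$; by independence, $\nu_T$ is the $T$-fold convolution $p^{*T}$. Since $\mathcal{W}$ is compact, Peter--Weyl implies that the matrix elements of irreducible unitary representations of $\mathcal{W}$ span a uniformly dense subspace of $C(\mathcal{W})$, so weak convergence $\nu_T \to \mu_{\mathrm{Haar}}$ is equivalent to
\[
\mathbb{E}_{V\sim\nu_T}[\pi(V)] \;\longrightarrow\; \mathbb{E}_{V\sim\mu_{\mathrm{Haar}}}[\pi(V)]
\]
for every irreducible unitary representation $\pi:\mathcal{W}\to\U(d_\pi)$. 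By Schur orthogonality, the right-hand side is $0$ for every nontrivial $\pi$ (and the identity for the trivial $\pi$, for which the left-hand side is trivially the identity). So the task reduces to showing vanishing of the left-hand side for each nontrivial irrep.

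Next I would exploit independence and multiplicativity of $\pi$ to write $\mathbb{E}_{V\sim\nu_T}[\pi(V)] = \hat{p}(\pi)^T$, where the operator Fourier coefficient $\hat{p}(\pi):=\mathbb{E}_{W\sim p}[\pi(W)]$ is an average of unitaries and hence a contraction. Consequently $\hat{p}(\pi)^T\to 0$ if and only if every eigenvalue of $\hat{p}(\pi)$ has modulus strictly less than $1$, and the problem reduces to ruling out modulus-$1$ eigenvalues for every nontrivial $\pi$.

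The main step, and the principal obstacle, is this eigenvalue exclusion. Suppose $\hat{p}(\pi)v=\lambda v$ with $\|v\|=1$ and $|\lambda|=1$. Then $\lambda=\langle v,\hat{p}(\pi)v\rangle=\mathbb{E}_{W\sim p}[\langle v,\pi(W)v\rangle]$ is an average of complex numbers of modulus at most $1$ that equals a number of modulus $1$, so $\langle v,\pi(W)v\rangle=\lambda$ for $p$-almost every $W$; by continuity of the integrand on the closed support, this holds for every $W\in\supp(p)$. Cauchy--Schwarz equality then forces $\pi(W)v=\lambda v$ on $\supp(p)$. Setting $H':=\{W\in\mathcal{W}:\pi(W)v=v\}$ (a closed subgroup), a short calculation shows that the set $\{W:\pi(W)v=\lambda v\}$ equals the coset $H'W_0$ for any fixed $W_0$ in it, and if this coset equals all of $\mathcal{W}$ then evaluating at $W=e$ forces $\lambda=1$. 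If $\lambda=1$, then $\supp(p)\subset H'$; by condition (1), $H'$ contains a dense subgroup and, being closed, satisfies $H'=\mathcal{W}$, so $v$ is fixed by the whole image of $\pi$, contradicting irreducibility of the nontrivial $\pi$. If $\lambda\neq 1$, then necessarily $H'\neq \mathcal{W}$, so $\supp(p)\subset H'W_0$ is contained in a coset of a proper closed subgroup, contradicting condition (2).

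The main obstacle is precisely this coset analysis: extracting the right closed subgroup from an equation of the form $\pi(W)v=\lambda v$ on $\supp(p)$ and distinguishing the cases $\lambda=1$ and $\lambda\neq 1$ so as to invoke exactly one of the two hypotheses in each. Once this is done, $\|\hat{p}(\pi)^T\|\to 0$ for every nontrivial $\pi$, and weak convergence follows from the Peter--Weyl reduction together with dominated convergence for bounded continuous test functions.
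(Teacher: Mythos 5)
The paper does not prove this statement at all: it is imported verbatim as a Fact with a citation to Kawada--It\^{o} (Theorem 8.2 of that reference), so there is no in-paper proof to compare against. Your proposal is the standard Fourier-analytic proof of the It\^{o}--Kawada equidistribution theorem, and it is correct. The reduction via Peter--Weyl to convergence of $\hat{p}(\pi)^T$ for each irrep, the identification of $\mathbb{E}_{\nu_T}[\pi(V)]$ with the $T$-th power of the contraction $\hat{p}(\pi)$, and the exclusion of modulus-one eigenvalues via the equality case of Cauchy--Schwarz are all sound; the coset analysis correctly shows that $\{W : \pi(W)v = \lambda v\}$ is the right coset $H'W_0$ of the closed stabilizer $H' = \{W : \pi(W)v = v\}$, and the two cases $\lambda = 1$ and $\lambda \neq 1$ are each ruled out by exactly one of the two hypotheses. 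Two minor remarks: (i) hypothesis (2) applied to the trivial coset already implies hypothesis (1), so your case split could be collapsed to use only (2), but matching the statement's structure as you do is harmless; (ii) the final passage from convergence on the dense span of matrix coefficients to weak convergence is better described as a uniform-approximation ($\varepsilon/3$) argument using $|\int f\,d\nu_T - \int g\,d\nu_T| \leq \|f - g\|_\infty$ than as dominated convergence, but the intended argument is clear and valid. Your proof in fact covers arbitrary compact groups (finite-dimensionality of irreps is all that is used), consistent with the footnote in the statement.
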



\subsection{Semi-universality}
The group of symmetric unitaries has two important normal subgroups, that will play central roles in the following discussions: First, the 
center of the group $\mathcal{V}^G$, which is the subgroup of relative phases
\be\label{rel}
\mathcal{P}=\big\{\sum_{\lambda\in\irreps_G} \e^{\i\theta_\lambda} \Pi_\lambda : \theta_\lambda\in[0,2\pi)\ \big\} ,
\ee
where $\Pi_\lambda$ is the projector to the subspace $\mathcal{Q}_\lambda\otimes \mathcal{M}_\lambda$ of $\mathcal{H}$ corresponding to irrep $\lambda$. 

The second normal subgroup that will play an important role in the following is the commutator subgroup of $\mathcal{V}^G$, i.e., the subgroup of unitaries generated by $V^\dag_2 V_1^\dag V_2V_1 $ for all $V_1,V_2\in \mathcal{V}^G$, denoted by 
\be
\mathcal{SV}^G := [\mathcal{V}^G, \mathcal{V}^G]=\langle V^\dag_2 V_1^\dag V_2V_1 : V_1,V_2\in\mathcal{V}^G\rangle \ .
\ee
In terms of the decomposition of symmetric unitaries in \cref{sym-unit}, unitaries in $V\in \mathcal{SV}^G$ are those satisfying $v_\lambda\in \SU(\mathcal{M}_\lambda)$. Conversely, for any set of $v_\lambda\in \SU(\mathcal{M}_\lambda)$, there exists $V\in \mathcal{SV}^G$ with decomposition in the form of \cref{sym-unit}.

Following Ref. \cite{marvian2024theoryAbelian}, any subgroup $\mathcal{W}^G\subseteq \mathcal{V}^G$ that contain $\mathcal{SV}^G$, such that
\be
\mathcal{SV}^G \subseteq \mathcal{W}^G \subseteq \mathcal{V}^G 
\ee
will be called semi-universal\footnote{This name refers to universality on the semi-simple part of Lie algebra.}. Note that any symmetric unitary $V\in\mathcal{V}^G$ decomposes as $V=\widetilde{V} P=P\widetilde{V}$, where $P\in\mathcal{P}$ and $\widetilde{V}\in \mathcal{SV}^G$.  This decomposition is unique, up to a unitary in $\mathcal{P}\cap \mathcal{SV}^G$, which is a finite group. Therefore, if the set of realizable unitaries is semi-universal, then all symmetric unitaries are realizable, up to constraints on the relative phases between inequivalent irreps.

An equivalent definition of semi-universality that will be useful later is the following: Subgroup $\mathcal{W}^G\subseteq \mathcal{V}^G$ is called semi-universal, if for any $G$-invariant Hamiltonian $H$ satisfying the constraint 
\be
\forall \lambda\in\irreps_G:\ \ \ \Tr(H\Pi_\lambda)=0\ ,
\ee
we have $\exp(\i H s)\in\mathcal{W}^G$, for all $s\in\mathbb{R}$.  In \cref{App:fail} we present a useful lemma from \cite{hulse2024framework} that characterizes the failure of semi-universality. See also \cite{hulse2024framework} for more discussion and a powerful framework for characterizing semi-universality.

\subsection{Characterizing the constraints on the relative phases}
By definition, if the set of realizable unitaries $\mathcal{W}^G$ is semi-universal, then the only constraints on the realizable unitaries are on the relative phases between sectors with inequivalent irreps. To characterize such constraints, it is useful to consider the projection of the realizable Hamiltonians to the 
linear space spanned by projectors $\{\Pi_\lambda\}$ (From a Lie-algebraic perspective, the linear space spanned by $\{\i \Pi_\lambda\}$ is the center of the Lie algebra associated to the Lie group of symmetric unitaries $\mathcal{V}^G$ \cite{Marvian2024Rotationally, zimboras2015symmetry}).
This projection can be described by the map 
$$O\ \mapsto \ |\chi_O\rangle=\sum_{\lambda\in \irreps_G} \Tr(\Pi_\lambda O) |\lambda\rangle \ ,$$
which determines the projection of $O$ to $\mathrm{span}\{\Pi_\lambda\}_\lambda$ , where $\{|\lambda\rangle: \lambda\in \irreps_G\}$ denotes an orthonormal basis for a space with dimension $|\irreps_G|$. Equivalently, this projection can be obtained from the characteristic function $\chi_O: G\rightarrow \mathbb{C}$, defined by 
\be
\chi_O(g)=\Tr(U(g)O) \ \ \ 
: \forall g\in G\ .
\ee

Then, Ref. \cite{Marvian2022Restrict} shows that
\begin{proposition}[\cite{Marvian2022Restrict}]
  Let $\{H_l\}_l$ be a set of G-invariant Hermitian operators and $\mathcal{W}^G$ be the Lie group generated by one-parameter groups $\{\exp(\i s H_l): s\in\mathbb{R}\}_l$. Then, 
  \be\label{dkd}
  \dim(\mathcal{V}^G)-\dim(\mathcal{W}^G)\ge |\irreps_G|-\dim(\mathcal{S})\ ,
  \ee
  where 
  \be
  \mathcal{S}=\mathrm{span}_\mathbb{R}\{ \sum_{\lambda\in \irreps_G} \Tr(H_l\Pi_\lambda) |\lambda\rangle \}\ .
  \ee
  For any $G$-invariant Hermitian operator $A$, $\mathcal{W}^G$ contains 
  $\exp(\i s A): s\in\mathbb{R}$ only if
  $|\chi_A\rangle \in \mathcal{S}$, or equivalently, only if $\chi_A\in \mathrm{span}_{\mathbb{R}}\{\chi_{H_l}\}$, where $\chi_{H_l}(g)=\Tr(U(g) H_l)$, and $\chi_{A}(g)=\Tr(U(g) A)$, for all $g\in G$. When $\mathcal{W}^G$ is semi-universal, i.e., contains $\mathcal{SV}^G$, then this necessary condition is also sufficient, that is, $\mathcal{W}^G$ also contains $\exp(\i s A): s\in\mathbb{R}$, if $G$-invariant Hermitian operator $A$ satisfies the above condition.
\end{proposition}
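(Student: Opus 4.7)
The plan is to pass to the Lie algebra level, where the statement reduces to a routine linear-algebra fact about a canonical projection. Let $\mathfrak{v}^G$ denote the Lie algebra of $\mathcal{V}^G$ (operators $iH$ with $H$ Hermitian and $G$-invariant). The isotypic decomposition gives $\mathfrak{v}^G \cong \bigoplus_{\lambda\in\irreps_G} i\mathbb{I}_{\mathcal{Q}_\lambda}\otimes \mathfrak{u}(\mathcal{M}_\lambda)$, which is reductive and splits canonically as $\mathfrak{v}^G = \mathfrak{z}\oplus\mathfrak{s}$, where the center $\mathfrak{z} = \operatorname{span}_\mathbb{R}\{i\Pi_\lambda : \lambda\in\irreps_G\}$ has dimension $|\irreps_G|$, and the semisimple part $\mathfrak{s} = \bigoplus_\lambda i\mathbb{I}_{\mathcal{Q}_\lambda}\otimes \mathfrak{su}(\mathcal{M}_\lambda)$ is precisely the Lie algebra of $\mathcal{SV}^G$. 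I would then verify that, under the identification $\mathfrak{z}\cong\mathbb{R}^{|\irreps_G|}$ via $i\Pi_\lambda\mapsto|\lambda\rangle$, the projection $\pi_z:\mathfrak{v}^G\to\mathfrak{z}$ with kernel $\mathfrak{s}$ coincides, up to a positive per-coordinate rescaling by $\dim(\Pi_\lambda)$, with the map $iH\mapsto|\chi_H\rangle$. In particular $\mathcal{S}$ is identified with $\pi_z(\operatorname{span}_\mathbb{R}\{iH_l\})$ as a subspace, and $\dim\mathcal{S}$ is the same on both sides.

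The key observation is that commutators always land in the semisimple part: $[\mathfrak{v}^G,\mathfrak{v}^G]\subseteq\mathfrak{s}$, because $\mathfrak{z}$ is central and $[\mathfrak{u}(\mathcal{M}_\lambda),\mathfrak{u}(\mathcal{M}_\lambda)]=\mathfrak{su}(\mathcal{M}_\lambda)$. Consequently, if $\mathfrak{w}^G$ is the Lie algebra of $\mathcal{W}^G$, iterated brackets of the generators $\{iH_l\}$ contribute nothing to the center, so $\pi_z(\mathfrak{w}^G) = \pi_z(\operatorname{span}_\mathbb{R}\{iH_l\}) \cong \mathcal{S}$. The dimension bound \cref{dkd} then follows by rank-nullity:
\begin{equation*}
\dim\mathfrak{w}^G \;=\; \dim(\mathfrak{w}^G\cap\mathfrak{s}) + \dim\pi_z(\mathfrak{w}^G) \;\le\; \dim\mathfrak{s} + \dim\mathcal{S} \;=\; \dim\mathfrak{v}^G - |\irreps_G| + \dim\mathcal{S}\ .
\end{equation*}

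For the characterization of realizable one-parameter subgroups, I would invoke the standard correspondence between connected Lie subgroups and Lie subalgebras: $\exp(isA)\in\mathcal{W}^G$ for all $s\in\mathbb{R}$ iff $iA\in\mathfrak{w}^G$. The necessary direction is then immediate by applying $\pi_z$: $iA\in\mathfrak{w}^G$ forces $|\chi_A\rangle = \pi_z(iA)\in\mathcal{S}$. To convert to the characteristic-function formulation I would expand $\chi_H(g) = \Tr(U(g)H) = \sum_\lambda \chi_\lambda(g)\Tr(h_\lambda)$ for $G$-invariant $H=\bigoplus_\lambda\mathbb{I}_{\mathcal{Q}_\lambda}\otimes h_\lambda$, and use linear independence of the irreducible characters $\{\chi_\lambda\}_{\lambda\in\irreps_G}$, turning the condition $\chi_A\in\operatorname{span}_\mathbb{R}\{\chi_{H_l}\}$ into the same coordinate-vector condition as $|\chi_A\rangle\in\mathcal{S}$. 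For the sufficient direction under semi-universality, $\mathcal{W}^G\supseteq\mathcal{SV}^G$ gives $\mathfrak{w}^G\supseteq\mathfrak{s}=\ker\pi_z$, hence $\mathfrak{w}^G = \pi_z^{-1}(\pi_z(\mathfrak{w}^G)) = \pi_z^{-1}(\mathcal{S})$, so any $A$ with $|\chi_A\rangle\in\mathcal{S}$ satisfies $iA\in\mathfrak{w}^G$ and hence $\exp(isA)\in\mathcal{W}^G$. The main conceptual step, where I would be most careful, is the identification $\pi_z(\mathfrak{w}^G) = \pi_z(\operatorname{span}_\mathbb{R}\{iH_l\})$: it relies crucially on $[\mathfrak{v}^G,\mathfrak{v}^G]\subseteq\mathfrak{s}$, which is what collapses a priori infinitely many iterated brackets into the single finite-dimensional image $\mathcal{S}$; otherwise only routine bookkeeping with the normalization factors $\dim\mathcal{Q}_\lambda,\dim\mathcal{M}_\lambda$ remains.
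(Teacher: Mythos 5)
Your proof is correct. The paper does not actually prove this proposition itself (it is imported from \cite{Marvian2022Restrict}), but your argument --- the reductive splitting $\mathfrak{v}^G=\mathfrak{z}\oplus\mathfrak{s}$ with $\mathfrak{z}=\mathrm{span}_\mathbb{R}\{\i\Pi_\lambda\}$ and $\mathfrak{s}$ the Lie algebra of $\mathcal{SV}^G$, the observation that $[\mathfrak{v}^G,\mathfrak{v}^G]\subseteq\mathfrak{s}$ so that $\pi_z(\mathfrak{w}^G)=\pi_z(\mathrm{span}_\mathbb{R}\{\i H_l\})\cong\mathcal{S}$, and rank--nullity --- is exactly the mechanism that reference uses and that this paper re-invokes in the proof of \cref{thm:tdesign} (around \cref{eq:Hr1} and the footnote noting $\Tr([B_1,B_2]Q)=0$), so it matches the intended proof in both substance and structure.
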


\subsection{Symmetric quantum circuits}
An important special case is that of composite systems, such as $n$ qudits with the total Hilbert space $\mathcal{H}\cong (\mathbb{C}^d)^{\otimes n}$. In this context, a particular type of representation of symmetry often arises in physical systems, usually referred to as on-site representations. More precisely, in this case the representation of group $G$ on the joint system 
is 
$$U(g)=u(g)^{\otimes n}\ \ \ : g\in G\ ,$$ 
where $u(g) : g\in G$ is the unitary representation of the symmetry on the individual qubit. Then, we denote the irreps of the group that appear in $U(g)=u(g)^{\otimes n}: g\in G$, as $\irreps_G(n)$.

Recall that a unitary transformation is called $k$-local if it acts non-trivially on, at most, $k$ qudits (subsystems) in the system. Following the terminology of quantum circuits, we sometimes refer to such unitaries as $k$-qudit gates. Then, an important subgroup of $\mathcal{V}^G$, denoted as $\mathcal{V}^G_{n,k}$ is the subgroup that can be decomposed as a sequence of $k$-local $G$-invariant unitaries, as $V=V_T \cdots V_1$, where each $V_j\in\mathcal{V}^G$ is $k$-local, i.e., acts non-trivially on at most $k$ qudits, and $T$ is an arbitrary integer.  With this definition, the group of all symmetric unitaries is $\mathcal{V}^G=\mathcal{V}^G_{n,n}$. 

\subsection*{A no-go theorem for universality and designs}

Is it possible to achieve $\mathcal{V}^G_{n,n}=\mathcal{V}^G_{n,k}$ with $k<n$? 
According to a well-known result in quantum computing, when the representation of symmetry is trivial, this is achieved with $k=2$ \cite{DiVincenzo:95, lloyd1995almost}. However, in the presence of symmetries, the following result of \cite{Marvian2022Restrict} imposes strong constraints on the minimum $k$ that is needed to achieve universality. 

\begin{proposition}[\cite{Marvian2022Restrict}]
  \label{Thm-1} 
  For any integer $k\le n$, the group generated by $k$-local $G$-invariant unitaries on $n$ qudits, denoted by $\mathcal{V}_{n,k}^G$, is a compact connected Lie group. The difference between the dimensions of this Lie group and the group of all $G$-invariant unitaries is lower bounded by
  \be\label{bound1}
  \dim(\mathcal{V}_{n,n}^G)-\dim(\mathcal{V}_{n,k}^G)\ge |\irreps_G(n)|-|\irreps_G(k)|\ ,
  \ee
  where $|\irreps_G(k)|$ is the number of inequivalent irreps of group $G$ appearing in the representation $u(g)^{\otimes k}: g\in G$. If $\mathcal{V}_{n,k}^G$ is semi-universal, i.e., contains $\mathcal{SV}_{n,n}^G=[\mathcal{V}_{n,n}^G, \mathcal{V}_{n,n}^G]$ and $G$ is a connected group then the above bound holds as equality. 
\end{proposition}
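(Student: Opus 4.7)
The plan is to leverage the preceding proposition together with a character-theoretic computation of its space $\mathcal{S}$, and then to invoke a Lie-algebraic argument in the semi-universal case.

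For the structural claim, the Lie subalgebra $\mathfrak{v}_{n,k}^G \subseteq \mathfrak{v}_{n,n}^G$ generated by all $k$-local $G$-invariant anti-Hermitian operators integrates to a unique connected Lie subgroup of the compact Lie group $\mathcal{V}_{n,n}^G$. This subgroup coincides with $\mathcal{V}_{n,k}^G$ because, for each $k$-qudit subset $S$, the group $\mathcal{V}_S^G$ of $G$-invariant unitaries supported on $S$ is the commutant of $u(g)^{\otimes k}$ on $(\mathbb{C}^d)^{\otimes k}$ and is therefore isomorphic to $\prod_{\mu \in \irreps_G(k)} \U(\mathcal{M}_\mu^{(k)})$, a compact connected Lie group; hence $\mathcal{V}_{n,k}^G$ is generated by connected closed Lie subgroups of $\mathcal{V}_{n,n}^G$. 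Closedness of the resulting subgroup inside the compact group $\mathcal{V}_{n,n}^G$ follows because $\mathfrak{v}_{n,k}^G$ is an algebraic Lie subalgebra, so the analytic subgroup it defines is closed.

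For the dimension inequality \eqref{bound1}, I would apply the preceding proposition with $\{H_l\}$ the set of all $k$-local $G$-invariant Hermitians, reducing the task to proving $\dim(\mathcal{S}) \leq |\irreps_G(k)|$, where $\mathcal{S} = \mathrm{span}_\mathbb{R}\{|\chi_{H_l}\rangle\}$. For $H = h \otimes \mathbb{I}_{n-k}$ with $h$ acting on qudits $1,\dots,k$, the characteristic function factors as
\[
\chi_H(g) = \Tr(u(g)^{\otimes n} H) = \chi_h(g)\,\chi_u(g)^{n-k},
\]
and the $G$-invariance of $h$ forces $\chi_h$ to be a real linear combination of the $|\irreps_G(k)|$ irreducible characters appearing in $u^{\otimes k}$. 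By permutation invariance of the setup under $S_n$, Hamiltonians supported on any other $k$-qudit subset produce $\chi_H$ in the same span, so $\dim(\mathcal{S}) \leq |\irreps_G(k)|$, and the preceding proposition yields \eqref{bound1}.

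For the equality case, I would note that semi-universality forces $\mathfrak{v}_{n,k}^G$ to contain the derived ideal $[\mathfrak{v}_{n,n}^G, \mathfrak{v}_{n,n}^G] = \bigoplus_\lambda \mathfrak{su}(\mathcal{M}_\lambda)$, which has codimension $|\irreps_G(n)|$ inside $\mathfrak{v}_{n,n}^G$. Hence the only freedom left in $\mathfrak{v}_{n,k}^G$ lies in the center $\mathrm{span}\{i\Pi_\lambda\}_\lambda$, and its contribution is exactly the image $\mathcal{S}$ of the projection of $k$-local $G$-invariant Hermitians onto the center; therefore
\[
\dim(\mathcal{V}_{n,n}^G) - \dim(\mathcal{V}_{n,k}^G) = |\irreps_G(n)| - \dim(\mathcal{S}),
\]
so equality in \eqref{bound1} reduces to the sharp statement $\dim(\mathcal{S}) = |\irreps_G(k)|$. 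The main obstacle is this exact-dimension claim; I would prove it by showing that the map $\chi_h \mapsto \chi_h\,\chi_u^{n-k}$ is injective on the real span of $\{\chi_\mu : \mu \in \irreps_G(k)\}$. For connected compact $G$, the representation ring $R(G) \otimes \mathbb{R}$ is (by the Chevalley restriction theorem) a polynomial ring in the fundamental Weyl-invariant characters, hence an integral domain, so multiplication by the nonzero class function $\chi_u^{n-k}$ has trivial kernel, and the assignment $\{\Tr(a_\mu)\}_\mu \mapsto |\chi_H\rangle$ has image of full dimension $|\irreps_G(k)|$. This integral-domain step is precisely where the hypothesis that $G$ is connected is used.
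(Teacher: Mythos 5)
Your proposal is correct and follows essentially the route the paper takes (the paper cites this proposition from \cite{Marvian2022Restrict} and quotes the two key ingredients as its preceding proposition and \cref{lemma-1,lem1}): apply the dimension bound $\dim(\mathcal{V}^G)-\dim(\mathcal{W}^G)\ge |\irreps_G|-\dim(\mathcal{S})$ with $\mathcal{S}$ spanned by the projections of $k$-local $G$-invariant Hamiltonians, use the factorization $\chi_H=\chi_h\,\chi_u^{n-k}$ to get $\dim(\mathcal{S})\le|\irreps_G(k)|$, and in the semi-universal case reduce equality to injectivity of multiplication by $r^{n-k}$ on $\mathrm{span}_\mathbb{R}\{f_\mu\}$, which for connected $G$ follows from $R(G)\otimes\mathbb{R}$ embedding in the Laurent ring $R(T)\otimes\mathbb{R}$ and hence being an integral domain (your appeal to a polynomial-ring structure is slightly stronger than needed, but the domain property is all that is used). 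The one shaky step is your justification of closedness of $\mathcal{V}^G_{n,k}$: an analytic subgroup of a compact Lie group defined by a Lie subalgebra need not be closed (the irrational line on a torus), and ``algebraicity'' of $\mathfrak{v}^G_{n,k}$ is neither established nor the standard reason. The correct argument uses the ingredient you already have — that $\mathcal{V}^G_{n,k}$ is generated by finitely many compact connected subgroups $\mathcal{V}_S^G$ — together with the standard stabilization lemma: the sets $P_N$ of length-$N$ products of elements of the $\mathcal{V}_S^G$ form an increasing chain of compact connected subsets whose dimensions stabilize, so $P_N=P_{N+1}$ for some finite $N$ and $P_N$ is itself a compact connected subgroup. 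With that repair the proof is complete.
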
 

In this paper, we study the implications of this no-go theorem on the distribution of unitaries generated by random $k$-local $G$-invariant gates. More specifically, we are interested in the properties of the Haar distribution over the group generated by such gates, namely 
$\mathcal{V}_{n,k}^G$.    According to \cref{fact:Haar}, in quantum circuits with random $k$-local $G$-invariant gates,  as the depth of the circuit  grows, under mild conditions on the distribution of gates,    the resulting distribution of $n$-qudit unitaries converges to  the Haar distribution over 
$\mathcal{V}_{n,k}^G$. For instance, if qudits are nodes of a connected graph (e.g., a chain) and at each time step we randomly pick $k$ nearest-neighbor qudits and 
apply a random $k$-local $G$-invariant gate chosen from a distribution with full support on all such gates, then the resulting distribution converges to the Haar distribution over $\mathcal{V}_{n,k}^G$.

The compactness of the Lie groups $\mathcal{V}_{n,k}^G$ and $\mathcal{V}_{n,n}^G$ implies that unless they have equal dimensions, the Haar measure over $\mathcal{V}_{n,k}^G$ cannot be a $t$-design for $\mathcal{V}_{n,n}^G$ for arbitrary large $t$.\footnote{\label{fn:phase}It is worth noting that both groups $\mathcal{V}_{n,k}^G$ and $\mathcal{V}_{n,n}^G$ contain the U(1) subgroup corresponding to the global phases $e^{\i\theta}\mathbb{I}:\theta\in[0,2\pi)$. The presence of these global phases does not affect the design properties of $\mathcal{V}^G_{n,k}$. More precisely, they disappear when one considers the group of unitaries $V\otimes V^\ast: V\in \mathcal{V}_{n,k}^G$, which can be denoted as $\mathcal{V}_{n,k}^G/\U(1)$, and is itself a connected compact Lie group. Clearly, removing this U(1) subgroup from both $\mathcal{V}_{n,n}^G$ and $\mathcal{V}_{n,k}^G$ does not change the difference between their dimensions. That is, $\dim(\mathcal{V}_{n,n}^G/\U(1))-\dim(\mathcal{V}_{n,k}^G/\U(1))\ge |\irreps_G(n)|-|\irreps_G(k)|$.} Obviously, the uniform distribution over compact submanifolds with lower dimensions cannot fully mimic the uniform distribution over a manifold with a larger dimension (See \cref{lem:design} and \cref{app:design} for a more general formulation of this fact, using Urysohn's lemma and Stone-Weierstrass theorem). Therefore, an immediate corollary of \cref{Thm-1} 
 is
\begin{corollary}[corollary of \cite{Marvian2022Restrict}]
  For a continuous symmetry group $G$ that is non-trivially represented on $n$ qudits, the number of inequivalent irreducible representations $|\irreps_G(n)|$
  grows unboundedly with $n$. Therefore, there is no fixed $k$ such that $k$-local $G$-invariant unitaries become universal such that 
  $\mathcal{V}_{n,k}^G=\mathcal{V}_{n,n}^G$, for all $n\ge k$. Furthermore, since $\mathcal{V}_{n,k}^G$ is a compact subgroup of $\mathcal{V}_{n,n}^G$ with lower dimension, 
  the Haar measure over $\mathcal{V}_{n,k}^G$ cannot be a $t$-design for 
  $\mathcal{V}_{n,n}^G$ for arbitrary large $t$, i.e., $t_{\max} < \infty$. 
\end{corollary}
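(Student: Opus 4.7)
The corollary packages three assertions: (a) unbounded growth of $|\irreps_G(n)|$ with $n$; (b) strict inclusion $\mathcal{V}_{n,k}^G \subsetneq \mathcal{V}_{n,n}^G$ for every fixed $k$ once $n$ is large enough; and (c) failure of the Haar measure on $\mathcal{V}_{n,k}^G$ to be a $t$-design for arbitrarily large $t$. The plan is to establish (a) by a direct highest-weight count in the representation theory of compact Lie groups, read off (b) from (a) and \cref{Thm-1}, and pass from (b) to (c) by a standard density argument using the Stone-Weierstrass theorem and Urysohn's lemma, mirroring the hint given in the paragraph after \cref{Thm-1}.

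For (a), non-triviality of the single-qudit representation $u$ --- interpreted as $u(g)$ not being of the form $\chi(g)\,\mathbb{I}$ for a scalar character $\chi$, since the scalar-character case is physically vacuous (every unitary would be $G$-invariant, trivializing the corollary) --- forces the decomposition of $u$ into $G$-irreps to fall into one of two types: either it contains two inequivalent components $\sigma_1,\sigma_2$ with distinct highest weights $\mu_1 \neq \mu_2$, or it is isotypic to a single irrep $\sigma$ of dimension $>1$. In the first case, $u^{\otimes n}$ contains the subrepresentation $\sigma_1^{\otimes(n-k)} \otimes \sigma_2^{\otimes k}$ for every $k = 0,\dots,n$, whose top irreducible component has highest weight $(n-k)\mu_1 + k\mu_2$; the $n+1$ resulting dominant weights are pairwise distinct as $k$ varies, so $|\irreps_G(n)| \ge n+1$. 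The isotypic case reduces to the analogous count inside $\sigma^{\otimes n}$, using that $\dim \sigma > 1$ guarantees two distinct torus weights of $\sigma$ and hence the same arithmetic of highest weights in tensor powers.

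For (b), combining (a) with the dimension-gap bound $\dim \mathcal{V}_{n,n}^G - \dim \mathcal{V}_{n,k}^G \ge |\irreps_G(n)| - |\irreps_G(k)|$ from \cref{Thm-1} shows that for fixed $k$ the right-hand side diverges with $n$, forcing $\mathcal{V}_{n,k}^G \subsetneq \mathcal{V}_{n,n}^G$ for all $n$ sufficiently large. For (c), suppose for contradiction that Haar on $\mathcal{V}_{n,k}^G$ were a $t$-design for every $t$. Then the two Haar measures would assign the same integral to every polynomial in the matrix entries of $V$ and $V^*$; the Stone-Weierstrass theorem would then extend this equality to every continuous function on the compact group $\mathcal{V}_{n,n}^G$, forcing the two probability measures to coincide. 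But because $\mathcal{V}_{n,k}^G$ is a proper closed subset of $\mathcal{V}_{n,n}^G$ by (b), Urysohn's lemma furnishes a continuous $[0,1]$-valued $f$ that equals $1$ on $\mathcal{V}_{n,k}^G$ and vanishes on a nonempty open subset of its complement, making the two integrals differ --- the desired contradiction. The main obstacle in this plan is step (a): although the growth of $|\irreps_G(n)|$ is intuitive from classical character theory, a careful proof must split into the reducible and isotypic subcases and explicitly exclude the trivial scalar-character situation hidden inside the phrase "non-trivially represented"; once (a) is in hand, (b) is immediate and (c) is essentially a routine application of the Urysohn/Stone-Weierstrass toolkit.
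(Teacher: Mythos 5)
Your route for parts (b) and (c) is essentially the paper's own: the strict inclusion follows from the dimension gap in \cref{Thm-1} once $|\irreps_G(n)|>|\irreps_G(k)|$, and the failure of $t$-designs for large $t$ is obtained by distinguishing the Haar measure on a proper closed subgroup from that on the ambient compact group via Urysohn's lemma and Stone--Weierstrass, which is exactly the content of \cref{lem:design} proved in \cref{app:design} (you run Stone--Weierstrass in the contrapositive direction, from polynomial agreement to measure equality, rather than approximating the Urysohn function by a polynomial, but it is the same toolkit). Two remarks. First, the passage from ``$t$-design for all $t$'' to ``agreement on all polynomials in the entries of $V$ and $V^\ast$'' is not automatic: the design condition \cref{design1} only controls balanced monomials of bidegree $(t,t)$. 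This is easily repaired --- both groups contain the global-phase subgroup $\e^{\i\theta}\ident$ (see the discussion in footnote \ref{fn:phase}), so unbalanced monomials integrate to zero against both measures; alternatively one can use the paper's explicit balanced test function $\abs{\tr(U^{-1}V)}^{2t}$ from \cref{Sec:design}. Second, part (a) is asserted by the paper and attributed to the cited reference rather than proved here, so your attempt goes beyond the text. Your argument in the case where $u$ contains two inequivalent irreducible components is correct (it rests on the standard fact that $V_{\mu_1}\otimes V_{\mu_2}\supseteq V_{\mu_1+\mu_2}$, giving $n+1$ distinct highest weights $(n-k)\mu_1+k\mu_2$). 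The isotypic case, however, is genuinely incomplete as written: having two distinct torus weights in $\sigma$ does not reproduce ``the same arithmetic,'' because you can no longer split $\sigma^{\otimes n}$ as $\sigma_1^{\otimes(n-k)}\otimes\sigma_2^{\otimes k}$, and counting distinct weights of $\sigma^{\otimes n}$ does not by itself lower-bound the number of inequivalent irreducible components (a single irrep can carry many weights). A correct treatment of that subcase requires a separate argument (e.g., tracking which of the weights $n\lambda-k(\lambda-w_0\lambda)$ actually occur as highest weights of components, as one sees explicitly for $\SU(2)$), and one should also be careful that highest-weight theory as you invoke it presumes $G$ connected.
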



In the following sections, we will show that despite this no-go theorem, if $k$-local $G$-invariant unitaries are semi-universal, i.e., if $\mathcal{SV}^G_{n,n}\subseteq \mathcal{V}^G_{n,k}$, then the Haar distribution over $\mathcal{V}^G_{n,k}$ can be a $t$-design for 
$\mathcal{V}^G_{n,n}$ with large $t$ that grows with $n$ and $k$. A useful tool for establishing this result, which is also used to prove \cref{Thm-1} is the following lemma from \cite{Marvian2022Restrict}.

\begin{lemma}[\cite{Marvian2022Restrict}]\label{lemma-1}
  For any $G$-invariant Hermitian operator $H$, $\forall s\in\mathbb{R}: \exp(\i H s)\in\mathcal{V}^G_{n,k}$ only if function $\chi_H$, defined by $\chi_H(g)=\Tr(H u(g)^{\otimes n})$, satisfies 
  \be\label{span}
  \chi_H \in \mathrm{span}_\mathbb{R}\{r^{n-k} f_\nu: \nu\in \irreps_G(k)\}\ ,
  \ee
  where $r(g)=\Tr(u(g))$, and $f_\nu$ is the character of irrep $\nu$ of group $G$. Furthermore, if $k$-local $G$-invariant gates are semi-universal, i.e., if $\mathcal{SV}^G_{n,n}\subseteq \mathcal{V}^G_{n,k}$ then the converse also holds. That is, for any $G$-invariant Hermitian operator $H$ satisfying the above constraint, $\forall s\in\mathbb{R}: \exp(i H s)\in\mathcal{V}^G_{n,k}$.
\end{lemma}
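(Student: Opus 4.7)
The strategy is to invoke the preceding Proposition, applied with the generating set $\{H_l\}$ chosen to be all $k$-local $G$-invariant Hermitian operators on the $n$ qudits. Then the lemma reduces to computing the span of the characteristic functions $\chi_{H_l}$ and showing it equals $\mathrm{span}_\mathbb{R}\{r^{n-k} f_\nu : \nu \in \irreps_G(k)\}$.

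\emph{Necessity.} First I would pick any $k$-local $G$-invariant Hermitian $H_l = h_S \otimes \mathbb{I}_{\bar S}$, where $h_S$ is a $G$-invariant operator on the $k$-qudit subsystem $S$ and $\bar S$ is the complement of size $n-k$. Since the representation is on-site, the characteristic function factorizes as
\[\chi_{H_l}(g) = \Tr(h_S\, u(g)^{\otimes k}) \cdot \Tr(u(g)^{\otimes(n-k)}) = \Tr(h_S\, u(g)^{\otimes k}) \cdot r(g)^{n-k}.\]
Next, I would use the isotypic decomposition $(\mathbb{C}^d)^{\otimes k} \cong \bigoplus_{\nu \in \irreps_G(k)} (\mathcal{Q}_\nu \otimes \mathcal{M}_\nu)$: because $h_S$ commutes with $u(g)^{\otimes k}$, by Schur's lemma each block acts as $\mathbb{I}_{\mathcal{Q}_\nu} \otimes m_\nu$ for some Hermitian $m_\nu$ on the multiplicity space, so $\Tr(h_S\, u(g)^{\otimes k}) = \sum_\nu \Tr(m_\nu)\, f_\nu(g)$. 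Thus $\chi_{H_l} \in \mathrm{span}_\mathbb{R}\{r^{n-k} f_\nu : \nu \in \irreps_G(k)\}$. Since $\mathcal{V}^G_{n,k}$ is generated by the one-parameter groups of such $H_l$, the preceding Proposition then forces $\chi_H$ to lie in the same real span whenever $\exp(\i H s) \in \mathcal{V}^G_{n,k}$ for all $s$.

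\emph{Sufficiency under semi-universality.} For the converse I would exhibit, for each $\nu \in \irreps_G(k)$, an explicit $k$-local $G$-invariant Hermitian operator whose characteristic function is a positive real multiple of $r^{n-k} f_\nu$. The natural choice is $\Pi^{(k)}_\nu \otimes \mathbb{I}^{\otimes(n-k)}$, where $\Pi^{(k)}_\nu$ projects onto the $\nu$-isotypic component of $(\mathbb{C}^d)^{\otimes k}$. The computation above gives its characteristic function as $\dim(\mathcal{M}^{(k)}_\nu)\, f_\nu(g)\, r(g)^{n-k}$, which is the desired basis element. This shows that the span of $\chi_{H_l}$ over all $k$-local $G$-invariant generators actually equals $\mathrm{span}_\mathbb{R}\{r^{n-k} f_\nu : \nu \in \irreps_G(k)\}$. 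Invoking the converse part of the preceding Proposition, which requires exactly the semi-universality hypothesis $\mathcal{SV}^G_{n,n} \subseteq \mathcal{V}^G_{n,k}$, then yields $\exp(\i H s) \in \mathcal{V}^G_{n,k}$ for every $G$-invariant $H$ whose character lies in the stated span.

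The main obstacle I anticipate is purely representation-theoretic bookkeeping: carefully justifying the Schur decomposition $h_S|_{\mathcal{Q}_\nu \otimes \mathcal{M}_\nu} = \mathbb{I}_{\mathcal{Q}_\nu} \otimes m_\nu$ so that the partial trace over $\mathcal{Q}_\nu$ produces the irrep character $f_\nu$, and verifying that the coefficients $\Tr(m_\nu)$ can be chosen independently for each $\nu$ (so the basis element $r^{n-k} f_\nu$ is really realized, not just some combination). Once this decomposition is in place, both directions of the lemma follow immediately from applying the preceding Proposition — so the real content of the lemma is translating the abstract span $\mathcal{S}$ of that Proposition into the concrete character-theoretic span $\mathrm{span}_\mathbb{R}\{r^{n-k} f_\nu\}$.
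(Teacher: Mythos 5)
This lemma is imported from Ref.~\cite{Marvian2022Restrict} and the paper does not reprove it, so there is no in-paper proof to compare against; judged on its own merits, your argument is correct and is essentially the intended one. The reduction to the preceding Proposition (with $\{H_l\}$ the set of all $k$-local $G$-invariant Hermitian operators, so that $\mathcal{W}^G=\mathcal{V}^G_{n,k}$) is legitimate, and the factorization $\chi_{h_S\otimes \mathbb{I}}(g)=\Tr(h_S\, u(g)^{\otimes k})\, r(g)^{n-k}$ together with the Schur block decomposition $h_S=\bigoplus_\nu \mathbb{I}_{\mathcal{Q}_\nu}\otimes m_\nu$ gives exactly $\mathrm{span}_\mathbb{R}\{\chi_{H_l}\}=\mathrm{span}_\mathbb{R}\{r^{n-k}f_\nu\}$; your projector construction $\Pi^{(k)}_\nu\otimes\mathbb{I}^{\otimes(n-k)}$ settles the "obstacle" you flag, since it realizes each generator $r^{n-k}f_\nu$ up to the positive factor $\dim(\mathcal{M}^{(k)}_\nu)$, which is nonzero precisely for $\nu\in\irreps_G(k)$. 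This is also the same computation the paper performs later with the operators $S_\nu$ in \cref{Sec:kqudit}, where $\Tr(S_\nu\, u(g)^{\otimes n})=[\Tr u(g)]^{n-k}f_\nu(g)$. Two small points worth making explicit: (i) identifying $\mathcal{V}^G_{n,k}$ with the group generated by the one-parameter families $\exp(\i s H_l)$ uses that every $k$-local $G$-invariant unitary is $\exp(\i h)$ for some $k$-local $G$-invariant Hermitian $h$, which holds because the group of $G$-invariant unitaries on $k$ qudits is a product of unitary groups and hence compact and connected; (ii) the reality of the coefficients $\Tr(m_\nu)$ (needed since the span is over $\mathbb{R}$) follows from Hermiticity of $m_\nu$. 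Neither is a gap, just bookkeeping to state.
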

An immediate corollary of the above lemma is
\begin{lemma}[\cite{Marvian2022Restrict}]\label{lem1}
  Let $\mathcal{S}_k$ be the space spanned by the projection of $k$-local Hermitian operators to $\{\Pi_\lambda :\lambda\in \irreps_G(n)\}$, i.e., 
  \be
  \mathcal{S}_k=\mathrm{span}_\mathbb{R}\{|\chi_A\rangle: A=A^\dag, \text{$A$ is $k$-local}\}\ .
  \ee
  Then, $\dim(\mathcal{S}_k)\le |\irreps_G(k)|$, and the equality holds if $G$ is a connected group, or if $\Tr(u(g))\neq 0$ for all $g\in G$. 
\end{lemma}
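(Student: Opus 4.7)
The plan is to pass from operators to functions on $G$ and reduce the dimension count to a statement about characters. First I would note that $|\chi_A\rangle$ depends on $A$ only through its $G$-invariant projection $\tilde A := \int_G U(h)\,A\,U(h)^{\dagger}\,dh$, since each $\Pi_\lambda$ is $G$-invariant and the trace is cyclic, so $\Tr(\Pi_\lambda A) = \Tr(\Pi_\lambda \tilde A)$. Writing a $k$-local Hermitian $A$ as $A_k \otimes \mathbb{I}^{\otimes(n-k)}$ on some fixed set of $k$ qudits, and using $u(h)^{\otimes(n-k)} \mathbb{I} u(h)^{-\otimes(n-k)} = \mathbb{I}$, the symmetrization factorizes:
\[
\tilde A = \widetilde{A_k} \otimes \mathbb{I}^{\otimes(n-k)}, \qquad \widetilde{A_k} := \int_G u(h)^{\otimes k}\,A_k\,u(h)^{-\otimes k}\,dh .
\]
In particular $\tilde A$ is itself $G$-invariant and $k$-local.

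Since $\widetilde{A_k}$ is $G$-invariant on $(\mathbb{C}^d)^{\otimes k}$, its isotypic decomposition takes the form $\widetilde{A_k} = \bigoplus_{\nu\in\irreps_G(k)} \mathbb{I}_{\mathcal{Q}_\nu}\otimes B_\nu$ for Hermitian operators $B_\nu$ on $\mathcal{M}_\nu^{(k)}$. Combining this with $U(g) = u(g)^{\otimes k}\otimes u(g)^{\otimes(n-k)}$ gives
\[
\chi_{\tilde A}(g) = r(g)^{n-k}\,\Tr\!\bigl(u(g)^{\otimes k}\widetilde{A_k}\bigr) = r(g)^{n-k}\!\!\sum_{\nu\in\irreps_G(k)}\!\! \Tr(B_\nu)\, f_\nu(g) ,
\]
with $\Tr(B_\nu)\in\mathbb{R}$. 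Hence $\chi_{\tilde A}\in\mathrm{span}_\mathbb{R}\{r^{n-k} f_\nu : \nu\in\irreps_G(k)\}$. Since $|\chi_A\rangle = |\chi_{\tilde A}\rangle$ is obtained from $\chi_{\tilde A}$ via Schur orthogonality against the characters $\{f_\lambda : \lambda\in\irreps_G(n)\}$, the space $\mathcal{S}_k$ is the image of a linear map from this at-most-$|\irreps_G(k)|$-dimensional space, yielding the upper bound $\dim(\mathcal{S}_k) \le |\irreps_G(k)|$.

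For equality I would argue injectivity, achievability, and finally linear independence. Injectivity of $\chi_{\tilde A}\mapsto|\chi_A\rangle$ is immediate: expanding $\chi_{\tilde A} = \sum_{\lambda\in\irreps_G(n)} c_\lambda f_\lambda$, Schur orthogonality yields $\Tr(\Pi_\lambda A) = \dim(\mathcal{Q}_\lambda)\,c_\lambda$, and since $\dim(\mathcal{Q}_\lambda)>0$ and the $f_\lambda$ are linearly independent, distinct $\chi_{\tilde A}$ produce distinct $|\chi_A\rangle$. Achievability is easy: taking $A = \Pi^{(k)}_\nu\otimes\mathbb{I}^{\otimes(n-k)}$, where $\Pi^{(k)}_\nu$ is the $\nu$-isotypic projector on $(\mathbb{C}^d)^{\otimes k}$, gives $\widetilde{A_k} = \Pi^{(k)}_\nu$ and hence $\chi_{\tilde A} = \dim(\mathcal{M}_\nu^{(k)})\,r^{n-k}f_\nu$, a nonzero multiple of the $\nu$-generator. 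The main obstacle is then the linear independence of $\{r^{n-k}f_\nu\}_{\nu\in\irreps_G(k)}$. When $r(g)\neq 0$ for all $g$, this is immediate: multiplication by $r^{n-k}$ is pointwise injective, so any vanishing combination reduces to a vanishing combination of the inequivalent characters $f_\nu$. When $G$ is a compact connected Lie group, $r$ is real-analytic with $r(e)=d\neq 0$, so $\{g:r(g)\neq 0\}$ is open and dense; any vanishing real linear combination of $\{r^{n-k}f_\nu\}$ therefore forces $\sum_\nu \alpha_\nu f_\nu$ to vanish on a dense open set, hence identically by continuity, whence $\alpha_\nu=0$ by linear independence of characters. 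This last step is the subtlest one, and it genuinely fails without such a hypothesis, as illustrated by $\mathbb{Z}_p$ acting on $\mathbb{C}^p$ by cyclic shift, where $r$ vanishes off the identity and $\mathrm{span}\{r^{n-k}f_\nu\}$ collapses to a single function.
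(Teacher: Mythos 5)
Your proof is correct and takes essentially the same route as the paper and the cited reference: it identifies $\mathcal{S}_k$ with $\mathrm{span}_\mathbb{R}\{r^{n-k}f_\nu : \nu\in\irreps_G(k)\}$ (exactly the rank chain the paper records for the matrix $\mathbf{S}$) and then reduces the equality case to the linear independence of these functions, supplying the independence argument that the paper leaves to the citation. Two minor points worth tightening: the reduction to a fixed set of $k$ qudits tacitly uses the permutation invariance of the projectors $\Pi_\lambda$, and your analyticity argument assumes $G$ is a Lie group — for a general compact connected $G$ one should first pass to the compact connected Lie group $u(G)\subseteq \U(d)$, through which $r$ and every $f_\nu$ with $\nu\in\irreps_G(k)$ factor, before invoking the density of $\{g: r(g)\neq 0\}$.
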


\section{From semi-universality to \texorpdfstring{$t$}{t}-designs}\label{sec:t-design}

In this section, we present the main results of this paper, namely we characterize the design properties of a group of $G$-invariant unitaries, assuming semi-universality holds. In particular, we study \texorpdfstring{$t_{\max}$}{tmax}, the maximum value of $t$ for which the condition in \cref{design1} holds. First, we establish a lower bound on \texorpdfstring{$t_{\max}$}{tmax}, and then, in \cref{thm:tdesign}, we provide an equation that fully determines \texorpdfstring{$t_{\max}$}{tmax}. While the main focus of this paper is on symmetric quantum circuits on qudits with $k$-qudit gates, we present the statement of these results more generally in terms of an arbitrary set of $G$-invariant Hamiltonians $\{H_l\}$ (more precisely, both results are phrased in terms of the projection of these Hamiltonians to the subspace $\{\Pi_\mu\}$). In \cref{Sec:kqudit}, we focus on the case of symmetric quantum circuits. We note that some of the ideas needed to establish our results have been previously applied to find lower bounds on $t_{\max}$ for symmetry groups such as $\U(1)$ and $\SU(d)$ for small $k$ (in particular, see \cite{hearth2023unitary,Li:2023mek}).

\subsection{Failure of semi-universality implies lack of 2 design}

Before discussing the consequences of semi-universality, first, we explain what happens when semi-universality does not hold.  
Suppose a compact group $\mathcal{W}^G\subset \mathcal{V}^G$ is not semi-universal, i.e., does not contain the commutator subgroup $\mathcal{SV}^G=[\mathcal{V}^G,\mathcal{V}^G]$. In this situation what can we say about the statistical properties of the uniform distribution over $\mathcal{W}^G$?

Assume $\mathcal{W}^G$ is a connected Lie group, which is true in the case of $\mathcal{W}^G =\mathcal{V}^G_{n,k}$ generated by $k$-local $G$-invariant gates. Then, as is previously noted in \cite{marvian2022quditcircuit}, failure of semi-universality implies that the uniform distribution over $\mathcal{W}^G$ is not even a 2-design for the uniform distribution over $\mathcal{V}^G$. In particular, 
\begin{proposition}[corollary of \cite{robert2015squares}]\label{prop:con}
  Let $\mathcal{W}^G$ be a compact connected 
  subgroup of $G$-invariant unitaries $\mathcal{V}^G$. If $\mathrm{Comm}\{V\otimes V: V\in \mathcal{V}^G\}=\mathrm{Comm}\{V\otimes V: V\in \mathcal{W}^G\}$, then $\mathcal{W}^G$ is semi-universal, i.e., it contains the $\mathcal{SV}^G=[\mathcal{V}^G,\mathcal{V}^G]$. Therefore, if $\mathcal{W}^G$ is not semi-universal, then the uniform distribution over $\mathcal{W}^G$ is not a 2-design for the uniform distribution over $\mathcal{V}^G$, i.e., $t_{\max} \leq 1$. 
\end{proposition}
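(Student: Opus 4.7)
The plan is to reduce the commutant hypothesis to a purely Lie-algebraic statement and then invoke the main theorem of \cite{robert2015squares} as a black box. Since $\mathcal{W}^G$ and $\mathcal{V}^G$ are both compact \emph{connected} Lie groups, writing $V=\exp(\i X)$ gives $V\otimes V=\exp(\i(X\otimes\mathbb{I}+\mathbb{I}\otimes X))$, so by connectedness
\[
  \mathrm{Comm}\{V\otimes V:V\in\mathcal{W}^G\}=\mathrm{Comm}\{X\otimes\mathbb{I}+\mathbb{I}\otimes X:X\in\mathfrak{w}\},
\]
where $\mathfrak{w}\subseteq\mathfrak{v}$ are the real Lie subalgebras of $\mathfrak{u}(\dim\mathcal{H})$ corresponding to $\mathcal{W}^G\subseteq\mathcal{V}^G$; the analogous identity holds for $\mathcal{V}^G$ with $\mathfrak{v}$ in place of $\mathfrak{w}$. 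Thus the hypothesis of the proposition is equivalent to saying that the tensor-derivation images of $\mathfrak{w}$ and $\mathfrak{v}$ have the same commutant in $\mathfrak{u}(\dim\mathcal{H}^{\otimes 2})$.

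I would then apply the main result of \cite{robert2015squares}, which states that for compact Lie subalgebras $\mathfrak{w}\subseteq\mathfrak{v}$ of $\mathfrak{u}(N)$, the above commutant equality forces the derived subalgebras to coincide, $[\mathfrak{w},\mathfrak{w}]=[\mathfrak{v},\mathfrak{v}]$. Morally, the symmetric-squared action $X\mapsto X\otimes\mathbb{I}+\mathbb{I}\otimes X$ is ``faithful enough'' on the semi-simple part that it can be recovered from its commutant, while the abelian center of $\mathfrak{v}$, spanned by the $\i\Pi_\lambda$ (i.e.\ the Lie algebra of the relative-phase subgroup $\mathcal{P}$ of \cref{rel}), is degenerate under this map and therefore invisible to the second-moment commutant.

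Once $[\mathfrak{w},\mathfrak{w}]=[\mathfrak{v},\mathfrak{v}]$ is in hand, exponentiating and using connectedness of both groups yields $[\mathcal{W}^G,\mathcal{W}^G]=[\mathcal{V}^G,\mathcal{V}^G]=\mathcal{SV}^G$. Since every group contains its own commutator subgroup, $\mathcal{SV}^G\subseteq\mathcal{W}^G$, which is exactly semi-universality. The second sentence of the proposition then follows by contraposition: because the 2-design condition of \cref{design2} at $t=2$ is equivalent to equality of second-moment commutants (the Haar twirl being the projector onto the commutant), failure of semi-universality immediately forces $t_{\max}\le 1$.

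The main obstacle is the black-box lemma imported from \cite{robert2015squares}. Its proof proceeds via the Cartan-type decomposition $\mathfrak{w}=\mathfrak{z}(\mathfrak{w})\oplus[\mathfrak{w},\mathfrak{w}]$ and a representation-theoretic check, simple factor by simple factor, that each semi-simple summand is uniquely determined by its tensor-derivation commutant. The remaining translation work in our setting is routine and uses only the connectedness of $\mathcal{W}^G$ and $\mathcal{V}^G$, which is built into the hypothesis.
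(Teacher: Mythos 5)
Your proposal is correct in outline, but it takes the route the paper deliberately sidesteps. You reduce the hypothesis to a Lie-algebraic commutant statement and import the main theorem of \cite{robert2015squares} as a black box to conclude $[\mathfrak{w},\mathfrak{w}]=[\mathfrak{v},\mathfrak{v}]$; this is precisely the derivation the paper credits in the proposition's bracketed attribution and in the sentence preceding it, but the paper's own proof in \cref{App:fail} is different. There, the authors invoke \cref{lem:formal} from \cite{hulse2024framework}, which says that failure of semi-universality forces one of two concrete obstructions: either (1) some block image $\pi_\lambda(\mathcal{W}^G)$ fails to contain $\SU(\mathcal{M}_\lambda)$, in which case the Dynkin-type ``no quadratic symmetries'' fact is applied only to that single block, or (2) two sectors $\lambda_1,\lambda_2$ are locked together (equal or conjugate actions up to phase), in which case an elementary dimension count shows $\mathrm{Comm}\{\pi_{\lambda_1}(V)\otimes\pi_{\lambda_2}(V)\}$ jumps from dimension one to dimension two. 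Both cases yield a strict inequality of second-moment commutants, and the proposition follows by contraposition. What the paper's route buys is that the heavy classification machinery is confined to a single simple factor; what your route buys is brevity, at the cost of relying on \cite{robert2015squares} in full generality. One caveat on your sketch: your ``simple factor by simple factor'' gloss of the black-box lemma underplays exactly the correlated-sector obstruction (case 2 above, the type $\mathbf{IV}$ constraints), which is not a statement about any single semi-simple summand but about two summands conspiring; it is detected in the off-diagonal blocks $(\lambda_1,\lambda_2)$ of $V\otimes V$, not in any symmetric square of one factor. If the theorem you cite is stated only for $\mathfrak{v}=\mathfrak{u}(d)$ rather than for a general compact $\mathfrak{v}=\bigoplus_\lambda\mathfrak{u}(\mathcal{M}_\lambda)$, your reduction has a genuine gap there; the paper's two-case proof closes it explicitly.
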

As noted in \cite{marvian2022quditcircuit}, this proposition can be shown using 
a remarkable result of \cite{robert2015squares, zimboras2015symmetry}, which builds on the seminal work of Dynkin \cite{dynkin1957maximal}. While such arguments apply more broadly to the case of all compact Lie groups, in \cref{App:fail} we present a simpler proof of \cref{prop:con}, which uses a characterization of semi-universality in \cite{hulse2024framework}.\footnote{It is worth noting that the above proposition does not hold if $\mathcal{W}^G$ is not connected. For instance, Clifford unitaries, which are a discrete subgroup of the unitary group, generate a 3-design. }
As we see in an example in \cref{Sec:ex:last}, the converse of this proposition is not generally valid. That is, while semi-universality of $\mathcal{W}^G$ is a necessary condition,  
on its own is not sufficient to guarantee that the uniform distribution over $\mathcal{W}^G$ is a 2-design for $\mathcal{V}^G$. However, as it follows from \cref{cor6},  the converse holds under some further assumptions about the multiplicities.

It is also worth mentioning that in the case of quantum circuits with Abelian symmetries, 
failure of semi-universality implies a stronger constraint. In particular, \cite{marvian2024theoryAbelian} shows that if $k$-local gates are not semi-universal, then they can not generate a 1-design. 
\begin{proposition}[\cite{marvian2024theoryAbelian}] 
  For an Abelian group $G$, $\mathrm{Comm}(\mathcal{V}^G_{n,k})=\mathrm{Comm}(\mathcal{V}^G_{n,n})$ if, and only if, $k$-local $G$-invariant gates are semi-universal. Therefore, if semi-universality does not hold, then the uniform distribution over $\mathcal{V}^G_{n,k}$ is not a 1-design for the uniform distribution over the group of all $G$-invariant unitaries 
  $\mathcal{V}^G_{n,n}$.
\end{proposition}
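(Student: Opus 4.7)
The forward direction is an immediate sandwich argument. If $\mathcal{V}^G_{n,k}$ is semi-universal then $\mathcal{SV}^G_{n,n}\subseteq \mathcal{V}^G_{n,k}\subseteq \mathcal{V}^G_{n,n}$, so $\mathrm{Comm}(\mathcal{V}^G_{n,n})\subseteq \mathrm{Comm}(\mathcal{V}^G_{n,k})\subseteq \mathrm{Comm}(\mathcal{SV}^G_{n,n})$. Because $G$ is Abelian every irrep space $\mathcal{Q}_\lambda$ is one-dimensional, so $\mathcal{SV}^G_{n,n}$ acts on each $\mathcal{M}_\lambda$ as the full group $\SU(\mathcal{M}_\lambda)$, which is irreducible. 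The two outer commutants therefore both equal $\bigoplus_\lambda \mathbb{C}\Pi_\lambda$, squeezing $\mathrm{Comm}(\mathcal{V}^G_{n,k})$ to the same value.

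For the converse I would argue by contrapositive: assuming semi-universality fails, produce an $F\in \mathrm{Comm}(\mathcal{V}^G_{n,k})\setminus \mathrm{Comm}(\mathcal{V}^G_{n,n})$. The strategy rests on an Abelian-specific structural claim: the commutator subgroup $[\mathcal{V}^G_{n,k},\mathcal{V}^G_{n,k}]$ always factors as $\prod_\alpha \SU(\mathcal{N}_\alpha)$ for some orthogonal decomposition $\mathcal{H}=\bigoplus_\alpha \mathcal{N}_\alpha$ refining the isotypic decomposition, with each $\mathcal{N}_\alpha\subseteq \mathcal{M}_{\lambda(\alpha)}$. Granted this, failure of semi-universality forces $\{\mathcal{N}_\alpha\}$ to be strictly finer than $\{\mathcal{M}_\lambda\}$ on at least one sector, and the projector onto a strictly refined piece furnishes $F$. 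To prove the structural claim I would analyse the Lie algebra $\mathfrak{g}_{n,k}$ of $\mathcal{V}^G_{n,k}$: \cref{lemma-1} characterises its central (``character'') part, reducing the question to the semisimple part $[\mathfrak{g}_{n,k},\mathfrak{g}_{n,k}]$. A graph-theoretic analysis whose vertices are the per-site charge basis states and whose edges come from off-diagonal $k$-local $G$-invariant generators identifies the connected components $\mathcal{N}_\alpha$ and shows that on each one the semisimple part acts as $\mathfrak{su}(\mathcal{N}_\alpha)$.

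The main obstacle is ruling out ``accidental'' irreducible compact Lie subgroups of $\U(\mathcal{N}_\alpha)$ other than the full unitary group---for instance the $\SO(m)\subset \SU(m)$ pattern, which acts irreducibly on $\mathbb{C}^m$ with a strictly smaller Lie algebra, so that commutants could coincide even without full semi-universality. What rescues the Abelian case is that the on-site charge basis is intrinsically distinguished and $\mathcal{V}^G_{n,k}$ contains enough \emph{complex} (rather than real-orthogonal) diagonal unitaries: the $1$-local $G$-invariant phase generators already produce complex diagonal unitaries that are incompatible with any real or symplectic form, and iterating commutators against the off-diagonal charge-transport terms extends this to the full maximal torus of $\U(\mathcal{N}_\alpha)$ on each orbit. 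Since a closed connected subgroup of $\U(m)$ containing a maximal torus and acting irreducibly must be $\U(m)$ itself, this excludes the accidents and finishes the contrapositive.
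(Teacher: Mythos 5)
First, a remark on context: the paper does not prove this proposition itself — it is imported from \cite{marvian2024theoryAbelian} — and the paper's own ``explanation'' is via the classification of obstructions: for Abelian symmetries only type $\mathbf{II}$ obstructions (reducible action of $\mathcal{V}^G_{n,k}$ on some multiplicity space $\mathcal{M}_{\lambda}$) can occur among types $\mathbf{II}$--$\mathbf{IV}$, and type $\mathbf{II}$ is by definition the statement that the commutants differ. Against that backdrop, your forward direction contains a concrete error: $\mathrm{Comm}(\mathcal{SV}^G_{n,n})$ is \emph{not} equal to $\bigoplus_\lambda \mathbb{C}\Pi_\lambda$ in general. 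Whenever two distinct sectors $\lambda_1\neq\lambda_2$ both have multiplicity one (for $\U(1)$ on qubits these are $w=0$ and $w=n$), every $V\in\mathcal{SV}^G_{n,n}$ acts as the identity on both, so the off-diagonal operator $|\lambda_1\rangle\langle\lambda_2|$ lies in $\mathrm{Comm}(\mathcal{SV}^G_{n,n})$ but not in $\mathrm{Comm}(\mathcal{V}^G_{n,n})$; irreducibility of each block does not control intertwiners \emph{between} equivalent (here, both trivial) blocks. Hence the sandwich does not close on its own: you must additionally use the central part of $\mathcal{V}^G_{n,k}$ — e.g.\ the $1$-local charge rotations, which assign distinct phases to distinct multiplicity-one sectors because such sectors correspond to distinct multisets of on-site charges — to kill these off-diagonal blocks. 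This is fixable, but as written the step is false.

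The converse rests entirely on your ``structural claim'' that $[\mathcal{V}^G_{n,k},\mathcal{V}^G_{n,k}]=\prod_\alpha\SU(\mathcal{N}_\alpha)$ for a refinement $\{\mathcal{N}_\alpha\}$ of the isotypic decomposition, i.e.\ that types $\mathbf{III}$ and $\mathbf{IV}$ never occur for Abelian symmetries. That claim \emph{is} the hard theorem of \cite{marvian2024theoryAbelian}, and your sketch does not establish it. The step ``iterating commutators against the off-diagonal charge-transport terms extends this to the full maximal torus of $\U(\mathcal{N}_\alpha)$'' essentially asserts the conclusion: the diagonal $k$-local $G$-invariant generators span only a polynomially large subspace of the diagonal algebra on a component whose dimension is typically exponential, so one starts nowhere near a maximal torus, and obtaining one is equivalent to obtaining $\mathfrak{su}(\mathcal{N}_\alpha)$. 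Moreover, the dangerous accidents are not only the $\SO/\Sp$ forms excluded by ``complex phases'': irreducibly embedded proper simple subgroups (the spin-$j$ image of $\SU(2)$ inside $\SU(2j+1)$, tensor-product subgroups, etc.) are precisely what occurs for $2$-local $\SU(d\ge 3)$-invariant gates, and nothing in your argument rules them out for a generic Abelian representation. A smaller omission: you should also justify that the projectors $\Pi_{\mathcal{N}_\alpha}$ commute with all of $\mathcal{V}^G_{n,k}$ and not merely with its commutator subgroup (this follows from normality of the commutator subgroup together with connectedness of $\mathcal{V}^G_{n,k}$, but it needs to be said). As it stands, the converse is a program that defers its only nontrivial step to the very result being quoted.
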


To understand the failure of semi-universality and the above proposition better, it is useful to recall the classification of restrictions on universality and semi-universality in \cite{marvian2024theoryAbelian} (See \cref{lem:formal} in \cref{App:fail}, and \cite{hulse2024framework} for more discussion on the failure of semi-universality). In this reference, type $\mathbf{I}$ constraints refer to restrictions on the relative phases, i.e., 
when $\mathcal{W}^G$ does not contain the subgroup $\mathcal{P}$ defined in \cref{rel}. On the other hand, type $\mathbf{II}-\mathbf{IV}$ constraints are restrictions on the commutator subgroup of $\mathcal{W}^G$. Equivalently, these constraints describe the situation in which $\mathcal{W}^G$ does not contain $\mathcal{SV}^G$. 

In particular, type $\mathbf{II}$ constraints refer to the case where there exists an irrep $\lambda_\ast\in\irreps_G$ such that the action of $\mathcal{W}^G$ in $\mathcal{M}_{\lambda_\ast}$ is reducible. In other words, type $\mathbf{II}$ constraints exist if, and only if 
\be
\mathrm{Comm}\{V: V\in \mathcal{W}^G\}\neq \mathrm{Comm}\{V: V\in \mathcal{V}^G\}\ .
\ee
Therefore, in this case, the uniform distribution over $\mathcal{W}^G$ is not a 1-design for the uniform distribution over $\mathcal{V}^G$.

Type $\mathbf{III}$ constraints, on the other hand, refer to the case where there exists $\lambda_\ast\in\irreps_G$ such that the action of $\mathcal{W}^G$ on $\mathcal{M}_{\lambda_\ast}$ is irreducible, but it does not contain the full $\SU(\mathcal{M}_{\lambda_\ast})$. 
If $\mathcal{W}^G$ is connected, which is the case of $\mathcal{W}^G=\mathcal{V}^G_{n,k}$, then 
\be\label{double-comm}
\mathrm{Comm}\{V\otimes V: V\in \mathcal{W}^G\}\neq \mathrm{Comm}\{V\otimes V: V\in \mathcal{V}^G\}\ .
\ee
This follows from the fact that a compact connected group of unitaries $\mathcal{T}\subseteq \U(d) $ contains $\SU(d)$ if, and only if, $\mathrm{Comm}\{U\otimes U: U\in \mathcal{T}\}= \mathrm{Comm}\{U\otimes U: U\in \SU(d)\}$ \cite{dynkin1957maximal, zimboras2015symmetry, robert2015squares}
\footnote{This is sometimes called lack of quadratic symmetries \cite{zimboras2015symmetry, robert2015squares}}. We conclude that in the presence of type $\mathbf{III}$ constraints, assuming $\mathcal{W}^G$ is connected, then the uniform distribution over $\mathcal{W}^G$ is not a 2-design for the uniform distribution over $\mathcal{V}^G$.

Finally, according to characterization in \cite{marvian2024theoryAbelian} type $\mathbf{IV}$ constraints refer to the case in which the realized unitaries in a sector $\mathcal{M}_\lambda$ dictate the unitaries in one or multiple other sectors. In other words, in general, the time evolution of different sectors is not independent of each other. As we further explain in \cref{App:fail}, in the absence of types $\mathbf{II}$ and $\mathbf{III}$ constraints, type $\mathbf{IV}$ constraints happen only if there exist distinct irreps $\lambda_1,\lambda_2\in\irreps_G$ for which $\dim(\mathcal{M}_{\lambda_1})=\dim(\mathcal{M}_{\lambda_2})$, and the unitary realized in $\mathcal{M}_{\lambda_2}$ (up to a phase and change of basis) is equal to the unitary realized in $\mathcal{M}_{\lambda_1}$, or its complex conjugate. In both cases, \cref{double-comm} holds as an inequality.

It is worth noting that both types $\mathbf{III}$ and $\mathbf{IV}$ constraints exist in the case of circuits with 2-qudit $\SU(d)$-invariant gates for $d\ge 3$ \cite{marvian2022quditcircuit}. On the other hand, Ref. \cite{marvian2024theoryAbelian} shows that while types $\mathbf{II}$ can appear in the case of Abelian symmetries, types $\mathbf{III}$ and $\mathbf{IV}$ do not exist in quantum circuits with such symmetries.

\subsection{Lower bound on \texorpdfstring{$t_{\max}$}{tmax}}

In the rest of this paper, we assume semi-universality holds. When the set of realizable unitaries $\mathcal{W}^G$ is semi-universal, i.e., contains $\mathcal{SV}^G$, then the statistical properties of 
the Haar distribution over $\mathcal{W}^G$ is determined by the constraints on the relative phases of the realizable unitaries, which is characterized by the subgroup $\mathcal{W}^G \cap\mathcal{P}$.

To understand how the constraints on the relative phases affect the statistical properties of random $G$-invariant unitaries, it is useful to consider a subset of irreps $\Delta \subseteq \irreps_G$ for which there are no constraints on the relative phases, such that the projections of the groups $\mathcal{W}^G$ and $\mathcal{V}^G$ to the subspace $\bigoplus_{\lambda\in\Delta} (\mathcal{Q}_\lambda\otimes \mathcal{M}_\lambda)$ are equal. In other words, for any $V \in \mathcal{V}^G$, there exists $W \in \mathcal{W}^G$ such that $V\sum_{\lambda\in\Delta} \Pi_\lambda=W\sum_{\lambda\in\Delta} \Pi_\lambda$. 

Assuming $\mathcal{W}^G$ is generated by 1-parameter groups 
$\{\exp(i s H_l): s\in\mathbb{R}\}$ for $l=1,\cdots, N$, and it contains $\mathcal{SV}^G$, then this condition is satisfied if, and only if, 
\be\label{cond}
\dim(\mathrm{span}_\mathbb{R}\{\sum_{\mu\in\Delta} \Tr(H_l\Pi_\mu) |\mu\rangle \}_l)=|\Delta|\ ,
\ee
where $|\Delta|$ is the number of inequivalent irreps in $\Delta$. Equivalently, 
this condition can be phrased in terms of the matrix 
\be\label{matrix}
\mathbf{M}_{l,\mu}=m_\mu \frac{\Tr(\Pi_\mu H_l)}{\Tr(\Pi_\mu)}\ : l=1,\cdots, N ; \mu\in \irreps_G\ ,
\ee
where 
\be
m_\mu=\dim(\mathcal{M}_\mu)
\ee
is the multiplicity of irrep $\mu\in\irreps_G$ in the representation $U(g): g\in G$. Let $\mathbf{M}^{\Delta}$ be $N\times |\Delta|$ submatrix obtained from this matrix by picking columns that correspond to irreps $\mu\in\Delta$. Then, the condition in \cref{cond} is equivalent to the condition that 
\be\label{cond9}
\rk\big(\mathbf{M}^{\Delta})=|\Delta|\ ,
\ee
where $\rk(\cdot)$ denotes the rank of the matrix.

Then, this condition together with the semi-universality assumption, implies the following lower bound on $t_{\max}$ (See the definition below \cref{design2}).

\begin{proposition}\label{prop1}
  Suppose $\mathcal{W}^G$ is a compact Lie group generated by 1-parameter groups 
  $\{\exp(\i s H_l): s\in\mathbb{R}\}_l$, where $\{H_l\}_l$ are $G$-invariant Hermitian operators. If $\mathcal{W}^G$ is semi-universal, i.e., $\mathcal{W}^G \supseteq \mathcal{SV}^G$, then $t_{\max}$ for the Haar measure over $\mathcal{W}^G$ satisfies
  \be
  t_{\max}  \ge \min\big\{ \dim(\mathcal{M}_\lambda): \lambda\in\irreps_G- \Delta \big\}-1\ ,
  \ee
  where $\Delta \subseteq \irreps_G$ is any subset of irreps 
  satisfying \cref{cond}, or equivalently, \cref{cond9}.
\end{proposition}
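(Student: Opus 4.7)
The plan is to compare the two moments in \cref{design1} block by block, exploiting the factorization afforded by semi-universality. Because $\mathcal{W}^G \supseteq \mathcal{SV}^G$, every $V \in \mathcal{W}^G$ can be written (uniquely modulo the finite subgroup $\mathcal{P} \cap \mathcal{SV}^G$) as $V = P\tilde V$ with $P \in \mathcal{W}^G \cap \mathcal{P}$ and $\tilde V \in \mathcal{SV}^G$. Since $\mathcal{P}$ is central in $\mathcal{V}^G$, the multiplication map $(\mathcal{W}^G \cap \mathcal{P}) \times \mathcal{SV}^G \to \mathcal{W}^G$ is a surjective group homomorphism with finite kernel, so the Haar measure factors as a product of the Haar measures on the two factors; the analogous factorization with $\mathcal{P}$ replacing $\mathcal{W}^G \cap \mathcal{P}$ holds for $\mathcal{V}^G$. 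The $\tilde V$-integrals are identical on both sides, so the $t$-design property reduces to matching the $P$-averages.

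Next I would decompose any $A$ on $\mathcal{H}^{\otimes t}$ into blocks $\Pi_{\vec\lambda} A \Pi_{\vec\lambda'}$ with $\vec\lambda, \vec\lambda' \in \irreps_G^t$ and $\Pi_{\vec\lambda} = \bigotimes_j \Pi_{\lambda_j}$, and track how the conjugation $V^{\otimes t}(\cdot) V^{\dag\,\otimes t}$ acts. On the $(\vec\lambda,\vec\lambda')$-block it contributes a scalar phase $\prod_\lambda e^{i\theta_\lambda(n_\lambda(\vec\lambda) - n_\lambda(\vec\lambda'))}$ coming from $P = \sum_\lambda e^{i\theta_\lambda}\Pi_\lambda$, together with the action $\bigotimes_\lambda \tilde v_\lambda^{\otimes n_\lambda(\vec\lambda)}(\cdot)\tilde v_\lambda^{\dag\,\otimes n_\lambda(\vec\lambda')}$ on the multiplicity factors, where $n_\lambda(\vec\mu) := |\{j : \mu_j = \lambda\}|$. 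Two independent vanishing mechanisms then enter: the center $\mathbb{Z}_{m_\lambda} \subset \SU(\mathcal{M}_\lambda)$ of each $\tilde v_\lambda$ multiplies the block by $\omega^{n_\lambda(\vec\lambda) - n_\lambda(\vec\lambda')}$, so averaging $\tilde V$ over $\mathcal{SV}^G$ annihilates any block with $n_\lambda(\vec\lambda) \not\equiv n_\lambda(\vec\lambda') \pmod{m_\lambda}$ for some $\lambda$; separately, the $P$-average kills the block unless the character $(\theta_\lambda) \mapsto \prod_\lambda e^{i\theta_\lambda(n_\lambda(\vec\lambda) - n_\lambda(\vec\lambda'))}$ is trivial on the phase subgroup being integrated.

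For $t \le \min\{m_\lambda : \lambda \notin \Delta\} - 1$, every $\lambda \notin \Delta$ satisfies $|n_\lambda(\vec\lambda) - n_\lambda(\vec\lambda')| \le t < m_\lambda$, so the $\SU(\mathcal{M}_\lambda)$-center congruence collapses to strict equality $n_\lambda(\vec\lambda) = n_\lambda(\vec\lambda')$ on all such $\lambda$. The residual character then depends only on the $\Delta$-coordinates. By the equivalence established in the paragraph preceding the proposition, hypothesis \cref{cond9} says precisely that the projection of $\mathcal{W}^G \cap \mathcal{P}$ onto the $\Delta$-phase torus surjects onto $\mathbb{T}^{|\Delta|}$: the rows of $\mathbf{M}^\Delta$ span $\mathbb{R}^{|\Delta|}$, which is the Lie algebra of the image, and for a closed connected Lie subgroup of a torus, Lie-algebra surjectivity upgrades to group-level surjectivity. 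A non-trivial character of $\mathbb{T}^{|\Delta|}$ then averages to zero, forcing $n_\lambda(\vec\lambda) = n_\lambda(\vec\lambda')$ also for every $\lambda \in \Delta$.

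Putting these together, the only blocks that survive either average are those for which $\vec\lambda$ and $\vec\lambda'$ are equal as multisets, and on each such block the remaining $\tilde V$-integral is the identical $\mathcal{SV}^G$-Haar expression on both sides; the two $t$-th moment operators therefore agree, which proves the bound. The step requiring the most care is the translation of the algebraic rank condition \cref{cond9} into the torus-surjectivity statement for $\mathcal{W}^G \cap \mathcal{P} \to \mathbb{T}^{|\Delta|}$, where one must argue compatibly with the identity component; the remainder is combinatorial bookkeeping together with the standard center-vanishing identity for Haar averages over $\SU(m_\lambda)$.
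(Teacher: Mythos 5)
Your proof is correct and follows essentially the same route as the paper's: the paper likewise reduces the $t$-design condition to blocks labelled by occupation numbers $\mathbf{n},\mathbf{n}_\ast$, kills blocks with $\mathbf{n}(\lambda)\neq\mathbf{n}_\ast(\lambda)$ for some $\lambda\notin\Delta$ via the $\mathbb{Z}_{m_\lambda}$ center of $\SU(\mathcal{M}_\lambda)$ combined with $t<m_\lambda$, and kills the remaining off-diagonal blocks using one-parameter phase subgroups $\exp(\i\theta A_1^\mu)\subseteq\mathcal{W}^G\cap\mathcal{P}$ extracted from the rank condition \cref{cond9} — which is exactly the content of your torus-surjectivity step. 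The only presentational difference is that you factor the Haar measure over $\mathcal{W}^G=(\mathcal{W}^G\cap\mathcal{P})\cdot\mathcal{SV}^G$ up front (as the paper does in its proof of \cref{thm:tdesign}), whereas the paper's proof of \cref{prop1} instead invokes left-invariance of the Haar measure under the relevant one-parameter subgroups; the substance is identical.
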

We prove this proposition in \cref{Sec:proof1} and present various examples of its applications in \cref{Sec:Examples}. Note that $\Delta$ can always be chosen to contain, at least, one irrep, i.e., $|\Delta|\geq 1$. This is because, without loss of generality, we can assume that the global phases $\e^{\i \theta} \ident$ are contained in $\mathcal{W}^G$, since they do not affect the $t$-design properties of the group (see also the discussion in \cref{fn:phase}).  Then, the above proposition implies that

\begin{corollary}\label{cor6}
Assume semi-universality holds and 
let $m_0\leq m_1$ be the two minimum multiplicities. Then,  $t_{\max}\ge m_1-1$. 
\end{corollary}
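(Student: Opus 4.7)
The plan is to derive the corollary as a direct specialization of \cref{prop1}, choosing the subset $\Delta \subseteq \irreps_G$ to be the singleton $\{\lambda_0\}$, where $\lambda_0$ is any irrep achieving the smallest multiplicity $m_0$. With this choice, every remaining irrep $\lambda \in \irreps_G - \Delta$ satisfies $\dim(\mathcal{M}_\lambda) \geq m_1$ by the definition of the second-smallest multiplicity, so once the hypothesis of \cref{prop1} is verified for this $\Delta$, the bound $t_{\max} \geq m_1 - 1$ is immediate.

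The only step requiring justification is the rank condition \eqref{cond9}, which for $|\Delta| = 1$ reduces to showing that the column of $\mathbf{M}$ indexed by $\lambda_0$ is nonzero. Here I would invoke the observation made immediately before the corollary statement (and in \cref{fn:phase}): the global-phase subgroup $\{\e^{\i\theta}\id:\theta\in[0,2\pi)\}$ commutes with every unitary and does not alter the $t$-design properties of $\mathcal{W}^G$, so without loss of generality we may assume $\mathcal{W}^G$ contains these phases and include $H = \id$ among its generating Hamiltonians. The row of $\mathbf{M}$ associated to $H=\id$ has entries
\begin{equation*}
  \mathbf{M}_{l,\mu} \;=\; m_\mu\,\frac{\Tr(\Pi_\mu\,\id)}{\Tr(\Pi_\mu)} \;=\; m_\mu \;>\; 0
\end{equation*}
for every $\mu\in\irreps_G$; in particular the $\lambda_0$-column is nonzero, so $\rk(\mathbf{M}^{\Delta}) = 1 = |\Delta|$.

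With the rank condition verified, \cref{prop1} yields
\begin{equation*}
  t_{\max} \;\geq\; \min\bigl\{\dim(\mathcal{M}_\lambda): \lambda\in\irreps_G-\{\lambda_0\}\bigr\} - 1 \;=\; m_1 - 1,
\end{equation*}
which is exactly the claim; note that this also handles the degenerate case $m_0 = m_1$, since we only remove a single irrep. I do not foresee any substantive obstacle: the corollary is essentially a repackaging of \cref{prop1} together with the identity-generator trick, and the entire argument rests on the single observation that including $\id$ among the generators makes the one-column submatrix $\mathbf{M}^{\Delta}$ trivially of full rank.
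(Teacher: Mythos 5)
Your proof is correct and follows essentially the same route as the paper: the paper likewise derives \cref{cor6} from \cref{prop1} by noting that the global phases $\e^{\i\theta}\id$ can be assumed to lie in $\mathcal{W}^G$ without affecting the design properties, so that $\Delta$ may always be taken to contain one irrep (the lowest-multiplicity one), leaving $m_1$ as the minimum over the remaining irreps. Your explicit verification of the rank-one condition via the identity row of $\mathbf{M}$ just spells out what the paper leaves implicit.
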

For example, as discussed in \cref{Sec:SU(2)}, in the case of SU(2) symmetry, this result together with the semi-universality of 2-qubit  gates established in \cite{Marvian2024Rotationally}, implies that for the distribution of unitaries generated by random 2-qubit gates,  $t_{\max}\ge n-2$, i.e., grows at least linearly with the system size (The actual value of $t_{\max}$ in this case is $(n-1)(n-3)-1$).

It is also worth noting that for the group $\mathcal{V}^G_{n,k}$ in the context of symmetric quantum circuits, as discussed in \cref{lem1}, the result of \cite{Marvian2022Restrict} implies that 
\be
\dim(\mathrm{span}_\mathbb{R}\{\sum_{\mu\in\Delta} \Tr(H_l\Pi_\mu) |\mu\rangle \}_l)\le |\irreps_G(k)|\ . 
\ee
Therefore, to satisfy the condition in \cref{cond}, the number of irreps inside $\Delta$ can be, at most, $|\irreps_G(k)|$.

\subsection{Full characterization}

While \cref{prop1} establishes useful lower bounds on $t_{\max}$, it cannot determine the actual value of this quantity. Next, we present equations that achieve this (See \cref{Sec:comm} and \cref{Sec:design} for the proof). 
\begin{theorem}
  \label{thm:tdesign}
  Let $\mathcal{W}^G$ be the Lie group generated by the 1-parameter families $\{\exp(-\i H_l t): t\in\mathbb{R}\}_l$, where $H_l$ are $G$-invariant Hermitian operators. Assume $\mathcal{W}^G$ is compact and semi-universal, i.e., $\mathcal{W}^G \supseteq \mathcal{SV}^G$. Then, $t_{\max}$ for the uniform distribution over $\mathcal{W}^G$ is 
  \begin{align}\label{eq:tmax}
  t_{\max} & = \frac{1}{2} \min_{Q} \|Q\|_1-1 = \frac{1}{2} \min_{q} \hspace{-3mm} \sum_{\lambda\in \irreps_G} \hspace{-3mm} |q(\lambda)| m_\lambda-1 ,
  \end{align}
  where
  \begin{align}\label{QQ}
    Q=\sum_{\lambda \in \irreps_G} q(\lambda) m_\lambda \frac{\Pi_\lambda}{\tr \Pi_\lambda}\ ,
  \end{align} 
  $q(\lambda): \lambda\in \irreps_G$ is a set of integers, and the minimization is over all operators $Q$, or equivalently integers $q(\lambda)$, satisfying the following conditions :
  \begin{subequations}\label{eq:cond}
  \begin{align}
      \|Q\|_1 &= \sum_{\lambda\in \irreps_G} |q(\lambda)| m_\lambda \neq 0, \label{cond1} \\
      \Tr(Q) &= \sum_{\lambda\in \irreps_G} q(\lambda) m_\lambda = 0\ , \label{cond0} \\
      \Tr(H_l Q) &= \sum_{\lambda \in \irreps_G} \hspace{-3mm} q(\lambda) m_\lambda \frac{\Tr(H_l \Pi_\lambda)}{\Tr(\Pi_\lambda)} = 0, \label{cond2}
  \end{align}
  \end{subequations}
  for $l=1,\cdots, N$. On the other hand, if there does not exist any non-zero operator $Q$ satisfying \cref{eq:cond}, i.e., $\Tr(Q)=\Tr(H_l Q)=0$ for $l=1,\cdots, N$, then $\mathcal{W}^G$ contains all $G$-invariant unitaries, up to a possible global phase, which means $t_{\max} = \infty$.    
\end{theorem}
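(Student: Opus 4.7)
The plan is to convert the $t$-design condition into an equality of commutant algebras on $\mathcal{H}^{\otimes t}$, and then to analyze that equality by combining harmonic analysis on the torus $\mathcal{P}$ with Schur--Weyl duality for $\prod_\lambda \SU(\mathcal{M}_\lambda)$. First I would use semi-universality to factorize $\mathcal{W}^G = \mathcal{SV}^G \cdot (\mathcal{W}^G \cap \mathcal{P})$: any $V = \tilde V P$ with $\tilde V \in \mathcal{SV}^G$ and $P \in \mathcal{P}$ satisfies $P = \tilde V^{-1} V \in \mathcal{W}^G \cap \mathcal{P}$ whenever $V \in \mathcal{W}^G$. Taking $t$-fold tensor powers and commutants converts products of groups into intersections of algebras, reducing the design condition to
\[
  \mathcal{A} \cap \mathcal{B}_{\mathcal{W}^G \cap \mathcal{P}} = \mathcal{A} \cap \mathcal{B}_{\mathcal{P}}, \quad \mathcal{A} := \mathrm{Comm}\{\tilde V^{\otimes t}: \tilde V \in \mathcal{SV}^G\}, \quad \mathcal{B}_H := \mathrm{Comm}\{P^{\otimes t}: P \in H\}.
\]
The containment $\supseteq$ is automatic, so the task is to locate the smallest $t$ at which an excess operator appears on the left.

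Next I would decompose $\mathcal{H}^{\otimes t} = \bigoplus_n \mathcal{E}_n$ into joint eigenspaces of $\{P^{\otimes t}: P \in \mathcal{P}\}$, indexed by occupation vectors $n = (n_\lambda) \in \mathbb{Z}_{\ge 0}^{|\irreps_G|}$ with $\sum_\lambda n_\lambda = t$, on which $P = \sum_\lambda e^{i\theta_\lambda}\Pi_\lambda$ acts as multiplication by $e^{i\sum_\lambda n_\lambda \theta_\lambda}$. The Lie algebra of $\mathcal{W}^G \cap \mathcal{P}$, inside the abelian Lie algebra of $\mathcal{P}$, is the $\mathbb{R}$-span of the central projections $i\sum_\lambda \alpha_{l,\lambda}\Pi_\lambda$ of $iH_l$, where $\alpha_{l,\lambda} = \Tr(H_l\Pi_\lambda)/\Tr \Pi_\lambda$. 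By Pontryagin duality for closed subgroups of a torus, two eigenspaces $\mathcal{E}_{n^{(1)}}, \mathcal{E}_{n^{(2)}}$ belong to the same $\mathcal{W}^G \cap \mathcal{P}$-eigenspace iff the integer difference $q := n^{(1)} - n^{(2)} \in \mathbb{Z}^{|\irreps_G|}$ satisfies $\sum_\lambda q_\lambda \alpha_{l,\lambda} = 0$ for every $l$, together with $\sum_\lambda q_\lambda = 0$ forced by $\sum_\lambda n^{(i)}_\lambda = t$. Any excess operator must therefore have a non-zero block between such a pair with $q \ne 0$.

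The key representation-theoretic step uses $\mathcal{SV}^G \cong \prod_\lambda \SU(\mathcal{M}_\lambda)$ (up to a finite central subgroup). As an $\mathcal{SV}^G$-representation, $\mathcal{E}_n$ is a direct sum of copies of $\bigotimes_\lambda \mathcal{M}_\lambda^{\otimes n_\lambda}$ coming from the orderings of $(\lambda_1,\ldots,\lambda_t)$ matching $n$ and from the $\mathcal{SV}^G$-trivial charge-space factors $\bigotimes_j \mathcal{Q}_{\lambda_j}$. By the Schur--Weyl classification of polynomial irreps, $\mathcal{M}_\lambda^{\otimes n^{(1)}_\lambda}$ and $\mathcal{M}_\lambda^{\otimes n^{(2)}_\lambda}$ share a common $\SU(m_\lambda)$-irrep iff $n^{(1)}_\lambda \equiv n^{(2)}_\lambda \pmod{m_\lambda}$, because two Young diagrams yield the same $\SU(m_\lambda)$-irrep exactly when they differ by full columns of height $m_\lambda$. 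Hence a non-zero $\mathcal{SV}^G$-equivariant map $\mathcal{E}_{n^{(2)}} \to \mathcal{E}_{n^{(1)}}$ exists iff $q_\lambda \in m_\lambda \mathbb{Z}$ for every $\lambda$. Writing $q_\lambda = m_\lambda \tilde q_\lambda$ recasts the two constraints above as $\sum_\lambda m_\lambda \tilde q_\lambda = 0$ and $\sum_\lambda m_\lambda \tilde q_\lambda \alpha_{l,\lambda} = 0$, which are exactly $\Tr Q = 0$ and $\Tr(H_l Q) = 0$ for $Q = \sum_\lambda \tilde q_\lambda m_\lambda \Pi_\lambda/\Tr \Pi_\lambda$, identifying $\tilde q$ with the $q$ of the theorem and matching \cref{cond0,cond2}.

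To conclude, the smallest $t$ realising an integer $q$ as $n^{(1)} - n^{(2)}$ with $n^{(i)} \ge 0$ and $\sum_\lambda n^{(i)}_\lambda = t$ is $\tfrac{1}{2}\|q\|_1 = \tfrac{1}{2}\sum_\lambda m_\lambda |\tilde q_\lambda| = \tfrac{1}{2}\|Q\|_1$, attained by $n^{(1)} = q_+, n^{(2)} = q_-$. Therefore $t_{\max} + 1 = \tfrac{1}{2}\min_Q \|Q\|_1$ over admissible non-zero $Q$, establishing \cref{eq:tmax}. If no such $Q$ exists, no excess operators arise at any $t$, so the Haar measures of $\mathcal{W}^G$ and $\mathcal{V}^G$ agree on all matrix-entry polynomials; by Stone--Weierstrass they coincide as measures and hence $\mathcal{W}^G = \mathcal{V}^G$ up to a global phase, giving $t_{\max} = \infty$. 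The hard part will be the Schur--Weyl step: proving the congruence criterion for shared $\SU(m_\lambda)$-irreps, constructing an explicit witnessing $\mathcal{SV}^G$-equivariant operator through the tensor decomposition of $\mathcal{E}_n$, and verifying that it survives the intersection with $\mathcal{A}$ so as to genuinely certify $t$-design failure rather than being cancelled by further commutation constraints.
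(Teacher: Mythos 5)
Your proposal is correct and follows essentially the same route as the paper's proof: your decomposition of $\mathcal{H}^{\otimes t}$ into eigenspaces of the relative-phase torus indexed by occupation vectors is the paper's decomposition into partitions $(\mathbf{n},\mathbf{n}_\ast)$, your Pontryagin-duality condition is the paper's lemma on one-parameter central subgroups, and your Schur--Weyl congruence criterion is exactly the paper's fact that $u^{\otimes r}\otimes u^{\ast\otimes s}$ admits an invariant vector iff $r\equiv s \pmod{m}$. The step you flag as hard --- checking that the witness survives the intersection with $\mathcal{A}$ --- is handled just as in the paper, via the explicit singlet construction of the operators $|\Psi_+\rangle\langle\Psi_-|$ together with the factorization $\mathcal{W}^G=\mathcal{SV}^G\cdot(\mathcal{W}^G\cap\mathcal{P})$, so there is no genuine gap.
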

Note that \cref{cond0} implies that $\|Q\|_1=\sum_{\lambda\in \irreps_G} |q(\lambda)| m_\lambda$ is an even integer \footnote{Recall that for any operator $A$, $\|A\|_1=\Tr(\sqrt{A A^\dag})$, or, equivalently, the sum of the singular values of the operator.}, which means the right-hand side of \cref{eq:tmax} is an integer, consistent with the fact that $t_{\max}$ is an integer. \Cref{cond2} means that operator $Q$ is a $G$-invariant Hamiltonian that is orthogonal to all the Hamiltonians $\{H_l\}$. This implies that the family of $G$-invariant unitaries $\exp(\i s Q): s\in\mathbb{R}$, does not belong to $\mathcal{W}^G$ (except for some special points such as $s=0$). Equivalently, $Q$ does not belong to the Lie algebra generated by $\{H_l\}$.\footnote{This follows from the fact that $Q$ is orthogonal to all the Hamiltonians $\{H_l\}$ together with the fact that it also commutes with them. The latter implies that 
for any pair of $G$-invariant operators $B_1, B_2$, it holds that  $\Tr([B_1,B_2]Q)=\Tr(B_1[B_2,Q])=0$. It follows that $Q$ is orthogonal to all elements of the Lie algebra \cite{zimboras2015symmetry, Marvian2022Restrict}.}   In \cref{Sec:comm} we discuss another interpretation of integers $q(\lambda)$ and operator $Q$.   We also note that \cref{cond0} follows from \cref{cond2} if the identity operator is in the span of the $H_l$.

Equivalently, we can phrase conditions in \cref{thm:tdesign},  in terms of matrix $\mathbf{M}$ defined in \cref{matrix}, a vector of integers 
\be
\mathbf{q}=(q(\lambda_1),\cdots, )^\T\in \mathbb{Z}^{|\irreps_G|}\ , 
\ee
and the vector of (non-negative integer) multiplicities 
\be
\mathbf{m}=(m_{\lambda_1}, \cdots, )^\T\in \mathbb{N}_0^{|\irreps_G|} \ . 
\ee
Then, the conditions in \cref{eq:cond} can be rewritten as
\bes\label{ue}
\begin{align}
  0<\mathbf{m} \cdot \mathbf{|q|} &\le 2t\\ 
  \mathbf{m} \cdot \mathbf{q} &={0}\\ 
  \mathbf{M} \mathbf{q}&=\mathbf{0}\ , \label{ds2}
\end{align}
\ees
where $|\mathbf{q}|=(|q(\lambda_1)|,\cdots, )^\T$ is the vector with element-wise absolute value of $\mathbf{q}$. Using this notation, we have
\begin{align}
  t_{\max} &= \frac{1}{2}\min_{\mathbf{q}} |\mathbf{q}|\cdot \mathbf{m}-1\ ,
\end{align}
where the minimization is over all integers $q(\lambda) : \lambda\in \irreps_G$ satisfying conditions in 
\cref{ue}.  This, in particular, implies the bound
\begin{equation}\label{eq:tdims}
  t_{\max} \geq \frac{1}{2} \min_{\mathbf{q}}
  \sum_{\lambda \in \operatorname{supp}(\mathbf{q})} m_\lambda-1\ ,
\end{equation}
where the summation is over the support of function $\mathbf{q}$, or equivalently, over all irreps $\lambda$ for which $Q \Pi_\lambda$ is non-zero. On the other hand, choosing any arbitrary integers $\mathbf{q}$, provided that the corresponding operator $Q$ satisfies $\Tr(H_l Q)=0: l=1,\cdots, N$ and $\Tr(Q)=0$, then we obtain an upper bound on $t_{\max}$, as
\be\label{upper}
t_{\max} \le \frac{1}{2}\|Q\|_1-1\ .
\ee

It is useful to note that for any choice of integers $\textbf{q}$ 
satisfying the condition $\sum_{\lambda\in \irreps_G} q(\lambda) m_\lambda = 0$ in \cref{cond0}, the norm of the corresponding operator $Q=\sum_{\lambda \in \irreps_G} q(\lambda) m_\lambda \frac{\Pi_\lambda}{\tr \Pi_\lambda}$ is lower bounded by
\begin{align}\label{Eq:change}
  \|Q\|_1= \sum_{\lambda} |q(\lambda)|\times m_\lambda\ge 2 \max_{\lambda\in \text{supp}(\textbf{q})} m_\lambda\ .
\end{align}
In \cref{Sec:Examples}, we often use this bound to show that the support of the optimal operator $Q$ should be restricted to sectors with the lowest multiplicities.

\subsection*{Bounding \texorpdfstring{$t_{\max}$}{tmax} using matrix \texorpdfstring{$\mathbf{M}$}{M}}\label{sec:upper}

To apply \cref{prop1} and \cref{thm:tdesign}, it is useful to consider an ordering of the irreps
$\lambda_1, \lambda_2, \cdots$ with weakly increasing multiplicity space dimension, such that
\be\label{oredering}
i < j\ \ \Longrightarrow \ \ m_{\lambda_i}\leq m_{\lambda_j}\ .
\ee
Considering this ordering we can express the optimal lower bound on $t_{\max}$ obtained from \cref{prop1} in the following way. If it exists, let $\ell$ be the smallest integer such that condition in \cref{cond}, or equivalently \cref{cond9}, does not hold, i.e.,
\begin{align}\label{eq:min_ell}
  \ell & = \min \big\{ a: \rk [\mathbf{M}_{l,\mu_i}] = a - 1, \notag \\ & \phantom{= \min \big\{ a :} {} i=1,\cdots, a ; \ \  l=1,\cdots, N\big\} .
\end{align}
Then $t_{\max} \geq m_{\lambda_\ell}-1$. Furthermore, choosing $\Delta = \{\lambda_1, \lambda_2, \cdots, \lambda_\ell\}$, it follows that the kernel of $\mathbf{M}^\Delta$ is one-dimensional, so if there is a null vector $\mathbf{q}$ with integer coefficients (e.g., if $\mathbf{M}^\Delta$ has rational entries), then \cref{thm:tdesign} implies that $t_{\max} + 1 \leq \frac{1}{2} \mathbf{m} \cdot |\mathbf{q}|$.

\subsection{Symmetric quantum circuits}\label{Sec:kqudit}

Next, we focus on $\mathcal{V}^G_{n,k}$, the group of $G$-invariant unitaries that can be realized with $k$-qudit $G$-invariant gates, on a system with the total Hilbert space $(\mathbb{C}^d)^{\otimes n}$ and with on-site representation $U(g)=u(g)^{\otimes n}: g\in G$.  In this case, we find a general and canonical way to write the matrix $\mathbf{M}$, which is determined by the parameters $n$, $k$, and the representation $u$ of the group $G$.  This matrix is denoted by $\mathbf{S}$, as defined below in \cref{eq:S-def}.

First, note that the projectors $\Pi_\lambda$ are permutationally-invariant operators. Therefore, if $H_1$ and $H_2$ are related to each other by a permutation $\P$, as $H_2=\P H_1 \P^\dag$, then $\Tr(H_1\Pi_\lambda)=\Tr(H_2\Pi_\lambda)$. It follows that rather than all $k$-local $G$-invariant Hamiltonians, it suffices to consider only a subset that acts on a fixed set of $k$ qudits.

In fact, results of Ref. \cite{Marvian2022Restrict}, in particular, \cref{lemma-1} provides a more systematic approach for characterizing the projection of these Hamiltonians to $\mathrm{span}\{\Pi_\lambda:\lambda\in \irreps_G(n)\}$. We can rephrase the condition in \cref{span} of this lemma, in terms of the inner products with a set of operators 
\be\label{def:S}
S_\nu=\sum_{\lambda\in \irreps_G(n)} s_\nu (\lambda) \frac{\Pi_\lambda}{{m_\lambda}}\ \ : \nu\in \irreps_G(k)\ , 
\ee
where 
\begin{equation}\label{int}
  s_\nu(\lambda) =\int \mathrm{d} g \, [\Tr(u(g))]^{n - k} f_\lambda(g)^\ast f_\nu(g) \ ,
\end{equation}
and the integral is with respect to the Haar measure over group $G$, and $f_\lambda$ is the character of irrep $\lambda$ of this group. As we explain at the end of this subsection, $s_\nu(\lambda)$ are non-negative integers.

Using $\Tr(\Pi_\lambda u(g)^{\otimes n})= m_\lambda f_\lambda(g)$, one can see that these operators satisfy the property
\begin{align}
    \Tr(S_\nu u(g)^{\otimes n}) = [\tr u(g)]^{n - k} \times f_\nu(g)\ .
\end{align}
Then, \cref{span} can be restated as 
\be
\sum_{\lambda\in \irreps_G(n)} \frac{\Tr(H\Pi_\lambda)}{\Tr(\Pi_\lambda)}\Pi_\lambda \in \mathrm{span}_\mathbb{R}\{S_\nu: \nu\in \irreps_G(k)\}\ .
\ee
In other words, the subspace spanned by operators $S_\nu: \nu \in \irreps_G(k)$ is equal to the projection of the Lie algebra associated with $\mathcal{V}_{n,k}^G$ to 
$\mathrm{span}_\mathbb{R}\{\i \Pi_\lambda:\irreps_G(n)\}$, which is the center of the Lie algebra of all $G$-invariant operators (See \cite{Marvian2022Restrict} for further discussion). Using this characterization, condition in \cref{cond2} can be rewritten as
\be\label{jdsa}
\Tr(Q S_\nu)=0 \ \ \ \ \ \ : \nu\in \irreps_G(k)\ .
\ee
Furthermore, $\e^{\i \theta \ident} \in \mathcal{V}_{n, k}^G$ for all $\theta \in [0, 2 \pi)$, which means that \cref{cond2} implies \cref{cond0}. Therefore, \cref{thm:tdesign} implies that 
\begin{corollary}\label{corf}
  Assuming $k$-local $G$-invariant gates are semi-universal, i.e., 
  $\mathcal{SV}^G_{n,n}\subset\mathcal{V}^G_{n,k}$, then the uniform distribution over the group $\mathcal{V}^G_{n,k}$ generated by $k$-local $G$-invariant gates is an exact $t$-design for the uniform distribution over the group of all $G$-invariant unitaries $\mathcal{V}^G_{n,n}$ with the maximum $t$ equal to 
  \begin{align}\label{sol}
    t_{\max} &=\min_{\mathbf{q}} \big\{\frac{\mathbf{m} \cdot {\mathbf{|q|}}}{2}-1 : {\mathbf{q}}\in \mathbb{Z}^{|\irreps_G(n)|}, {\mathbf{q}}\neq {\mathbf{0}}, {\mathbf{S} \mathbf{q}}={\mathbf{0}} \big\} ,
  \end{align}
  where the matrix $\mathbf{S}$ is defined as
  \begin{align}\label{eq:S-def}
    {\mathbf{S}_{\nu, \lambda }} := s_{\nu}(\lambda)\ : \nu\in \irreps_G(k) , \lambda\in \irreps_G(n) ,
  \end{align}
  and $s_{\nu}(\lambda)$ are non-negative integers defined in \cref{int}.
\end{corollary}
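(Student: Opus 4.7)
The plan is to obtain \cref{corf} as a direct specialization of \cref{thm:tdesign} to the case $\mathcal{W}^G=\mathcal{V}^G_{n,k}$, whose generators are taken to be the (infinite) family $\{H_l\}$ of all $k$-local $G$-invariant Hermitian operators. Semi-universality is assumed as a hypothesis, so the only remaining task is to rewrite the conditions \cref{cond1,cond0,cond2} in the particular form \cref{sol} involving the matrix $\mathbf{S}$ defined in \cref{eq:S-def}. First, I would argue that the trace condition \cref{cond0} can be dropped because the identity $\ident$ is a $0$-local (hence $k$-local) $G$-invariant Hamiltonian, and $\tr(Q\,\ident)=0$ is exactly \cref{cond0}; equivalently, as remarked in the paper, the global phase $e^{\i\theta}\ident$ always belongs to $\mathcal{V}^G_{n,k}$, so the identity is one of the $H_l$.

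Next I would turn \cref{cond2} into the single matrix condition $\mathbf{S}\mathbf{q}=\mathbf{0}$. The key tool is \cref{lemma-1}: the projection of the $\mathbb{R}$-span of $\{H_l\}$ onto $\mathrm{span}_\mathbb{R}\{\Pi_\lambda:\lambda\in\irreps_G(n)\}$ equals $\mathrm{span}_\mathbb{R}\{S_\nu:\nu\in\irreps_G(k)\}$, because \cref{span} asserts precisely that $\chi_H\in\mathrm{span}_\mathbb{R}\{r^{n-k}f_\nu\}$ and the integral formula \cref{int} gives the coefficients of $\Pi_\lambda/m_\lambda$ in $S_\nu$ as the Clebsch–Gordan-type multiplicities $s_\nu(\lambda)$. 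Since $Q$ lies in $\mathrm{span}\{\Pi_\lambda\}$, the constraints $\tr(H_l Q)=0$ for all $k$-local $G$-invariant $H_l$ are equivalent to $\tr(S_\nu Q)=0$ for every $\nu\in\irreps_G(k)$. A direct computation using \cref{QQ} and \cref{def:S} gives
\begin{equation}
\tr(S_\nu Q)=\sum_{\lambda\in\irreps_G(n)} s_\nu(\lambda)\,q(\lambda)=(\mathbf{S}\mathbf{q})_\nu,
\end{equation}
so \cref{cond2} becomes $\mathbf{S}\mathbf{q}=\mathbf{0}$. Substituting this into \cref{eq:tmax} of \cref{thm:tdesign} yields the stated formula \cref{sol}.

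The one remaining claim to verify is that the coefficients $s_\nu(\lambda)$ defined by \cref{int} are non-negative integers. I would establish this by recognizing the integral as a character inner product: $[\tr u(g)]^{n-k}f_\nu(g)$ is the character of the representation $u(g)^{\otimes(n-k)}\otimes\nu(g)$, and integration against $f_\lambda(g)^\ast$ picks out the multiplicity of the irrep $\lambda$ in that tensor-product representation. Equivalently, by Frobenius reciprocity/Schur orthogonality, $s_\nu(\lambda)$ is the multiplicity of $\nu$ in the $G$-representation on the multiplicity space $\mathcal{M}_\lambda$ inside $u(g)^{\otimes n}\cong u(g)^{\otimes k}\otimes u(g)^{\otimes(n-k)}$, which is manifestly a non-negative integer. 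The main conceptual step — and the one I would be most careful about — is the translation from the infinite family of $k$-local generators to the finite set $\{S_\nu:\nu\in\irreps_G(k)\}$; all subsequent manipulations are essentially bookkeeping inside the commutative algebra $\mathrm{span}\{\Pi_\lambda\}$, where $\tr(H_l Q)$ only depends on the projection of $H_l$ onto that center, exactly the content captured by \cref{lemma-1}.
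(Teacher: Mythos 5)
Your proposal is correct and follows essentially the same route as the paper: specialize \cref{thm:tdesign} to $\mathcal{W}^G=\mathcal{V}^G_{n,k}$, discard \cref{cond0} because $\ident$ is itself $k$-local, convert \cref{cond2} into $\mathbf{S}\mathbf{q}=\mathbf{0}$ by passing to the operators $S_\nu$ via \cref{lemma-1}, and identify $s_\nu(\lambda)$ as the multiplicity of $\lambda$ in $u^{\otimes(n-k)}\otimes\nu$ to get non-negative integrality. The one small imprecision is that your ``because'' clause for the identity $\mathrm{proj}\big(\mathrm{span}_\mathbb{R}\{H_l\}\big)=\mathrm{span}_\mathbb{R}\{S_\nu\}$ only justifies the inclusion $\subseteq$ (which alone would let the feasible set $\{\mathbf{q}:\mathbf{S}\mathbf{q}=\mathbf{0}\}$ be a proper subset of $\{\mathbf{q}:\Tr(H_lQ)=0\ \forall l\}$ and thus overestimate $t_{\max}$); the reverse inclusion follows from the converse direction of \cref{lemma-1}, which is available exactly because semi-universality is assumed in the corollary, so you should invoke that direction explicitly.
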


Note that except matrix $\mathbf{S}$, the rest of this characterization is independent of $k$, the locality of interactions. This matrix, whose role is analogous to that of matrix $\mathbf{M}$ in \cref{ue}, determines 
the projection of the Lie algebra associated to $\mathcal{V}^G_{n,k}$ 
to the linear space of operators $\mathrm{span}_\mathbb{R}\{\i\Pi_\lambda: \lambda\in \irreps_G(n)\}$, which is the center of the Lie algebra of $G$-invariant operators.   In particular, according to \cref{lemma-1,lem1},
\begin{align}
\rk(\textbf{S}) &=\dim(\mathrm{span}_\real\{S_\nu: \nu\in\irreps_{G}(k)\})\nonumber\\ &=
\dim(\mathrm{span}_\real\{r^{n-k} f_\nu: \nu\in\irreps_{G}(k)\})\nonumber\\ &=\dim(\mathrm{span}_\mathbb{R}\{|\chi_A\rangle: A=A^\dag, \text{$A$ is $k$-local}\})
\ ,
\end{align}
where $|\chi_A\rangle=\sum_{\lambda\in \irreps_G(n)} \Tr(\Pi_\lambda A) |\lambda\rangle$. In other words, the rank of matrix $\textbf{S}$ is equal to the dimension of the subspace spanned by the projection of $k$-local operators to the  
center of the Lie algebra of all $G$-invariant operators. Then, \cref{dkd} can be rewritten as
\be
  \dim(\mathcal{V}_{n,n}^G)-\dim(\mathcal{V}_{n,k}^G)\ge |\irreps_G(n)|-\rk(\textbf{S}) \ .
\ee
Recall that according to \cref{lem1}, $\rk(\textbf{S}) \le |\irreps_G(k)|$, and the equality holds when $G$ is a connected group, or if $\Tr(u(g))\neq 0$ for all $g\in G$.

Crucially, to determine $t_{\max}$, the only relevant property of $\mathbf{S}$ is its null space. In particular, for any matrix $\mathbf{T}$ if $\rk(\mathbf{T} \mathbf{S})=\rk(\mathbf{S})$ then the condition $\mathbf{T} \mathbf{S} \mathbf{q} = \mathbf{0}$ is equivalent to $\mathbf{S} \mathbf{q} = \mathbf{0}$, and therefore it imposes exactly the same constraint. For instance, suppose rather than matrix $\mathbf{S}$ we consider the matrix $\mathbf{M}$ in \cref{matrix} for a set of Hamiltonians $\{H_l\}$ that forms a basis for the Lie algebra associated to $\mathcal{V}^G_{n,k}$. Then, the corresponding matrix can be written as $\mathbf{M}=\mathbf{T} \mathbf{S}$ for some matrix $\mathbf{T}$, and they satisfy $\rk(\mathbf{M})= \rk(\mathbf{T} \mathbf{S})=\rk(\mathbf{S})$, which means the constraint $\mathbf{M} \mathbf{q}=0$ is equivalent to $\mathbf{S} \mathbf{q}=0$. 

In the following, we see other examples of this freedom in \cref{Sec:U(1)}, and \cref{Sec:SU(2)}, where we study U(1) and SU(2) symmetry. In particular, we see that rather than $S_\nu: \nu\in \irreps_G(k)$ it is convenient to use other bases for the space $\mathrm{span}_\mathbb{R}\{S_\nu: \nu\in \irreps_G(k)\}$. Then, instead of matrix $\mathbf{S}$ we obtain other matrices whose null spaces are identical with that of $\mathbf{S}$.

\subsection*{Integers \texorpdfstring{$s_{\nu}(\lambda)$}{snu(lambda)} and their interpretation}


According to \cref{corf}, $t_\text{max}$ is determined by the kernel of matrix  ${\mathbf{S}_{\nu, \lambda }} := s_{\nu}(\lambda)$,  where  $s_{\nu}(\lambda)$ is defined in \cref{int}. Here, we discuss an interesting interpretation of coefficients  $s_{\nu}(\lambda)$, which immediately implies that they are integers and provides a simple formula for calculating them. 

For irreducible representations $\mu$ and $\nu$, the product of characters $f_{\mu} f_{\nu}$ can be decomposed as a linear combination of characters,
\begin{equation}
    f_{\mu} f_{\nu}=\sum_{\lambda \in \text{Irreps}_G} m_{\mu, \nu}^{\lambda} f_\lambda,
\end{equation}
where $m_{\mu, \nu}^{\lambda}$ is the multiplicity of irrep $\lambda$ in the representation $\mu\otimes \nu$. Using the isotypic decomposition of $u(g)^{\otimes (n - k)}$, with $m_\mu(n-k)$ the multiplicity of irrep $\mu$, it follows that
\begin{equation}
    \tr (u(g))^{n - k} = \sum_{\mu \in \text{Irreps}_G} m_\mu(n-k) f_\mu(g).
\end{equation}
Combining these and using Schur orthogonality of irreducible characters, we find
\begin{equation}\label{eq:s-multiplicity}
    {\mathbf{S}_{\nu, \lambda}} = s_\nu(\lambda) = \sum_{\mu \in \text{Irreps}_G} m_{\mu}(n-k)\times  m_{\nu, \lambda^\ast}^{\mu^\ast},
\end{equation}
where $\lambda^\ast$ is the dual (complex conjugate) representation to $\lambda$. In particular, when $G$ is Abelian, $\nu\otimes \lambda^\ast$ is a single irrep, and therefore 
\be\label{multi}
{\mathbf{S}_{\nu, \lambda }}=s_\nu(\lambda)=m_{\nu^\ast\otimes \lambda}(n-k)\ ,
\ee
which is the multiplicity of irrep $\nu^\ast \otimes \lambda \cong (\nu \otimes \lambda^\ast)^\ast$ in representation $u(g)^{\otimes (n-k)}: g\in G$. For instance, for U(1) symmetry on $n$ qubits, this immediately 
implies \cref{S:U(1)}.

\subsection{The extra commutants (Proof of \texorpdfstring{$t_{\max} \leq \frac{1}{2}\|Q\|_1-1$}{tmax<=1/2Q-1})}\label{Sec:comm}

Below \cref{thm:tdesign}, we provided one interpretation of integers $\mathbf{q}$ and the corresponding operator $Q$, as a $G$-invariant Hamiltonian that is not contained in the Lie algebra associated to $\mathcal{W}^G$. Next, we provide another interpretation of integers $\mathbf{q}$ and the corresponding operator $Q$
and prove that $t_{\max} \leq \frac{1}{2}\|Q\|_1-1$.

Recall that if $\mathrm{Comm}\{V^{\otimes t}: V\in \mathcal{W}^G\}\neq \mathrm{Comm}\{V^{\otimes t}: V\in \mathcal{V}^G\}$, then the uniform distribution over $\mathcal{W}^G$ is not a $t$-design for the uniform distribution over $\mathcal{V}^G$. Here, we show how any solution to \cref{eq:cond}, or equivalently \cref{sol}, gives operators that commute with all $V^{\otimes t}: V\in \mathcal{W}^G$  but not with all $V^{\otimes t}: V\in \mathcal{V}^G$. This also proves $t_{\max} \leq \frac{1}{2}\|Q\|_1-1$ in \cref{thm:tdesign} and \cref{corf}.

In particular, consider group $\mathcal{W}^G$ generated by the one-parameter families $\{\exp(-\i H_l t): t\in\mathbb{R}\}_l$, and let $\mathbf{M}$ be the matrix defined in \cref{matrix}.
Recall the decomposition $\mathcal{H}\cong\bigoplus_{\lambda\in\irreps_G} (\mathcal{Q}_\lambda\otimes \mathcal{M}_\lambda)$.
Assume there exists $\mathbf{q}\in \mathbb{Z}^{|\irreps_G|}$ satisfying $\mathbf{q}\neq \mathbf{0}$, ${\mathbf{m}} \cdot {\mathbf{q}} ={0}$, and $\mathbf{M} \mathbf{q}=\mathbf{0}$. Define $t= \frac{1}{2} |\mathbf{q}|\cdot \mathbf{m}$, which is an integer because $\mathbf{m} \cdot {\mathbf{q}} ={0}$. Consider the vector $|\Psi\rangle=|\Psi_+\rangle\otimes |\Psi_-\rangle $, where $|\Psi_\pm\rangle\in \mathcal{H}^{\otimes t}$ are defined by 
\begin{align}
  |\Psi_\pm\rangle&=\bigotimes_{\lambda: \pm 
 q(\lambda)>0} 
         \bigotimes_{j=1}^{\pm q(\lambda)}|\Psi^j_\lambda\rangle \ ,
\end{align}
and
\be
|\Psi^j_\lambda\rangle=|\psi^j_\lambda\rangle_{\mathcal{Q}_\lambda} \otimes |\mathrm{singlet}\rangle_{\mathcal{M}_\lambda}\in \mathcal{H}^{\otimes m_\lambda}\ .
\ee
Here, $|\psi^j_\lambda\rangle_{\mathcal{Q}_\lambda}$ is an arbitrary vector in $ \mathcal{Q}_\lambda^{\otimes m_\lambda}$ and $|\mathrm{singlet}\rangle_{\mathcal{M}_\lambda}$ is the totally anti-symmetries state in $\mathcal{M}_\lambda^{\otimes m_\lambda}$, which is unique up to a normalization (recall that $m_\lambda=\dim(\mathcal{M}_\lambda)$). We show that for any such vector 
\be\label{ds}
V\in \mathcal{W}^G: \ \ \ \ [V^{\otimes t}\otimes {V^\ast}^{\otimes t}]|\Psi\rangle=|\Psi\rangle\ ,
\ee
which, in turn, implies $
\mathbb{E}_{V\sim\nu }[V^{\otimes t}\otimes {V^\ast}^{\otimes t}]|\Psi\rangle =|\Psi\rangle $, 
where $\nu$ is the uniform distribution over $\mathcal{W}^G$. On the other hand, for the uniform distribution over the group of all $G$-invariant unitaries $\mathbb{E}_{V\sim\mu_\mathrm{Haar} }[V^{\otimes t}\otimes {V^\ast}^{\otimes t}]|\Psi\rangle=0$ as explained below. Equivalently, this means that 
operator $|\Psi_+\rangle\langle \Psi_-|$ 
is in the commutant of $V^{\otimes t}: V\in \mathcal{W}^G$, whereas it is not in the commutant of $V^{\otimes t}: V\in \mathcal{V}^G$. Note that in the case of symmetric quantum circuits with $k$-qudit $G$-invariant gates, i.e., $\mathcal{W}^G=\mathcal{V}^G_{n,k}$, the same construction works for any solution $\mathbf{q}\in \mathbb{Z}^{|\irreps_G(n)|}$ satisfying $\mathbf{q}\neq \mathbf{0}$ and $\mathbf{S} \mathbf{q}=\mathbf{0}$.

To show \cref{ds}, first note that the totally anti-symmetric state $|\mathrm{singlet}\rangle_{\mathcal{M}_\lambda}$ satisfies
\be
U^{\otimes m_\lambda} |\mathrm{singlet}\rangle_{\mathcal{M}_\lambda}=\det(U)|\mathrm{singlet}\rangle_{\mathcal{M}_\lambda}\ ,
\ee
for any operator $U$ on $\hilbert[M]_\lambda$. This implies that for all $V\in \mathcal{V}^G$
\begin{align}
  [V^{\otimes t}\otimes {V^\ast}^{\otimes t}]|\Psi\rangle &=\big[\prod_{\lambda} \det(v_{\lambda})^{q(\lambda)}\big]|\Psi\rangle \ ,
\end{align}
where we have used the decomposition $V=\bigoplus_\lambda \mathbb{I}_{\mathcal{Q}_\lambda}\otimes v_\lambda$. Obviously for general $V\in \mathcal{V}^G$, $\prod_{\lambda} \det(v_{\lambda})^{q(\lambda)}$ can be any arbitrary phase, which in turn implies 
$\mathbb{E}_{V\sim\mu_\mathrm{Haar} }[V^{\otimes t}\otimes {V^\ast}^{\otimes t}]|\Psi\rangle=0$.

Recall that $\mathcal{W}^G$ is the group generated by $\{\exp(\i r_l H_l): r_l\in\mathbb{R}\}_l$. Using the fact that for any operator $A$, $\det(\e^{A})=\e^{\Tr(A)}$, we find that for any $G$-invariant Hamiltonian $H$,  and the corresponding unitary $V=\exp(\i H)$, 
\be
\det(v_{\lambda})=\exp\Big(\i m_\lambda \frac{\Tr(H \Pi_\lambda) }{\Tr(\Pi_\lambda)}\Big)\ .
\ee
Therefore
\begin{align}
  \prod_{\lambda} \det(v_{\lambda})^{q(\lambda)}&=\exp\Big(\i \sum_{\lambda} q(\lambda) m_\lambda \frac{\Tr(H \Pi_\lambda)}{\Tr(\Pi_\lambda)}\Big)\ ,
\end{align}
which is equal to $\exp(\i \Tr(H Q))$.  We conclude that for $V=\exp(\i H)$, 
\begin{align}
  [V^{\otimes t}\otimes {V^\ast}^{\otimes t}]|\Psi\rangle &=\exp(\i \Tr(H Q) ) |\Psi\rangle \ ,
\end{align}
or equivalently,
\begin{align}\label{qssaq}
&\exp(\i H)^{\otimes t}|\Psi_+\rangle\langle \Psi_-| \exp(-\i H)^{\otimes t} \\ &\ \ \ \ \ \ \ \ \ \ \ \ \ \ \ \ \ \ \ =\exp(\i   \Tr(H Q))\ |\Psi_+\rangle\langle \Psi_-|\ .\nonumber
\end{align}
In other words, for all $G$-invariant  Hamiltonian $H$, if $H$ is orthogonal to $Q$, then $\exp(\i H)^{\otimes t}$ commutes with $|\Psi_+\rangle\langle \Psi_-|$.  
 Therefore, since a general element $V\in \mathcal{W}^G$ is obtained by composing one-parameter families 
$\{\exp(\i H_l r_l):r_l\in\mathbb{R} \}_l$, if $\Tr(H_l Q)=0$ for all $H_l$, then $|\Psi_+\rangle\langle \Psi_-|$ commutes with $V^{\otimes }: V\in \mathcal{W}^G$ (or, equivalently, \cref{ds} holds for arbitrary $V\in \mathcal{W}^G$).

In summary, integers $q(\lambda)$, or equivalently, operator $Q$ in \cref{thm:tdesign} determine operators that commute with $V^{\otimes t}: V\in\mathcal{W}^G$, but not with $V^{\otimes t}$ for some elements of $V\in\mathcal{V}^G$. In particular, \cref{qssaq} implies that while operator $|\Psi_+\rangle\langle \Psi_-|$ commutes with $V^{\otimes t}$ for all  $V\in \mathcal{W}^G$,  it does not commute with $\exp(\i s Q)^{\otimes t}$ for generic values of $s\in\mathbb{R}$ and $Q$ defined in \cref{QQ},  because in this case  the right-hand side of \cref{qssaq} gets an additional phase $\exp(\i s  \Tr(Q^2))$, which is not 1 for generic $s\in\mathbb{R}$. (Note that permuted versions of $|\Psi_+\rangle\langle \Psi_-|$ and their linear combinations also satisfy this relation.) This implies that for $t= \frac{1}{2} |\mathbf{q}|\cdot \mathbf{m}$, the uniform distribution over $\mathcal{W}^G$ is not a $t$-design for the uniform distribution over $\mathcal{V}^G$. Therefore, $t_{\max} \leq \frac{1}{2} |\mathbf{q}|\cdot \mathbf{m} -1 = \frac{1}{2}\|Q\|_1-1$.

\subsection*{Example of the extra commutants: \texorpdfstring{\\}{} 2 qubits with \texorpdfstring{$\mathbb{Z}_2$}{Z2} symmetry}\label{Sec:ex:last}

Here, we consider a simple example of $n=2$ qubits with on-site $\mathbb{Z}_2$ symmetry, represented by $\mathbb{I}\otimes \mathbb{I} , Z\otimes Z$. Then, the total Hilbert space decomposes to two 2D sectors $\hilbert_0$ and $\hilbert_1$, corresponding to the trivial irrep of $\mathbb{Z}_2$, spanned by $|00\rangle$ and $|11\rangle$, and the nontrivial irrep, spanned by $|01\rangle$ and $|10\rangle$, respectively.

In this case,
\be
\mathcal{V}^G= \{(v_0\oplus v_1)\ : v_0, v_1\in \U(2)\}\ ,
\ee
where $v_i$ acts on $\hilbert_i$, while
\be
\mathcal{SV}^G= \{(v_0\oplus v_1)\ : v_0, v_1\in \SU(2)\}\ ,
\ee
which means $\mathcal{V}^G \cong \U(2)\times \U(2)$ and $\mathcal{SV}^G \cong \SU(2)\times \SU(2)$.

In this example, $\mathcal{W}^G$ being semi-universal does not guarantee that the uniform distribution over it is a 2-design. Suppose that $\mathcal{W}^G = \mathcal{SV}^G$ and consider the states of two copies of the total Hilbert space,
\begin{align}
|\mathrm{singlet}_0\rangle &=\frac{|00\rangle \otimes |11\rangle-|11\rangle \otimes |00\rangle}{\sqrt{2}}\\
|\mathrm{singlet}_1\rangle &=\frac{|01\rangle \otimes |10\rangle-|10\rangle \otimes |01\rangle}{\sqrt{2}}\ .
\end{align}
Then, the operator $\qout{\mathrm{singlet}_0}{\mathrm{singlet}_1}$ commutes with $W \otimes W$ for all $W \in \mathcal{W}^G = \mathcal{SV}^G$ but not $V\otimes V$ for all   $V \in \mathcal{V}^G$. In particular, let $Q = \Pi_0 - \Pi_1 =Z\otimes Z$ be the difference between the Hermitian projectors to the charge sectors $\hilbert_0$ and $\hilbert_1$ (see also \cref{sec:Zp}, where a similar operator is defined for all on-site $\mathbb{Z}_p$ symmetry for even $p$). Then $V = \exp(\i \theta Q) \in \mathcal{V}^G$ is not an element of $\mathcal{W}^G$ for general $\theta$, and
\be
\begin{split}
(V\otimes V) & |\mathrm{singlet}_0\rangle\langle \mathrm{singlet}_1| (V^\dag\otimes V^\dag) \\ & = \exp(\i4\theta)|\mathrm{singlet}_0\rangle\langle \mathrm{singlet}_1|\ .
\end{split}
\ee

\section{Examples: \texorpdfstring{$\U(1)$}{U(1)}, \texorpdfstring{$\SU(2)$}{SU(2)}, \texorpdfstring{$\mathbb{Z}_p$}{Zp} and \texorpdfstring{$\SU(d)$}{SU(d)}}\label{Sec:Examples}

Equipped with the tools developed in the previous section, now we are ready to calculate $t_{\max}$ for systems of particular interest, namely qubit systems with $\U(1)$, $\SU(2)$ and $\mathbb{Z}_p$ symmetries. For these systems, we are able to determine the exact value of $t_{\max}$ for arbitrary locality $k$ and a sufficiently large number of qubits $n$. In addition, we also consider qudit systems with $\SU(d)$ symmetry for up to 4-local gates.

\subsection{\texorpdfstring{$\U(1)$}{U(1)} symmetry}\label{Sec:U(1)}

Consider a qubit system with $\U(1)$ symmetry represented as $u(\theta) = \e^{\i \theta Z}$, where $\theta \in [0, 2 \pi)$ and $Z$ is the Pauli $Z$ operator. The on-site representation $u(\theta)^{\otimes n}$ decomposes into invariant subspaces labeled by the eigenvalues of $\sum_{a = 1}^n Z_a$, which are 
$2n- w: w = 0, 1, \cdots, n $, where $w$ is called the Hamming weight associated to the irrep. 
Denote the Hermitian projector to the eigenspace with Hamming weight $w$ by $\Pi_w$, and note that $\tr \Pi_w = \binom{n}{w}$ since we may understand the multiplicity of Hamming weight $w$ combinatorially as the number of ways to have $w$ ``excitations'' of the $n$ qubits.

In the following, we consider the groups $G_{XX + YY, Z}$ and $\mathcal{V}_{n, k}^{\U(1)}$, where
\begin{equation}
  \begin{split}
    G_{XX + YY, Z} & = \langle \e^{\i \theta_0} \ident, \e^{\i \theta_1 Z_a}, \e^{\i \theta_2 R_{ab}} : \\
                   & \phantom{{} = {} \langle} \ \ \ \theta_l \in [0, 2 \pi), 1 \leq a < b \leq n \rangle
  \end{split}
\end{equation}
is the group generated by global phases, single-qubit unitaries $\e^{\i \theta Z}$, and interaction $R = \frac{1}{2} (X \otimes X + Y \otimes Y)$. We can also write
\begin{equation}
  \mathcal{V}_2^{\U(1)} = \langle G_{XX + YY, Z}, \e^{\i \theta Z_a Z_b} : \theta \in [0, 2 \pi), 1 \leq a < b \leq n \rangle.
\end{equation}
We compare the uniform distribution over groups $G_{X X + Y Y, Z}$, $\mathcal{V}_{n, 2}^{\U(1)}$, and $\mathcal{V}_{n, k}^{\U(1)}$ with the uniform distribution over the group of all $\U(1)$-invariant unitaries $\mathcal{V}_{n, n}^{\U(1)}$, and show that
\begin{itemize}
\item $G_{X X + Y Y, Z}$ is an $(n - 1)$-design, but not an $n$-design.
\item $\mathcal{V}_{n, 2}^{\U(1)}$ is a $(2 n - 3)$-design, but not a $(2 n - 2)$-design.
\item $\mathcal{V}_{n, k}^{\U(1)}$ is a $t$-design with $t_{\max}$ given in \cref{eq:tmax-U1}.
\end{itemize}

To establish these results, we use the fact that $G_{XX + YY, Z}$ is semi-universal \cite{Marvian2022Restrict, bai2024synthesis}, and therefore the groups generated by $k$-local $\U(1)$-invariant gates, $\mathcal{V}_{k, n}^{\U(1)} \supseteq G_{XX + YY, Z}$, are also semi-universal for $k \geq 2$. Furthermore, since $\U(1)$ is Abelian, each of its irreps are 1D, so the dimensions $\tr \Pi_w = m_w = \binom{n}{w}$ are just the multiplicities. 
Therefore, in this case operator $Q$ in \cref{thm:tdesign} simplifies to 
\be
Q=\sum_{w=0}^n q(w) \Pi_w\ ,
\ee
where $q(w)$ are integers. Similarly,
\be\label{eq:MU1}
\mathbf{M}_{l,w}= \Tr(\Pi_w H_l)\ : l=1,\cdots, N ; \mu\in \irreps_G\ ,
\ee

Furthermore, taking into account that $m_w = m_{n - w}$, there is an ordering on these dimensions which is weakly increasing. We denote this order $i = 0, 1, \cdots, n + 1$ where the Hamming weight $w_i = \floor{i / 2}$ when $i$ is even and $w_i = n - \floor{i / 2}$ when $i$ is odd. 

\subsubsection{Example:  \texorpdfstring{$k=2$}{k=2}}

We first note that $\tr (X_a X_b + Y_a Y_b) \Pi_w = 0$ for all Hamming weights $w = 0, 1, \cdots, n$. Furthermore, since each $\Pi_w$ is permutationally-invariant, it follows that $\tr Z_a \Pi_w = \tr Z_b \Pi_w$ for all qubits $a$ and $b$. Thus, in the matrix $\mathbf{M}_{i; l}$ defined in \cref{matrix}, for $G_{XX + YY, Z}$ we need only consider the Hamiltonians $H_0 = \ident$ and $H_1 = Z_1$ and the sectors $\Delta = \{w_0 = 0, w_1 = n\}$,
\begin{equation}
  \mathbf{M}^\Delta = \begin{pmatrix}
    1 & 1 \\ 1 & -1
  \end{pmatrix}\ ,
\end{equation}
which is clearly full-rank, i.e., satisfies $\rk(\mathbf{M}^\Delta)=|\Delta|=2$. 
The minimum multiplicity of charge sectors not included in $\Delta$ is $n = \tr \Pi_1 = \tr \Pi_{n - 1}$. Thus, according to \cref{prop1}, the uniform distributions over $G_{XX + YY, Z}$ and $\mathcal{V}_{n, k}^{\U(1)}$ are (at least) $(n - 1)$-designs for the uniform distribution over $\mathcal{V}^{\U(1)}$.

To show that this bound is tight for $G_{XX + YY, Z}$, we use \cref{thm:tdesign}. In particular, we consider 
a special case of a family of operators $\{F_k\}$ defined below in \cref{eq:Fk}. Namely, we choose $Q = F_2$, where
\be
F_2 = (n - 1) \Pi_0 - \Pi_1 + \Pi_n\ .
\ee
Note that $F_2$ has integer eigenvalues, is traceless, and satisfies $\Tr(F_2 Z_j)=0$ for all $j=1,\cdots, n$ (See \cref{app:U1}). Therefore, it satisfies the assumptions of \cref{thm:tdesign} with $t = \frac{1}{2} \|F_2\|_1 = n$, establishing that $G_{XX + YY, Z}$ is not an $n$-design, i.e., $t_{\max} = n - 1$. (Note that operator $Q=F_2$ is uniquely determined up to an integer multiple by the condition that $\mathbf{q}$ has support on only these three irreps, $w_0, w_1, w_2$, and $\mathbf{M} \mathbf{q} = 0$.)

On the other hand, for $\mathcal{V}_2^{\U(1)}$ we have in addition the $ZZ$ interaction. We may extend the $\mathbf{M}$ matrix to include $\mathbf{M}_{2, i} = \tr \Pi_{w_i} Z_a Z_b$, and the result is full-rank,
\begin{equation}
  \mathbf{M}^{\Delta'} = \begin{pmatrix}
    1 & 1 & n \\
    1 & -1 & n - 1 \\
    1 & 1 & n - 2
  \end{pmatrix},
\end{equation}
(now with the three lowest in the order, $\Delta' = \{w_0 = 0, w_1 = n, w_2 = 1\}$). However, the lower bound of \cref{prop1} does not change since $m_{w_2} = m_{w_3} = n$.

Furthermore, as we will see, this lower bound is not tight. Indeed, using \cref{thm:tdesign} we show that for the uniform distribution over $\mathcal{V}_{n, 2}^{\U(1)}$, for all $n\ge 4$, it holds that $t_{\max} = 2 n - 3$. In particular, choosing $Q=F_3$, where 
\be
F_3 = (n - 2) (\Pi_0 - \Pi_n) - (\Pi_1 - \Pi_{n - 1})\ ,
\ee
is defined in \cite{zhukas2024observation}, which is also a special case of the family of operators $\{F_k\}$ defined in \cref{eq:Fk}.  This operator  is orthogonal to 2-local operator $H$, i.e., $\Tr(F_3 H)=0$ \cite{zhukas2024observation}  (See \cref{lem:Fk} below for a generalization of this result). Therefore, \cref{thm:tdesign} immediately implies that 
\be
t_{\max} \leq \frac{1}{2} \|F_3\|_1-1=2n-3 \ .
\ee
It can be easily shown that this is indeed the optimal operator $Q$ with required properties, and therefore, this bound holds as an equality. To see this, we can use the fact that, up to normalization, $F_3$ is the only operator whose support is restricted to sectors with $w=0,1,n-1,n$ and is orthogonal to all 2-local operators (see \cref{cor:Fk-unique}). The requirement that the eigenvalues of $Q$ should be integer means that $Q=\pm F_3$, which has an eigenvalue $\pm 1$, is optimal among such operators. On the other hand, if another operator $Q'$ satisfying \cref{eq:cond} has support in any sector other than $w'\neq 0,1,n-1,n$, then for $n\ge 4$, 
\be\label{jdj}
\frac{\|Q'\|_1}{2} \ge m_{w'}= \binom{n}{w'} \ge \binom{n}{2} \ge 2(n-1)= \frac{\|F_3\|_1}{2}\ , 
\ee
where the first inequality follows from 
\cref{Eq:change}. We conclude that for $n\ge 4 $, $t_{\max}=2n-3$. 

\subsubsection{General \texorpdfstring{$k$}{k}}

It is helpful to recall the definition and properties of operators $C_l$  from \cite{Marvian2022Restrict}, 
\begin{equation}\label{eq:u1cls}
  C_l = \sum_{\mathbf{b} : \ell(\mathbf{b}) = l} \mathbf{Z}^{\mathbf{b}} = \sum_{w = 0}^n c_l(w) \Pi_w,
\end{equation}
where the sum is over all bitstrings $\mathbf{b} = b_1 b_2 \cdots b_n$ with $l$ 1s, i.e., $\ell(\mathbf{b}) := \sum_a b_a = l$,
\begin{equation}
  \mathbf{Z}^{\mathbf{b}} = Z^{b_1} \otimes Z^{b_2} \otimes \cdots \otimes Z^{b_n},
\end{equation}
and 
\be\label{Eq:clw}
c_l(w) = \sum_{r = 0}^w (-1)^r \binom{n-w}{l-r} \binom{w}{r}\ .
\ee
For example, $C_0=\mathbb{I}^{\otimes n}$, and
\begin{align}\label{eq:C1}
  C_1 = \sum_{a = 1}^n Z_a = \sum_{w=0}^n (n-2w) \Pi_w\ .
\end{align}
The $C_l$ operators are Hermitian, have integer eigenvalues, and satisfy the orthogonality relation 
\be\label{eq:Cl-orthogonal-U1}
\Tr(C_l C_{l'})=\delta_{l,l'} 2^n \binom{n}{l}\ ,
\ee
and form a complete basis for the space $\mathrm{span}\{\Pi_w\}=\mathrm{span}\{C_l\}$ (This is indeed the center of the $\U(1)$-invariant Hamiltonians). These properties
imply that for any arbitrary operator $B$ in this space, we have the completeness relation 
\begin{align}\label{eq:Cl-complete-U1}
    \Tr(B O)=\sum_{l=0}^n \frac{\Tr(B C_l)\times  \Tr(C_l O)}{\Tr(C_l^2)} ,
\end{align}
where $O$ is an arbitrary operator on $n$ qubits.  Furthermore, the definition of $C_l$ operators in \cref{eq:u1cls}, immediately implies that
for any $k$-local operator $O$,
\be\label{orth}
\tr (O C_l) = 0\ \ : l>k\ . 
\ee
Furthermore, $\i C_l : l = 0, \cdots, k$ span the center of the Lie algebra of the group $\mathcal{V}_{n, k}^{\U(1)}$ \cite{Marvian2022Restrict}. \\

\noindent\textbf{Lower bound on $t_{\max}$}: We start by establishing a lower bound on $t_{\max}$ by applying \cref{prop1}. In particular, we show that for the group $\V_{n,k}^{\U(1)}$, the condition $\rk \mathbf{M}^\Delta = |\Delta|$ holds for any subset of irreps $\Delta$ with $|\Delta| = k + 1$.

For any subset of $k+1$ Hamming weights  $\Delta\subseteq \{0, \cdots, n\}$, consider the matrix 
\begin{align}\label{eq:M-Delta-U1}
    \mathbf{M}_{l, w}^{\Delta} & = \bigl[\tr \Pi_{w} C_l \bigr] = c_l(w) \binom{n}{w} \ , 
\end{align}
where $0\le l\le k$, $w\in\Delta$, and $c_l(w)$ is the eigenvalue of $C_l$ on the irrep $w$. In order to see $\rk \mathbf{M}^\Delta = |\Delta|$,  it is useful to introduce another (non-orthogonal) basis for the space $\operatorname{span}_\real \set{C_l : l = 0, \cdots, k}$, obtained from the powers of $C_1$, i.e., for all integer $k\le n$
\begin{align}\label{eq:spanC1l=Cl}
  \operatorname{span}_\real \set{C_1^l : l = 0, \cdots, k} = \operatorname{span}_\real \set{C_l : l = 0, \cdots, k} .
\end{align}
This implies that there exists a full-rank $(k+1) \times (k+1)$ matrix $\mathbf{T}$, such that for all $l\le k$, 
\begin{align}
    C_1^l = \sum_{l'=0}^k \mathbf{T}_{ll'} C_{l'}\ .
\end{align}
As it can be seen from \cref{eq:C1}, operator $C_1$ has distinct eigenvalues on sectors with different Hamming weights, which implies  
\begin{align}\label{eq:TM-Delta-U1}
    (\mathbf{TM}^\Delta)_{l, w} & = \bigl[\tr \Pi_{w} C_1^l \bigr] = \big(n-2 w\big)^l \binom{n}{w}
\end{align}
has full-rank.\footnote{One can understand $\mathbf{TM}^\Delta$ as the Vandermonde matrix associated with powers of the eigenvalues of $C_1$, times a diagonal matrix whose elements are the multiplicities $m_w$.} Finally, since $\mathbf{T}$ and $\mathbf{TM}^\Delta$ both have full rank, we know $\mathbf{M}^\Delta$ must also have full rank.

So far $\Delta$ can be an arbitrary subset of $k+1$ irreps. To obtain the strongest lower bound from proposition \cref{prop1}, we use the weakly increasing order of the dimension of multiplicity discussed after \cref{eq:MU1}. Thus, the $(k + 2)$th smallest multiplicity $m_{w_{k + 1}}$ gives the lower bound 
\begin{equation}
 t_{\max}+1 \geq  m_{w_{k + 1}} = \binom{n}{\floor{\frac{k + 1}{2}}}\ .
\end{equation}

\noindent\textbf{Upper bound on $t_{\max}$}: Next, we establish a general upper bound on $t_{\max}$. Recall that, according to \cref{upper} any operator $Q=\sum_{w} q(w)\Pi_w$ with integer eigenvalues $q(w)$ that satisfy $\Tr(Q O)=0$ for all $k$-local operator $O$, gives an upper bound $t_{\max}\le \frac{1}{2}\|Q\|_1-1$.

For this, we can use another basis for the space $\mathrm{span}_{\mathbb{R}}\{\Pi_w\}$. Namely, the basis recently defined in \cite{zhukas2024observation} with operators
\begin{equation}\label{Ak}
  A_k = \sum_{w = 0}^k (-1)^w \binom{n - w}{k - w} \Pi_w\ \ \ : k=0,\cdots, n\ ,
\end{equation}
which satisfy
\begin{align}\label{Ak0}
  \Tr(A_k C_l)&=0\ \ \ \ \ \ : l<k\ .
\end{align}
More generally, as we show in \cref{app:U1}, 
\begin{align}
  A_k =2^{k-n}   \sum_{l} \binom{l}{k} C_l \ ,
\end{align}
which implies \cref{Ak0} using the orthogonality of the $C_l$ operators in \cref{eq:Cl-orthogonal-U1} and that the binomial coefficient $\binom{l}{k}=0$ when $k> l \ge 0$. 

Then, $Q=A_{k + 1}$ satisfies all conditions of \cref{thm:tdesign} for $\mathcal{V}_{n, k}^{\U(1)}$. In particular, for any $k$-local operator $O$, whether it is U(1)-invariant or not, \cref{eq:Cl-complete-U1} implies $\Tr(A_{k+1} O)=0$.  Then, combining \cref{orth} and \cref{Ak0} we find that if $O$ is $k$-local then $\Tr(O A_k)=0$. This implies
\begin{equation}
  t_{\max} \leq \frac{1}{2} \|A_{k + 1}\|_1 - 1 = 2^{k} \binom{n}{k + 1} - 1 \ ,
\end{equation}
where $\|A_{k + 1}\|_1$ is calculated in \cref{app:U1}. \\

\noindent{\textbf{Exact value of  $t_{\max}$}:} Finally, to determine the actual value of $t_{\max}$, we introduce yet another basis for $\mathrm{span}\{\Pi_w\}$, namely 
\begin{equation}\label{eq:Fk}
  F_k = \sum_{w = 0}^n f_k(w) \Pi_w = \sum_{w = 0}^n (-1)^{w} \binom{n - \floor{\frac{k + 1}{2}} - w}{n-k} \Pi_w\ ,
\end{equation}
for $k = 0, \dots, n$.\footnote{In \cref{app:U1} we also consider a  slightly different basis, namely  $\widetilde{F}_k=(-1)^k X^{\otimes n}F_k X^{\otimes n}$, which has similar properties as $F_k$ and gives the same result on $t_{\max}$.} 
In particular, it can be easily seen that $$F_0=A_0=\Pi_0=|0\rangle\langle 0|^{\otimes n}\ ,$$
and 
$$F_n = A_n = C_n = Z^{\otimes n}\ .$$ 
More generally,  
\be \label{eq:Fk-kernel}
f_k(w) =0: \ \ \floor{\frac{k}{2}} + 1 \leq w \leq n - \floor{\frac{k+1}{2}}\ ,
\ee 
whereas 
\be
f_k(w) =(-1)^{w}  \binom{n - \floor{\frac{k + 1}{2}} - w}{n-k} \ 
\ee
is non-zero outside this interval. Note that for $w> n - \floor{\frac{k + 1}{2}}$ the binomial coefficient is non-zero because its upper index is negative.
Recall that for arbitrary complex number $\alpha$ and non-negative integers $k$, the binomial coefficient  $\binom{\alpha}{k}$ is defined by 
\begin{align}\label{eq:binom-def}
    \sum_{k=0}^\infty \binom{\alpha}{k} x^k = (1+x)^\alpha,
\end{align}
or equivalently, applying the Taylor expansion to the right-hand side at $x=0$, 
\begin{align}\label{eq:binom}
    \binom{\alpha}{k} := \frac{1}{k!}\Big(\frac{\d}{\d x} \Big)^k (1+x)^\alpha \Big|_{x=0} = \frac{(\alpha)_k}{k!}\ ,
\end{align}
where $(\alpha)_k = \alpha(\alpha-1) \cdots (\alpha-k+1)$ is the falling factorial. Therefore, $\binom{\alpha}{k}$ is zero if and only if $0$ appears in the falling factorial, which is the case when $\alpha$ is a non-negative integer and $k>\alpha$. 

We saw that $\{F_k\}$ operators have a restricted support relative to $\{\Pi_w\}$ basis. Similar to $\{A_k\}$ operators, a remarkable property of these operators is that they also have restricted support in the $\{C_l\}$ basis. (This makes them useful for applications in the context of $t$-design considered in this paper, as well as characterization of $k$-body interactions considered in \cite{zhukas2024observation}.) In particular, in \cref{app:U1} we prove that   
\begin{restatable}{lemma}{lemFk}\label{lem:Fk}
On a system with $n$ qubits, the operators $F_k$ defined in \cref{eq:Fk} has the following expansion in the $C_l$ basis: 
\begin{align}
  F_k &=
  \begin{cases}\label{Eq:ClFk}
      2^{k-n} \sum\limits_l \binom{\lfloor l/2\rfloor}{\lfloor k/2\rfloor} C_l & \text{$k:$  even} \\
      2^{k-n} \sum\limits_{l:\mathrm{odd}} \binom{\lfloor l/2\rfloor}{\lfloor k/2\rfloor} C_l & \text{$k:$  odd }.
  \end{cases}
\end{align}
Furthermore,
\begin{align}\label{eq:Fk-norm}
  \| F_k \|_1 &= 
  \begin{cases}
      2^k \binom{n/2}{k/2} & \text{$k:$ even} \\
      2^k \binom{(n-1)/2}{(k-1)/2} & \text{$k:$  odd}\ .
  \end{cases}
\end{align}
\end{restatable}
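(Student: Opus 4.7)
The plan is to prove both identities by direct coefficient extraction in suitable generating functions. Part~(a) (the $C_l$-basis expansion) uses orthogonality of $\{C_l\}$ together with Krawtchouk self-duality; part~(b) (the $1$-norm) reduces the support sum to a Vandermonde convolution and invokes Bailey's hypergeometric identity.

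For part~(a), since $F_k = \sum_w f_k(w)\Pi_w$, orthogonality \cref{eq:Cl-orthogonal-U1} gives $F_k = \sum_l \alpha_l^{(k)} C_l$ with $\alpha_l^{(k)} = \Tr(F_k C_l)/(2^n\binom{n}{l})$. The Krawtchouk self-duality $\binom{n}{w}c_l(w)=\binom{n}{l}c_w(l)$, which follows directly from \cref{Eq:clw}, converts this into $\alpha_l^{(k)}=2^{-n}\sum_w f_k(w)\,c_w(l)$. Substituting the explicit formulas for $f_k(w)$ and $c_w(l)$ and applying the Chu--Vandermonde identity $\sum_s(-1)^s\binom{a}{s}\binom{b-s}{c}=\binom{b-a}{c-a}$ to the inner alternating sum collapses $\alpha_l^{(k)}$ to
\[ \alpha_l^{(k)} \;=\; \frac{1}{2^n}\sum_r \binom{l}{r}\binom{l-\lfloor(k+1)/2\rfloor-r}{l-k} \;=\; \frac{1}{2^n}[y^{l-k}](1+y)^{-\lfloor(k+1)/2\rfloor}(2+y)^l. \]
To evaluate this residue I apply in succession the substitutions $y=2v-1$, $v=u+\tfrac12$, and $u=w/(1-w)$. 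The first two convert the expression to $2^{k-n}[u^{l-k}](1+u)^l(1+2u)^{-\lfloor(k+1)/2\rfloor}$; the third, together with the residue identity $[u^m]f(u)=[w^m]f(u(w))(1-w)^{m-1}$, reduces everything to $[w^{l-2j}](1-w)^{-1}(1-w^2)^{-j}$ for even $k=2j$ and to $[w^{l-2j-1}](1-w^2)^{-(j+1)}$ for odd $k=2j+1$. Expanding $(1-w^2)^{-m}=\sum_p\binom{m+p-1}{p}w^{2p}$ and using the hockey-stick identity yields $\binom{\lfloor l/2\rfloor}{\lfloor k/2\rfloor}$ in both cases; in the odd case the parity restriction emerges automatically since $(1-w^2)^{-(j+1)}$ contains only even powers of $w$, forcing $l-2j-1$ to be even.

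For part~(b), I compute $\|F_k\|_1=\sum_w |f_k(w)|\binom{n}{w}$ directly from \cref{eq:Fk}. On the upper half of the support ($w\geq n-\lfloor(k-1)/2\rfloor$) the binomial $\binom{n-\lfloor(k+1)/2\rfloor-w}{n-k}$ has a negative upper index, which I rewrite via $\binom{-m}{p}=(-1)^p\binom{m+p-1}{p}$; the symmetry $\binom{n}{w}=\binom{n}{n-w}$ then turns the upper-half contribution into a sum indexed the same way as the lower half. For odd $k=2j+1$ the two halves coincide, giving $\|F_{2j+1}\|_1 = 2\sum_{w=0}^{j}\binom{n-j-1-w}{n-2j-1}\binom{n}{w}$; for even $k=2j$, Pascal's identity $\binom{n}{w}+\binom{n}{w-1}=\binom{n+1}{w}$ merges them into $\|F_{2j}\|_1 = \sum_{w=0}^{j}\binom{n-j-w}{n-2j}\binom{n+1}{w}$. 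Reindexing via $w'=j-w$ in both cases exposes a Vandermonde-type convolution, producing the closed-form residues $\|F_{2j+1}\|_1 = 2[x^j](1-x)^{-(n-2j)}(1+x)^n$ and $\|F_{2j}\|_1 = [x^j](1-x)^{-(n-2j+1)}(1+x)^{n+1}$. After pulling out a factor of $\binom{n}{j}$ or $\binom{n+1}{j}$, each sum becomes a Gaussian hypergeometric of the form ${}_2F_1(-j,j+1;c;\tfrac12)$, whose value is given by Bailey's theorem
\[ {}_2F_1\!\left(-j,\,j+1;\,c;\,\tfrac12\right) \;=\; \frac{\Gamma(c/2)\,\Gamma((c+1)/2)}{\Gamma((c-j)/2)\,\Gamma((c+j+1)/2)}. \]
Simplifying the resulting Gamma-function quotient via the Legendre duplication formula $\Gamma(2z)=\pi^{-1/2}2^{2z-1}\Gamma(z)\Gamma(z+\tfrac12)$ then yields $2^k\binom{n/2}{k/2}$ for even $k$ and $2^k\binom{(n-1)/2}{(k-1)/2}$ for odd $k$.

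The main obstacle will be executing the two coefficient extractions cleanly: in part~(a), recognizing the triple substitution $y\to v\to u\to w$ that diagonalizes the generating function and forces the parity restriction to emerge by itself; in part~(b), identifying the Vandermonde convolution, matching it with the Bailey ${}_2F_1$ special value, and carrying out the Gamma-function bookkeeping needed to produce the clean final formula. Once the right substitutions and hypergeometric identities are in place the computation is mechanical, but the care required in tracking signs, powers of $2$, and parity restrictions is nontrivial.
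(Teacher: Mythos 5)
Your proposal is correct, and it reaches both identities by a genuinely different route than the paper. For the $C_l$-expansion, the paper builds a trivariate generating function — in $y$ (conjugate to $l$), $z$ (conjugate to $n$), and $x$ (conjugate to $\floor{k/2}$) — reduces it to the rational function $(2yz)^k(1-z(1+y))^{-\floor{k/2}-1}(1-z(1-y))^{-\floor{(k+1)/2}}$, and reads off coefficients; you instead exploit the duality $\binom{n}{w}c_l(w)=\binom{n}{l}c_w(l)$ to flip the roles of $w$ and $l$, collapse the resulting double sum by Chu--Vandermonde, and finish with a single-variable coefficient extraction whose substitution chain ($y=2u$, then $u=w/(1-w)$) I have checked lands exactly on $[w^{l-k}](1-w)^{-\floor{k/2}-1}(1+w)^{-\floor{(k+1)/2}}$, from which the hockey-stick step and the automatic parity restriction both follow. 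For the norm, the paper observes that the eigenvalue signs make $\|F_k\|_1=\Tr(F_kC_n)$ when $n-k$ is even and then extends to odd $n-k$ by polynomial interpolation in $n$; your direct evaluation via the Vandermonde convolution $[x^j](1-x)^{-(n-2j)}(1+x)^n$ and a classical hypergeometric summation avoids that interpolation trick entirely and is self-contained for all parities, at the cost of importing Bailey's theorem and the duplication formula. Two small caveats, neither fatal: (i) the duality $\binom{n}{w}c_l(w)=\binom{n}{l}c_w(l)$ does not follow ``directly'' from \cref{Eq:clw} by inspection — it needs its own short argument (the paper proves it via the symmetric generating function $(1+x+y-xy)^n$ in \cref{prop:cbar-U1}), so you should either cite that or include the two-line proof; (ii) the ${}_2F_1$ your sum naturally produces sits at argument $-1$ (so Kummer's theorem applies directly), and reaching the stated $z=\tfrac12$ form requires a Pfaff transformation that also supplies the factor $2^j$ you attribute to ``pulling out $\binom{n}{j}$'' — this is a bookkeeping point to make explicit, not a gap.
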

Note that while for odd $n-k$  the upper indices of the binomial coefficients in \cref{eq:Fk-norm} become half-integer,  and therefore the binomial coefficients are not integer, the norm $\| F_{k} \|_1 $ is always integer (See \cref{eq:binom} for the general definition of binomial coefficient).

An immediate corollary of \cref{Eq:ClFk} is that 
\begin{corollary}\label{cor:Fk-unique}
   The operator $F_k$ is orthogonal to all $(k-1)$-local operators. Furthermore, together with \cref{eq:Fk-kernel}, this condition uniquely determines the operator $F_k$, up to a normalization.
\end{corollary}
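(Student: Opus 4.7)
The plan is to derive both parts of the corollary directly from \cref{lem:Fk}, using the locality property \cref{orth} and the completeness relation \cref{eq:Cl-complete-U1} for the first part, and a short linear-algebra argument based on the Vandermonde structure of the eigenvalues of $C_1$ (the same device already exploited around \cref{eq:TM-Delta-U1}) for the second.

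For the orthogonality claim, first I would check that in the expansion \cref{Eq:ClFk} every coefficient with $l < k$ vanishes. For even $k$ this is immediate, since $\binom{\lfloor l/2\rfloor}{k/2}=0$ whenever $l<k$. For odd $k$ only odd $l$ contribute, and the smallest odd $l$ surviving is $l=k$: the next smaller odd value $l=k-2$ gives $\lfloor(k-2)/2\rfloor=(k-3)/2<(k-1)/2=\lfloor k/2\rfloor$, killing the binomial. Hence $F_k\in\mathrm{span}\{C_l:l\geq k\}$. For any $(k-1)$-local operator $O$, \cref{orth} gives $\Tr(O C_l)=0$ for $l\geq k$, and the completeness relation then yields $\Tr(F_k O)=0$.

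For the uniqueness claim, let $F'=\sum_{w\in S}f'(w)\Pi_w$ be any operator in $\mathrm{span}\{\Pi_w\}$ whose support is contained in the $(k+1)$-element set $S$ permitted by \cref{eq:Fk-kernel} and which is orthogonal to every $(k-1)$-local operator. Since $C_l$ is by construction $l$-local, testing against $C_0,C_1,\ldots,C_{k-1}$ imposes $k$ homogeneous linear conditions $\sum_{w\in S}c_l(w)\binom{n}{w}f'(w)=0$ on the $k+1$ coefficients $f'(w)$. It then suffices to show that the resulting $k\times(k+1)$ coefficient matrix has rank $k$: the null space is then one-dimensional, and since $F_k$ already lies in it (by the orthogonality just shown together with its prescribed support), $F'$ must be a scalar multiple of $F_k$.

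To establish the rank, I would extend to $l=0,\ldots,k$ and argue that the resulting square matrix $[c_l(w)\binom{n}{w}]_{l,w\in S}$ is invertible. Changing basis from $\{C_l\}_{l\leq k}$ to the powers $\{C_1^l\}_{l\leq k}$, which is legitimate by \cref{eq:spanC1l=Cl}, converts it into $[(n-2w)^l\binom{n}{w}]_{l,w\in S}$, a Vandermonde matrix in the distinct values $\{n-2w\}_{w\in S}$ times a diagonal of positive multiplicities, and hence non-singular. Removing the last row leaves a $k\times(k+1)$ matrix of rank $k$. I do not anticipate a real obstacle: the corollary is essentially linear algebra once \cref{lem:Fk} is in hand, and the only subtlety is the parity-dependent bookkeeping needed to verify the odd-$k$ case of the containment $F_k\in\mathrm{span}\{C_l:l\geq k\}$.
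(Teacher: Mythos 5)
Your proposal is correct and follows essentially the same route as the paper: the orthogonality part reads off the vanishing of $\binom{\lfloor l/2\rfloor}{\lfloor k/2\rfloor}$ for $l<k$ from \cref{Eq:ClFk} (with the same parity bookkeeping for odd $k$) and then invokes \cref{orth} and \cref{eq:Cl-complete-U1}, while the uniqueness part is exactly the paper's rank argument via the Vandermonde structure from the discussion around \cref{eq:TM-Delta-U1}, only spelled out slightly more explicitly at the matrix level.
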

\begin{proof}
The first part of this corollary follows from \cref{lem:Fk}, which implies $F_k$ is orthogonal to all $C_l$ for $l<k$, i.e.,
\begin{align}\label{eq:TrFkCl-l<k}
    \Tr(F_k C_l)=0\ \ : l<k\ .
\end{align} 
To see this note that $\binom{\lfloor l/2\rfloor}{\lfloor k/2\rfloor} = 0$ when $\lfloor l/2\rfloor < \lfloor k/2\rfloor$. Combining this with the orthogonality of the $C_l$ operators in \cref{eq:Cl-orthogonal-U1}, we immediately obtain
\cref{eq:TrFkCl-l<k} for $l\le k-2$. To see why this equation also holds for $l=k-1$, we consider even $k$ and odd $k$ separately. In the case of even $k$, the inequality $\lfloor l/2\rfloor < \lfloor k/2\rfloor$ is equivalent to $l<k$, which again combined with  \cref{eq:Cl-orthogonal-U1}  implies \cref{eq:TrFkCl-l<k}. On the other hand, when $k$ is odd,  $l=k-1$ is even and, therefore, does not show up in the expansion in \cref{Eq:ClFk}. Then, from the completeness relation in \cref{eq:Cl-complete-U1}, we know that this equation implies  $\Tr(F_k O) = 0$ for all $k-1$-local operators $O$, as claimed in the above corollary.

Next, we show the second statement in this corollary. Let $\Pi^\Delta=\sum_{w\in\Delta} \Pi_w$ be the projector to a subset of arbitrary $k+1$ Hamming weights $\Delta\subseteq \{0,\cdots, n\}$. Then, the discussion around \cref{eq:TM-Delta-U1} implies that for any $\Delta$, the operators $C_l \Pi^\Delta : l=0,\cdots, k-1$ span a $k$-dimensional subspace of the $k+1$-dimensional space $\mathrm{span}\{\Pi_w: w\in \Delta\}$. It follows that, up to a normalization, there exists a unique non-zero operator $B\in \mathrm{span}\{\Pi_w: w\in \Delta\}$  satisfying $\Tr(C_l B)=0$ for all $l<k$. Now suppose we choose $\Delta$ to be the set of Hamming weights that are not constrained by \cref{eq:Fk-kernel}. That is, the set of Hamming weights $w\le \floor{\frac{k}{2}}$ and  $w\ge n - \floor{\frac{k-1}{2}}$, which correspond to exactly $k+1$ irreps.  Then, the above argument implies that, up to a normalization, there is a unique operator satisfying these constraints, namely $F_k$. 
\end{proof}



\cref{cor:Fk-unique} implies that $F_{k+1}$ is orthogonal to all $k$-local operators, and, in particular, is traceless. Furthermore,  its eigenvalues are integers. Therefore, we can apply \cref{thm:tdesign} for  $Q=F_{k+1}$, which implies that $t_{\max}$ for the uniform distribution over $\mathcal{V}_{n, k}^{\U(1)}$ satisfies
\begin{align}\label{eq:tmax-upper-U1}
  t_{\max} + 1 \le \frac{1}{2} \|F_{k + 1}\|_1 \ .
\end{align}
In \cref{app:U1}, we prove that this bound is tight, i.e., 
\begin{align}\label{eq:tmax-U1}
  t_{\max} + 1 = \frac{1}{2} \|F_{k + 1}\|_1 
  =
  \begin{cases}
    2^k \binom{(n-1)/2}{k/2} \text{ $k: $ even} \\
    2^k \binom{n/2}{(k+1)/2} \text{ $k: $  odd} 
  \end{cases},
\end{align}
as long as
\begin{equation}\label{eq:nbound-U1}
  n \geq 2^{\floor{\frac{k}{2}}} \floor{\frac{k + 3}{2}}\ .
\end{equation}
The explicit expressions of $t_{\max}$ for some small $k$ can be found in \cref{tab:tmax-example}. Recall that for $n\gg k$, $\binom{n}{k} \sim \frac{n^k}{k!}$.\footnote{More precisely, here $\sim$ notation means $\lim\limits_{n/k \rightarrow \infty} \binom{n}{k}\big/\frac{n^k}{k!} = 1$.} Therefore, assuming $k$ is bounded by the upper bound given in \cref{eq:nbound-U1}, we find that in the  limit $n \rightarrow \infty$, the asymptotic behavior of $t_{\max}$ is given by 
\begin{align}
  t_{\max} \sim \frac{2^{\lfloor\frac{k}{2}\rfloor}}{\lfloor \frac{k+1}{2}\rfloor!} n^{\lfloor\frac{k+1}{2}\rfloor}\ .
\end{align}
We note that Ref. \cite{hearth2023unitary} has previously shown that for $k\ge 2$, $t_{\max}$ grows, at least, linearly with $n$. Our result reveals that the actual value of $t_{\max}$ grows as $n^{\lfloor\frac{k}{2}+1\rfloor}$. \\

\noindent{\textbf{Matrix $\mathbf{S}$}:} Finally, we present an example calculation of the matrix $\mathbf{S}$ defined in \cref{eq:S-def}. For Hamming weights $v = 0, \cdots, k$ and $w = 0, \cdots, n$, we have
\begin{align}\label{eq:S-U1}
    s_v(w) & = \frac{1}{2 \pi} \int_0^{2 \pi} \diff \theta \, (\tr \e^{\i \theta Z})^{n - k} \e^{\i \theta (k - 2 v)} \e^{-\i \theta (n - 2 w)} \nonumber\\
          & = \frac{1}{2 \pi} \int_0^{2 \pi} \diff \theta \, (\e^{\i \theta} + \e^{-\i \theta})^{n - k} \e^{-\i \theta(n - k - 2(w - v))} \nonumber\\
          & = \binom{n - k}{w - v},
\end{align}
which we remark is zero unless $0 \leq w - v \leq n - k$.  While it is not immediately obvious from these matrix elements, \cref{lem1} implies that this matrix has rank $k+1$. Note that as expected from \cref{multi}, $s_v(w)$  is the multiplicity of the sector with Hamming weight $w-v$ in a system with $n-k$ qubits.

\begin{figure}[htb]
  \centering
  \includegraphics[width=0.4\textwidth]{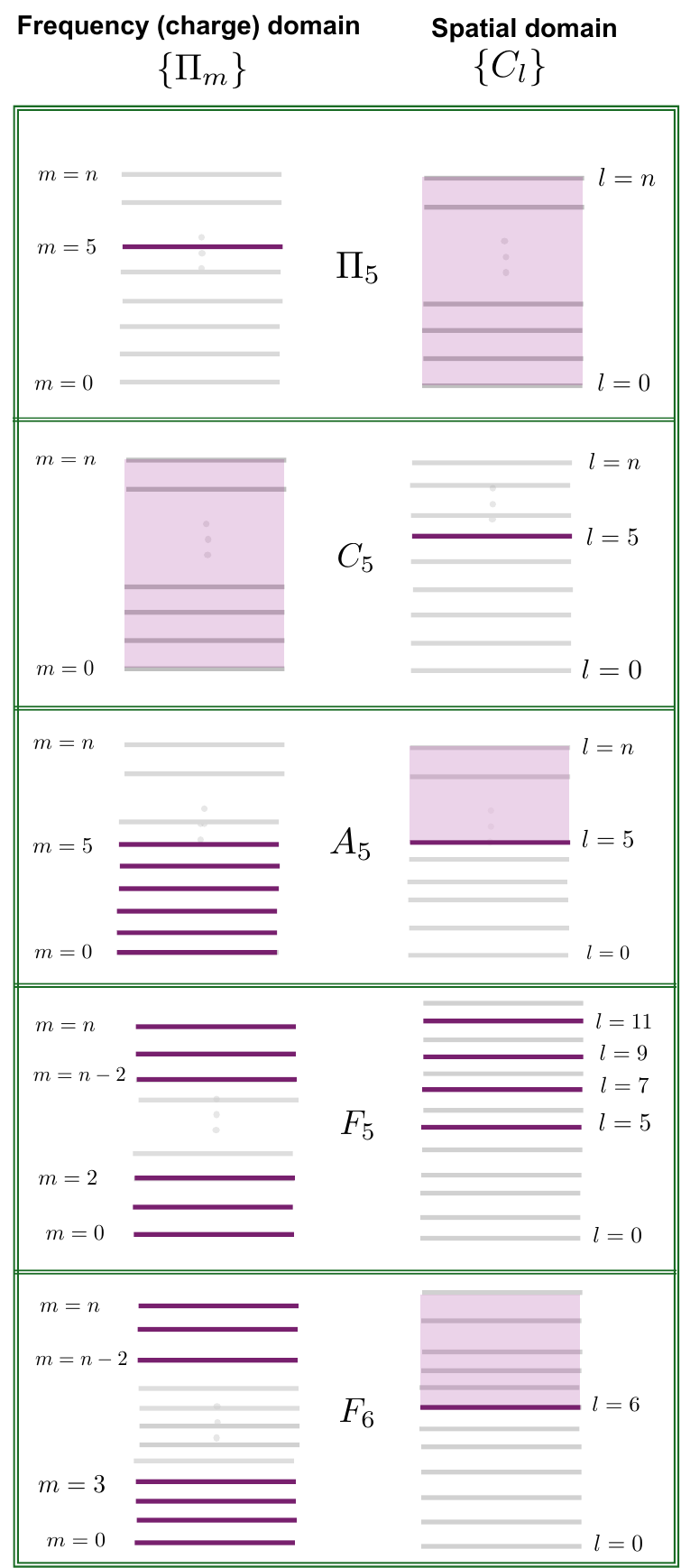}
  \caption{\textbf{Frequency versus spatial domains} The representation of operators $\Pi_5$, i.e., the projector to the sector with Hamming weight $5$, $C_5$ defined in \cref{eq:u1cls}, $F_5$ and $F_6$ defined in \cref{eq:Fk}, and $\{A_5\}$ defined in \cref{Ak}, in the $\{\Pi_m\}$ basis (left) and $\{C_l\}$ basis (right). The purple lines indicate the support of the operator. }
  \label{domains}
\end{figure}

\begin{figure}[htb]
  \centering
  \includegraphics[width=0.93\columnwidth]{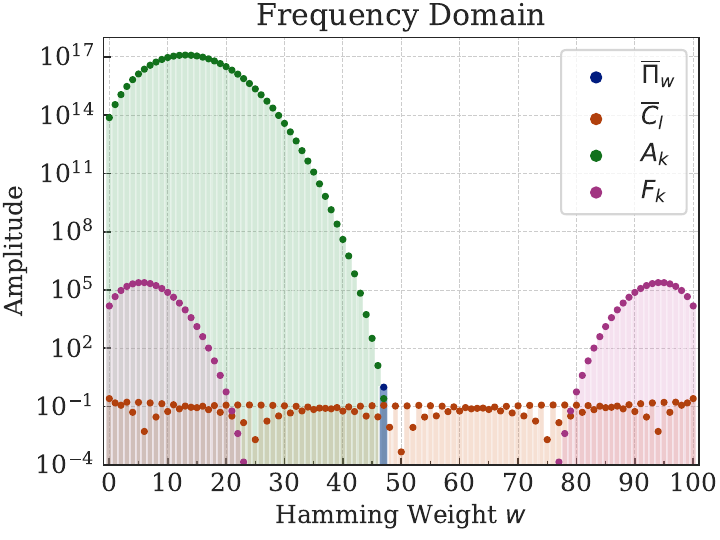}\vspace{5mm}
  \includegraphics[width=0.93\columnwidth]{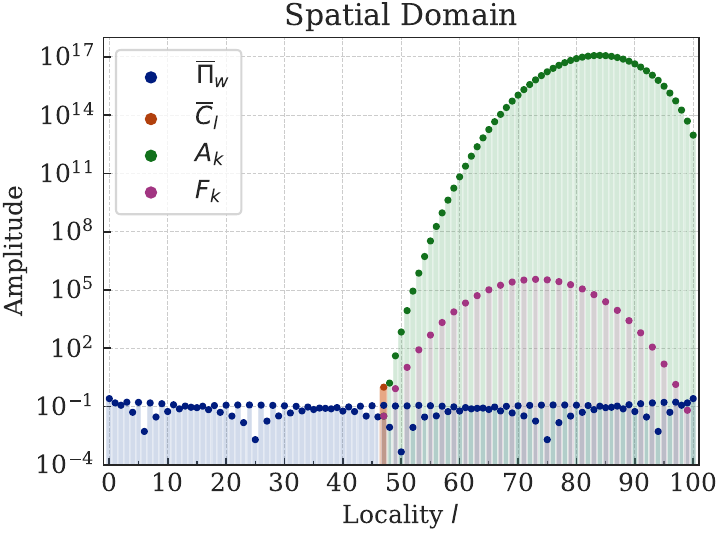}
  \caption{\textbf{Amplitudes of operators $\overline\Pi_w$, $\overline C_l$, $A_k$ and $F_k$ in the frequency and spatial domains for $w=l=k=47$ and $n=100$.} Recall that $\{\overline C_l\}$ and $\{\overline\Pi_w\}$ are both orthogonal bases for the center of the space of $\U(1)$-invariant Hermitian operators. $\{A_k\}$ and $\{F_k\}$ are also two (non-orthogonal) bases for this space. The top and bottom plots represent the absolute values of coefficients of the expansion of these operators in $\{\overline\Pi_w\}$ and $\{\overline C_l\}$ bases, respectively.   More precisely, the top plot represents $|o_{w}| = 2^{-n} \abs{\Tr(\overline\Pi_w O)}$ for the expansion of any operator $O=\sum_w o_w \overline\Pi_w$, while the bottom plot represents $|\tilde{o}_{l}| = 2^{-n} \abs{\Tr(\overline C_l O)}$ for the expansion $O=\sum_l \tilde{o}_l \overline C_l$.   Note that while $\overline\Pi_w$ is by definition localized in the frequency domain, it is spread out in the spatial domain. Similarly,  $\overline C_l$ is localized in the spatial domain but is spread out in the frequency domain.  $A_k$ operator has support only on low Hamming weights $w\leq 47$, $F_k$ operator has support only on low-multiplicity Hamming weights $w\leq 20$ and $w\geq 81$, and both $A_k$ and $F_k$ have support only on high locality $l\geq 47$. }
  \label{fig:dual-domains-U1}
\end{figure}

\subsection*{Spatial and Frequency domains:\\ Relation between 4 bases}

We have introduced 4 different bases for the same space, namely,
\be
\mathrm{span}\{\Pi_m\}=\mathrm{span}\{C_l\}=\mathrm{span}\{A_k\}=\mathrm{span}\{F_k\}\ .
\ee
Both $\{\Pi_m\}$ and $\{C_l\}$ form orthogonal bases with respect to the Hilbert-Schmidt inner product. The basis $\{C_l\}$ defines a sharp notion of locality in the following sense: $\Tr(C_l A)$ is zero for any $k$-local operator with $k<l$. Furthermore, $\Tr(\mathbf{Z}^{\mathbf{b}} C_l)=2^n \delta_{l,\ell(\mathbf{b}) }$, where 
$\ell(\mathbf{b}) := \sum_a b_a$ is the Hamming weight of $\mathbf{b}$. In other words, $\Tr(\mathbf{Z}^{\mathbf{b}} C_l)$ is non-zero only if $l$ is equal to the number of qubits that $Z^{\mathbf{b}}$ act non-trivially on them.

On the other hand, elements of the basis $\{\Pi_m\}$ define a sharp notion of charge, i.e., the irreps of symmetry, which in this case corresponds to the Hamming weight. We note that the relationship between them is similar to that between dual Fourier domains.

Operators $\{A_k\}$ and $\{F_k\}$, on the other hand, are both non-orthogonal bases for this space. 
However, they have another desirable property that is relevant for applications in the context of $t$-designs, as discussed in this paper,  as well as for experimental characterization of $N$-body interactions, discussed in \cite{zhukas2024observation}: namely, they have restricted support relative to both $\{C_l\}$ basis, which can be interpreted as the spatial domain, as well as restricted support relative to  $\{\Pi_m\}$ basis, which corresponds to the charge (frequency) domain. In particular, for $l<k$, we have $\Tr(C_l A_k)=\Tr(C_l F_k)=0$, which via \cref{eq:Cl-complete-U1} implies for any $(k-1)$-local operator $O$, we have
\be
\Tr(O A_k)=\Tr(O F_k)=0\ .
\ee
Furthermore, they have limited supports in the charge space, namely $\Tr(\Pi_w A_k)=0$ for $w>k$, and $\Tr(\Pi_w F_k)=0$ for $\floor{\frac{k}{2}} + 1 \leq w \leq n - \floor{\frac{k+1}{2}}$. Roughly speaking, this means that while the support of operators $A_k$ is restricted to sectors with Hamming weight $\le k$, operators $F_k$ have support in both low and high Hamming weight sectors. 
As an example, \cref{domains} schematically presents operators $\Pi_5, C_5, A_5, F_5$, and $F_6$ in these two dual bases. 


To obtain orthonormal bases from $\{\Pi_w\}$ and $\{C_l\}$, it is useful to define their rescaled versions 
$\overline\Pi_w:=\sqrt{\frac{2^n}{\Tr \Pi_w}} \Pi_w $ and $\overline C_l:=
\sqrt{\frac{2^n}{\Tr C^2_l}} C_l$ that satisfy
\begin{align}
  \Tr\overline\Pi_w^2 = \Tr\overline C_l^2 = \Tr\id = 2^n,
\end{align}
for all $w, l = 0, \cdots, n$, which implies $\{2^{-\frac{n}{2}}\overline\Pi_w\}$ and $\{2^{-\frac{n}{2}} \overline C_l\}$ are orthonormal. Furthermore, since all elements of these bases are Hermitian operators, the elements of the two bases are related via an orthogonal transformation, namely for $w,l=0,\cdots, n$, 
\begin{subequations}\label{eq:reciprocity-U1}
  \begin{align}
   \overline C_l &= \sum_{w=0}^n \bar{c}_{l,w} \overline\Pi_w\ , \label{eq:reciprocity-U1a}\\
   \overline\Pi_w &= \sum_{l=0}^n \bar{c}_{l,w} \overline C_l\ , \label{eq:reciprocity-U1b}
  \end{align}
\end{subequations}
where $\bar{c}$ is an $(n+1)\times(n+1)$ matrix with elements
\begin{align}\label{eq:c-lw}
  \bar{c}_{l,w} := 
  \sqrt{\frac{\Tr(\Pi_w)}{\Tr(C_l^2)}} \times c_l(w) ,
\end{align}
and $c_l(w)$ is defined in \cref{Eq:clw}. \cref{eq:reciprocity-U1a} follows immediately from \cref{eq:u1cls}, i.e.,  the fact that $c_l(w)$ is
the eigenvalues of $C_l$ in the subspace with Hamming weight $w$, and the second equation follows from the fact that matrix $\bar{c}$ preserves the Hilbert-Schmidt inner product and therefore, is orthogonal.

Remarkably, as we show in 
\cref{app:U1}, it turns out that in addition to being an orthogonal matrix, $\bar{c}$ is also symmetric and therefore involutory, i.e.,
\begin{align}\label{eq:cbar-symmetric-orthogonal}
  \bar{c} = \bar{c}^\T = \bar{c}^{-1}\ .
\end{align}

In \cref{fig:dual-domains-U1}, we plot the components of $\overline\Pi_{47}$, $\overline C_{47}$, $A_{47}$, and  $F_{47}$ in the  bases $\{\overline\Pi_w\}$ and $\{\overline C_l\}$ respectively for $n=100$ qubits. While $\overline\Pi_w$ is obviously localized in the frequency domain, namely the $\{\overline\Pi_w\}$ basis, it is delocalized in the spatial domain, i.e., $\Tr(\overline\Pi_w \overline C_l) =2^n \bar{c}_{l,w}$ is non-zero 
for generic values of $w$ and $l$. 
Similarly, $\overline C_l$ is localized in the spatial domain but not in the frequency domain. 

There are two additional features to notice in \cref{fig:dual-domains-U1}: (i) the orange dots in the first plot and the blue dots in the second are identical: this is a consequence of $\bar{c}$ being a symmetric matrix. (ii) the orange dots and the blue dots themselves are symmetric with respect to the mirror reflection about the vertical lines defined by $w=50$ and $l=50$, respectively. This is a consequence of  
\begin{align}\label{eq:mirror-symmetry}
\Tr(\Pi_w C_l)=(-1)^l \Tr(\Pi_{n-w} C_l)=(-1)^w \Tr(\Pi_{w} C_{n-l}) \ ,
\end{align}
that are shown in \cref{app:U1}.


\subsection{\texorpdfstring{$\SU(2)$}{SU(2)} symmetry}\label{Sec:SU(2)}

Next, we consider a system with $n$ qubits with on-site $\SU(2)$ symmetry. That is, we consider $n$-qubit unitaries commuting with $u^{\otimes n}: u\in\SU(2)$.  The  irreps of $\SU(2)$ can be labeled by the eigenvalues $j(j+1)$ of the total angular momentum operator $J^2 = J_x^2 + J_y^2 + J_z^2$, where  $J_v = \frac{1}{2} \sum_{a = 1}^n \sigma_a^v$ is the angular momentum operator in direction $v = x, y, z$, $j= j_{\min}, j_{\min}+1,\cdots, \frac{n}{2}$, where $j_{\min}=\frac{n}{2} - \lfloor\frac{n}{2}\rfloor$. Recall that $\SU(2)$ irrep with angular momentum $j$ has dimension $2j+1$. Furthermore,  
the multiplicity of this irrep is 
\begin{align}
  m(n,j) = \binom{n}{\frac{n}{2}-j} \frac{2j+1}{\frac{n}{2}+j+1}\ ,
\end{align}
which as we show in \cref{app:SU2-dim} is strictly decreasing with $j$ for $j > \sqrt{n\ln n}$.

Our previous work in Ref. \cite{Marvian2024Rotationally} proves that semi-universality is achieved with 2-qubit $\SU(2)$-invariant gates, i.e., $\mathcal{SV}_{n,n}^{\SU(2)}\subset \mathcal{V}_{n,2}^{\SU(2)}$. In this section, we use \cref{prop1,thm:tdesign} to first establish a lower bound, and then find the exact value of $t_{\max}$ for the Haar measure over  $\mathcal{V}_{n,k}^{\SU(2)}$ for general $k\ge 2$. We start with the example of $k=2$.

\subsubsection{Example: \texorpdfstring{$k=2$}{k=2}}

Before studying the group $\mathcal{V}_{n,2}^{\SU(2)}$ generated by 2-qubit gates, it is instructive to consider the design properties of the group $\mathcal{SV}_{n,n}^{\SU(2)}$, or equivalently, the group $\mathcal{SV}_{n,n}^{\SU(2)}$ together with global phases $\e^{\i\theta} \mathbb{I}: \theta\in[0,2\pi)$ (recall that including global phases does not change the design properties of the group).  

In this case, applying \cref{prop1}, we can immediately obtain a simple lower bound on $t_{\max}$: recall that the irrep with the lowest multiplicity is the highest angular momentum irrep with $j_{\max}=\frac{n}{2}$, which has multiplicity 1. Furthermore, as $j$ decreases from its maximum value $j_{\max}=\frac{n}{2}$ down to $j \approx \sqrt{n\ln n}$, the multiplicity $m(n,j)$ is monotonically increasing. In particular, when $n \geq 5$, the irrep with the second lowest multiplicity corresponds to $j=\frac{n}{2}-1$ whose dimension is $n-1$. Therefore, from \cref{prop1}, we know that $t_{\max} \geq n-2$. 

It turns out that this bound holds as equality, i.e., $t_{\max}=n-2$, which can be shown using  \cref{thm:tdesign} to find an upper bound on $t_{\max}$.  In particular, we choose the operator
$Q$ in this theorem to be $Q=A_2$, where
\begin{align}
  A_2 = \frac{n-1}{n+1}\Pi_{\frac{n}{2}} - \frac{1}{n-1}\Pi_{\frac{n}{2}-1}\ .
\end{align}
As we further explain in  \cref{corSU(2)}, this operator is a special case of a family of operators $\{A_k\}$ defined below in \cref{eq:Ak}, which are uniquely specified (up to normalization) by the property that they are orthogonal to all $(k-1)$-local operators, and have support only on $\{\Pi_j\}$ with $j\ge  \frac{n-k}{2}$. Furthermore, operator $Q=A_2$ also satisfies the other conditions in \cref{thm:tdesign}, namely $\Tr(A_2)=0$, and $q(j=\frac{n}{2})=n-1$ and $q(j=\frac{n}{2}-1)=-1$. Therefore, this theorem implies the upper bound $t_{\max} \leq \frac{1}{2} \|A_2\|_1 -1= n-2$. Combining these lower and upper bounds we conclude that $t_{\max} = n-2$.

Next, we consider the group $\mathcal{V}_{n,2}^{\SU(2)}$, generated by 2-qubit $\SU(2)$-invariant gates, namely 
\begin{align}
  \mathcal{V}_{n,2}^{\SU(2)}\!=\langle \e^{\i \theta R_{ab}}, \e^{\i\theta} \id: 0\le a <b \le n, \theta\in[0,2\pi)\rangle ,
\end{align}
where $R = \frac{1}{2} (X \otimes X + Y \otimes Y + Z \otimes Z)$. 
It is worth noting that this group is equal to the group generated by 3-qubit SU(2)-invariant unitaries, that is  $\mathcal{V}_{n,2}^{\SU(2)}=\mathcal{V}_{n,3}^{\SU(2)}$ \cite{Marvian2024Rotationally} (In other words,  any 3-qubit $\SU(2)$-invariant unitary is realizable with 2-qubit $\SU(2)$-invariant unitaries).

For $H_0 = \id$ and $H_1 = R_{ab}$, projected to sectors with angular momenta $j=\frac{n}{2}, \frac{n}{2}-1$, the matrix $\mathbf{M}^{\Delta}$, 
\begin{equation}
  \mathbf{M}^\Delta = \begin{pmatrix}
    1 & n - 1 \\ 
    \frac{1}{2}n(n-1) & \frac{1}{2}n(n-1)(n-5)
  \end{pmatrix}
\end{equation}
has full rank. Therefore, \cref{prop1} implies that $t_{\max} +1 \geq m(n, \frac{n}{2}-2) = \frac{1}{2} n (n - 3)$.

Similar to the previous case,  the upper bound of $t_{\max}$ can be calculated using $Q=A_4$ defined as
\begin{align}
  A_4 = \frac{(n - 2)(n - 3)}{2(n+1)} \Pi_{0} - \frac{n-3}{n-1} \Pi_{1} + \frac{1}{n-3} \Pi_{2},
\end{align}
which satisfies all conditions in \cref{thm:tdesign}.

Therefore, we have $t_{\max} \leq\frac{1}{2}\|A_4\|-1$. Furthermore, using an argument similar to the one in \cref{jdj}, and the fact that for $n \geq 9$, $\frac{1}{2}\|A_4\| \leq m(n, \frac{n}{2}-3)$, in \cref{app:SU2}, 
we show that this is indeed the optimal operator that satisfies the conditions of the theorem, i.e.,
\begin{align}\label{eq:SU2-tmax-k=2}
  t_{\max} =\frac{1}{2}\|A_4\|-1 = (n - 1)(n - 3)-1\ .
\end{align}

\subsubsection{General \texorpdfstring{$k$}{k}}

Recall that in the case of U(1) symmetry, $\{C_l\}$ operators played a crucial role in understanding the constraints imposed by locality.   
 In the case of $\SU(2)$ symmetry 
Ref. \cite{Marvian2024Rotationally} introduces an orthogonal basis for $\mathrm{span}\{\Pi_j\}$, with similar properties. For even $l=0,\cdots, 2 \lfloor \frac{n}{2} \rfloor$, this basis is defined as\footnote{Equivalently, as discussed in \cite{Marvian2024Rotationally} this basis can be defined as a linear combination of permutation operators (See also \cite{kazi2024universality}).}
\begin{align}\label{eq:Cl-SU2}
  C_l = \frac{1}{(l/2)!} \sum_{i_1 \neq \cdots \neq i_l} R_{i_1i_2} \cdots R_{i_{l-1}i_l}=\sum_{j=j_{\min}}^{n / 2}  c_l(j)\ \Pi_j\ ,
\end{align}
where, as shown in \cite{Marvian2024Rotationally}, 
the eigenvalues $c_l(j)$ are remarkably all integers.  See \cref{eq:Cl-expression} in \cref{app:SU2} for the explicit forms of the eigenvalues $c_l(j)$ for arbitrary $l$.  For example,
\begin{align}
  C_2 &= 2J^2 - \frac{3}{2}n \mathbb{I} = \sum_{j=j_{\min}}^{n / 2} \big[2j(j+1) - \frac{3}{2}n\big]\  \Pi_j\ .
\end{align}
Similar to the  $C_l$ operators defined in \cref{eq:u1cls} for the case of $\U(1)$ symmetry, the  $C_l$ operators in \cref{eq:Cl-SU2} for $\SU(2)$ symmetry are also Hermitian and and  
satisfy the orthogonality relation\footnote{The double factorial $n!!$ is defined as the product of all the positive integers up to $n$ that have the same parity, i.e., $n!! := n(n-2)\cdots$.}  
\bes
\begin{align}\label{eq:Cl-orthogonal-SU2}
    \Tr(C_l C_{l'}) &=\sum_{j=j_{\min}}^{n / 2}  c_l(j) c_{l'}(j) \Tr(\Pi_j)\\ &= \delta_{ll'}\times  (l+1)!!\times  (l-1)!! \times 2^n \binom{n}{l}\ ,
\end{align}
\ees
and therefore, form a complete basis for the space $\mathrm{span}\{\Pi_j\}=\mathrm{span}\{C_l\}$ which has dimension $\lfloor \frac{n}{2} \rfloor+1$  (See \cref{lem:C2msquare} in \cref{app:SU2}  for the proof of this identity). These properties imply that for any arbitrary operator $B$ in this space, we have the completeness relation
\begin{align}\label{eq:Cl-complete-SU2}
    \Tr(B O)=\sum_{l:\mathrm{even}}^n \frac{\Tr(B C_l)\times  \Tr(C_l O)}{\Tr(C_l^2)} ,
\end{align}
where $O$ is an arbitrary operator on $n$ qubits.  Furthermore, the definition of $C_l$ operators in \cref{eq:Cl-SU2}, immediately implies that
for any $k$-local operator $O$,
\begin{align}
    \tr (O C_l) = 0\ \ : l>k\ . 
\end{align}
Note that similar to the U(1) case we can rescale $\Pi_j$ and $C_l$ to $\overline\Pi_j$ and $\overline C_l$ such that $\Tr\overline\Pi_j = \Tr\overline C_l^2 = \Tr\id = 2^n$. 
Then, they satisfy the following reciprocity relation
\begin{subequations}\label{eq:reciprocity-SU2}
  \begin{align}
   \overline C_{l} &= \sum_{j=j_\text{min}}^{n/2} \bar{c}_{l,j} \overline\Pi_j\ , \label{eq:reciprocity-SU2a}\\
   \overline\Pi_j &= \sum_{l: \text{even}} \bar{c}_{l,j} \overline C_{l}\ , \label{eq:reciprocity-SU2b}
  \end{align}
\end{subequations}
where $\bar{c}_{l,j}=\sqrt{\frac{\Tr(\Pi_j)}{\Tr(C_l^2)}}c_l(j)$ defines a real orthogonal $(2 \lfloor \frac{n}{2} \rfloor+1)\times (2 \lfloor \frac{n}{2} \rfloor+1)$  matrix. Then, similar to the U(1) case, we can interpret $\overline\Pi_j$ and $\overline C_l$ as the frequency and spatial domains. In \cref{fig:dual-domains-SU2} we present the components of $\overline\Pi_j$, $\overline C_l$ and other operators defined below relative to these bases. Notice that the blue dots in the bottom plot are symmetric with respect to the mirror reflection about the $l=50$ axis. As we discussed in \cref{app:SU2}, this is a consequence of the relation 
\begin{align}\label{eq:Cbar-symmetry}
  \Tr(\overline{C}_l \Pi_j) = (-1)^{\frac{n}{2}-j} \Tr(\overline{C}_{n-1-l} \Pi_j).
\end{align}
when $n$ is odd.\\

\begin{figure}[htb]
  \centering
  \includegraphics[width=0.93\columnwidth]{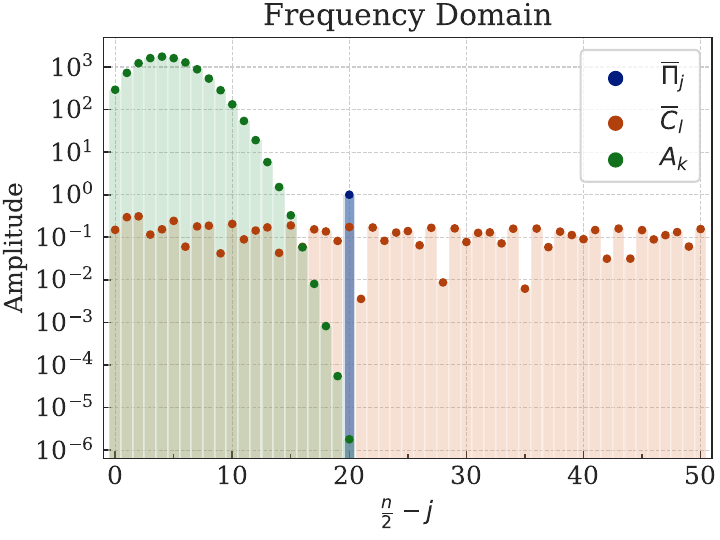}\vspace{5mm}
  \includegraphics[width=0.93\columnwidth]{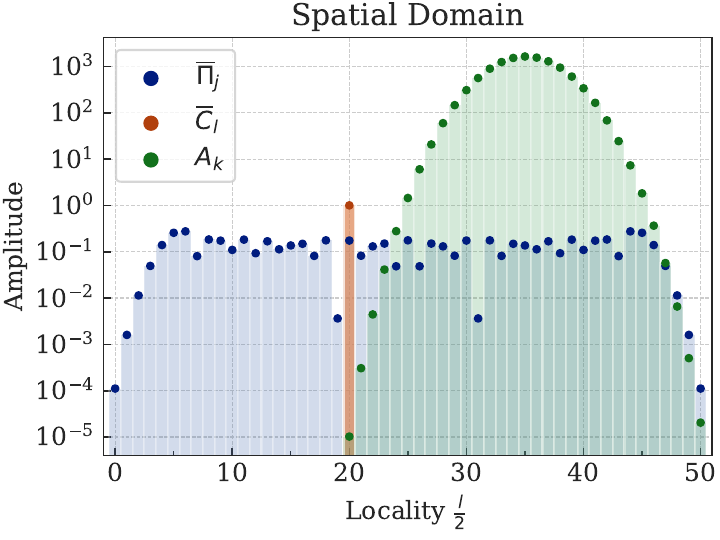}
  \caption{\textbf{Amplitudes of operators $\overline\Pi_j$, $\overline C_l$ and $A_k$ in the frequency and spatial domains for $n-2j=l=k=40$ and $n=101$.} Recall that $\{\overline\Pi_j\}$ and $\{\overline C_l\}$ are both orthonormal bases for the center of the space of $\SU(2)$-invariant Hermitian operators. $\{A_k\}$ is also a (non-orthogonal) basis for this space.  Similar to \cref{fig:dual-domains-U1}, the top and bottom plots represent the absolute values of coefficients of the expansion of these operators in $\{\overline\Pi_j\}$ and $\{\overline C_l\}$ bases, respectively. More precisely,  for any operator $O$, they represent $|o_{w}| = 2^{-n} \abs{\Tr(\overline\Pi_j O)}$ and $|\tilde{o}_{l}| = 2^{-n} \abs{\Tr(\overline C_l O)}$, respectively.  Note that $\overline\Pi_j$ is localized in the frequency domain but is spread out in the spatial domain, whereas $\overline C_l$ is localized in the spatial domain but is spread out in the frequency domain. $A_k$ operator has support only on high angular momenta $j\geq 30$ and high locality $l\geq 40$.}
  \label{fig:dual-domains-SU2}
\end{figure}

\noindent{\textbf{Lower bound on $t_{\max}$}:} We start by establishing a lower bound on $t_{\max}$ by applying \cref{prop1}. In particular, we show that for the group $\mathcal{V}^{\SU(2)}_{n, k}$, the condition $\rk \mathbf{M}^\Delta = \abs{\Delta}$ holds for any subset of irreps $\Delta$ with $\abs{\Delta} = \lfloor\frac{k}{2}\rfloor + 1$.

Similar to the case of $\U(1)$ symmetry, operators     $\{C_{l}: l = 0, 2, \cdots, 2 \floor{\frac{k}{2}}\}$
form a basis of the center of the Lie algebra of $\mathcal{V}_{n,k}^{\SU(2)}$. For any subset of $\floor{\frac{k}{2}} + 1$ angular momenta $\Delta \subseteq \set{j_{\min}, \cdots, \frac{n}{2}}$, consider the matrix
\begin{equation}
    \mathbf{M}_{j,l}^\Delta = \tr \Pi_j C_l = c_l(j) \binom{n}{\frac{n}{2}-j} \frac{2j+1}{n/2+j+1},
\end{equation}
where $l$ is an even integer in the interval $0 \leq l \leq k$, $j \in \Delta$, and $c_l(j)$ is the eigenvalue of $C_l$ on the irrep $j$. Similar to the $\U(1)$ case, one can argue that $\mathbf{M}^\Delta$ has full rank using the fact from Ref. \cite{Marvian2024Rotationally} that
\begin{equation}
\begin{split}
  \operatorname{span}_\real & \set{C_{l} : l = 0, 2, \cdots, 2 \floor{\frac{k}{2}}} \\
  & = \operatorname{span}_\real \set{C_2^m : m = 0, \cdots, \floor{\frac{k}{2}}}.
\end{split}
\end{equation}
Since $\rk(\mathbf{M}^\Delta) = |\Delta|$ for any subset of irreps $\Delta$ with $|\Delta| = \floor{\frac{k}{2}}+1$, we know that the $(\floor{\frac{k}{2}}+2)$th smallest multiplicity determines the lower bound. Recall that 
as $j$ decreases from its maximum value 
$j_{\max}=\frac{n}{2}$ down to $j \approx \sqrt{n\ln n}$,  the multiplicity $m(n,j)$ is monotonically increasing. Therefore, for $k$ satisfying  $$\frac{n}{2}-\Big(\floor{\frac{k}{2}}+1\Big) < \sqrt{n\ln n},$$
applying \cref{prop1}  we obtain the lower bound  
\begin{align}
  t_{\max} +1 \geq m(n, \frac{n}{2}-\floor{\frac{k}{2}}-1) \!=\! \binom{n}{\floor{\frac{k}{2}}+1} \frac{n-2\floor{\frac{k}{2}}-1}{n-\floor{\frac{k}{2}}} .
\end{align}

\noindent{\textbf{Exact value of $t_{\max}$}:} Next, we find the exact value of $t_{\max}$. For this purpose, we introduce a new basis for the space $\mathrm{span}_\mathbb{R}\{\Pi_j\}$,  which is analogous to the basis $F_k$ in \cref{eq:Fk}, defined in the case of U(1) symmetry, and has similar properties. Namely, for even integers $k=0, 2, \cdots, 2\lfloor \frac{n}{2}\rfloor$, we define operators 
\begin{align}\label{eq:Ak}
  A_k 
  &= \sum_{j=j_{\min}}^{n/2} a_k(j) {\Pi_j} \nonumber\\
         &= \sum_{j=j_{\min}}^{n/2} (-1)^{\frac{n}{2}-j} \binom{\frac{n-k}{2}+j} {n-k} m(n,j) \frac{\Pi_j}{\Tr\Pi_j}\ .
\end{align}
A few remarks are in order: First, $a_k(j)$ is the eigenvalue of $A_k$ in the sector with angular momentum $j$. Recall that $\Tr\Pi_j = (2j+1) \times  m(n,j) $, and therefore
\be
a_k(j)\times (2j+1)=(-1)^{\frac{n}{2}-j} \binom{\frac{n-k}{2}+j} {n-k} \ . 
\ee
Since $k$ is even and $\frac{n}{2}+j$ is an integer, $a_k(j)\times (2j+1)$ is always an integer.  This means that if we choose operator $Q=\sum_{\lambda} q(\lambda) m_\lambda \frac{\Pi_\lambda}{\tr \Pi_\lambda}$ in \cref{thm:tdesign} equal to $A_k$,  the corresponding coefficients $q(\lambda)$ are integers, as required by this theorem. Second, notice that when $j < \frac{n-k}{2}$ the coefficient of $\Pi_j$ is zero. In other words, $A_k$ has restricted support in the $\{\Pi_j\}$ basis,
\begin{align}\label{eq:As-support}
  \Tr(A_k \Pi_j) = 0 \quad \text{when } j < \frac{n-k}{2}\ .
\end{align}
Similar to the $\U(1)$ case, the operator $A_k$ also has the remarkable property that it also has restricted support in the $C_l$ basis (See the example in \cref{fig:dual-domains-SU2}). In \cref{app:SU2}, we prove that
\begin{restatable}{lemma}{lemAk}\label{lem:Ak}
On a system with $n$ qubits, operators $A_k: k=0, 2,\cdots, 2 \lfloor \frac{n}{2} \rfloor$ defined in \cref{eq:Ak} have expansion 
\begin{align}\label{Eq:A2sC2m}
  A_k = 2^{k-n} \sum_{l:\mathrm{even}} \frac{1}{(l+1)!!} \binom{l/2}{k/2} C_{l}\ , 
\end{align}
relative to the basis $C_{l}:  l=0, 2,\cdots, 2 \lfloor \frac{n}{2} \rfloor$.  Furthermore,
\begin{align}\label{ee}
  \| A_k \|_1 = 2^k \binom{(n-1)/2}{k/2}\ .
\end{align}
\end{restatable}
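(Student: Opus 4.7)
My plan is to prove the expansion first and then deduce the $\ell_1$ norm from the explicit formula for $a_k(j)$. The overall strategy parallels the proof of \cref{lem:Fk} in the $\U(1)$ setting, with the $\SU(2)$ character $\chi_j(u)=\sin((2j+1)\theta/2)/\sin(\theta/2)$ and the nontrivial multiplicities $m(n,j)$ playing the roles of the $\U(1)$ Fourier modes and the binomials $\binom{n}{w}$.

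For the expansion, I would use the orthogonality relation \cref{eq:Cl-orthogonal-SU2} to reduce the claim to verifying
\[
\Tr(A_k C_l) \;=\; 2^k (l-1)!! \binom{l/2}{k/2} \binom{n}{l}
\]
for every even $l$, and then argue by uniqueness. Because $\binom{(n-k)/2+j}{n-k}$ vanishes for $j<(n-k)/2$, the operator $A_k$ is supported on the $k/2+1$ sectors $j\in\{(n-k)/2,\ldots,n/2\}$. Meanwhile the candidate
\[
R_k \,:=\, 2^{k-n} \sum_{l \geq k,\, l \text{ even}} \frac{1}{(l+1)!!} \binom{l/2}{k/2}\, C_l
\]
is orthogonal to every $C_{l'}$ with $l'<k$ (since $\binom{l/2}{k/2}$ vanishes there), and hence to every $(k-2)$-local operator. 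The subspace of operators in $\mathrm{span}\{\Pi_j : j \geq (n-k)/2\}$ orthogonal to $C_0,C_2,\ldots,C_{k-2}$ is one-dimensional, so it suffices to show that $R_k$ itself lives in that support --- i.e., that $\sum_{l\geq k,\,\text{even}}(l+1)!!^{-1}\binom{l/2}{k/2}\,c_l(j)=0$ for $j<(n-k)/2$ --- and then fix the overall scalar by matching eigenvalues in the top-weight sector $j=n/2$, where $m(n,n/2)=1$ and every $c_l(n/2)$ is elementary. Alternatively, one may compute $\Tr(u^{\otimes n}C_l)$ in closed form by writing $C_l$ as a sum over unordered matchings of disjoint $R_{ij}$'s and using $R=S_{\text{swap}}-I/2$ to obtain
\[
\Tr(u^{\otimes n} C_l) \,=\, \frac{n!}{(n-l)!\,(l/2)!}\, \bigl(-2\sin^2(\theta/2)\bigr)^{l/2}\, \bigl(2\cos(\theta/2)\bigr)^{n-l},
\]
after which the expansion amounts to a polynomial identity in $\sin(\theta/2),\cos(\theta/2)$ obtained by matching the coefficients of $\chi_j(u)$.

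For the $\ell_1$ norm, the eigenvalue $a_k(j)$ has sign $(-1)^{n/2-j}$ and absolute value $\binom{(n-k)/2+j}{n-k}/(2j+1)$, while the corresponding eigenspace has dimension $(2j+1)\,m(n,j)$; the $(2j+1)$ factors cancel to give
\[
\|A_k\|_1 \,=\, \sum_{j \geq (n-k)/2} \binom{(n-k)/2+j}{n-k}\, m(n,j).
\]
Reparameterizing by $p=n/2-j$ and using $m(n,n/2-p)=\binom{n}{p}-\binom{n}{p-1}$ together with Abel summation reduces this to $[x^{k/2}](1+x)^n/(1-x)^{n-k}$. The substitution $x=(u-1)/(u+1)$ then converts this coefficient into the residue $\tfrac{2^{k+1}}{(k/2)!}\tfrac{d^{k/2}}{du^{k/2}}[u^n/(u+1)^{k/2+1}]_{u=1}$; expanding via Leibniz's rule yields a finite sum that telescopes (or can be recognized as a standard hypergeometric evaluation) to $2^k\binom{(n-1)/2}{k/2}$.

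The principal technical obstacle, in either route to the expansion, is the combinatorial identity establishing the support of $R_k$ (equivalently, the orthogonality $\Tr(A_k C_l)=0$ for $l<k$). This identity is the $\SU(2)$ analogue of the one used in \cref{lem:Fk}, and once it is in hand --- via the explicit form of $c_l(j)$ recorded in the appendix or via the character computation above --- the rest of the proof is a generating-function exercise.
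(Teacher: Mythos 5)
Your reduction is structurally sound and matches the paper's: by the orthogonality relation \cref{eq:Cl-orthogonal-SU2}, the expansion \cref{Eq:A2sC2m} is equivalent to the closed form $\Tr(A_k C_l) = 2^k (l-1)!!\binom{l/2}{k/2}\binom{n}{l}$, and your uniqueness argument (restricted support plus orthogonality to $C_0,\ldots,C_{k-2}$ pins down the operator up to a scalar) is exactly the mechanism of \cref{corSU(2)}. Your character formula $\Tr(u^{\otimes n}C_l) = \frac{n!}{(n-l)!\,(l/2)!}(-2\sin^2(\theta/2))^{l/2}(2\cos(\theta/2))^{n-l}$, obtained from $R = S_{\mathrm{swap}} - \id/2$, is correct and is a genuinely different (and rather clean) ingredient not used in the paper. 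However, as you yourself flag in your closing paragraph, you never actually prove the central combinatorial identity: whether phrased as $\Tr(A_k C_l)=0$ for $l<k$ plus normalization, as the vanishing of $\sum_{l}(l+1)!!^{-1}\binom{l/2}{k/2}c_l(j)$ for $j<\frac{n-k}{2}$, or as the polynomial identity in $\sin(\theta/2),\cos(\theta/2)$ coming from the character route, this identity \emph{is} the content of the lemma, and deferring it leaves the proof incomplete. Note also that "fix the overall scalar by matching eigenvalues at $j=n/2$" is itself a nontrivial sum, $2^{k-n}\sum_m\frac{1}{2m+1}\binom{m}{k/2}\binom{n}{2m} = \frac{1}{n+1}\binom{n-k/2}{n-k}$, not an elementary check. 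The paper closes this gap by proving the stronger statement $\Tr(A_{2s}\widetilde{C}_{2m}) = 4^s\binom{m}{s}$ for \emph{all} $m$ at once, rewriting the sum over $i$ as $\frac{1}{(n-2s)!}(\frac{\d}{\d x})^{n-2s}\,x^{-s}(x-1)^{n-2m}(x+1)^{2m}\big|_{x=1}$ and then summing a generating function in $y$ over $m$; you would need to carry out your character-domain analogue of this computation in full.

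For the norm, your starting identity $\|A_k\|_1 = \sum_{j\geq (n-k)/2}\binom{(n-k)/2+j}{n-k}\,m(n,j)$ is correct, but the subsequent chain (Abel summation to a coefficient extraction, substitution $x=(u-1)/(u+1)$, and a final sum that "telescopes or can be recognized as a standard hypergeometric evaluation") is asserted rather than executed, and each step needs verification. The paper's route is considerably shorter: for odd $n$ the operator $\widetilde{C}_{n-1}=\sum_i(-1)^i\Pi_i$ has exactly the sign pattern of $A_k$'s eigenvalues, so $\|A_k\|_1 = \Tr(A_k\widetilde{C}_{n-1}) = 2^k\binom{(n-1)/2}{k/2}$ drops out of the already-established pairing formula, and the case of even $n$ follows because both sides are degree-$\frac{k}{2}$ polynomials in $n$ agreeing at infinitely many points. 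You may want to adopt that shortcut, since it reuses the main identity instead of requiring a second independent computation.
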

Note that similar to \cref{eq:Fk-norm} 
in 
the case of U(1) symmetry, here for even $n$  the upper index of the binomial coefficients  in  \cref{eq:Fk-norm} becomes half-integer. However, the norm $\| A_k \|_1 $ is always an integer.

An immediate corollary of \cref{Eq:A2sC2m} is that
\begin{corollary}\label{corSU(2)}
    The operator $A_k$ is orthogonal to all $(k-1)$-local operators. Furthermore, together with \cref{eq:As-support}, this condition uniquely determines the operator $A_k$, up to a normalization.
\end{corollary}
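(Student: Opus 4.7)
The plan is to derive both claims directly from the $C_l$-expansion of $A_k$ given in \cref{lem:Ak}, using a strategy that parallels \cref{cor:Fk-unique} in the $\U(1)$ case. For the orthogonality statement, first note that $C_l$, as defined in \cref{eq:Cl-SU2}, is manifestly an $l$-local operator, since it is a sum of products of $l/2$ two-qubit operators $R_{ij}$ supported on $l$ distinct qubits. Hence, for any $(k-1)$-local operator $O$ and any even $l \geq k$, we have $\tr(O C_l)=0$ by the same locality argument used for the $\U(1)$ basis. On the other hand, the coefficient $\binom{l/2}{k/2}$ in \cref{Eq:A2sC2m} vanishes whenever $l<k$, so $A_k$ lies in $\operatorname{span}_\real\{C_l : l \geq k,\ l \text{ even}\}$. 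Pairing these two observations immediately yields $\tr(A_k O) = 0$ for every $(k-1)$-local $O$.

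For uniqueness, fix even $k$ with $\tfrac{n-k}{2}\geq j_{\min}$, set $\Delta = \{j : j \geq \tfrac{n-k}{2}\}$ (so $|\Delta|=\tfrac{k}{2}+1$), and let $\mathcal{S}_\Delta := \operatorname{span}_\real\{\Pi_j : j\in\Delta\}$. By the completeness relation \cref{eq:Cl-complete-SU2}, a nonzero $B \in \mathcal{S}_\Delta$ is orthogonal to every $(k-1)$-local operator if and only if $\tr(B C_l)=0$ for every even $l\in\{0,2,\dots,k-2\}$; this is $\tfrac{k}{2}$ linear conditions on the $(\tfrac{k}{2}+1)$-dimensional space $\mathcal{S}_\Delta$, so the solution space has dimension at least one, and $A_k$ already lies in it. The claim therefore reduces to showing these $\tfrac{k}{2}$ conditions are linearly independent on $\mathcal{S}_\Delta$. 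To this end, I would replace the test set $\{C_l\}_{l=0,2,\ldots,k-2}$ by the powers $\{C_2^m\}_{m=0,\ldots,k/2-1}$, which span the same subspace by the identity from \cite{Marvian2024Rotationally} recalled above \cref{eq:SU2-tmax-k=2}. Because $C_2=2J^2-\tfrac{3}{2}n\id$ has pairwise distinct eigenvalues $\lambda(j)=2j(j+1)-\tfrac{3}{2}n$ on $\Delta$, the $\tfrac{k}{2} \times (\tfrac{k}{2}+1)$ matrix with entries $\lambda(j)^{m}\,\tr(\Pi_j)$ factors as a (non-square) Vandermonde matrix in the $\lambda(j)$'s times the invertible diagonal matrix $\operatorname{diag}(\tr \Pi_j)$, and is therefore of full row rank $\tfrac{k}{2}$. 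Hence the solution space is exactly one-dimensional, and $A_k$ is uniquely determined up to a scalar multiple.

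The main obstacle is the Vandermonde-rank step in the uniqueness argument: one must verify that passing from $\{C_l\}_{l\leq k-2}$ to $\{C_2^m\}_{m<k/2}$ preserves rank and that $j\mapsto\lambda(j)$ is injective on $\Delta$. Both are straightforward once isolated, since the basis change is implemented by an invertible triangular transformation and the injectivity of $\lambda$ follows from strict monotonicity of $j\mapsto j(j+1)$ on non-negative half-integers. The only other side check is the mild condition $\tfrac{n-k}{2}\geq j_{\min}$ ensuring that $\Delta$ indeed has $\tfrac{k}{2}+1$ elements, and all remaining steps are routine dimension counting.
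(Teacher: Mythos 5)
Your proof is correct and follows essentially the same route as the paper's: orthogonality from the vanishing of $\binom{l/2}{k/2}$ for $l<k$ in the expansion of \cref{lem:Ak} combined with $\Tr(C_l O)=0$ for $(k-1)$-local $O$ and $l\ge k$, and uniqueness from the fact that the $\tfrac{k}{2}$ constraints $\Tr(B C_l)=0$, $l=0,2,\dots,k-2$, are independent on the $(\tfrac{k}{2}+1)$-dimensional span of $\{\Pi_j : j\ge \tfrac{n-k}{2}\}$. The only difference is that you make the rank argument explicit via the Vandermonde matrix in the eigenvalues of $C_2$, a step the paper leaves implicit by analogy with the $\U(1)$ case.
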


\begin{proof}
The proof of this corollary is similar to \cref{cor:Fk-unique}. In particular, $A_k$ is orthogonal to $C_{l}$ for $l<k$, i.e.,
\begin{align}
    \Tr(A_k C_{l}) = 0 : \quad l<k\ ,
\end{align}
because $\binom{l/2}{k/2} = 0$ when $0 \leq l<k$. Then, using the orthogonality of the $C_l$ operators and \cref{eq:Cl-complete-SU2}, this equation implies that $\Tr (A_k O)=0$ for all $(k-1)$-local operators $O$. To show that $A_k$ is unique up to normalization, we consider the projector $\Pi^\Delta = \sum_{j \in \Delta} \Pi_j$ to a subset of arbitrary $\frac{k}{2} + 1$ angular momenta $\Delta \subseteq \set{j_{\min}, \cdots, j_{\max}}$. Then, for any such $\Delta$, the operators $C_l \Pi^\Delta : l = 0, 2, \cdots, k-2$ span an $\frac{k}{2}$-dimensional subspace of the $(\frac{k}{2} + 1)$-dimensional space $\operatorname{span}\set{\Pi_w : w \in \Delta}$. It follows that, up to normalization, there is a unique nonzero operator in this span that is orthogonal to $C_l$ for all even $l < k$.  Choosing $\Delta = \set{j : j \geq \frac{n-k}{2}}$ implies $A_k$ is unique.
\end{proof}

Next, we apply these results to find $t_{\max}$ for the uniform distribution over 
the group generated by $k$-qubit SU(2)-invariant unitaries. Combining the above corollary with \cref{eq:Cl-complete-SU2}, we find that operator $A_{2\lfloor\frac{k}{2}\rfloor+2}$ is orthogonal to all $k$-local operators. Furthermore, $Q = A_{2\lfloor\frac{k}{2}\rfloor+2}$ satisfies all the other assumptions of \cref{thm:tdesign},  which implies 
\begin{align}
  t_{\max} + 1 \leq \frac{1}{2} \|A_{2\lfloor\frac{k}{2}\rfloor+2}\|_1 = 2^{2\lfloor{\frac{k}{2}}\rfloor+1} \binom{(n-1)/2}{\lfloor{\frac{k}{2}}\rfloor+1} \ ,
\end{align}
where the equality follows from \cref{lem:Ak}. Since $q(\frac{n-k}{2})= (n-k+1)\times a_k(\frac{n-k}{2}) = \pm 1$, the operator $Q = A_{2\lfloor\frac{k}{2}\rfloor+2}$ is the optimal solution with support restricted to  $j = \frac{n-k}{2} - 1, \cdots, \frac{n}{2}$.

In \cref{app:SU2}, we prove that, for 
\begin{equation}\label{eq:nbound-SU2}
  n \geq
  \begin{cases}
      13, & \text{when }  k=2,3\\
      2^{\lfloor{\frac{k}{2}}\rfloor} (\lfloor{\frac{k}{2}}\rfloor+2)\ , & \text{when } k\ge 4, \\
  \end{cases}
\end{equation}
this bound is tight, i.e.,
\begin{align}\label{eq:tmax-SU2}
  t_{\max} = 2^{2\lfloor{\frac{k}{2}}\rfloor+1} \binom{(n-1)/2}{\lfloor{\frac{k}{2}}\rfloor+1}-1\ .
\end{align}
We demonstrate this by comparing with the next smallest multiplicity $m(n, j)$ for $j= \frac{n}{2}-\lfloor{\frac{k}{2}}\rfloor- 2$.
Note that \cref{eq:SU2-tmax-k=2} is a special case of \cref{eq:tmax-SU2} for $\lfloor{\frac{k}{2}}\rfloor=1$. 

The explicit expressions of $t_{\max}$ for some small $k$ can be found in \cref{tab:tmax-example}. Assuming $k$ is bounded by the upper bound given in \cref{eq:nbound-SU2}, we find that in the  limit $n \rightarrow \infty$, the asymptotic behavior of $t_{\max}$ is given by 
\begin{align}
  t_{\max} \sim \frac{(2n)^{\lfloor{\frac{k}{2}}\rfloor+1}}{(2\lfloor{\frac{k}{2}}\rfloor+2)!}\ .
\end{align}
In other words,  $t_{\max}$ grows as $ n^{\lfloor\frac{k}{2}\rfloor+1}$. This is consistent with a previous result for the specific case of $k=4$, shown in the appendix of Ref. \cite{Li:2023mek}, namely, $t_{\max}$ grows at least as $n^3$. \\

\noindent{\textbf{Matrix $\mathbf{S}$}:} For completeness, in \cref{app:S-SU2}, we determine matrix  $\mathbf{S}$ in \cref{eq:S-def} in the case of $\SU(2)$ symmetry and show that its matrix elements are given by
\begin{align}\label{eq:S-SU2}
    s_{j'}(j) & = \binom{n-k}{\frac{n-k}{2}+j-j'} - \binom{n-k}{\frac{n-k}{2}+j+j'+1}\ ,
\end{align}
where angular momenta $j = j_{\min}, \cdots, \frac{n}{2}$ and $j' = \frac{k}{2} - \floor{\frac{k}{2}}, \cdots, \frac{k}{2}$ .  While it is not immediately obvious from these matrix elements, \cref{lem1} implies that this matrix has rank $\lfloor \frac{k}{2}\rfloor+1$.

\subsection{\texorpdfstring{$\mathbb{Z}_p$}{Zp} symmetry}\label{sec:Zp}

We consider $n$ qubits, with local orthonormal basis states $\ket{0}$ and $\ket{1}$, and representation of $\mathbb{Z}_p$ symmetry given by
\begin{equation}\label{repZp}
  u(a) = \exp \p[\Big]{\frac{\i 2 \pi a}{p} \qout{1}{1}} = \begin{pmatrix}
      1 & \\
      & \omega^a
  \end{pmatrix}\ ,
\end{equation}
where $a = 0, \cdots, p - 1$ is an element of $\mathbb{Z}_p$,  $\omega = \e^{\i 2 \pi / p}$, and the matrix is written in $\{|0\rangle,|1\rangle\}$ basis.  
Then, \cite{marvian2024theoryAbelian} proves the following results:

\begin{proposition}[\cite{marvian2024theoryAbelian}]\label{propzp} Consider a system of $n$ qubits with the on-site representation of  symmetry $\mathbb{Z}_p$ given in \cref{repZp}, and let $\mathcal{V}^{\mathbb{Z}_p}_{n,k}$ be the group generated by  $k$-local $\mathbb{Z}_p$-invariant gates. Then,
\begin{enumerate} 
\item For $k<p$ semi-universality is not achievable with $k$-qubit gates, and the uniform distribution over the group $\mathcal{V}^{\mathbb{Z}_p}_{n,k}$ is not even a 1-design for the uniform distribution over the group $\mathcal{V}^{\mathbb{Z}_p}_{n,n}$. 
\item For $k\ge p$ semi-universality is achieved, i.e., $\mathcal{V}^{\mathbb{Z}_p}_{n,k} \supseteq \mathcal{SV}^{\mathbb{Z}_p}_{n,n}$. 
 \item When $p$ is odd, universality is achieved with $k\ge p$, i.e., $\mathcal{V}^{\mathbb{Z}_p}_{n,k}=\mathcal{V}^{\mathbb{Z}_p}_{n,n}$. 
 \item When $p$ is even, universality is not achieved unless $k=n$,  i.e., $\mathcal{V}^{\mathbb{Z}_p}_{n,k}\neq\mathcal{V}^{\mathbb{Z}_p}_{n,n}$ for $k<n$. 
\end{enumerate}
\end{proposition}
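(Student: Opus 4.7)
The plan is to handle the four parts using \cref{lemma-1} together with the Abelian-case proposition stated just above (for Abelian $G$, the uniform distribution over $\mathcal{V}^G_{n,k}$ is a 1-design iff $k$-local gates are semi-universal). Throughout I label irreps of $\mathbb{Z}_p$ by charges $c\in\{0,\dots,p-1\}$ with 1D representation spaces and characters $f_c(a)=\omega^{ac}$, $\omega:=\e^{\i 2\pi/p}$, and multiplicities $m_c=\sum_{w\equiv c\,(\mathrm{mod}\,p)}\binom{n}{w}$; the single-qubit character is $r(a):=\tr u(a)=1+\omega^a$. I write $\Pi_w$ for the Hamming-weight projectors on $n$ qubits and $\Pi^{\mathbb{Z}_p}_c:=\sum_{w\equiv c\,(\mathrm{mod}\,p)}\Pi_w$ for the $\mathbb{Z}_p$-charge projectors.

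For Part 1, the key observation is that on $k<p$ qubits the possible Hamming weights $0,1,\dots,k$ are all distinct modulo $p$, so $\mathbb{Z}_p$-invariance on any $k$-qubit subsystem coincides with $\U(1)$-invariance. Hence $\mathcal{V}^{\mathbb{Z}_p}_{n,k}\subseteq\mathcal{V}^{\U(1)}_{n,n}$, so the commutant of $\mathcal{V}^{\mathbb{Z}_p}_{n,k}$ contains $\mathrm{span}\{\Pi_w:w=0,\dots,n\}$, which for $n\geq p$ is strictly larger than the $p$-dimensional commutant $\mathrm{span}\{\Pi^{\mathbb{Z}_p}_c\}$ of $\mathcal{V}^{\mathbb{Z}_p}_{n,n}$. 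The uniform distribution over $\mathcal{V}^{\mathbb{Z}_p}_{n,k}$ is thus not a 1-design, and the Abelian-case proposition then also precludes semi-universality.

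For Part 2 ($k\geq p$ gives semi-universality), I would proceed constructively. The $p$-local $\mathbb{Z}_p$-invariant Hermitian $h:=|0^p\rangle\langle 1^p|+|1^p\rangle\langle 0^p|$ changes the Hamming weight by exactly $p$ and so is not $\U(1)$-invariant. Combined with all $2$-local $\U(1)$-invariant Hermitians (which are also $\mathbb{Z}_p$-invariant and, by the $\U(1)$ results quoted earlier in the paper, are semi-universal inside each Hamming-weight sector) and with qubit permutations, repeated commutators produce transitions $|x\rangle\langle y|+\mathrm{h.c.}$ between any two computational basis vectors $x,y$ in a common $\mathbb{Z}_p$-charge sector: rearrange qubits to bring the differing bits to a contiguous length-$p$ block, apply $h$ to swing the Hamming weight by $\pm p$, and rearrange back. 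A Lie-algebra closure argument then shows that these operators together generate the full Lie algebra $\bigoplus_c\mathfrak{su}(m_c)$ of $\mathcal{SV}^{\mathbb{Z}_p}_{n,n}$.

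For Parts 3 and 4, with semi-universality from Part 2 in hand, \cref{lemma-1} reduces the condition $\{\e^{\i s H}:s\in\mathbb{R}\}\subseteq\mathcal{V}^{\mathbb{Z}_p}_{n,k}$ to $\chi_H\in\mathrm{span}_\mathbb{R}\{r^{n-k}f_c:c=0,\dots,p-1\}$. Writing $r^{n-k}=\sum_c\rho_c f_c$ with $\rho_c=\sum_{w\equiv c\,(\mathrm{mod}\,p)}\binom{n-k}{w}\in\mathbb{Z}_{\geq 0}$, multiplication by $r^{n-k}$ acts on $\mathrm{span}_\mathbb{R}\{f_c\}$ as a real circulant operator whose eigenvalues are $r(a)^{n-k}$ for $a\in\mathbb{Z}_p$. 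For $p$ odd, $\omega^a\neq-1$ for every $a\in\mathbb{Z}_p$ (since $p/2\notin\mathbb{Z}$), so $r^{n-k}$ is nowhere vanishing, the circulant is invertible, and $\mathrm{span}_\mathbb{R}\{r^{n-k}f_c\}=\mathrm{span}_\mathbb{R}\{f_c\}\ni\chi_H$ for every Hermitian $\mathbb{Z}_p$-invariant $H$, giving universality. For $p$ even with $k<n$, $r(p/2)=1+\omega^{p/2}=0$, so every element of the span vanishes at $a=p/2$; the Hamiltonian $H_\star:=\sum_c(-1)^c\Pi^{\mathbb{Z}_p}_c$ has $\chi_{H_\star}(p/2)=\sum_c(-1)^c\omega^{cp/2}m_c=\sum_c m_c=2^n\neq 0$, so $\e^{\i\theta H_\star}\notin\mathcal{V}^{\mathbb{Z}_p}_{n,k}$ for generic $\theta$ and universality fails. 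The main obstacle is the Lie-algebra closure step in Part 2, which is a combinatorial reachability argument that hops of sizes $1$ (from the $\U(1)$-invariant part) and $p$ (from $h$), together with conjugation by permutations and by the $\U(1)$-invariant subgroup, connect any two basis vectors of a common $\mathbb{Z}_p$-charge and close under iterated commutators onto $\bigoplus_c\mathfrak{su}(m_c)$.
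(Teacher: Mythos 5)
The paper offers no proof of this proposition: it is imported from \cite{marvian2024theoryAbelian}, and the only overlapping material in the present paper is the computation in \cref{app:Zp} showing that $\rk\mathbf{S}=p-1$ for even $p$ (and $p$ for odd $p$) when $p\le k<n$. Your Parts 1, 3 and 4 are sound and line up with that material. For Part 1, identifying $k$-local $\mathbb{Z}_p$-invariance with $k$-local $\U(1)$-invariance when $k<p$ correctly forces $\mathrm{Comm}(\mathcal{V}^{\mathbb{Z}_p}_{n,k})\supseteq\mathrm{span}\{\Pi_w\}$, which strictly exceeds $\mathrm{span}\{\Pi^{\mathbb{Z}_p}_c\}$ in dimension (note that you, and the proposition itself, implicitly need $n\ge p$: for $n<p$ the $\mathbb{Z}_p$ and $\U(1)$ problems coincide and $2$-local gates \emph{are} semi-universal). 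For Parts 3 and 4, your eigenvalue analysis of multiplication by $r^{n-k}$, with the single zero eigenvalue at $a=p/2$ for even $p$, is exactly the diagonalization $\sum_a(1+\omega^a)^{n-k}\qproj{\psi_a}$ carried out in \cref{app:Zp}, and your witness $H_\star=\sum_c(-1)^c\Pi^{\mathbb{Z}_p}_c=Z^{\otimes n}$ is the operator $Q$ used in \cref{sec:Zp}; combined with the converse direction of \cref{lemma-1} and Part 2, this gives universality for odd $p$ and its failure for even $p$.

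The genuine gap is Part 2, as you acknowledge. Connectivity of computational basis states within a charge sector under hops of size $1$ and $p$ does not by itself yield $\bigoplus_c\mathfrak{su}(m_c)$. Two things must be supplied. First, $h\otimes\mathbb{I}=\sum_z\bigl(|0^pz\rangle\langle 1^pz|+\mathrm{h.c.}\bigr)$ is a sum over all spectator configurations $z$, spread across several charge sectors; isolating single-sector (let alone single-matrix-unit) components requires repeated commutators with the diagonal and $\mathfrak{su}(m_w)$ generators already available, and one must verify these actually separate the terms. Second, and more substantively, even after obtaining $\bigoplus_w\mathfrak{su}(m_w)$ from the $\U(1)$ result plus an off-diagonal coupling between weight-$w$ and weight-$(w+p)$ blocks, one still needs the bridging argument that closes this to $\mathfrak{su}(m_c)$ on each charge sector and rules out proper irreducible subalgebras and cross-sector correlations --- precisely the type $\mathbf{III}$ and $\mathbf{IV}$ obstructions discussed in \cref{App:fail}. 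This closure is the actual content of the semi-universality proof in \cite{marvian2024theoryAbelian} (see also \cite{hulse2024framework}), and until it is carried out, Part 2 --- and hence Part 3, which depends on it --- remains unproved in your write-up.
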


Therefore, in the case of odd $p$, $t_{\max}=\infty$ when $k\geq p$. In the following, we apply  \cref{thm:tdesign} to determine 
$t_{\max}$ for even $p$. First, it can be easily seen that when $p$ is even, operator $Q=Z^{\otimes n}$ satisfies all the assumptions of the \cref{thm:tdesign}: its eigenvalues are integer, it is traceless, and it is orthogonal to all $k$-local operators with $k<n$.\footnote{Note that when $p$ is even, and only in this case, $Z^{\otimes n}$ can be written as a sum of the projectors to irreps of $\mathbb{Z}_p$: $Z^{\otimes n} = P_0-P_1$ where $P_0$ and $P_1$ are projectors to sectors with even and odd Hamming weight, respectively, which themselves can be decomposed to projectors to irreps of $\mathbb{Z}_p$.}
This immediately implies the upper bound 
$t_{\max} + 1 \le \frac{1}{2}\|Z^{\otimes n}\|_1=2^{n-1}$. In the following, we argue that this bound holds as equality and $Q=Z^{\otimes n}$ is indeed the optimal solution that satisfies the above conditions.


Recall that the orthogonality condition $\Tr(H_l Q)=0$ in \cref{thm:tdesign}, which should be satisfied for all $k$-local Hamiltonians $\{H_l\}$, is equivalent to the condition in \cref{jdsa}, namely $\Tr(S_\nu Q)=0: \nu\in\irreps_{\mathbb{Z}_p}$.   Here,   both $\{S_\nu\}$ and $Q$ are in the real span of $\{\Pi_\lambda\}$, which in this case has dimension $p$. As we show below,  for even $p$ and $k$ in the interval $p\le k< n$, 
\begin{align}\label{eq:rnkS}
\dim(\mathrm{span}_\real\{S_\nu: \nu\in\irreps_{\mathbb{Z}_p}(k)\}) \!= \!
 \rk\{\textbf{S}\} \!= \!p-1
\end{align}
where $\textbf{S}$ is the $p\times p$ matrix defined in \cref{eq:S-def}.

It follows that $\Tr(S_\nu Q)=0: \nu\in\irreps_{\mathbb{Z}_p}$ uniquely determines operator $Q$, up to a normalization. Therefore, $Q \propto Z^{\otimes n}$. Finally, because $Q$ must have integer eigenvalues, and eigenvalues of $Z^{\otimes n}$ are $\pm 1$, we know that it is indeed the optimal $Q$, that is
\be
t_{\max} + 1 = \frac{1}{2}\|Z^{\otimes n}\|_1 = 2^{n-1}\ .
\ee
To determine matrix $\mathbf{S}$ for this example, we note that each irreducible representation of $\mathbb{Z}_p$ is one-dimensional, and is of the form $\omega^{a \alpha}$ for some integer $\alpha = 0, \cdots, p - 1$, hence the character of the irrep labeled by $\alpha$ is  $f_\alpha(a) = \omega^{a \alpha}$. Then, applying  \cref{int},
\begin{align}
    s_\alpha(\beta) & = \frac{1}{p} \sum_{a = 0}^{p - 1} (1 + \omega^a)^{n - k} \omega^{a \alpha} \omega^{-a \beta} =  \sum_{l = 0}^{\floor{\frac{n - k}{p}}} \binom{n - k}{c + p l}\ ,
\end{align}
where $0 \leq c < p$ is given by $c \equiv \beta - \alpha \pmod{p}$. 
 In \cref{app:Zp} we explicitly show that for $k$ in the interval,   $p\le k<n$, this matrix has rank $p$, when $p$ is odd and rank $p-1$ when $p$ is even.  For completeness, the multiplicities $m_\beta$ of $\beta = 0, \cdots, p - 1$ are given by
\begin{equation}
  m_\beta = \frac{1}{p} \sum_{a = 0}^{p - 1} (1 + \omega^a)^{n} \omega^{-a \beta} = \sum_{l = 0}^{\floor{\frac{n}{p}}} \binom{n}{\beta + p l}\ .
\end{equation}

 \noindent$\mathbb{Z}_2$ \textbf{symmetry}:  In the example of $p = 2$ {and $2 \leq k < n$}, $\mathbf{S}$ is a $2\times 2$ matrix with matrix elements 
\begin{align}
  s_\alpha(\beta) & = 2^{n - k - 1}\ ,
\end{align}
which are independent of  $\alpha, \beta = 0, 1$.  Therefore, in this case, it is obvious that the rank of matrix $\mathbf{S}$ is 1,  
which is consistent with \cref{eq:rnkS}.
Then, the vector in the kernel of $\mathbf{S}$ is  unique up to a normalization  and given by $\mathbf{q} = (1, - 1)^{\mathrm{T}}$. This indeed corresponds to operator $Q = P_0 - P_1 = Z^{\otimes n}$, which as mentioned above, satisfies the conditions of \cref{thm:tdesign} with $\|Q\|_1 = 2^n$.

\medskip

\noindent$\mathbb{Z}_3$ \textbf{symmetry}: For $p=3$ semi-universality is achieved only with $k$-qubit gates with $k\ge 3$ \cite{marvian2024theoryAbelian}.   
 Suppose for simplicity that $n$ is an odd multiple of $3$ (the case for general $n$ is similar). Then,
\begin{equation}
    \mathbf{S} = \begin{pmatrix}
        N + 1 & N & N \\
        N & N + 1 & N \\
        N & N & N + 1
    \end{pmatrix},
\end{equation}
where $N = \floor{2^{n - 3} / 3}$. It is then easy to verify that $\mathbf{S}$ has full rank, e.g., by $\det \mathbf{S} = 1 + 3 N$.  This explicitly shows that in this case for $k\ge 3$ there are no constraints on the relative phases, as stated in \cref{propzp}.

\subsection{\texorpdfstring{$\SU(d)$}{SU(d)} symmetry}

Next, we consider $n$-qudit systems with the total Hilbert space  $(\mathbb{C}^d)^{\otimes n}$ and with $\SU(d)$ symmetry for arbitrary $d \geq 3$. That is, we consider unitaries $V$ commuting with $u^{\otimes n}$ for arbitrary $u\in\SU(d)$.

Interestingly, as was noticed in  \cite{marvian2022quditcircuit}, in this case, 2-qudit SU($d$)-invariant gates are not semi-universal for $d\ge 3$. However, it was recently shown \cite{hulse2024framework} that amending 2-qudit gates with any 3-qudit gate that is not realizable with 2-qudit gates, makes them semi-universal\footnote{We note that \cite{zheng2023speeding} establishes semi-universality of 4-qudit gates}, i.e.,  
\be
\mathcal{V}_{n,3}^{\SU(d)} \supseteq \mathcal{SV}_{n,n}^{\SU(d)}\ .
\ee

Consider a set of $\SU(d)$-invariant Hamiltonians $\{H_l\}$, and the corresponding matrix $\mathbf{M}$ defined in \cref{matrix}. According to the Schur-Weyl duality, any $\SU(d)$-invariant Hamiltonian can be written as a linear combination of permutation operators $\{\P(\sigma): \sigma \in \S_n\}$ \cite{harrow2005applications, marvian2022quditcircuit}. 
In particular, for Hamiltonian $H_l=\sum_\sigma h_l(\sigma) \P(\sigma)$, the matrix element
\begin{align}
    \mathbf{M}_{l,\lambda} = m_\lambda \frac{\Tr(\Pi_\lambda H_l)}{\Tr\Pi_\lambda}= \sum_{\sigma} h_l(\sigma) \chi_\lambda(\sigma)\ ,
\end{align} 
where
\be\label{eq:chi}
\chi_\lambda(\sigma)=m_\lambda \frac{\Tr(\Pi_\lambda\P(\sigma))}{\Tr\Pi_\lambda}\ , 
\ee
denotes the character of irrep $\lambda$ of the group element $\sigma \in \mathbb{S}_n$. If $H_l$ are $k$-local, then the above summation can be restricted to $k$-local permutations. We define matrix $\bm{\chi}$ such that its matrix elements are given by $\chi_\lambda(\sigma)$, i.e., 
\begin{align}\label{eq:Chi}
    \bm{\chi}_{\lambda, \sigma}=\chi_\lambda(\sigma),
\end{align}
and $\bm{\chi}^\Delta$ denotes the submatrix obtained by restricting $\bm{\chi}$ to a subset of irreps $\lambda \in \Delta$.

Recall that the only relevant property of matrix $\textbf{M}$ that determines $t_{\text{max}}$ is its kernel (null space). If $\set{H_l}$ consists of $k$-local Hamiltonians that generate $\mathcal{V}_{n, k}^{\SU(d)}$, then this kernel is the same as the kernel of the matrix $\bm{\chi}^\text{T}$ for all irreps $\lambda$  of $\SU(d)$ and all $k$-local permutations $\sigma\in \mathbb{S}_n$, which is also equal to the kernel of the matrix $\mathbf{S}$, as defined in \cref{eq:S-def}.

Finally, note that since $\Tr(\Pi_\lambda \P(\sigma))$ is constant over the conjugacy class of $\sigma$, to determine the kernel of matrix $\bm{\chi}^\text{T}$, it suffices to consider only one representative $\sigma$ from each conjugacy class (Recall that a conjugacy class consists of all permutations with the same cycle type, which means they all act on equal number of qudits). Since any  $k$-local permutation $\sigma\in\mathbb{S}_n$ can be interpreted as an element of $\mathbb{S}_k$, we can label such conjugacy classes with conjugacy classes of $\mathbb{S}_k$.

In summary, we conclude that $t_{\text{max}}$ is determined by the kernel of matrix $\bm{\chi}^{\text{T}}$, where $\bm{\chi}$ is the matrix defined by \cref{eq:Chi} for all irreps $\lambda$ of $\SU(d)$, or equivalently, all irreps of $\mathbb{S}_n$ that appear on $n$ qudits, and all conjugacy classes of $\mathbb{S}_k$. As an example, in \cref{tab:character-table} in \cref{app:SUd}, we calculate $\chi_\lambda(\sigma)$ for $\sigma \in \S_4$.  

Similar to the case of $\U(1)$ and $\SU(2)$ symmetries, $t_{\text{max}}$ is determined by the irreps of $\SU(d)$ with the lowest multiplicities. However, while it is relatively straightforward to order the irreps by their multiplicities for $\U(1)$ and $\SU(2)$, this is more complicated for $\SU(d)$.
Recall that, according to the Schur-Weyl duality, the multiplicity of each irrep of $\SU(d)$ is equal to the dimension of the corresponding irrep of $\mathbb{S}_n$. Therefore, to find $t_{\text{max}}$, we need to determine which of the irreps of $\mathbb{S}_n$ that appear in $(\mathbb{C}^d)^{\otimes n}$ have the lowest dimensions. The following facts summarize the relevant results from \cite{rasala1977minimal}.\\

\begin{fact}[\cite{rasala1977minimal}]\label{fact:dims}
For $n\ge 9$ (resp. $n\ge 15$) the first four (resp. seven) lowest dimensions of irreps of $\mathbb{S}_n$ are given by $m_\lambda$ in the first four (resp. seven) rows of \cref{tab:min-dim}, where $\lambda$ is an irrep of $\mathbb{S}_n$ with dimension $m_\lambda$. For each $m_\lambda$ there is exactly one other irrep $\lambda'$ of $\mathbb{S}_n$ that satisfies $m_\lambda = m_{\lambda'}$, where $\lambda'$ corresponds to the transpose of the Young diagram $\lambda$. In addition, when $n\geq 22$, the eighth lowest dimension of irreps of $\mathbb{S}_n$ is given by the last row in \cref{tab:min-dim}.

\begin{table}[htb]
  \centering
  \begin{tabular}{*{3}{c}}
    \toprule
    $\lambda$ & $m_\lambda$ \\
    \midrule
    $[n]$ & $1$ \\
    \midrule
    $[n-1,1]$ & $n-1$ \\
    \midrule
    $[n-2,2]$ & $\frac{1}{2}n(n-3)$ \\
    $[n-2,1,1]$ & $\frac{1}{2}(n-1)(n-2)$ \\
    \midrule
    $[n-3,3]$ & $\frac{1}{6}n(n-1)(n-5)$ \\
    $[n-3,1,1,1]$ & $\frac{1}{6}(n-1)(n-2)(n-3)$ \\
    $[n-3,2,1]$ & $\frac{1}{3}n(n-2)(n-4)$ \\
    \midrule
    $[n-4,4]$ & $\frac{1}{24}n(n-1)(n-2)(n-7)$ \\
    \bottomrule
  \end{tabular}
  \caption{Minimal dimension irreps of $\S_n$. Recall that irreps of $\S_n$ can be labeled by Young diagrams, or, equivalently,  partitions of $n$, $\lambda \vdash n$, which means $\lambda = [\lambda_1, \lambda_2, \cdots, \lambda_r]$ with $\sum_{i=1}^r \lambda_i = n$ and $\lambda_1 \geq \lambda_2 \geq \cdots \geq 0$. The transpose irrep $\lambda'$ of $\mathbb{S}_n$ shows up in $(\complex^d)^{\otimes n}$ if and only if $\lambda_1 \leq d$. This is because the irreps of $\S_n$ in $n$-qudit system $(\mathbb{C}^d)^{\otimes n}$ only contains Young diagrams with at most $d$ rows.}
  \label{tab:min-dim}
\end{table}
\end{fact}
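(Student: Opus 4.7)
\medskip

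\noindent\textbf{Proof proposal for \cref{fact:dims}.} The plan is to use the hook length formula together with the symmetry $m_\lambda = m_{\lambda'}$ to reduce the problem to ``wide'' partitions and then bound dimensions from below by a branching/induction argument. Recall that for $\lambda \vdash n$, the hook length formula gives
\begin{equation}
  m_\lambda = \frac{n!}{\prod_{c \in \lambda} h(c)}\ ,
\end{equation}
where the product runs over the cells of the Young diagram and $h(c)$ is the hook length at cell $c$. Conjugation $\lambda \mapsto \lambda'$ transposes the diagram and therefore permutes the hook lengths, which immediately yields $m_\lambda = m_{\lambda'}$, explaining the pairing of rows in \cref{tab:min-dim}. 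Using this symmetry, I restrict attention to partitions with $\lambda_1 \geq \lceil n/2\rceil$, i.e., to partitions of the form $\lambda = [n-k, \mu]$ with $\mu \vdash k$ and $\mu_1 \leq n - k$; the complementary case of long first column is obtained by taking transposes.

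Next, for each fixed small $k$, I would apply the hook length formula directly to $\lambda = [n-k,\mu]$ and obtain $m_\lambda$ as an explicit polynomial in $n$ of degree $k$, whose leading coefficient is $1/\prod_{c\in\mu} h'(c)$ where $h'(c)$ is the hook length inside $\mu$ viewed as a diagram of size $k$. Enumerating the handful of partitions $\mu$ with $|\mu| \leq 4$ reproduces the entries of \cref{tab:min-dim} and shows, for $n$ large enough, that increasing $k$ strictly increases the degree of $m_\lambda$ in $n$, so that partitions with larger $k$ eventually dominate those with smaller $k$. Explicit threshold computations (comparing the relevant polynomials) yield the bounds $n \geq 9$, $n \geq 15$, and $n \geq 22$ stated in the fact.

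The main obstacle is to rule out the possibility that some partition with large $k$ has dimension smaller than the candidates listed. For this I would argue by induction on $n$ using the branching rule
\begin{equation}
  m_\lambda^{(n)} = \sum_{\mu \to \lambda} m_\mu^{(n-1)}\ ,
\end{equation}
where the sum is over all $\mu \vdash n-1$ obtained from $\lambda$ by removing a removable box. Since $\lambda$ has at least as many removable boxes as the number of distinct parts, a partition $\lambda$ far from the rectangular or near-rectangular shapes $[n], [n-1,1], [n-2,2], \ldots$ will have $m_\lambda^{(n)}$ expressed as a sum of several $m_\mu^{(n-1)}$, each of which is bounded below (by the inductive hypothesis) by the $j$th smallest dimension for $S_{n-1}$. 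Combining this with the explicit polynomial bounds from the previous paragraph forces $m_\lambda$ to exceed the claimed minima. A careful bookkeeping of the inductive base case at small $n$ (handled by direct computation using the hook length formula, e.g.\ via a character table) then pins down the precise thresholds.

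The delicate point is the second-to-last step: one must verify that no ``exotic'' partition at the transition slips below the cutoff, which is where Rasala's analysis in \cite{rasala1977minimal} is needed. In practice, one combines the branching lower bound with sharper inequalities such as $m_\lambda \geq \binom{n-\lambda_1-1}{\lambda_2}$ (obtained by restricting to a subgroup $S_{\lambda_1} \times S_{n-\lambda_1}$) to eliminate the remaining cases.
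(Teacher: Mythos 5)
The paper does not prove this statement at all: it is imported verbatim as a known result from Rasala \cite{rasala1977minimal}, so there is no internal proof to compare against. Your sketch is therefore being measured against the literature rather than against the paper.

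Your overall strategy (hook length formula, the conjugation symmetry $m_\lambda = m_{\lambda'}$ to reduce to partitions with long first row, explicit degree-$k$ polynomials in $n$ for $\lambda = [n-k,\mu]$, and branching-rule lower bounds to exclude everything else) is essentially the strategy Rasala uses, so the plan is sound in outline. However, as a proof it has a genuine gap, and you name it yourself: the step that rules out every partition with $\lambda_1 \leq n-5$ (and, by conjugation, every partition with $\lambda_1' \leq n-5$) is precisely the content of Rasala's theorem, and you defer it to ``Rasala's analysis is needed.'' A proof that invokes the cited theorem to close its hardest step is a citation, not a proof. The branching-rule induction as you state it does not obviously close this gap either: knowing that $m_\lambda^{(n)}$ is a sum of several $m_\mu^{(n-1)}$, each at least the $j$th smallest dimension of $\S_{n-1}$, gives a lower bound of the form (number of removable boxes) times a degree-$(j{-}1)$ polynomial in $n$, which need not exceed the degree-$3$ and degree-$4$ candidates in the table without substantially more careful bookkeeping; the auxiliary inequality $m_\lambda \geq \binom{n-\lambda_1-1}{\lambda_2}$ you propose is not justified and is not a standard consequence of restriction to $\S_{\lambda_1} \times \S_{n-\lambda_1}$. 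Finally, the specific thresholds $n \geq 9$, $n \geq 15$, $n \geq 22$ are asserted to follow from ``explicit threshold computations'' that are not carried out. In short: right roadmap, but the decisive exclusion step and the numerical thresholds are exactly what remain unproved, which is why the paper (reasonably) states this as a Fact with a citation rather than proving it.
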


As mentioned in the above fact, there are exactly two irreps of $\mathbb{S}_n$ with the dimensions given in the second column. However, as explained in the caption of \cref{tab:min-dim}, for $n$ qudits with the total Hilbert space $(\mathbb{C}^d)^{\otimes n}$, when $n \geq d + 5$, the irreps $\lambda$ given in the first column of \cref{tab:min-dim}, are the only irreps with the dimension given by the corresponding entry in the second column.

\subsubsection{Example for \texorpdfstring{$k=2$}{k=2}}

  While  $2$-qudit SU($d$)-invariant gates are not 
semi-universal for $d\ge 3$, as a simple example, it is useful to consider the scenario in which the group  $\mathcal{V}_{n,2}^{\SU(d)}$
 generated  by  $2$-qudit gates  is amended with $\mathcal{SV}^{\SU(d)}$, to make the resulting group semi-universal. In this case, to determine $t_{\text{max}}$ we need only to consider matrix $\bm{\chi}$ for the 2-local conjugacy classes of $\mathbb{S}_n$,   namely the identity and transposition conjugacy classes, 
with the representative elements $(1)$ and $(12)$, respectively, where we have used the cyclic notation to denote permutations (As mentioned before, these can be equivalently interpreted as the two conjugacy classes of $\mathbb{S}_2$).

We first use \cref{prop1} to establish a lower bound on $t_{\max}$. Therefore, we consider the matrix $\bm{\chi}_{\lambda,\sigma} : \sigma \in \S_2$ and $\lambda \in \Delta = \{ [n], [n-1,1] \}$, i.e., the trivial representation and the standard representation of $\S_n$,  which are the two lowest dimensional irreps when $n\geq 5$. From the character table of $\S_n$, we have
\begin{align}
    \bm{\chi}^\Delta =
    \begin{pmatrix}
        1 & 1 \\
        n-1 & n-3 \\
    \end{pmatrix}.
\end{align}
This matrix has full rank, and  according to  \cref{fact:dims} when $n \geq 9$, the lowest dimensional irrep other than $[n]$ and $[n-1,1]$ is $[n-2,2]$, whose dimension is $\frac{n(n-3)}{2}$. Therefore, applying \cref{prop1}, we conclude that when $n\geq \max\{9,d+2\}$, $t_{\max} \geq \frac{n(n-3)}{2} - 1$.

Next, we use  \cref{thm:tdesign}  to obtain an upper bound on $t_{\max}$. To achieve this, it suffices to find integers $q(\lambda)$ such that operator  $Q=\sum_{\lambda} q(\lambda) m_\lambda \frac{\Pi_\lambda}{\tr \Pi_\lambda}$  is orthogonal to all 2-local operators. Indeed, we find an operator $Q$ with such properties whose support is restricted to 4 irreps of $\SU(d)$ with the lowest multiplicities. As can be seen from \cref{tab:min-dim}, for $n\geq 9$, these irreps are 
\begin{align}\label{eq:Delta-k=2}
\Delta=\{ [n], [n-1,1], [n-2,2] ,  [n-2,1,1]\} . 
\end{align}
For these 4 irreps the characters of the two representative elements $(1)$ and $(12)$ of $\S_n$ 
 are given by the matrix
\begin{align}\label{eq:M4}
    {\bm{\chi}}^\Delta = 
    \begin{pmatrix}
        1 & 1 \\
        n-1 & n-3 \\
        \frac{n(n-3)}{2} & \frac{(n-3)(n-4)}{2} \\
        \frac{(n-1)(n-2)}{2} & \frac{(n-2)(n-5)}{2}
    \end{pmatrix}\ .
\end{align}
Consider integer vectors $\mathbf{q}\in \mathbb{Z}^4$ in the kernel of  $({\bm{\chi}}^\Delta)^{\text{T}}$, i.e., the subspace orthogonal to the two column vectors of matrix ${\bm{\chi}}^\Delta$ in \cref{eq:M4}. Solving the linear equations, we find $\mathbf{q}$ is in the span of the column vectors of 
\begin{align}\label{eq:q-sol-k=2}
  \begin{pmatrix}
    \binom{n-2}{2} & \binom{n-1}{2} \\
    -\binom{n-3}{1} & -\binom{n-2}{1} \\
    1 & 0 \\
    0 & 1 \\
  \end{pmatrix}.
\end{align}
While  each of these two 
integer vectors $\mathbf{q}$
yield an upper bound on $t_{\max}$, it turns out that the strongest bound is obtained by considering their linear combinations with coefficients $\pm 1$, which gives the vector $(2 - n, 1, 1, -1)^\T$ (Detailed calculation is similar to the one shown in \cref{app:SUd} for the case of $k=4$.)
This corresponds to the operator
\begin{align}\label{eq:Q-k=2}
    Q & = (2 - n) \frac{\Pi_{[n]}}{\tr \Pi_{[n]}} + (n - 1) \frac{\Pi_{[n - 1, 1]}}{\tr \Pi_{[n - 1, 1]}} \\
    & \mathrel{\phantom{=}} {} + \frac{n (n - 3)}{2} \frac{\Pi_{[n - 2, 2]}}{\tr \Pi_{[n - 2, 2]}} - \frac{(n - 1) (n - 2)}{2} \frac{\Pi_{[n - 2, 1, 1]}}{\tr \Pi_{[n - 2, 1, 1]}} \notag
\end{align}
which results in the upper bound $t_{\max} + 1 \leq \frac{1}{2} \|Q\|_1 = \frac{1}{2} (n + 1)(n - 2)$. Note that the support of matrix $Q$ is restricted to irreps $\lambda\in \Delta$ of $\SU(d)$ whose dimension in $(\mathbb{C}^d)^{\otimes n}$, denoted as  $m_\lambda$ scales up to $n^2$, i.e., $m_\lambda = \mathcal{O}(n^2)$. For any other operator $Q'$ satisfying the conditions in  \cref{eq:cond} with support on $\lambda_0\not\in \Delta$, we have $\frac{1}{2}\|Q'\|_1 \ge m_{\lambda_0}$ from \cref{Eq:change}. According to \cref{fact:dims}, when $n\geq \max\{15,d+3\}$, the smallest value of $m_{\lambda_0}$ for $\lambda_0 \not\in \Delta$ is $\frac{1}{6}n(n-1)(n-5)$, which is cubic in $n$. This implies $\frac{1}{2}\|Q'\|_1 > \frac{1}{2} \|Q\|_1 = \frac{1}{2} (n + 1)(n - 2)$, and we conclude that
\begin{equation}
    t_{\max}  = \frac{1}{2} (n + 1)(n - 2)-1\ .
\end{equation}

\subsubsection{Example for \texorpdfstring{$k=3$}{k=3}}

Next, we consider $t_{\max}$ for the group $\V_{n,3}^{\SU(d)}$, generated by 
$3$-qudit $\SU(d)$-invariant unitaries. In this case, the group  
$\S_3$ has 3 different conjugacy classes corresponding to the identity operator $(1)$, the transpositions, i.e., 2-cycles, with the representative element $(12)$, and 3-cycles, with representative element $(123)$. 

Then, when restricted to 4 irreps whose multiplicity scales at most quadratically with $n$, namely, $\Delta$ in \cref{eq:Delta-k=2}, we obtain $4\times 3$ matrix 
\begin{align}
    \bm{\chi}^\Delta = 
    \begin{pmatrix}
        1 & 1 &1 \\
        n-1 & n-3 & n-4 \\
        \frac{n(n-3)}{2} & \frac{(n-3)(n-4)}{2} & \frac{(n-3)(n-6)}{2} \\
        \frac{(n-1)(n-2)}{2} & \frac{(n-2)(n-5)}{2} & \frac{(n-4)(n-5)}{2}
    \end{pmatrix}\ .
\end{align}

Before finding the exact value of $t_{\max}$, we note that by applying  \cref{prop1} we can immediately establish a simple lower bound on $t_{\max}$. To see this note that the upper $2\times 3$ submatrix of $\bm{\chi}^\Delta$, obtained by considering irreps $[n]$ and $[n-1,1]$, has rank 2. By \cref{prop1}, this means
$$
t_{\max}\ge \min\{m_\lambda: \lambda\neq [n],[n-1,1]\}-1=\frac{n(n-3)}{2}-1\ .
$$
Here, the second equality holds for $n \geq \max\{9,d+2\}$, and 
follows from \cref{fact:dims}, which implies the minimum multiplicity $m_\lambda$ for  $\lambda\neq [n],[n-1,1]$ is achieved for irrep $\lambda=[n-2,2]$.\footnote{We note that this bound was established in \cite{hulse2024framework}, using an argument similar to \cref{prop1}.} It is also interesting to note that the $3\times 3$ submatrix corresponding to $[n], [n-1,1], [n-2,2]$ also has rank 2,\footnote{This can be understood as a consequence of the fact that the Young diagrams of these three irreps have at most two rows, which are exactly the irreps that appear in qubit systems, $d=2$. From this perspective, the above fact about rank deficiency of the $3\times 3$ sub-matrix is related to the universality of 2-qubit SU(2)-invariant gates on $n=3$ qubits.} which means the above lower bound is the strongest one that can be obtained from \cref{prop1}.

In order to determine the exact value of $t_{\max}$ via \cref{thm:tdesign} we note that the kernel of the matrix $(\bm{\chi}^\Delta)^\T$ is spanned by the first column vector in \cref{eq:q-sol-k=2}, which results in $Q$ operator 
\begin{align}\label{eq:Q3}
    Q &= \frac{1}{2} (n-2)(n-3) \frac{\Pi_{[n]}}{\Tr\Pi_{[n]}} + (3-n)(n-1) \frac{\Pi_{[n-1, 1]}}{\Tr\Pi_{[n-1, 1]}} \nonumber\\
    &\quad + \frac{1}{2} n(n-3) \frac{\Pi_{[n-2, 2]}}{\Tr\Pi_{[n-2, 2]}},
\end{align}
and the upper bound $t_{\max} \leq \frac{1}{2} \|Q\|_1-1 = (n-1)(n-3)-1$. Since the kernel of $(\bm{\chi}^{\Delta})^\T$ is one dimensional, for any other operator $Q'$ that is not proportional to $Q$ and satisfies the conditions in \cref{eq:cond}, $Q'$ must have support outside of $\Delta$. However, notice that the set $\Delta$ in this case is the same as the one in the $k=2$ case in \cref{eq:Delta-k=2}. Therefore, from the discussion below \cref{eq:Q-k=2}, we know that when $n\geq \max\{15,d+3\}$, $\frac{1}{2} \|Q'\|_1 \geq \frac{1}{6}n(n-1)(n-5)$, which implies that $\frac{1}{2}\|Q'\|_1 > \frac{1}{2} \|Q\|_1 = (n-1)(n-3)$, and
\begin{align}
    t_{\max}  =  \frac{1}{2} \|Q\|_1-1 =(n-1)(n-3)-1\ .
\end{align}
Note that the optimal operator $Q$ in \cref{eq:Q3} does not have support on the subspace with irrep $[n-2,1,1]$. This means that the support of $Q$ is restricted to Young diagrams with at most 2 rows, which are exactly the diagrams that appear in the case of qubit systems with $\SU(2)$ symmetry (i.e., $d=2$). This explains why the above value of $t_\text{max}$ is the same as the value of 
$t_\text{max}$ for qubit systems with  $\SU(2)$ symmetry with $k=2,3$ in \cref{eq:tmax-SU2}.

\subsubsection{Example for \texorpdfstring{$k=4$}{k=4}}

For gates acting on $k=4$ qudits, we have to consider a larger $\bm{\chi}_{\lambda,\sigma}$ matrix that contains $\sigma \in \S_4$.
Recall that $\S_4$ has 5 conjugacy classes, with representative elements 
$$(1),~ (12),~ (123),~ (12)(34) \text{ and } (1234) ,$$ 
which includes two additional elements compared with $k=3$, namely $(12)(34)$ and $(1234)$.

In this case, the dimensions of qudits $d=3$ and $d\geq 4$ should be treated separately. This is because for $d=3$, $(1234)$ does not introduce new linearly independent vectors to $\bm{\chi}$, while for $d\geq 4$, it does. That is, restricted only to the irreps $\lambda$ that show up for $\SU(3)$, namely Young diagrams with at most three rows, the vector $\chi_{\lambda, (1234)}$ linearly depends on the other vectors given by conjugacy classes with representatives $(1)$, $(12)$, $(123)$ and $(12)(34)$ respectively, which can be checked explicitly using the character table. Interestingly, we find nonetheless that the results coincide for $d = 3$ and $d \geq 4$.

The calculations are similar to the $k=2,3$ examples. Therefore, we only list the results here, and present the details of the calculation in \cref{app:SUd}.  In particular, applying the lower bounds obtained from \cref{prop1}, we find that in both cases, when $n\geq \max\{15,d+3\}$,
\begin{align}
    t_{\max}  \geq \frac{1}{6}n(n-1)(n-5)-1\ .
\end{align}
Furthermore, we show that when $n\geq 22$, the actual value of $t_{\max}$ is determined by an operator $Q$ with support only on 4 irreps
\begin{align}
    [n]\ ,\  [n-1,1]\ ,\ [n-2,2]\ ,\  [n-3,3]\ .
\end{align}
Note that these irreps correspond to Young diagrams with at most two rows. Therefore, using a similar argument as the $k=3$ case, the exact value of $t_{\max}$ is the same as the $\SU(2)$ case, namely%
\begin{align}
    t_{\max}  =\frac{1}{2}\|Q\|_1 -1= \frac{2}{3}(n-1)(n-3)(n-5)-1\ .
\end{align}


\subsubsection{ \texorpdfstring{$k=4$}{k=4} without 4-cycles}

Finally, when $d\geq4$, it is interesting to consider a subgroup $\mathcal{T} \subseteq  \mathcal{V}^{\SU(d)}_{n,4}$ generated by all 3-qudit $\SU(d)$-invariant Hamiltonians and a 4-qudit Hamiltonian $\P_{12}\P_{34}$, i.e., 
\be\label{eq:group-T}
\mathcal{T} =\langle  \mathcal{V}^{\SU(d)}_{n,3}, \e^{\i\theta\P_{12}\P_{34}}: \theta \in [0, 2\pi) \rangle\ .
\ee
Equivalently, $\mathcal{T}$ is generated by Hamiltonians 
$$\mathbb{I}\ ,\ \P_{12}\ ,\ \P_{(123)}+ \P_{(132)}\ ,\   \P_{12}\P_{34}\ , $$
and their permuted versions (Note that compared with $\mathcal{V}^{\SU(d)}_{n,4}$, we have excluded Hamiltonians such as $ \P_{(1234)}+\P^\dag_{(1234)}$, which cannot be realized with 3-qudit $\SU(d)$-invariant Hamiltonian, when $d\ge 4$).

Ref. \cite{Li:2023mek} has previously studied the design properties of $\mathcal{T}$ and  established the lower bound  $t_{\max}\ge n(n-3)/2-1$ for $n\geq \max\{9,d+1\}$. In \cref{app:SUd}, we show that when $n\geq \max\{15,d+3\}$, the lower bound of $t_{\max}$ given by \cref{prop1} is the same as the case of $k=3,4$, namely
\begin{align}
    t_{\max}  \geq \frac{1}{6}n(n-1)(n-5)-1\ .
\end{align}
Also in \cref{app:SUd}, we determine the exact value of $t_{\max}$ when $n\geq \max\{22,d+4\}$, namely 
\begin{align}
    t_{\max}=\frac{1}{6}(n-3)(2n^2-3n+4)-1\ .
\end{align}

\cref{tab:SUd} summarizes the results in this section.

\begin{table}[htb]
  \centering
  \begin{tabular}{*{4}{c}}
    \toprule
    $k$ & $t_{\max}+1$ & $n\geq$ \\
    \midrule
    $3$ & $(n-1)(n-3)$ & $\max\{15,d+3\}$ \\
    $4$ & $\frac{2}{3}(n-1)(n-3)(n-5)$ & $\max\{22,d+4\}$  \\
    \midrule
    $0,1$ $(+ \mathcal{SV}^{\SU(d)})$ & $(n-1)$ & $\max\{5,d+1\}$  \\
    $2$ $(+ \mathcal{SV}^{\SU(d)})$ & $\frac{1}{2} (n + 1)(n - 2)$ & $\max\{15,d+3\}$  \\
    $\mathcal{T}$ & $\frac{1}{6}(n-3)(2n^2-3n+4)$ & $\max\{22,d+4\}$  \\
    \bottomrule
  \end{tabular}
  \caption{$t_{\max}$ for $\SU(d)$. In the case of $k=0,1,2$, $\mathcal{SV}^{\SU(d)}$ needs to be amended in order to achieve any higher order $t$-design. $\mathcal{T}$ is defined in \cref{eq:group-T}.}
  \label{tab:SUd}
\end{table}

\section{Proofs of Proposition \ref{prop1} and Theorem \ref{thm:tdesign}}\label{sec:proofs}

In this section, we present the proofs of the results in \cref{sec:t-design}, namely \cref{prop1,thm:tdesign}.

\subsection{Proof of Proposition \ref{prop1}}\label{Sec:proof1}

Applying the decomposition $V=\bigoplus_{\lambda\in\irreps_G} (\mathbb{I}_\lambda\otimes v_\lambda)$ to both sides of \cref{design1}, we find that any distribution $\nu$ is a $t$-design for the Haar measure over $\mathcal{V}^G$, if, and only if 
\begin{align}\label{aa}
  &\forall \lambda_1,\cdots, \lambda_t, \lambda'_1,\cdots, \lambda'_t \in\irreps_G:\nonumber\\ 
  &\quad
    \mathbb{E}_{V\sim \nu}[\bigotimes_{i=1}^t v_{\lambda_i}\otimes v^\ast_{\lambda'_i}] = \mathbb{E}_{V\sim \mu_\mathrm{Haar}}[\bigotimes_{i=1}^t v_{\lambda_i}\otimes v^\ast_{\lambda'_i}] .
\end{align}
Up to the same permutation, 
the two sides can be rewritten as 
\be
B[{\mathbf{n}}, {\mathbf{n}}_\ast;\eta] :=\mathbb{E}_{V\sim \eta}\bigotimes_{\lambda\in \irreps_G} [v_\lambda^{\otimes {\mathbf{n}}(\lambda)}\otimes ({v_\lambda^\ast})^{\otimes {\mathbf{n}}_\ast(\lambda)}]\ , 
\ee
where $\eta=\nu$ and $\eta=\mu_\mathrm{Haar}$, on the left-hand and right-hand sides, respectively, and 
${\mathbf{n}}(\lambda)$ and ${\mathbf{n}}_\ast(\lambda)$ are the number of times $v_\lambda$ and $v^\ast_\lambda$ appear in $\bigotimes_{i=1}^t v_{\lambda_i}\otimes v^\ast_{\lambda'_i}$. We refer to them as partitions of $t$, because they satisfy
\be\label{part}
\sum_{\lambda} {\mathbf{n}}(\lambda)=\sum_{\lambda} {\mathbf{n}_\ast}(\lambda)=t\ , \ \ \ \ \ \ {\mathbf{n}}(\lambda), {\mathbf{n}_\ast}(\lambda)\ge 0 \ .
\ee
Then, \cref{aa} can be rewritten as 
\be\label{aa2}
B[{\mathbf{n}}, {\mathbf{n}}_\ast;\nu]= B[{\mathbf{n}}, {\mathbf{n}}_\ast;\mu_\mathrm{Haar}]\ , 
\ee
where 
${\mathbf{n}}$ and 
${\mathbf{n}}_\ast$ are all possible partitions as defined in \cref{part}. Note that each pair of partitions ${\mathbf{n}}$ and 
${\mathbf{n}}_\ast$ determines a subspace of $\mathcal{H}^{\otimes t}\otimes {\mathcal{H}^\ast}^{\otimes t}$. 

Suppose $\nu$ is the uniform distribution over a compact subgroup of $G$-invariant unitaries $\mathcal{V}^G$, and $W$ is a unitary in this group. Then, the left-invariance of the Haar measure implies that
\be\label{st}
(W^{\otimes t}\otimes {W^\ast}^{\otimes t})\mathbb{E}_{V\sim\nu }[V^{\otimes t}\otimes {V^\ast}^{\otimes t}]= \mathbb{E}_{V\sim\nu }[V^{\otimes t}\otimes {V^\ast}^{\otimes t}]\ .
\ee
Then, using the decomposition $W=\bigoplus_{\lambda\in\irreps_G} (\mathbb{I}_\lambda\otimes w_\lambda)$, we obtain 
\be\label{constr}
\Big[\hspace{-3mm}\bigotimes_{\lambda\in\irreps_G}\big(w_\lambda^{\otimes {\mathbf{n}}(\lambda)}\otimes {{w_\lambda^\ast}}^{\otimes {\mathbf{n}}_\ast(\lambda)}\big)\Big] B[{\mathbf{n}},{\mathbf{n}}_\ast;\eta]=B[{\mathbf{n}},{\mathbf{n}}_\ast;\eta]\ ,
\ee
for all partitions $\mathbf{n}$ and $\mathbf{n}_\ast$. This observation has the following implication.

\begin{lemma}\label{prop0}
  Suppose $\eta$ is the Haar measure over a compact subgroup of $\mathcal{V}^G$ that contains the 1-parameter family of unitaries $\exp(\i s C): s\in \mathbb{R}$, where $C=\sum_{\lambda} c(\lambda) \Pi_\lambda$. Then, for any pair of partitions ${\mathbf{n}}$ and ${\mathbf{n}_\ast}$, at least, one of the following holds: $B({\mathbf{n}},{\mathbf{n}}_\ast;\eta)=0$, or 
  \be
  \sum_{\lambda\in\irreps_G} c(\lambda)\times [{\mathbf{n}}(\lambda)-{\mathbf{n}}_\ast(\lambda)] =0\ .
  \ee
\end{lemma}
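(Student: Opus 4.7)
The plan is to apply the left-invariance relation \cref{constr} to the specific 1-parameter family of unitaries $W_s = \exp(\i s C)$ that, by hypothesis, lies in the subgroup whose Haar measure is $\eta$. The key observation is that since $C$ is in the span of the isotypic projectors $\{\Pi_\lambda\}$, each $W_s$ acts as a scalar on every multiplicity space, which collapses the left-hand side of \cref{constr} into a single phase times $B[{\mathbf{n}},{\mathbf{n}}_\ast;\eta]$.

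More concretely, I would first note that under the isotypic decomposition $W_s = \bigoplus_\lambda \mathbb{I}_{\mathcal{Q}_\lambda} \otimes w_\lambda(s)$, the action on each irrep is just $w_\lambda(s) = \e^{\i s c(\lambda)} \mathbb{I}_{\mathcal{M}_\lambda}$. Substituting this into \cref{constr} yields
\begin{align}
\exp\Bigl(\i s \sum_{\lambda\in\irreps_G} c(\lambda)\bigl[{\mathbf{n}}(\lambda)-{\mathbf{n}}_\ast(\lambda)\bigr]\Bigr)\, B[{\mathbf{n}},{\mathbf{n}}_\ast;\eta] = B[{\mathbf{n}},{\mathbf{n}}_\ast;\eta],
\end{align}
which must hold for every $s\in\mathbb{R}$. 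If $B[{\mathbf{n}},{\mathbf{n}}_\ast;\eta]\neq 0$, then the scalar multiplier must equal $1$ for all real $s$, forcing the exponent to vanish, which is precisely the stated condition. Otherwise $B[{\mathbf{n}},{\mathbf{n}}_\ast;\eta]=0$. This is a two-line argument once the setup is in place.

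The only subtle step is the reduction of $W_s^{\otimes {\mathbf{n}}(\lambda)} \otimes (W_s^\ast)^{\otimes {\mathbf{n}}_\ast(\lambda)}$ on each irrep block to a scalar, which requires noting that $w_\lambda(s)$ is proportional to the identity on $\mathcal{M}_\lambda$ (so its tensor powers are also scalar) and that the complex conjugate contributes the opposite sign in the exponent. There is no real obstacle here, because the hypothesis that $C$ lies in $\operatorname{span}\{\Pi_\lambda\}$ is exactly what is needed; no finer structural information about the subgroup or the representation is used. The lemma will then be applied later by taking $C$ to be any operator of the form $Q$ in \cref{QQ} that generates a 1-parameter subgroup of $\mathcal{W}^G$, and by choosing partitions that violate the linear constraint to force $B=0$, thereby matching the corresponding Haar moment.
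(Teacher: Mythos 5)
Your proof is correct and follows essentially the same route as the paper: both apply the left-invariance relation in \cref{constr} to the 1-parameter family $\exp(\i s C)$, observe that $C\in\operatorname{span}\{\Pi_\lambda\}$ makes each block act as the scalar $\e^{\i s c(\lambda)}$ (with the conjugate contributing the opposite sign), and conclude that either $B[{\mathbf{n}},{\mathbf{n}}_\ast;\eta]=0$ or the accumulated phase must vanish for all $s$. Your write-up simply spells out the scalar-reduction step that the paper leaves implicit.
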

\begin{proof}
  Applying \cref{constr} to this 1-parameter family we find
  \be
  \Big[\prod_{\lambda\in\irreps_G} 
  \e^{\i s [{\mathbf{n}}(\lambda)-{\mathbf{n}}_\ast(\lambda)]c(\lambda)} \Big]B[{\mathbf{n}},{\mathbf{n}}_\ast;\eta]=B[{\mathbf{n}},{\mathbf{n}}_\ast;\eta]\ ,
  \ee
  for all $s\in\mathbb{R}$, which implies the lemma. 
\end{proof}
Next, we apply this lemma to the right-hand side of \cref{aa2}, which is by definition equal to the right-hand side of \cref{aa}, up to a permutation. In this case, we can choose $C=\Pi_\mu$ for all $\mu\in \irreps_G$. Note that $\mu_\mathrm{Haar}$ is the uniform distribution over $\mathcal{V}^G$, which in particular contains $\exp(\i \Pi_\mu \theta) : \theta\in[0,2\pi)$. Therefore, the assumption of the lemma is satisfied. 
Then, the lemma implies that $B[{\mathbf{n}},{\mathbf{n}}_\ast;\mu_\mathrm{Haar}]=0$ for all partitions ${\mathbf{n}}$ and ${\mathbf{n}}_\ast$, except those satisfying 
\be\label{const}
\forall \lambda\in\irreps_G:\ \ \ {\mathbf{n}}(\lambda)={\mathbf{n}}_\ast(\lambda)\ .
\ee
Furthermore, in the subspaces where this condition holds, the two sides of \cref{aa2} are automatically equal. 
To see this note that under the transformation $V\rightarrow V \sum_{\mu} \e^{\i\theta_\mu} \Pi_\mu$, which only changes the relative phases between different sectors, 
the operator 
$v_\lambda^{\otimes {\mathbf{n}}(\lambda)}\otimes ({v_\lambda^\ast})^{\otimes {\mathbf{n}}_\ast(\lambda)}$ remains unchanged, 
which implies taking the expected value with respect to the Haar measure over $\mathcal{SV}^G$ and $\mathcal{V}^G$ are the same. Then, 
we can replace $\mu_\mathrm{Haar}$ with 
the Haar measure over $\mathcal{SV}^G$, without changing the expectation value. 
We conclude if $\nu$ is the uniform distribution over 
$\mathcal{W}^G$, and if $\mathcal{W}^G$ is semi-universal, i.e.,
$$\mathcal{SV}^G \subseteq \mathcal{W}^G \subseteq \mathcal{V}^G\ , $$ 
then in the subspaces in which \cref{const} holds, the two sides of \cref{aa} are equal.

Furthermore, if \cref{const} is not satisfied for some $\lambda\in\irreps_G$, then the right-hand side of \cref{aa2} is zero. We conclude that to establish \cref{aa2} for all partitions $\mathbf{n}$ and $\mathbf{n}_\ast$, it suffices to show that the left-hand side of this equation is zero if \cref{const} is not satisfied. We show that this is the case provided that
\be\label{bound}
t< \min\big\{\dim(\mathcal{M}_\lambda): \lambda\in\irreps_G- \Delta \big\}\ .
\ee
Suppose $\mathbf{n}$ and $\mathbf{n}_\ast$ are not equal, i.e., \cref{const} does not hold. Then, exactly one of the following two cases holds: 
\begin{itemize}
\item \textbf{Case (i)}: There exists, at least, one irrep $ \lambda\in \irreps_G-\Delta$ such that ${\mathbf{n}}(\lambda)\neq {\mathbf{n}_\ast}(\lambda)$. 
\item \textbf{Case (ii):} For all irreps $ \lambda\in \irreps_G-\Delta$, ${\mathbf{n}}(\lambda)= {\mathbf{n}_\ast}(\lambda)$, but there exists, at least, one irrep $\lambda \in \Delta$ such that ${\mathbf{n}}(\lambda)\neq {\mathbf{n}_\ast}(\lambda)$.
\end{itemize}
In the following, we consider each case separately and show that in both cases $B[{\mathbf{n}}, {\mathbf{n}}_\ast;\nu]=0$. \\

\noindent \textbf{Case (i)}: In case (i) there exits a $\lambda\in \irreps_G-\Delta$ for which ${\mathbf{n}}(\lambda)\neq {\mathbf{n}_\ast}(\lambda)$. Since $t$ satisfies 
\cref{bound}
then 
\be
0<|{\mathbf{n}}(\lambda)-{\mathbf{n}_\ast}(\lambda)| \le t< \dim(\mathcal{M}_\lambda) \ ,
\ee
where we have used the fact that ${\mathbf{n}}(\lambda)\le t$ and ${\mathbf{n}}(\lambda')\le t$. In particular, this means $${\mathbf{n}}(\lambda)\not\equiv {\mathbf{n}_\ast}(\lambda)\ \ \ \big(\mathrm{mod}~ \dim(\mathcal{M}_\lambda)\big)\ .$$ 
Then, the following lemma together with the fact that $\nu$ is the uniform distribution over $\mathcal{W}^G$ that contains
$\mathcal{SV}^G$, implies $B({\mathbf{n}},{\mathbf{n}}_\ast;\nu)=0$. 

\begin{lemma}\label{prop2}
  Suppose $\eta$ is the Haar measure over a compact subgroup of $\mathcal{V}^G$ that contains $\mathcal{SV}^G$. Then, for any pair of partitions ${\mathbf{n}}$ and ${\mathbf{n}_\ast}$, at least, one of the following holds: $B({\mathbf{n}},{\mathbf{n}}_\ast;\eta)=0$, or 
  \be
  \forall \lambda\in \irreps_G:\ \ {\mathbf{n}}(\lambda) \equiv {\mathbf{n}_\ast}(\lambda)\ \ (\mathrm{mod}~ \dim(\mathcal{M}_\lambda))\ .
  \ee
\end{lemma}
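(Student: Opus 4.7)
\noindent\textbf{Proof proposal for Lemma \ref{prop2}.} The plan is to mimic the strategy of Lemma \ref{prop0}, but instead of exploiting the central $\U(1)$ subgroups generated by the projectors $\Pi_\mu$ (which only live in $\mathcal{V}^G$), I will exploit the centers of the $\SU(\mathcal{M}_\lambda)$ factors sitting inside $\mathcal{SV}^G$. The key point is that an element $v_\lambda = e^{i\phi}\mathbb{I}_{\mathcal{M}_\lambda}$ belongs to $\SU(\mathcal{M}_\lambda)$ if and only if $\phi\, \dim(\mathcal{M}_\lambda) \in 2\pi\mathbb{Z}$, so that the relative phases in different sectors are allowed to take only discrete values modulo $\dim(\mathcal{M}_\lambda)$. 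This is exactly what produces the congruence in the statement.

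Fix an arbitrary irrep $\lambda_0 \in \irreps_G$. Define $W \in \mathcal{V}^G$ by its isotypic blocks as
\begin{equation*}
  w_{\lambda_0} \;=\; e^{i\,2\pi/\dim(\mathcal{M}_{\lambda_0})}\,\mathbb{I}_{\mathcal{M}_{\lambda_0}}, \qquad w_\lambda \;=\; \mathbb{I}_{\mathcal{M}_\lambda} \quad (\lambda \neq \lambda_0).
\end{equation*}
Since $\det w_{\lambda_0} = e^{i 2\pi} = 1$ and $\det w_\lambda = 1$ for every other $\lambda$, we have $W \in \mathcal{SV}^G$, and hence $W$ lies in the support group of $\eta$ by hypothesis. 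Left-invariance of the Haar measure over this group therefore gives the analogue of \cref{st} and hence \cref{constr} with this specific $W$.

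Plugging the chosen $W$ into \cref{constr} reduces the prefactor to a scalar, because every $w_\lambda$ is a scalar multiple of identity:
\begin{equation*}
  \exp\!\Bigl(i\,\tfrac{2\pi}{\dim(\mathcal{M}_{\lambda_0})}\,[\mathbf{n}(\lambda_0) - \mathbf{n}_\ast(\lambda_0)]\Bigr)\, B[\mathbf{n},\mathbf{n}_\ast;\eta] \;=\; B[\mathbf{n},\mathbf{n}_\ast;\eta].
\end{equation*}
Consequently either $B[\mathbf{n},\mathbf{n}_\ast;\eta] = 0$, or the phase is trivial, i.e. $\mathbf{n}(\lambda_0) \equiv \mathbf{n}_\ast(\lambda_0) \pmod{\dim(\mathcal{M}_{\lambda_0})}$. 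Since $\lambda_0$ was arbitrary, the dichotomy holds for every $\lambda \in \irreps_G$ simultaneously, and the lemma follows.

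There is no serious obstacle: the content is essentially the observation that the center of $\SU(\mathcal{M}_\lambda)$ is the cyclic group of order $\dim(\mathcal{M}_\lambda)$. The only point to verify carefully is that $W$ as constructed lies in $\mathcal{SV}^G$ (not merely in $\mathcal{V}^G$), which is immediate from the determinant check above, and that the scalar nature of each $w_\lambda$ lets the tensor product prefactor in \cref{constr} collapse to the single exponential displayed. The argument could alternatively be phrased by applying Lemma \ref{prop0} to the one-parameter subgroup $\{\exp(is\,\Pi_{\lambda_0}/\dim(\mathcal{M}_{\lambda_0}))\}_{s}$ restricted to the discrete values $s \in 2\pi\mathbb{Z}$ for which the resulting unitary sits in $\mathcal{SV}^G$, but the direct calculation above seems more transparent.
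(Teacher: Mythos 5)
Your proof is correct. The construction $w_{\lambda_0}=\e^{\i 2\pi/\dim(\mathcal{M}_{\lambda_0})}\mathbb{I}_{\mathcal{M}_{\lambda_0}}$, $w_\lambda=\mathbb{I}_{\mathcal{M}_\lambda}$ for $\lambda\neq\lambda_0$ does land in $\mathcal{SV}^G$ (the determinant check is right, and the paper's characterization of $\mathcal{SV}^G$ as the set of block unitaries with every $v_\lambda\in\SU(\mathcal{M}_\lambda)$ guarantees such a $W$ exists), so \cref{constr} applies and the scalar prefactor forces either $B=0$ or the congruence at $\lambda_0$; varying $\lambda_0$ finishes the argument.

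The route differs slightly from the paper's. The paper applies \cref{constr} to arbitrary $W=\bigoplus_\lambda \mathbb{I}_\lambda\otimes w_\lambda$ with $w_\lambda\in\SU(\mathcal{M}_\lambda)$ and then invokes the representation-theoretic fact that $u^{\otimes r}\otimes u^{\ast\otimes s}$ admits a nonzero invariant vector over $\SU(m)$ if and only if $r\equiv s\pmod{m}$; a nonzero $B$ supplies such an invariant vector, giving the congruence. You instead test \cref{constr} only against the central elements $\zeta\mathbb{I}$ of each $\SU(\mathcal{M}_\lambda)$, which amounts to proving the ``only if'' direction of that fact directly rather than citing it. For this lemma only that direction is needed, so your version is more self-contained and arguably more transparent about where the modulus $\dim(\mathcal{M}_\lambda)$ comes from; note, however, that the ``if'' direction of the same fact is still needed elsewhere in the paper (in \cref{Sec:design}, to show $B[\mathbf{n},\mathbf{n}_\ast;\nu_0]\neq 0$ when the congruences hold), so the full fact cannot be dispensed with globally.
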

\begin{proof}
To prove this lemma we use the following standard fact, which can be shown, e.g., using the representation theory of $\SU(d)$. 
  
 \begin{fact}
 There exists a non-zero vector that is invariant under $u^{\otimes r}\otimes {u^\ast}^{\otimes s}$ for all $u\in \SU(m)$, if and only if $r\equiv s (\mathrm{mod}~ m)$.
\end{fact}

  To apply this fact, recall that for any set of unitaries $w_\lambda\in \SU(\dim(\mathcal{M}_\lambda))$, there exists $W\in \mathcal{SV}^G$ with decomposition $W=\bigoplus_{\lambda} \mathbb{I}_\lambda\otimes w_\lambda$. Then, using
  \cref{constr} for such $W$, and by applying the above fact, we can prove the lemma. 
\end{proof}

\noindent \textbf{Case (ii):} In this case there exists $\mu\in \Delta$ for which ${\mathbf{n}}(\mu)\neq {\mathbf{n}}_\ast(\mu)$. Recall the assumption that $\dim(\mathrm{span}_\mathbb{R}\{\sum_{\mu\in\Delta} \Tr(H_l\Pi_\mu) |\mu\rangle \}_l)=|\Delta|$, which in words means that under the map $O\rightarrow \sum_{\lambda\in \Delta} \Tr(\Pi_\lambda O) |\lambda\rangle$, the image of operators $\{H_l\}$ span a $|\Delta|$ dimensional space. This, in particular, means that for any $\mu\in \Delta$, there exists a Hermitian operator $A^\mu\in \mathrm{span}_\mathbb{R}\{H_l\}_l$, such that
\begin{align}\label{op}
\Tr(A^\mu \Pi_\lambda) 
\begin{cases}
    \neq 0, & \text{when } \lambda=\mu \\
    =0, & \text{when } \lambda \in \Delta-\{\mu\}.
\end{cases}
\end{align}

Consider the decomposition $A^\mu = A_0^\mu + A_1^\mu$, where
\be\label{proj}
A_1^\mu:=\sum_{\lambda\in\irreps_G} \frac{\Tr(\Pi_\lambda A^\mu)}{\Tr(\Pi_\lambda)} \Pi_\lambda=\sum_{\lambda\in\irreps_G} a_1^\mu(\lambda)\Pi_\lambda ,
\ee
to be projection of $A^\mu$ to the subspace spanned by $\{\Pi_\lambda\}$. Because $\Tr(A_0^\mu\Pi_\lambda)=0$ for all $\lambda\in\irreps_G$, we have $\exp(\i A_0^\mu \theta)\in \mathcal{SV}^G \subseteq \mathcal{W}^G$.
Since $\mathcal{W}^G$ is a group and $[A_1^\mu,A^\mu]=0$, it follows that
\be
\exp(\i A_1^\mu \theta)=\exp(\i A^\mu \theta)\exp(-\i A_0^\mu \theta)\in \mathcal{W}^G\ .
\ee
Furthermore, \cref{op} implies
\be
a_1^\mu(\mu) \neq 0,\ a_1^\mu(\lambda)=0 :~ \forall \lambda\in \Delta-\{\mu\}\ . 
\ee
Finally, we recall that in case (ii) ${\mathbf{n}}(\lambda)= {\mathbf{n}_\ast}(\lambda)$ for $\lambda\in \irreps_G-\Delta$. Therefore,
\begin{align}
  &\sum_{\lambda\in\irreps_G} \hspace{-2mm} a_1^\mu(\lambda)[{\mathbf{n}}(\lambda)-{\mathbf{n}}_\ast(\lambda)]\nonumber= \sum_{\lambda\in\Delta} a_1^\mu(\lambda) [{\mathbf{n}}(\lambda)-{\mathbf{n}}_\ast(\lambda)]\nonumber\\ &\ \ \ \ =a_1^\mu(\mu) [{\mathbf{n}}(\mu)-{\mathbf{n}}_\ast(\mu)] \ .
\end{align}
Since $a_1^\mu(\mu)\neq 0$ and $[{\mathbf{n}}(\mu)-{\mathbf{n}}_\ast(\mu)]\neq 0$, then $\sum_{\lambda\in\irreps_G} a_1^\mu(\lambda) [{\mathbf{n}}(\lambda)-{\mathbf{n}}_\ast(\lambda)]\neq 0$, which by \cref{prop0} implies $B({\mathbf{n}},{\mathbf{n}}_\ast;\nu)=0$.

We conclude that in both cases (i) and (ii), $B[{\mathbf{n}}, {\mathbf{n}}_\ast;\nu]=0$, which implies the left-hand side of \cref{aa2} is zero when the condition $\mathbf{n} = \mathbf{n}_*$ in \cref{const} is not satisfied and therefore, is equal to the right-hand side. As we argued before, they are also equal when this condition is satisfied. We conclude \cref{aa2} holds for all partitions ${\mathbf{n}}$ and $ {\mathbf{n}}_\ast$, and therefore $\nu$ is a $t$-design for $\mu_\mathrm{Haar}$. This completes the proof of \cref{prop1}.

\subsection{Proof of Theorem \ref{thm:tdesign}}
\label{Sec:design} 

First, we establish the upper bound $t_{\max} \leq \frac{1}{2} \|Q\|_1 -1$ for any  operator $Q$, 
or, equivalently, any set of integers $q(\lambda)$ satisfying \cref{eq:cond}. That is, we show that the uniform distribution over $\mathcal{W}^G$ is not $t$-design for the uniform distribution over $\mathcal{V}^G$ with $t \geq \frac{1}{2} \|Q\|_1$.  (We note that this has already been established by our argument in \cref{Sec:comm}, but for completeness, we provide a slightly different version of the proof, consistent with the rest of the notation and arguments in this section). 

Recall that
\begin{align}
  Q &= \sum_{\lambda \in \irreps_G} q(\lambda) m_\lambda \frac{\Pi_\lambda}{\tr \Pi_\lambda}\ , \\\tag{re \ref{QQ}}
  &= \sum_{\lambda \in \irreps_G} \big(\mathbf{n}(\lambda) - \mathbf{n}_*(\lambda)\big) \frac{\Pi_\lambda}{\tr \Pi_\lambda},
\end{align} 
where in the second line we define 
\bes
\begin{align}
  {\mathbf{n}}(\lambda)& := m_\lambda\times \frac{|q(\lambda)|+q(\lambda) }{2} \\ 
  {\mathbf{n}_\ast}(\lambda)& := m_\lambda\times \frac{|q(\lambda)|-q(\lambda) }{2}\ ,
\end{align}
\ees
which means ${\mathbf{n}}(\lambda)=m_\lambda\times q(\lambda)$ if $q(\lambda)>0$, and is zero otherwise. Similarly, ${\mathbf{n}_\ast}(\lambda)=-m_\lambda\times q(\lambda)$ if $q(\lambda)<0$, and is zero otherwise. Clearly, since $q(\lambda)$ are integers, $\mathbf{n}(\lambda)$ and $\mathbf{n}_*(\lambda)$ defined in this way are also integers. Furthermore, since $Q\neq 0$, we know that there exist some $\lambda$, such that $\mathbf{n}(\lambda) \neq \mathbf{n}_*(\lambda)$. This definition implies that 
\be
\sum_{\lambda} {\mathbf{n}}(\lambda)=\sum_{\lambda} {\mathbf{n}_\ast}(\lambda)=\frac{\|Q\|_1}{2} =: t' \ ,
\ee
and therefore, $\mathbf{n}$ and ${\mathbf{n}_\ast}$ are two distinct partitions of $t$.

In the following, we show that with this definition 
\be\label{tr1y}
B[{\mathbf{n}},{\mathbf{n}}_\ast;\nu] \neq 0\ ,
\ee
where $\nu$ is the uniform distribution over $\mathcal{W}^G$. This implies that $\nu$ is not a $t'$-design for the uniform distribution over $\mathcal{V}^G$, because $B[{\mathbf{n}},{\mathbf{n}}_\ast;\mu_\mathrm{Haar}]=0 $, for any distinct pair of partitions $\mathbf{n}$ and ${\mathbf{n}}_\ast$. Therefore, $\nu$ is not a $t$-design for any $t \geq \frac{1}{2} \|Q\|_1$.

To see \cref{tr1y} recall that
\be
B[{\mathbf{n}}, {\mathbf{n}}_\ast;\nu] :=\mathbb{E}_{V\sim \nu}\bigotimes_{\lambda\in \irreps_G} [v_\lambda^{\otimes {\mathbf{n}}(\lambda)}\otimes ({v_\lambda^\ast})^{\otimes {\mathbf{n}}_\ast(\lambda)}]\ , 
\ee
where $\nu$ is the Haar measure over $\mathcal{W}^G$. Next, recall that by assumption $\mathcal{W}^G$ is compact, and generated by the Hamiltonians $\set{H_r}$, which means the exponential map from its Lie algebra (denoted by $\mathfrak{w}=\mathfrak{alg}\{\i H_r\}$) to $\mathcal{W}^G$ is surjective. That is, for any unitary $W\in \mathcal{W}^G$, there exists an operator $\i F\in \mathfrak{w}$ such that $W=e^{\i F}$. Consider decomposition $F=F_0+F_1$, where 
\be
F_1=\sum_\lambda \frac{\Tr(\Pi_\lambda F)}{\Tr(\Pi_\lambda)} \Pi_\lambda
\ee
is the projection of $F$ to the subspace spanned by $\{\Pi_\lambda\}$.
This means $F_0=F-F_1$ satisfies $\Tr(F_0\Pi_\lambda)=0$ for all $\lambda\in\irreps_G$. Semi-universality implies that the Lie algebra $\mathfrak{w}=\mathfrak{alg}\{\i H_r\}$ contains $\i F_0$, which in turn implies that it also contains $\i F_1=\i F-\i F_0$. Furthermore, it can be easily seen that $F_1$ should be in the linear span of $H_{r,1}$, i.e.,
$F_1\in \mathrm{span}_\mathbb{R}\{H_{r,1}\}$, where 
\begin{align}\label{eq:Hr1}
    H_{r,1} =\sum_\lambda \frac{\Tr(\Pi_\lambda H_r)}{\Tr(\Pi_\lambda)} \Pi_\lambda
\end{align}
is again the projection of $H_r$ to the subspace spanned by $\{\Pi_\lambda\}$. This is a consequence of the fact that 
all the elements of the Lie algebra $\mathfrak{w}$ are $G$-invariant, which in turn implies that for any $B_1, B_2 \in \mathfrak{w}$, $\Tr(\Pi_\lambda[B_1,B_2])=0$. Therefore, the projection of any element of the Lie algebra $\mathfrak{w}$ to $\mathrm{span}\{\Pi_\lambda\}$ should be a linear combination of the projection of the generators of the Lie algebra (See \cite{Marvian2022Restrict} for further details. See also \cite{zimboras2015symmetry}). 

This implies that 
\be
W=W_1W_0=W_0 W_1\ ,
\ee
where $W_0=\e^{\i F_0}\in\mathcal{SV}^G$ and 
\be
W_1=\prod_r \exp(\i t_r H_{r,1})\ ,
\ee
for certain $t_r\in \mathbb{R}$. This means that the group generated by $\exp(\i t H_{r,1}): t\in \mathbb{R}, r = 1, \dots, N$ is contained in the center of $\mathcal{W}^G$; in fact, it is the identity component of the center. It follows from the assumption that $\mathcal{W}^G$ is compact that its center, and hence the identity component of its center, are also compact.\footnote{Suppose $\mathcal{W}$ is any topological group with the property that the subset consisting of only the identity, $\set{\ident}$, is closed (this is always true if $\mathcal{W}$ is a Lie group). Then its center is closed in $\mathcal{W}$. In fact, the centralizer subgroup of any element $V \in \mathcal{W}$ is closed: the map $W \mapsto W V W^{-1} V^{-1}$ is a continuous map from $\mathcal{W}$ to itself, and the inverse image of $\set{\ident}$ is precisely the centralizer of $V$, which is therefore closed. The center is the intersection of all centralizers, hence is also closed. See also \cite{husain1966introduction}.}
We conclude that
\be\label{eq:B0B1}
B[{\mathbf{n}}, {\mathbf{n}}_\ast;\nu]=B[{\mathbf{n}}, {\mathbf{n}}_\ast;\nu_0] B[{\mathbf{n}}, {\mathbf{n}}_\ast;\nu_1]\ , 
\ee
where $\nu_0$ and $\nu_1$ are, respectively, the uniform distributions over $\mathcal{SV}^G$ and over the group generated by $\exp(\i t H_{r,1}):t\in \mathbb{R}$.

Next, we argue that for the above choice of partitions $\mathbf{n}$ and $\mathbf{n}_\ast$, $B[{\mathbf{n}}, {\mathbf{n}}_\ast;\nu_1]$ is the identity operator and $B[{\mathbf{n}}, {\mathbf{n}}_\ast;\nu_0]$ is non-zero, which implies that $B[{\mathbf{n}}, {\mathbf{n}}_\ast;\nu]$ is non-zero. 

To show that $B[{\mathbf{n}}, {\mathbf{n}}_\ast;\nu_1]$ is the identity operator, note that for any $V=\exp(\i t H_{r,1})$, we have $v_\lambda=\exp(\i t \frac{\Tr(H_r\Pi_\lambda)}{\Tr(\Pi_\lambda)})\mathbb{I}_{\mathcal{M}_\lambda}$, which means
\begin{align}
  &\bigotimes_{\lambda\in \irreps_G} [v_\lambda^{\otimes {\mathbf{n}}(\lambda)}\otimes ({v_\lambda^\ast})^{\otimes {\mathbf{n}}_\ast(\lambda)}] \noindent\\ \nonumber
  &=\exp\Big( \i t \sum_\lambda [\mathbf{n}(\lambda)-\mathbf{n}_\ast(\lambda)]\frac{\Tr(H_r\Pi_\lambda)}{\Tr(\Pi_\lambda)} \Big) \mathbb{I}\\ \nonumber
  &=\exp(\i t \Tr({H}_r Q)])\mathbb{I}\ ,
\end{align}
which by \cref{cond2} is the identity operator. It follows that for any unitary $V=\prod_r \exp(\i t_r H_{r,1})$, the corresponding 
$\bigotimes_{\lambda\in \irreps_G} [v_\lambda^{\otimes {\mathbf{n}}(\lambda)}\otimes ({v_\lambda^\ast})^{\otimes {\mathbf{n}}_\ast(\lambda)}] $ is also the identity operator. Since $\nu_1$ is the uniform distribution over the group of unitaries with such decomposition, it follows that 
$B[{\mathbf{n}}, {\mathbf{n}}_\ast;\nu_1]$ is the identity operator.

Next, we focus on $B[{\mathbf{n}}, {\mathbf{n}}_\ast;\nu_0]$ and show that it is non-zero. First note that
\be
B[{\mathbf{n}}, {\mathbf{n}}_\ast;\nu_0]=\bigotimes_{\lambda\in \irreps_G} \mathbb{E}_{v_\lambda\sim \mu_\lambda} [v_\lambda^{\otimes {\mathbf{n}}(\lambda)}\otimes ({v_\lambda^\ast})^{\otimes {\mathbf{n}}_\ast(\lambda)}]\ , 
\ee
where $\mu_\lambda$ is the uniform distribution over $\SU(m_\lambda)$. Then, recall that for the above definition of partitions $\mathbf{n}$ and $\mathbf{n}_\ast$, we have the property that for all $\lambda\in\irreps_G$, $ [{\mathbf{n}}(\lambda)- {\mathbf{n}}_\ast(\lambda)]/m_\lambda=q(\lambda)$ is an integer. This implies that there exists a non-zero vector that is invariant under $u^{\otimes {\mathbf{n}}(\lambda)}\otimes {u^\ast}^{\otimes {\mathbf{n}}_\ast(\lambda)}$, which means $B[{\mathbf{n}}, {\mathbf{n}}_\ast;\eta_0]$ is non-zero.

We conclude that for the above choice of partitions ${\mathbf{n}}$ and ${\mathbf{n}}_\ast$, 
\begin{align}
  B[{\mathbf{n}}, {\mathbf{n}}_\ast;\nu]&=B[{\mathbf{n}}, {\mathbf{n}}_\ast;\nu_0] B[{\mathbf{n}}, {\mathbf{n}}_\ast;\nu_1] \nonumber\\ &=B[{\mathbf{n}}, {\mathbf{n}}_\ast;\nu_0]\neq 0 \ . 
\end{align}
On the other hand, because ${\mathbf{n}}$ and ${\mathbf{n}}_\ast$ are distinct partitions, from \cref{const} we know that for the uniform distribution over the group of $G$-invariant unitaries $\mathcal{V}^G$, i.e., $B[{\mathbf{n}}, {\mathbf{n}}_\ast;\mu_\mathrm{Haar}] = 0$. This means that the uniform distribution over $\mathcal{W}^G$ is not a $t'$-design, for the uniform distribution over $\mathcal{V}^G$. Therefore, $\nu$ is not a $t$-design for any $t \geq \frac{1}{2} \|Q\|_1$, which implies that $t_{\max} \leq \frac{1}{2} \|Q\|_1-1$.

Next, we show that if the uniform distribution over $\W^G$ is not a $t$-design for the uniform distribution over $\V^G$, then there exists a set of integers $q(\lambda)$, or equivalently, an operator $Q$ satisfying \cref{eq:cond} with $\frac{1}{2}\|Q\|_1\le t$. This, together with the above upper bound implies that
\begin{align}\tag{re \ref{eq:tmax}}
    t_{\max} = \frac{1}{2} \min_Q \|Q\|_1-1\ .
\end{align}

Suppose the uniform distribution over $\mathcal{W}^G$, denoted as $\nu$, is not an exact $t$ design for the uniform distribution over $\mathcal{V}^G$. Then, there exists distinct partitions $\mathbf{n}$ and $\mathbf{n}_\ast$ of $t$ such that $B({\mathbf{n}},{\mathbf{n}}_\ast;\eta)\neq 0$. Then, \cref{prop0,prop2} imply that these partitions satisfy both of the following properties: 
\begin{enumerate}[(i)]
    \item For all $\lambda\in \irreps_G$, 
\be\label{Eq4}
{\mathbf{n}}(\lambda) \equiv {\mathbf{n}_\ast}(\lambda)\ \ \ \ (\mathrm{mod}~ m_\lambda)\ ,
\ee
    \item For all $\i C=\sum_\lambda \i c_\lambda \Pi_\lambda$ in the Lie algebra $\mathfrak{w}:=\mathfrak{alg}\{\i H_r\}$ generated by $\{\i H_r\}$,
\be\label{Eq5}
\sum_{\lambda\in \irreps_G} c_\lambda [\mathbf{n}(\lambda)-\mathbf{n}_\ast(\lambda)]=0\ .
\ee
\end{enumerate}

We first show that for any $\mathbf{n}(\lambda)$ and $\mathbf{n}_*(\lambda)$ that satisfies the above two properties, there is a set of integers $q(\lambda)$, or equivalently, an operator $Q$ that satisfies \cref{eq:cond}, namely
\begin{align}
    q(\lambda) = \frac{\mathbf{n}(\lambda)-\mathbf{n}_\ast(\lambda)}{m_\lambda}\ ,
\end{align}
which are clearly integers from \cref{Eq4}, and
\begin{align}
    Q = \sum_{\lambda \in \irreps_G} \big( \mathbf{n}(\lambda)-\mathbf{n}_*(\lambda) \big) \frac{\Pi_\lambda}{\tr \Pi_\lambda}\ .
\end{align}
Furthermore, we have
\bes\label{cond4}
\begin{align}
  \|Q\|_1=\sum_\lambda |q(\lambda)| m_\lambda&=\sum_\lambda |\mathbf{n}(\lambda)-\mathbf{n}_\ast(\lambda)|\\ &\le \sum_\lambda \mathbf{n}(\lambda)+\mathbf{n}_\ast(\lambda)=2t\ ,
\end{align}
\ees
which is non-zero, and in the second line we have used the triangle inequality. Moreover,
\be\label{cond5}
\Tr Q=\sum_{\lambda} q(\lambda)\times m_\lambda=\sum_{\lambda} \mathbf{n}(\lambda)-\mathbf{n}_\ast(\lambda)=0\ .
\ee
Therefore $Q$ satisfies both \cref{cond1,cond0}.

Finally, \cref{Eq5} can be rewritten as
\be\label{gh}
\Tr(C Q)= \sum_{\lambda\in \irreps_G} c_\lambda [\mathbf{n}(\lambda)-\mathbf{n}_\ast(\lambda)]=0\ .
\ee
which holds for any $C=\sum_\lambda c_\lambda \Pi_\lambda$, provided that $\i C$ is 
in the Lie algebra $\mathfrak{w}$ generated by $\{\i H_r\}$. Recall that the semi-universality implies that the Lie algebra $\mathfrak{w}$ contains any $G$-invariant anti-Herimitan operator $A$ satisfying $\Tr(A\Pi_\lambda)=0$ for all $\lambda\in\irreps_G$. This, in turn, implies that for any Hamiltonian $H_r$, $\mathfrak{w}$ contains operators $\i H_{r,1}$ defined in \cref{eq:Hr1}. Therefore,  by choosing $C=H_{r,1}$, \cref{gh} implies that
\begin{align}
    \Tr(Q H_r) = \Tr(Q H_{r,1}) = 0.
\end{align}

We conclude that if the uniform distribution over $\mathcal{W}^G$ is not a $t$ design for the uniform distribution over $\mathcal{V}^G$, then there exists integers $q(\lambda)$ and the corresponding operator $Q$ satisfying \cref{eq:cond}, such that $t \geq \frac{1}{2}\|Q\|_1$, which implies
\begin{align}\tag{re \ref{eq:tmax}}
    t_{\max} = \frac{1}{2} \min_Q \|Q\|_1-1.
\end{align}

A direct corollary of the above result is that if there does not exist a solution $Q$ with integer $q(\lambda)$ that satisfies \cref{eq:cond}, then $t_{\max} = \infty$, i.e., \cref{aa} holds for all integer $t$. This is because otherwise, there exists a finite $t$ such that the uniform distribution over $\mathcal{W}^G$ is not a $t$ design for the uniform distribution over $\mathcal{V}^G$. From the above argument, we know that there must exist a solution $Q$ that satisfies \cref{eq:cond}, and hence a contradiction to our assumption.

In the following, we will argue that $t_{\max} = \infty$ implies that, up to possible global phases, $\W^G$ is equal to $\V^G$, i.e., for any $V \in \V^G$, there exists $\theta\in[0,2\pi)$, such that $\e^{\i\theta} V\in \mathcal{W}^G$. As we explain \cref{lem:design} this can be understood more generally in terms of compact manifolds with strictly positive regular measures. 

Suppose instead there is an element $U \in \mathcal{V}^G$ such that $\e^{\i \theta} U \not\in \mathcal{W}^G$ for every phase $\e^{\i \theta}$. Then,  since $\mathcal{W}^G$ is compact and hence closed,  the function $V \mapsto \frac{1}{(\tr \ident)^2} \abs{\tr U^{-1} V}^2$ is bounded away from $1$ on $\mathcal{W}^G$, i.e., there exists some $0 < \varepsilon < 1$ such that
\begin{equation}\label{eq:Wbound}
    \frac{1}{(\tr \ident)^2} \abs{\tr U^{-1} W}^2 \leq 1 - \varepsilon\ ,
\end{equation}
for all $W \in \mathcal{W}^G$. Define the function
\begin{equation}
    p_U(V) = \frac{1}{(\tr \ident)^2} \abs{\tr U^{-1} V}^2 + \frac{\varepsilon}{2}.
\end{equation}
Then, there exists  a finite ball of unitaries $V \in \mathcal{V}^G$ around $U$ such that $p_U(V) > 1$, while $p_U(W) < 1$ for all $W \in \mathcal{W}^G$.
Let $\mathrm{vol}>0$ denote the volume of this ball. Then, since $p_U(V) > 1$ on this region,
\begin{equation}
    \int_{\mathcal{V}^G} \diff V \, p_U(V)^t \geq \mathrm{vol}
\end{equation}
for all integer $t \ge 1$. On the other hand, since $p_U(W) < 1$ for $W \in \mathcal{W}^G$, 
\begin{align}
    \lim_{t \to \infty} \int_{\mathcal{W}^G} \diff W \, p_U(W)^{t} & = 0\ .
\end{align}
This, in particular, implies that there is some finite $t$ such that
\begin{equation}
    \int_{\mathcal{V}^G} \diff V \abs{\tr U^{-1} V}^{2 t} \neq \int_{\mathcal{W}^G} \diff W \abs{\tr U^{-1} W}^{2 t},
\end{equation}
and therefore $\mathcal{W}^G$ is not a $t$-design for $\mathcal{V}^G$.

Recall that any compact Lie group has a finite-dimensional faithful representation \cite{sepanski2006compact}. Therefore, the above construction immediately generalizes to arbitrary compact Lie groups.  In the following, we emphasize a matrix realization because $t$-design properties of the group depend on this representation.
\begin{proposition}
    Let $\mathcal{V}$ be a compact matrix Lie group with uniform Haar measure $\mu_{\mathcal{V}}$ and $\mathcal{W} \subsetneq \mathcal{V}$ be a proper closed subgroup, with Haar measure $\mu_{\mathcal{W}}$. If $\U(1) \cdot \mathcal{W} \neq \mathcal{V}$, i.e., there exists an element $U \in \mathcal{V}$ such that $\e^{\i \theta} U \notin \mathcal{W}$ for all phases $\e^{\i \theta}$, then there exists an integer $t$ such that $\mu_{\mathcal{W}}$ is not a $t$-design for $\mu_{\mathcal{\mathcal{V}}}$.
\end{proposition}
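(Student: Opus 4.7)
The plan is to convert the sketch given in the paragraphs immediately preceding the statement into a rigorous argument that, for every sufficiently large integer $t$, exhibits a polynomial in the entries of $V$ and $V^{\ast}$ whose expectation under $\mu_{\mathcal{V}}$ differs from its expectation under $\mu_{\mathcal{W}}$. Because a polynomial of bidegree $(k,k)$ in $V,V^{\ast}$ with $k\le t$ is a linear combination of entries of $V^{\otimes k}\otimes V^{\ast\otimes k}$, and since being a $t$-design implies being a $k$-design for every $k\le t$, such a witness immediately breaks \eqref{design1}.

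I would start by fixing any $U\in\mathcal{V}$ with $\e^{\i\theta}U\notin\mathcal{W}$ for all $\theta$, and introducing
\[
  p_U(V)\;:=\;\frac{\bigl|\Tr(U^{-1}V)\bigr|^{2}}{(\Tr\ident)^{2}}\in[0,1],
\]
a polynomial of bidegree $(1,1)$ in the entries of $V$ and $V^{\ast}$ that, by Cauchy--Schwarz, attains the value $1$ precisely on the orbit $\U(1)\cdot U$. Since $\U(1)\cdot\mathcal{W}$ is the image of the compact set $\U(1)\times\mathcal{W}$ under continuous multiplication, it is closed in $\mathcal{V}$ and by hypothesis misses $U$; combined with compactness of $\mathcal{W}$ and continuity of $p_U$, this yields some $\varepsilon\in(0,1)$ with $p_U(W)\le 1-\varepsilon$ throughout $\mathcal{W}$. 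Setting $\tilde p_U:=p_U+\tfrac{\varepsilon}{2}$, one has $\tilde p_U\le 1-\tfrac{\varepsilon}{2}$ on $\mathcal{W}$ but $\tilde p_U(U)=1+\tfrac{\varepsilon}{2}$, so continuity produces a non-empty open $N\ni U$ in $\mathcal{V}$ on which $\tilde p_U>1$.

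Haar measure on a compact Lie group is strictly positive on non-empty open sets, so $\mu_{\mathcal{V}}(N)>0$, and I would then read off, for every $t\ge 1$,
\[
  \int_{\mathcal{V}}\tilde p_U^{\,t}\,d\mu_{\mathcal{V}}\;\ge\;\mu_{\mathcal{V}}(N),\qquad
  \int_{\mathcal{W}}\tilde p_U^{\,t}\,d\mu_{\mathcal{W}}\;\le\;\bigl(1-\tfrac{\varepsilon}{2}\bigr)^{t}\xrightarrow[t\to\infty]{}0.
\]
Expanding $\tilde p_U^{\,t}=\sum_{k=0}^{t}\binom{t}{k}(\varepsilon/2)^{t-k}p_U^{\,k}$ shows that it is a linear combination of bidegree-$(k,k)$ monomials with $k\le t$, so the $t$-design hypothesis would force the two integrals to coincide; taking $t$ large enough to violate this yields the contradiction.

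The argument is almost mechanical once these ingredients are in place, and I do not anticipate a substantive obstacle. The items worth double-checking are the closedness of $\U(1)\cdot\mathcal{W}$ (handled by the product-compactness argument above, so that $U$ is bounded away from $\mathcal{W}$ in $p_U$-value), the identification of $V^{-1}$ with the entrywise conjugate transpose $V^{\ast}$ after embedding the compact matrix group in some $\U(d)$ (which is why ``matrix'' appears in the hypothesis and which ensures $p_U$ is genuinely polynomial in $V,V^{\ast}$), and strict positivity of Haar on open sets; each is standard.
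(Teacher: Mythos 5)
Your proposal is correct and follows essentially the same route as the paper: the authors also use $p_U(V)=\abs{\tr U^{-1}V}^2/(\tr\ident)^2+\varepsilon/2$, bound it below $1$ on the compact subgroup and above $1$ on a positive-measure neighborhood of $U$, and compare $\int p_U^t$ under the two Haar measures as $t\to\infty$. Your explicit binomial expansion of $\tilde p_U^{\,t}$ into bidegree-$(k,k)$ monomials with $k\le t$ is a welcome clarification of the final step that the paper leaves implicit.
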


The following proposition, proven in \cref{app:design}, shows that this fact can be understood more generally in terms of regular measures over compact manifolds, using Urysohn's lemma \cite{munkres2000topology} and Stone-Weierstrass theorem \cite{rudin1976principles}. 

\begin{restatable}{proposition}{lemDesign}\label{lem:design}
  Let $N$ be a compact manifold, and let $M\subsetneq N$ be a proper closed submanifold. Let $\mu_N$ and $\mu_M$ be normalized strictly-positive regular measures on $N$ and $M$, respectively. Suppose $f: N \hookrightarrow \mathbb{R}^n$ is an embedding of $N$ into $\mathbb{R}^n$. Then there exists a polynomial function $p \in \mathbb{R}[x_1, \cdots, x_n]$ of degree $t \geq 0$, such that
  \begin{align}
    \int_M \d\mu_M\, p \circ f \neq \int_N \d\mu_N\, p \circ f.
  \end{align}
\end{restatable}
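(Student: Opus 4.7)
The plan is to construct a continuous ``witness'' function that distinguishes $M$ from $N$ via Urysohn's lemma, and then polynomially approximate it via the Stone--Weierstrass theorem.

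First, since $M \subsetneq N$ is a proper closed subset of a compact (hence normal) Hausdorff manifold, I would pick any point $x_0 \in N \setminus M$ and apply Urysohn's lemma to the disjoint closed sets $\{x_0\}$ and $M$ to produce a continuous function $g \colon N \to [0,1]$ with $g(x_0) = 1$ and $g|_M \equiv 0$. The open set $U := \{x \in N : g(x) > 1/2\}$ is nonempty (it contains $x_0$), so strict positivity of $\mu_N$ gives
\[
\delta := \int_N g \, \d\mu_N \;\geq\; \tfrac{1}{2}\,\mu_N(U) \;>\; 0,
\]
while $\int_M g \, \d\mu_M = 0$ since $g$ vanishes on $M$.

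Second, I would invoke Stone--Weierstrass on the algebra $\mathcal{A} = \{p \circ f : p \in \mathbb{R}[x_1,\ldots,x_n]\}$ of pulled-back polynomial functions on $N$. Because $f$ is an embedding (hence injective), $\mathcal{A}$ separates points of $N$; it also contains the constants and is closed under pointwise products. Since $N$ is compact, Stone--Weierstrass implies $\mathcal{A}$ is uniformly dense in $C(N,\mathbb{R})$. Choosing $p \in \mathbb{R}[x_1,\ldots,x_n]$ with $\|g - p\circ f\|_\infty < \delta/4$ and using that $\mu_M,\mu_N$ are probability measures,
\[
\left|\int_N p\circ f\, \d\mu_N - \int_M p\circ f\, \d\mu_M\right| \;\geq\; \delta - 2\cdot\tfrac{\delta}{4} \;=\; \tfrac{\delta}{2} \;>\; 0,
\]
so this $p$ is the desired polynomial, of some degree $t \geq 0$.

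The main obstacle is less a single hard step than a cluster of foundational points that must be stated carefully: one must verify that a compact manifold is automatically Hausdorff (and hence normal) so that Urysohn applies; that ``strictly positive'' for $\mu_N$ really means positive on every nonempty open set, which is what forces $\delta > 0$; and that the embedding hypothesis on $f$ is used precisely for the point-separation requirement in Stone--Weierstrass. Once $\delta > 0$ is established, the polynomial approximation argument is routine.
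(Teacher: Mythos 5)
Your proof is correct and follows essentially the same route as the paper: Urysohn's lemma to build a continuous witness $g$ vanishing on $M$ with positive integral over $N$, then Stone--Weierstrass on the pulled-back polynomial algebra to approximate it. The only difference is that the paper establishes $\int_N g\,\d\mu_N>0$ by invoking inner regularity to find a compact subset $N'\subset N\setminus M$ of positive measure with $g|_{N'}\equiv 1$, whereas you get it more directly from the nonempty open superlevel set $\{g>1/2\}$ and strict positivity --- a minor simplification that bypasses the regularity step.
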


\section{Discussion}\label{sec:discussion}

In this paper, we discussed the design properties of random quantum circuits in the presence of symmetries. As highlighted in \cref{prop:con}, the notion of semi-universality plays a central role in this context: if a compact connected subgroup of $G$-invariant unitaries (such as the subgroup generated by $k$-local $G$-invariant unitaries) is not semi-universal, then the Haar distribution over that subgroup cannot be even a 2-design for the Haar distribution over the group of all $G$-invariant unitaries. When semi-universality holds, \cref{thm:tdesign} determines the maximum $t = t_{\max}$ such that the uniform distribution over the subgroup is a $t_{\max}$-design but not $t_{\max}+1$-design for the Haar measure. We also found a useful lower bound on $t_{\max}$ in \cref{prop1}. It is worth noting that our general results in \cref{thm:tdesign} and \cref{prop1} are formulated in terms of general symmetries and their representations, and therefore are applicable beyond the case of on-site representation of symmetries.  

We applied our general results to qubits systems with on-site representations of $G = \U(1), \SU(2)$ and $\mathbb{Z}_p$ symmetries and showed that in the cases of $k$-qubit gates with $\U(1)$ and $\SU(2)$ symmetries, $t_{\text{max}}$ grows as $n^{\lfloor\frac{k+1}{2}\rfloor}$ 
and $n^{\lfloor\frac{k}{2}\rfloor+1}$ respectively. In the case of $\mathbb{Z}_p$ symmetry, we found that when $p$ is even  $t_{\text{max}}=2^{n-1}-1$, whereas it is infinite for odd $p$. We also studied $\SU(d)$ symmetry on qudit systems for gates acting on $k=2,3,4$ qudits.

In the case of $G=\U(1)$ and $\SU(2)$ symmetries, we introduced new bases 
$\{F_k\}$ and $\{A_k\}$ respectively for the center of $G$-invariant Hamiltonians with certain remarkable properties, namely restricted support in both frequency and spatial domains (See Figs. \ref{domains}, \ref{fig:dual-domains-U1} and \ref{fig:dual-domains-SU2}). This makes them useful for applications beyond the questions studied in this paper, e.g., in the context of Hamiltonian learning. In particular, they are useful for detecting symmetry-protected signatures of $k$-body interactions in experiments \cite{zhukas2024observation}.

\vskip 10pt

\noindent\emph{Note Added:} During the preparation of this article, we became aware of independent work by Yosuke Mitsuhashi, Ryotaro Suzuki, Tomohiro Soejima, and Nobuyuki Yoshioka, which studies similar questions and was posted on arXiv concurrently with the present paper \cite{Mitsuhashi:2024lti,Mitsuhashi:2024qlx}.  Our general result in \cref{thm:tdesign}  is equivalent to the general theorem of \cite{Mitsuhashi:2024lti,Mitsuhashi:2024qlx}, and the final results on $t_{\max}$ for $\U(1)$ and $\SU(2)$ examples are identical.  However, as discussed in \cref{Sec:kqudit}, there is freedom in applying this theorem to each example; namely, it requires a choice of basis for the center of the Lie algebra, which is indeed different in the two papers.  Each choice will result in different combinatorial identities. For the choice of bases made in the present paper, i.e., operators $\{C_l\}$  in \cref{eq:u1cls,eq:Cl-SU2} for  $\U(1)$ and $\SU(2)$ respectively, the final results rely on two combinatorial identities, namely \cref{eq:conj-Fk,eq:conj-Ak}, which were proposed as two conjectures Eq. (86) and Eq. (120) in the arXiv version v1 of the present paper. In this current version (v2), we prove stronger results, namely \cref{eq:TrFkCl,eq:TrA2sC2m} that imply \cref{eq:conj-Fk,eq:conj-Ak}, using standard techniques such as generating functions.  

Our approach uncovered new operator bases $\{F_k\}$ in the case of U(1) and $\{A_k\}$ in the case of SU(2), with various nice properties discussed in the introduction, which make them useful for other applications, such as Hamiltonian learning \cite{zhukas2024observation}.  In addition to $\U(1)$ and $\SU(2)$, our paper also finds the exact $t_{\max}$ for $\mathbb{Z}_p$ symmetry with $k$-qubit gates for arbitrary $p$ and $k$, and $\SU(d)$ symmetry for arbitrary $d\geq 3$ and $k\leq 4$. Furthermore, we also present \cref{prop1}, which gives an easy-to-calculate lower bound on $t_{\max}$.\\

\section*{Acknowledgements}
I.M. acknowledges the helpful discussion with Yosuke Mitsuhashi.  This work is supported by a collaboration between the US DOE and other Agencies. This material is based upon work supported by the U.S. Department of Energy, Office of Science, National Quantum Information Science Research Centers, Quantum Systems Accelerator. Additional support is acknowledged from Army Research Office (W911NF-21-1-0005), NSF Phy-2046195, and NSF QLCI grant OMA-2120757. H.L. was supported by the Quantum Science Center (QSC), a National Quantum Information Science Research Center of the U.S. Department of Energy (DOE) and by the U.S. Department of Energy, Office of Science, Office of Nuclear Physics (NP) contract DE-AC52-06NA25396.

\bibliography{refs}

\begin{thebibliography}{98}%
\makeatletter
\providecommand \@ifxundefined [1]{%
 \@ifx{#1\undefined}
}%
\providecommand \@ifnum [1]{%
 \ifnum #1\expandafter \@firstoftwo
 \else \expandafter \@secondoftwo
 \fi
}%
\providecommand \@ifx [1]{%
 \ifx #1\expandafter \@firstoftwo
 \else \expandafter \@secondoftwo
 \fi
}%
\providecommand \natexlab [1]{#1}%
\providecommand \enquote  [1]{``#1''}%
\providecommand \bibnamefont  [1]{#1}%
\providecommand \bibfnamefont [1]{#1}%
\providecommand \citenamefont [1]{#1}%
\providecommand \href@noop [0]{\@secondoftwo}%
\providecommand \href [0]{\begingroup \@sanitize@url \@href}%
\providecommand \@href[1]{\@@startlink{#1}\@@href}%
\providecommand \@@href[1]{\endgroup#1\@@endlink}%
\providecommand \@sanitize@url [0]{\catcode `\\12\catcode `\$12\catcode
  `\&12\catcode `\#12\catcode `\^12\catcode `\_12\catcode `\%12\relax}%
\providecommand \@@startlink[1]{}%
\providecommand \@@endlink[0]{}%
\providecommand \url  [0]{\begingroup\@sanitize@url \@url }%
\providecommand \@url [1]{\endgroup\@href {#1}{\urlprefix }}%
\providecommand \urlprefix  [0]{URL }%
\providecommand \Eprint [0]{\href }%
\providecommand \doibase [0]{https://doi.org/}%
\providecommand \selectlanguage [0]{\@gobble}%
\providecommand \bibinfo  [0]{\@secondoftwo}%
\providecommand \bibfield  [0]{\@secondoftwo}%
\providecommand \translation [1]{[#1]}%
\providecommand \BibitemOpen [0]{}%
\providecommand \bibitemStop [0]{}%
\providecommand \bibitemNoStop [0]{.\EOS\space}%
\providecommand \EOS [0]{\spacefactor3000\relax}%
\providecommand \BibitemShut  [1]{\csname bibitem#1\endcsname}%
\let\auto@bib@innerbib\@empty
\bibitem [{\citenamefont {Emerson}\ \emph {et~al.}(2003)\citenamefont
  {Emerson}, \citenamefont {Weinstein}, \citenamefont {Saraceno}, \citenamefont
  {Lloyd},\ and\ \citenamefont {Cory}}]{emerson2003pseudo}%
  \BibitemOpen
  \bibfield  {author} {\bibinfo {author} {\bibfnamefont {J.}~\bibnamefont
  {Emerson}}, \bibinfo {author} {\bibfnamefont {Y.~S.}\ \bibnamefont
  {Weinstein}}, \bibinfo {author} {\bibfnamefont {M.}~\bibnamefont {Saraceno}},
  \bibinfo {author} {\bibfnamefont {S.}~\bibnamefont {Lloyd}},\ and\ \bibinfo
  {author} {\bibfnamefont {D.~G.}\ \bibnamefont {Cory}},\ }\bibfield  {title}
  {\bibinfo {title} {Pseudo-random unitary operators for quantum information
  processing},\ }\href@noop {} {\bibfield  {journal} {\bibinfo  {journal}
  {science}\ }\textbf {\bibinfo {volume} {302}},\ \bibinfo {pages} {2098}
  (\bibinfo {year} {2003})}\BibitemShut {NoStop}%
\bibitem [{\citenamefont {Emerson}\ \emph {et~al.}(2005)\citenamefont
  {Emerson}, \citenamefont {Livine},\ and\ \citenamefont
  {Lloyd}}]{emerson2005convergence}%
  \BibitemOpen
  \bibfield  {author} {\bibinfo {author} {\bibfnamefont {J.}~\bibnamefont
  {Emerson}}, \bibinfo {author} {\bibfnamefont {E.}~\bibnamefont {Livine}},\
  and\ \bibinfo {author} {\bibfnamefont {S.}~\bibnamefont {Lloyd}},\ }\bibfield
   {title} {\bibinfo {title} {Convergence conditions for random quantum
  circuits},\ }\href@noop {} {\bibfield  {journal} {\bibinfo  {journal}
  {Physical Review A}\ }\textbf {\bibinfo {volume} {72}},\ \bibinfo {pages}
  {060302} (\bibinfo {year} {2005})}\BibitemShut {NoStop}%
\bibitem [{\citenamefont {Ambainis}\ and\ \citenamefont
  {Emerson}(2007)}]{ambainis2007quantum}%
  \BibitemOpen
  \bibfield  {author} {\bibinfo {author} {\bibfnamefont {A.}~\bibnamefont
  {Ambainis}}\ and\ \bibinfo {author} {\bibfnamefont {J.}~\bibnamefont
  {Emerson}},\ }\bibfield  {title} {\bibinfo {title} {Quantum t-designs: t-wise
  independence in the quantum world},\ }in\ \href@noop {} {\emph {\bibinfo
  {booktitle} {Twenty-Second Annual IEEE Conference on Computational Complexity
  (CCC'07)}}}\ (\bibinfo {organization} {IEEE},\ \bibinfo {year} {2007})\ pp.\
  \bibinfo {pages} {129--140}\BibitemShut {NoStop}%
\bibitem [{\citenamefont {Dankert}\ \emph {et~al.}(2009)\citenamefont
  {Dankert}, \citenamefont {Cleve}, \citenamefont {Emerson},\ and\
  \citenamefont {Livine}}]{dankert2009exact}%
  \BibitemOpen
  \bibfield  {author} {\bibinfo {author} {\bibfnamefont {C.}~\bibnamefont
  {Dankert}}, \bibinfo {author} {\bibfnamefont {R.}~\bibnamefont {Cleve}},
  \bibinfo {author} {\bibfnamefont {J.}~\bibnamefont {Emerson}},\ and\ \bibinfo
  {author} {\bibfnamefont {E.}~\bibnamefont {Livine}},\ }\bibfield  {title}
  {\bibinfo {title} {Exact and approximate unitary 2-designs and their
  application to fidelity estimation},\ }\href@noop {} {\bibfield  {journal}
  {\bibinfo  {journal} {Physical Review A—Atomic, Molecular, and Optical
  Physics}\ }\textbf {\bibinfo {volume} {80}},\ \bibinfo {pages} {012304}
  (\bibinfo {year} {2009})}\BibitemShut {NoStop}%
\bibitem [{\citenamefont {Harrow}\ and\ \citenamefont
  {Low}(2009)}]{harrow2009random}%
  \BibitemOpen
  \bibfield  {author} {\bibinfo {author} {\bibfnamefont {A.~W.}\ \bibnamefont
  {Harrow}}\ and\ \bibinfo {author} {\bibfnamefont {R.~A.}\ \bibnamefont
  {Low}},\ }\bibfield  {title} {\bibinfo {title} {Random quantum circuits are
  approximate 2-designs},\ }\href@noop {} {\bibfield  {journal} {\bibinfo
  {journal} {Communications in Mathematical Physics}\ }\textbf {\bibinfo
  {volume} {291}},\ \bibinfo {pages} {257} (\bibinfo {year}
  {2009})}\BibitemShut {NoStop}%
\bibitem [{\citenamefont {Brown}\ and\ \citenamefont
  {Viola}(2010)}]{brown2010convergence}%
  \BibitemOpen
  \bibfield  {author} {\bibinfo {author} {\bibfnamefont {W.~G.}\ \bibnamefont
  {Brown}}\ and\ \bibinfo {author} {\bibfnamefont {L.}~\bibnamefont {Viola}},\
  }\bibfield  {title} {\bibinfo {title} {Convergence rates for arbitrary
  statistical moments of random quantum circuits},\ }\href@noop {} {\bibfield
  {journal} {\bibinfo  {journal} {Physical review letters}\ }\textbf {\bibinfo
  {volume} {104}},\ \bibinfo {pages} {250501} (\bibinfo {year}
  {2010})}\BibitemShut {NoStop}%
\bibitem [{\citenamefont {Brandao}\ \emph
  {et~al.}(2016{\natexlab{a}})\citenamefont {Brandao}, \citenamefont {Harrow},\
  and\ \citenamefont {Horodecki}}]{brandao2016efficient}%
  \BibitemOpen
  \bibfield  {author} {\bibinfo {author} {\bibfnamefont {F.~G.}\ \bibnamefont
  {Brandao}}, \bibinfo {author} {\bibfnamefont {A.~W.}\ \bibnamefont
  {Harrow}},\ and\ \bibinfo {author} {\bibfnamefont {M.}~\bibnamefont
  {Horodecki}},\ }\bibfield  {title} {\bibinfo {title} {Efficient quantum
  pseudorandomness},\ }\href@noop {} {\bibfield  {journal} {\bibinfo  {journal}
  {Physical review letters}\ }\textbf {\bibinfo {volume} {116}},\ \bibinfo
  {pages} {170502} (\bibinfo {year} {2016}{\natexlab{a}})}\BibitemShut
  {NoStop}%
\bibitem [{\citenamefont {Brandao}\ \emph
  {et~al.}(2016{\natexlab{b}})\citenamefont {Brandao}, \citenamefont {Harrow},\
  and\ \citenamefont {Horodecki}}]{brandao2016local}%
  \BibitemOpen
  \bibfield  {author} {\bibinfo {author} {\bibfnamefont {F.~G.}\ \bibnamefont
  {Brandao}}, \bibinfo {author} {\bibfnamefont {A.~W.}\ \bibnamefont
  {Harrow}},\ and\ \bibinfo {author} {\bibfnamefont {M.}~\bibnamefont
  {Horodecki}},\ }\bibfield  {title} {\bibinfo {title} {Local random quantum
  circuits are approximate polynomial-designs},\ }\href@noop {} {\bibfield
  {journal} {\bibinfo  {journal} {Communications in Mathematical Physics}\
  }\textbf {\bibinfo {volume} {346}},\ \bibinfo {pages} {397} (\bibinfo {year}
  {2016}{\natexlab{b}})}\BibitemShut {NoStop}%
\bibitem [{\citenamefont {Hamma}\ \emph
  {et~al.}(2012{\natexlab{a}})\citenamefont {Hamma}, \citenamefont {Santra},\
  and\ \citenamefont {Zanardi}}]{Hamma:2012ov}%
  \BibitemOpen
  \bibfield  {author} {\bibinfo {author} {\bibfnamefont {A.}~\bibnamefont
  {Hamma}}, \bibinfo {author} {\bibfnamefont {S.}~\bibnamefont {Santra}},\ and\
  \bibinfo {author} {\bibfnamefont {P.}~\bibnamefont {Zanardi}},\ }\bibfield
  {title} {\bibinfo {title} {Ensembles of physical states and random quantum
  circuits on graphs},\ }\href
  {http://link.aps.org/doi/10.1103/PhysRevA.86.052324} {\bibfield  {journal}
  {\bibinfo  {journal} {Physical Review A}\ }\textbf {\bibinfo {volume} {86}},\
  \bibinfo {pages} {052324} (\bibinfo {year} {2012}{\natexlab{a}})}\BibitemShut
  {NoStop}%
\bibitem [{\citenamefont {Hamma}\ \emph
  {et~al.}(2012{\natexlab{b}})\citenamefont {Hamma}, \citenamefont {Santra},\
  and\ \citenamefont {Zanardi}}]{hamma2012quantum}%
  \BibitemOpen
  \bibfield  {author} {\bibinfo {author} {\bibfnamefont {A.}~\bibnamefont
  {Hamma}}, \bibinfo {author} {\bibfnamefont {S.}~\bibnamefont {Santra}},\ and\
  \bibinfo {author} {\bibfnamefont {P.}~\bibnamefont {Zanardi}},\ }\bibfield
  {title} {\bibinfo {title} {Quantum entanglement in random physical states},\
  }\href@noop {} {\bibfield  {journal} {\bibinfo  {journal} {Physical review
  letters}\ }\textbf {\bibinfo {volume} {109}},\ \bibinfo {pages} {040502}
  (\bibinfo {year} {2012}{\natexlab{b}})}\BibitemShut {NoStop}%
\bibitem [{\citenamefont {DiVincenzo}\ \emph {et~al.}(2002)\citenamefont
  {DiVincenzo}, \citenamefont {Leung},\ and\ \citenamefont
  {Terhal}}]{divincenzo2002quantum}%
  \BibitemOpen
  \bibfield  {author} {\bibinfo {author} {\bibfnamefont {D.~P.}\ \bibnamefont
  {DiVincenzo}}, \bibinfo {author} {\bibfnamefont {D.~W.}\ \bibnamefont
  {Leung}},\ and\ \bibinfo {author} {\bibfnamefont {B.~M.}\ \bibnamefont
  {Terhal}},\ }\bibfield  {title} {\bibinfo {title} {Quantum data hiding},\
  }\href@noop {} {\bibfield  {journal} {\bibinfo  {journal} {IEEE Transactions
  on Information Theory}\ }\textbf {\bibinfo {volume} {48}},\ \bibinfo {pages}
  {580} (\bibinfo {year} {2002})}\BibitemShut {NoStop}%
\bibitem [{\citenamefont {Harrow}\ \emph {et~al.}(2004)\citenamefont {Harrow},
  \citenamefont {Hayden},\ and\ \citenamefont {Leung}}]{harrow2004superdense}%
  \BibitemOpen
  \bibfield  {author} {\bibinfo {author} {\bibfnamefont {A.}~\bibnamefont
  {Harrow}}, \bibinfo {author} {\bibfnamefont {P.}~\bibnamefont {Hayden}},\
  and\ \bibinfo {author} {\bibfnamefont {D.}~\bibnamefont {Leung}},\ }\bibfield
   {title} {\bibinfo {title} {Superdense coding of quantum states},\
  }\href@noop {} {\bibfield  {journal} {\bibinfo  {journal} {Physical review
  letters}\ }\textbf {\bibinfo {volume} {92}},\ \bibinfo {pages} {187901}
  (\bibinfo {year} {2004})}\BibitemShut {NoStop}%
\bibitem [{\citenamefont {Hayden}\ \emph {et~al.}(2004)\citenamefont {Hayden},
  \citenamefont {Leung}, \citenamefont {Shor},\ and\ \citenamefont
  {Winter}}]{hayden2004randomizing}%
  \BibitemOpen
  \bibfield  {author} {\bibinfo {author} {\bibfnamefont {P.}~\bibnamefont
  {Hayden}}, \bibinfo {author} {\bibfnamefont {D.}~\bibnamefont {Leung}},
  \bibinfo {author} {\bibfnamefont {P.~W.}\ \bibnamefont {Shor}},\ and\
  \bibinfo {author} {\bibfnamefont {A.}~\bibnamefont {Winter}},\ }\bibfield
  {title} {\bibinfo {title} {Randomizing quantum states: Constructions and
  applications},\ }\href@noop {} {\bibfield  {journal} {\bibinfo  {journal}
  {Communications in Mathematical Physics}\ }\textbf {\bibinfo {volume}
  {250}},\ \bibinfo {pages} {371} (\bibinfo {year} {2004})}\BibitemShut
  {NoStop}%
\bibitem [{\citenamefont {Ambainis}\ and\ \citenamefont
  {Smith}(2004)}]{ambainis2004small}%
  \BibitemOpen
  \bibfield  {author} {\bibinfo {author} {\bibfnamefont {A.}~\bibnamefont
  {Ambainis}}\ and\ \bibinfo {author} {\bibfnamefont {A.}~\bibnamefont
  {Smith}},\ }\bibfield  {title} {\bibinfo {title} {Small pseudo-random
  families of matrices: Derandomizing approximate quantum encryption},\ }in\
  \href@noop {} {\emph {\bibinfo {booktitle} {International Workshop on
  Randomization and Approximation Techniques in Computer Science}}}\ (\bibinfo
  {organization} {Springer},\ \bibinfo {year} {2004})\ pp.\ \bibinfo {pages}
  {249--260}\BibitemShut {NoStop}%
\bibitem [{\citenamefont {Gharibyan}\ \emph {et~al.}(2018)\citenamefont
  {Gharibyan}, \citenamefont {Hanada}, \citenamefont {Shenker},\ and\
  \citenamefont {Tezuka}}]{Gharibyan_2018}%
  \BibitemOpen
  \bibfield  {author} {\bibinfo {author} {\bibfnamefont {H.}~\bibnamefont
  {Gharibyan}}, \bibinfo {author} {\bibfnamefont {M.}~\bibnamefont {Hanada}},
  \bibinfo {author} {\bibfnamefont {S.~H.}\ \bibnamefont {Shenker}},\ and\
  \bibinfo {author} {\bibfnamefont {M.}~\bibnamefont {Tezuka}},\ }\bibfield
  {title} {\bibinfo {title} {Onset of random matrix behavior in scrambling
  systems},\ }\bibfield  {journal} {\bibinfo  {journal} {Journal of High Energy
  Physics}\ }\textbf {\bibinfo {volume} {2018}},\ \href
  {https://doi.org/10.1007/jhep07(2018)124} {10.1007/jhep07(2018)124} (\bibinfo
  {year} {2018})\BibitemShut {NoStop}%
\bibitem [{\citenamefont {Nahum}\ \emph {et~al.}(2017)\citenamefont {Nahum},
  \citenamefont {Ruhman}, \citenamefont {Vijay},\ and\ \citenamefont
  {Haah}}]{nahum2017quantum}%
  \BibitemOpen
  \bibfield  {author} {\bibinfo {author} {\bibfnamefont {A.}~\bibnamefont
  {Nahum}}, \bibinfo {author} {\bibfnamefont {J.}~\bibnamefont {Ruhman}},
  \bibinfo {author} {\bibfnamefont {S.}~\bibnamefont {Vijay}},\ and\ \bibinfo
  {author} {\bibfnamefont {J.}~\bibnamefont {Haah}},\ }\bibfield  {title}
  {\bibinfo {title} {Quantum entanglement growth under random unitary
  dynamics},\ }\href@noop {} {\bibfield  {journal} {\bibinfo  {journal}
  {Physical Review X}\ }\textbf {\bibinfo {volume} {7}},\ \bibinfo {pages}
  {031016} (\bibinfo {year} {2017})}\BibitemShut {NoStop}%
\bibitem [{\citenamefont {Nahum}\ \emph {et~al.}(2018)\citenamefont {Nahum},
  \citenamefont {Vijay},\ and\ \citenamefont {Haah}}]{nahum2018operator}%
  \BibitemOpen
  \bibfield  {author} {\bibinfo {author} {\bibfnamefont {A.}~\bibnamefont
  {Nahum}}, \bibinfo {author} {\bibfnamefont {S.}~\bibnamefont {Vijay}},\ and\
  \bibinfo {author} {\bibfnamefont {J.}~\bibnamefont {Haah}},\ }\bibfield
  {title} {\bibinfo {title} {Operator spreading in random unitary circuits},\
  }\href@noop {} {\bibfield  {journal} {\bibinfo  {journal} {Physical Review
  X}\ }\textbf {\bibinfo {volume} {8}},\ \bibinfo {pages} {021014} (\bibinfo
  {year} {2018})}\BibitemShut {NoStop}%
\bibitem [{\citenamefont {Von~Keyserlingk}\ \emph {et~al.}(2018)\citenamefont
  {Von~Keyserlingk}, \citenamefont {Rakovszky}, \citenamefont {Pollmann},\ and\
  \citenamefont {Sondhi}}]{von2018operator}%
  \BibitemOpen
  \bibfield  {author} {\bibinfo {author} {\bibfnamefont {C.}~\bibnamefont
  {Von~Keyserlingk}}, \bibinfo {author} {\bibfnamefont {T.}~\bibnamefont
  {Rakovszky}}, \bibinfo {author} {\bibfnamefont {F.}~\bibnamefont
  {Pollmann}},\ and\ \bibinfo {author} {\bibfnamefont {S.~L.}\ \bibnamefont
  {Sondhi}},\ }\bibfield  {title} {\bibinfo {title} {Operator hydrodynamics,
  otocs, and entanglement growth in systems without conservation laws},\
  }\href@noop {} {\bibfield  {journal} {\bibinfo  {journal} {Physical Review
  X}\ }\textbf {\bibinfo {volume} {8}},\ \bibinfo {pages} {021013} (\bibinfo
  {year} {2018})}\BibitemShut {NoStop}%
\bibitem [{\citenamefont {Maldacena}\ \emph {et~al.}(2016)\citenamefont
  {Maldacena}, \citenamefont {Shenker},\ and\ \citenamefont
  {Stanford}}]{maldacena2016bound}%
  \BibitemOpen
  \bibfield  {author} {\bibinfo {author} {\bibfnamefont {J.}~\bibnamefont
  {Maldacena}}, \bibinfo {author} {\bibfnamefont {S.~H.}\ \bibnamefont
  {Shenker}},\ and\ \bibinfo {author} {\bibfnamefont {D.}~\bibnamefont
  {Stanford}},\ }\bibfield  {title} {\bibinfo {title} {A bound on chaos},\
  }\href@noop {} {\bibfield  {journal} {\bibinfo  {journal} {Journal of High
  Energy Physics}\ }\textbf {\bibinfo {volume} {2016}},\ \bibinfo {pages} {106}
  (\bibinfo {year} {2016})}\BibitemShut {NoStop}%
\bibitem [{\citenamefont {Shenker}\ and\ \citenamefont
  {Stanford}(2014)}]{shenker2014black}%
  \BibitemOpen
  \bibfield  {author} {\bibinfo {author} {\bibfnamefont {S.~H.}\ \bibnamefont
  {Shenker}}\ and\ \bibinfo {author} {\bibfnamefont {D.}~\bibnamefont
  {Stanford}},\ }\bibfield  {title} {\bibinfo {title} {Black holes and the
  butterfly effect},\ }\href@noop {} {\bibfield  {journal} {\bibinfo  {journal}
  {Journal of High Energy Physics}\ }\textbf {\bibinfo {volume} {2014}},\
  \bibinfo {pages} {67} (\bibinfo {year} {2014})}\BibitemShut {NoStop}%
\bibitem [{\citenamefont {Cotler}\ \emph {et~al.}(2017)\citenamefont {Cotler},
  \citenamefont {Hunter-Jones}, \citenamefont {Liu},\ and\ \citenamefont
  {Yoshida}}]{cotler2017chaos}%
  \BibitemOpen
  \bibfield  {author} {\bibinfo {author} {\bibfnamefont {J.}~\bibnamefont
  {Cotler}}, \bibinfo {author} {\bibfnamefont {N.}~\bibnamefont
  {Hunter-Jones}}, \bibinfo {author} {\bibfnamefont {J.}~\bibnamefont {Liu}},\
  and\ \bibinfo {author} {\bibfnamefont {B.}~\bibnamefont {Yoshida}},\
  }\bibfield  {title} {\bibinfo {title} {Chaos, complexity, and random
  matrices},\ }\href@noop {} {\bibfield  {journal} {\bibinfo  {journal}
  {Journal of High Energy Physics}\ }\textbf {\bibinfo {volume} {2017}},\
  \bibinfo {pages} {48} (\bibinfo {year} {2017})}\BibitemShut {NoStop}%
\bibitem [{\citenamefont {Roberts}\ and\ \citenamefont
  {Yoshida}(2017)}]{roberts2017chaos}%
  \BibitemOpen
  \bibfield  {author} {\bibinfo {author} {\bibfnamefont {D.~A.}\ \bibnamefont
  {Roberts}}\ and\ \bibinfo {author} {\bibfnamefont {B.}~\bibnamefont
  {Yoshida}},\ }\bibfield  {title} {\bibinfo {title} {Chaos and complexity by
  design},\ }\href@noop {} {\bibfield  {journal} {\bibinfo  {journal} {Journal
  of High Energy Physics}\ }\textbf {\bibinfo {volume} {2017}},\ \bibinfo
  {pages} {121} (\bibinfo {year} {2017})}\BibitemShut {NoStop}%
\bibitem [{\citenamefont {Brand{\~a}o}\ \emph {et~al.}(2019)\citenamefont
  {Brand{\~a}o}, \citenamefont {Chemissany}, \citenamefont {Hunter-Jones},
  \citenamefont {Kueng},\ and\ \citenamefont {Preskill}}]{brandao2019models}%
  \BibitemOpen
  \bibfield  {author} {\bibinfo {author} {\bibfnamefont {F.~G.}\ \bibnamefont
  {Brand{\~a}o}}, \bibinfo {author} {\bibfnamefont {W.}~\bibnamefont
  {Chemissany}}, \bibinfo {author} {\bibfnamefont {N.}~\bibnamefont
  {Hunter-Jones}}, \bibinfo {author} {\bibfnamefont {R.}~\bibnamefont
  {Kueng}},\ and\ \bibinfo {author} {\bibfnamefont {J.}~\bibnamefont
  {Preskill}},\ }\bibfield  {title} {\bibinfo {title} {Models of quantum
  complexity growth},\ }\href@noop {} {\bibfield  {journal} {\bibinfo
  {journal} {arXiv preprint arXiv:1912.04297}\ } (\bibinfo {year}
  {2019})}\BibitemShut {NoStop}%
\bibitem [{\citenamefont {Liu}\ \emph {et~al.}(2018{\natexlab{a}})\citenamefont
  {Liu}, \citenamefont {Lloyd}, \citenamefont {Zhu},\ and\ \citenamefont
  {Zhu}}]{Liu_2018}%
  \BibitemOpen
  \bibfield  {author} {\bibinfo {author} {\bibfnamefont {Z.-W.}\ \bibnamefont
  {Liu}}, \bibinfo {author} {\bibfnamefont {S.}~\bibnamefont {Lloyd}}, \bibinfo
  {author} {\bibfnamefont {E.}~\bibnamefont {Zhu}},\ and\ \bibinfo {author}
  {\bibfnamefont {H.}~\bibnamefont {Zhu}},\ }\bibfield  {title} {\bibinfo
  {title} {Entanglement, quantum randomness, and complexity beyond
  scrambling},\ }\bibfield  {journal} {\bibinfo  {journal} {Journal of High
  Energy Physics}\ }\textbf {\bibinfo {volume} {2018}},\ \href
  {https://doi.org/10.1007/jhep07(2018)041} {10.1007/jhep07(2018)041} (\bibinfo
  {year} {2018}{\natexlab{a}})\BibitemShut {NoStop}%
\bibitem [{\citenamefont {Liu}\ \emph {et~al.}(2018{\natexlab{b}})\citenamefont
  {Liu}, \citenamefont {Lloyd}, \citenamefont {Zhu},\ and\ \citenamefont
  {Zhu}}]{Liu1_2018}%
  \BibitemOpen
  \bibfield  {author} {\bibinfo {author} {\bibfnamefont {Z.-W.}\ \bibnamefont
  {Liu}}, \bibinfo {author} {\bibfnamefont {S.}~\bibnamefont {Lloyd}}, \bibinfo
  {author} {\bibfnamefont {E.~Y.}\ \bibnamefont {Zhu}},\ and\ \bibinfo {author}
  {\bibfnamefont {H.}~\bibnamefont {Zhu}},\ }\bibfield  {title} {\bibinfo
  {title} {Generalized entanglement entropies of quantum designs},\ }\bibfield
  {journal} {\bibinfo  {journal} {Physical Review Letters}\ }\textbf {\bibinfo
  {volume} {120}},\ \href {https://doi.org/10.1103/physrevlett.120.130502}
  {10.1103/physrevlett.120.130502} (\bibinfo {year}
  {2018}{\natexlab{b}})\BibitemShut {NoStop}%
\bibitem [{\citenamefont {Susskind}(2016)}]{susskind2016computational}%
  \BibitemOpen
  \bibfield  {author} {\bibinfo {author} {\bibfnamefont {L.}~\bibnamefont
  {Susskind}},\ }\bibfield  {title} {\bibinfo {title} {Computational complexity
  and black hole horizons},\ }\href@noop {} {\bibfield  {journal} {\bibinfo
  {journal} {Fortschritte der Physik}\ }\textbf {\bibinfo {volume} {64}},\
  \bibinfo {pages} {24} (\bibinfo {year} {2016})}\BibitemShut {NoStop}%
\bibitem [{\citenamefont {Brown}\ \emph {et~al.}(2016)\citenamefont {Brown},
  \citenamefont {Roberts}, \citenamefont {Susskind}, \citenamefont {Swingle},\
  and\ \citenamefont {Zhao}}]{brown2016holographic}%
  \BibitemOpen
  \bibfield  {author} {\bibinfo {author} {\bibfnamefont {A.~R.}\ \bibnamefont
  {Brown}}, \bibinfo {author} {\bibfnamefont {D.~A.}\ \bibnamefont {Roberts}},
  \bibinfo {author} {\bibfnamefont {L.}~\bibnamefont {Susskind}}, \bibinfo
  {author} {\bibfnamefont {B.}~\bibnamefont {Swingle}},\ and\ \bibinfo {author}
  {\bibfnamefont {Y.}~\bibnamefont {Zhao}},\ }\bibfield  {title} {\bibinfo
  {title} {Holographic complexity equals bulk action?},\ }\href@noop {}
  {\bibfield  {journal} {\bibinfo  {journal} {Physical review letters}\
  }\textbf {\bibinfo {volume} {116}},\ \bibinfo {pages} {191301} (\bibinfo
  {year} {2016})}\BibitemShut {NoStop}%
\bibitem [{\citenamefont {Stanford}\ and\ \citenamefont
  {Susskind}(2014)}]{stanford2014complexity}%
  \BibitemOpen
  \bibfield  {author} {\bibinfo {author} {\bibfnamefont {D.}~\bibnamefont
  {Stanford}}\ and\ \bibinfo {author} {\bibfnamefont {L.}~\bibnamefont
  {Susskind}},\ }\bibfield  {title} {\bibinfo {title} {Complexity and shock
  wave geometries},\ }\href@noop {} {\bibfield  {journal} {\bibinfo  {journal}
  {Physical Review D}\ }\textbf {\bibinfo {volume} {90}},\ \bibinfo {pages}
  {126007} (\bibinfo {year} {2014})}\BibitemShut {NoStop}%
\bibitem [{\citenamefont {Lidar}\ \emph {et~al.}(1998)\citenamefont {Lidar},
  \citenamefont {Chuang},\ and\ \citenamefont {Whaley}}]{lidar1998decoherence}%
  \BibitemOpen
  \bibfield  {author} {\bibinfo {author} {\bibfnamefont {D.~A.}\ \bibnamefont
  {Lidar}}, \bibinfo {author} {\bibfnamefont {I.~L.}\ \bibnamefont {Chuang}},\
  and\ \bibinfo {author} {\bibfnamefont {K.~B.}\ \bibnamefont {Whaley}},\
  }\bibfield  {title} {\bibinfo {title} {Decoherence-free subspaces for quantum
  computation},\ }\href@noop {} {\bibfield  {journal} {\bibinfo  {journal}
  {Physical Review Letters}\ }\textbf {\bibinfo {volume} {81}},\ \bibinfo
  {pages} {2594} (\bibinfo {year} {1998})}\BibitemShut {NoStop}%
\bibitem [{\citenamefont {Bacon}\ \emph {et~al.}(2000)\citenamefont {Bacon},
  \citenamefont {Kempe}, \citenamefont {Lidar},\ and\ \citenamefont
  {Whaley}}]{Bacon:2000qf}%
  \BibitemOpen
  \bibfield  {author} {\bibinfo {author} {\bibfnamefont {D.}~\bibnamefont
  {Bacon}}, \bibinfo {author} {\bibfnamefont {J.}~\bibnamefont {Kempe}},
  \bibinfo {author} {\bibfnamefont {D.~A.}\ \bibnamefont {Lidar}},\ and\
  \bibinfo {author} {\bibfnamefont {K.~B.}\ \bibnamefont {Whaley}},\ }\bibfield
   {title} {\bibinfo {title} {Universal fault-tolerant quantum computation on
  decoherence-free subspaces},\ }\href
  {http://link.aps.org/doi/10.1103/PhysRevLett.85.1758} {\bibfield  {journal}
  {\bibinfo  {journal} {{Phys.~Rev.~Lett.}}\ }\textbf {\bibinfo {volume}
  {85}},\ \bibinfo {pages} {1758} (\bibinfo {year} {2000})}\BibitemShut
  {NoStop}%
\bibitem [{\citenamefont {DiVincenzo}\ \emph {et~al.}(2000)\citenamefont
  {DiVincenzo}, \citenamefont {Bacon}, \citenamefont {Kempe}, \citenamefont
  {Burkard},\ and\ \citenamefont {Whaley}}]{divincenzo2000universal}%
  \BibitemOpen
  \bibfield  {author} {\bibinfo {author} {\bibfnamefont {D.~P.}\ \bibnamefont
  {DiVincenzo}}, \bibinfo {author} {\bibfnamefont {D.}~\bibnamefont {Bacon}},
  \bibinfo {author} {\bibfnamefont {J.}~\bibnamefont {Kempe}}, \bibinfo
  {author} {\bibfnamefont {G.}~\bibnamefont {Burkard}},\ and\ \bibinfo {author}
  {\bibfnamefont {K.~B.}\ \bibnamefont {Whaley}},\ }\bibfield  {title}
  {\bibinfo {title} {Universal quantum computation with the exchange
  interaction},\ }\href@noop {} {\bibfield  {journal} {\bibinfo  {journal}
  {nature}\ }\textbf {\bibinfo {volume} {408}},\ \bibinfo {pages} {339}
  (\bibinfo {year} {2000})}\BibitemShut {NoStop}%
\bibitem [{\citenamefont {Zanardi}\ and\ \citenamefont
  {Rasetti}(1997)}]{Zanardi:97c}%
  \BibitemOpen
  \bibfield  {author} {\bibinfo {author} {\bibfnamefont {P.}~\bibnamefont
  {Zanardi}}\ and\ \bibinfo {author} {\bibfnamefont {M.}~\bibnamefont
  {Rasetti}},\ }\bibfield  {title} {\bibinfo {title} {Noiseless quantum
  codes},\ }\href@noop {} {\bibfield  {journal} {\bibinfo  {journal} {Phys.
  Rev. Lett.}\ }\textbf {\bibinfo {volume} {79}},\ \bibinfo {pages} {3306}
  (\bibinfo {year} {1997})}\BibitemShut {NoStop}%
\bibitem [{\citenamefont {Kempe}\ and\ \citenamefont
  {Whaley}(2002)}]{kempe2002exact}%
  \BibitemOpen
  \bibfield  {author} {\bibinfo {author} {\bibfnamefont {J.}~\bibnamefont
  {Kempe}}\ and\ \bibinfo {author} {\bibfnamefont {K.~B.}\ \bibnamefont
  {Whaley}},\ }\bibfield  {title} {\bibinfo {title} {Exact gate sequences for
  universal quantum computation using the xy interaction alone},\ }\href@noop
  {} {\bibfield  {journal} {\bibinfo  {journal} {Physical Review A}\ }\textbf
  {\bibinfo {volume} {65}},\ \bibinfo {pages} {052330} (\bibinfo {year}
  {2002})}\BibitemShut {NoStop}%
\bibitem [{\citenamefont {Brod}\ and\ \citenamefont
  {Childs}(2013)}]{brod2013computational}%
  \BibitemOpen
  \bibfield  {author} {\bibinfo {author} {\bibfnamefont {D.~J.}\ \bibnamefont
  {Brod}}\ and\ \bibinfo {author} {\bibfnamefont {A.~M.}\ \bibnamefont
  {Childs}},\ }\bibfield  {title} {\bibinfo {title} {The computational power of
  matchgates and the xy interaction on arbitrary graphs},\ }\href@noop {}
  {\bibfield  {journal} {\bibinfo  {journal} {arXiv preprint arXiv:1308.1463}\
  } (\bibinfo {year} {2013})}\BibitemShut {NoStop}%
\bibitem [{\citenamefont {Janzing}\ \emph {et~al.}(2000)\citenamefont
  {Janzing}, \citenamefont {Wocjan}, \citenamefont {Zeier}, \citenamefont
  {Geiss},\ and\ \citenamefont {Beth}}]{janzing2000thermodynamic}%
  \BibitemOpen
  \bibfield  {author} {\bibinfo {author} {\bibfnamefont {D.}~\bibnamefont
  {Janzing}}, \bibinfo {author} {\bibfnamefont {P.}~\bibnamefont {Wocjan}},
  \bibinfo {author} {\bibfnamefont {R.}~\bibnamefont {Zeier}}, \bibinfo
  {author} {\bibfnamefont {R.}~\bibnamefont {Geiss}},\ and\ \bibinfo {author}
  {\bibfnamefont {T.}~\bibnamefont {Beth}},\ }\bibfield  {title} {\bibinfo
  {title} {{Thermodynamic cost of reliability and low temperatures: tightening
  Landauer's principle and the Second Law}},\ }\href@noop {} {\bibfield
  {journal} {\bibinfo  {journal} {Int. J. Theor. Phys.}\ }\textbf {\bibinfo
  {volume} {39}},\ \bibinfo {pages} {2717} (\bibinfo {year}
  {2000})}\BibitemShut {NoStop}%
\bibitem [{\citenamefont {Horodecki}\ and\ \citenamefont
  {Oppenheim}(2013)}]{FundLimitsNature}%
  \BibitemOpen
  \bibfield  {author} {\bibinfo {author} {\bibfnamefont {M.}~\bibnamefont
  {Horodecki}}\ and\ \bibinfo {author} {\bibfnamefont {J.}~\bibnamefont
  {Oppenheim}},\ }\bibfield  {title} {\bibinfo {title} {{Fundamental
  limitations for quantum and nanoscale thermodynamics}},\ }\href@noop {}
  {\bibfield  {journal} {\bibinfo  {journal} {Nat. Commun.}\ }\textbf {\bibinfo
  {volume} {4}},\ \bibinfo {pages} {1} (\bibinfo {year} {2013})}\BibitemShut
  {NoStop}%
\bibitem [{\citenamefont {Brandao}\ \emph {et~al.}(2013)\citenamefont
  {Brandao}, \citenamefont {Horodecki}, \citenamefont {Oppenheim},
  \citenamefont {Renes},\ and\ \citenamefont {Spekkens}}]{brandao2013resource}%
  \BibitemOpen
  \bibfield  {author} {\bibinfo {author} {\bibfnamefont {F.~G.}\ \bibnamefont
  {Brandao}}, \bibinfo {author} {\bibfnamefont {M.}~\bibnamefont {Horodecki}},
  \bibinfo {author} {\bibfnamefont {J.}~\bibnamefont {Oppenheim}}, \bibinfo
  {author} {\bibfnamefont {J.~M.}\ \bibnamefont {Renes}},\ and\ \bibinfo
  {author} {\bibfnamefont {R.~W.}\ \bibnamefont {Spekkens}},\ }\bibfield
  {title} {\bibinfo {title} {Resource theory of quantum states out of thermal
  equilibrium},\ }\href@noop {} {\bibfield  {journal} {\bibinfo  {journal}
  {Physical review letters}\ }\textbf {\bibinfo {volume} {111}},\ \bibinfo
  {pages} {250404} (\bibinfo {year} {2013})}\BibitemShut {NoStop}%
\bibitem [{\citenamefont {Guryanova}\ \emph {et~al.}(2016)\citenamefont
  {Guryanova}, \citenamefont {Popescu}, \citenamefont {Short}, \citenamefont
  {Silva},\ and\ \citenamefont {Skrzypczyk}}]{guryanova2016thermodynamics}%
  \BibitemOpen
  \bibfield  {author} {\bibinfo {author} {\bibfnamefont {Y.}~\bibnamefont
  {Guryanova}}, \bibinfo {author} {\bibfnamefont {S.}~\bibnamefont {Popescu}},
  \bibinfo {author} {\bibfnamefont {A.~J.}\ \bibnamefont {Short}}, \bibinfo
  {author} {\bibfnamefont {R.}~\bibnamefont {Silva}},\ and\ \bibinfo {author}
  {\bibfnamefont {P.}~\bibnamefont {Skrzypczyk}},\ }\bibfield  {title}
  {\bibinfo {title} {Thermodynamics of quantum systems with multiple conserved
  quantities},\ }\href@noop {} {\bibfield  {journal} {\bibinfo  {journal}
  {Nature communications}\ }\textbf {\bibinfo {volume} {7}},\ \bibinfo {pages}
  {ncomms12049} (\bibinfo {year} {2016})}\BibitemShut {NoStop}%
\bibitem [{\citenamefont {Lostaglio}\ \emph {et~al.}(2015)\citenamefont
  {Lostaglio}, \citenamefont {Korzekwa}, \citenamefont {Jennings},\ and\
  \citenamefont {Rudolph}}]{lostaglio2015quantumPRX}%
  \BibitemOpen
  \bibfield  {author} {\bibinfo {author} {\bibfnamefont {M.}~\bibnamefont
  {Lostaglio}}, \bibinfo {author} {\bibfnamefont {K.}~\bibnamefont {Korzekwa}},
  \bibinfo {author} {\bibfnamefont {D.}~\bibnamefont {Jennings}},\ and\
  \bibinfo {author} {\bibfnamefont {T.}~\bibnamefont {Rudolph}},\ }\bibfield
  {title} {\bibinfo {title} {Quantum coherence, time-translation symmetry, and
  thermodynamics},\ }\href@noop {} {\bibfield  {journal} {\bibinfo  {journal}
  {Physical Review X}\ }\textbf {\bibinfo {volume} {5}},\ \bibinfo {pages}
  {021001} (\bibinfo {year} {2015})}\BibitemShut {NoStop}%
\bibitem [{\citenamefont {Halpern}\ \emph {et~al.}(2016)\citenamefont
  {Halpern}, \citenamefont {Faist}, \citenamefont {Oppenheim},\ and\
  \citenamefont {Winter}}]{halpern2016microcanonical}%
  \BibitemOpen
  \bibfield  {author} {\bibinfo {author} {\bibfnamefont {N.~Y.}\ \bibnamefont
  {Halpern}}, \bibinfo {author} {\bibfnamefont {P.}~\bibnamefont {Faist}},
  \bibinfo {author} {\bibfnamefont {J.}~\bibnamefont {Oppenheim}},\ and\
  \bibinfo {author} {\bibfnamefont {A.}~\bibnamefont {Winter}},\ }\bibfield
  {title} {\bibinfo {title} {Microcanonical and resource-theoretic derivations
  of the thermal state of a quantum system with noncommuting charges},\
  }\href@noop {} {\bibfield  {journal} {\bibinfo  {journal} {Nature
  communications}\ }\textbf {\bibinfo {volume} {7}},\ \bibinfo {pages} {12051}
  (\bibinfo {year} {2016})}\BibitemShut {NoStop}%
\bibitem [{\citenamefont {Halpern}\ and\ \citenamefont
  {Renes}(2016)}]{halpern2016beyond}%
  \BibitemOpen
  \bibfield  {author} {\bibinfo {author} {\bibfnamefont {N.~Y.}\ \bibnamefont
  {Halpern}}\ and\ \bibinfo {author} {\bibfnamefont {J.~M.}\ \bibnamefont
  {Renes}},\ }\bibfield  {title} {\bibinfo {title} {Beyond heat baths:
  Generalized resource theories for small-scale thermodynamics},\ }\href@noop
  {} {\bibfield  {journal} {\bibinfo  {journal} {Physical Review E}\ }\textbf
  {\bibinfo {volume} {93}},\ \bibinfo {pages} {022126} (\bibinfo {year}
  {2016})}\BibitemShut {NoStop}%
\bibitem [{\citenamefont {Lostaglio}\ \emph {et~al.}(2017)\citenamefont
  {Lostaglio}, \citenamefont {Jennings},\ and\ \citenamefont
  {Rudolph}}]{lostaglio2017thermodynamic}%
  \BibitemOpen
  \bibfield  {author} {\bibinfo {author} {\bibfnamefont {M.}~\bibnamefont
  {Lostaglio}}, \bibinfo {author} {\bibfnamefont {D.}~\bibnamefont
  {Jennings}},\ and\ \bibinfo {author} {\bibfnamefont {T.}~\bibnamefont
  {Rudolph}},\ }\bibfield  {title} {\bibinfo {title} {Thermodynamic resource
  theories, non-commutativity and maximum entropy principles},\ }\href@noop {}
  {\bibfield  {journal} {\bibinfo  {journal} {New Journal of Physics}\ }\textbf
  {\bibinfo {volume} {19}},\ \bibinfo {pages} {043008} (\bibinfo {year}
  {2017})}\BibitemShut {NoStop}%
\bibitem [{\citenamefont {Gour}\ and\ \citenamefont
  {Spekkens}(2008)}]{gour2008resource}%
  \BibitemOpen
  \bibfield  {author} {\bibinfo {author} {\bibfnamefont {G.}~\bibnamefont
  {Gour}}\ and\ \bibinfo {author} {\bibfnamefont {R.~W.}\ \bibnamefont
  {Spekkens}},\ }\bibfield  {title} {\bibinfo {title} {The resource theory of
  quantum reference frames: manipulations and monotones},\ }\href@noop {}
  {\bibfield  {journal} {\bibinfo  {journal} {New Journal of Physics}\ }\textbf
  {\bibinfo {volume} {10}},\ \bibinfo {pages} {033023} (\bibinfo {year}
  {2008})}\BibitemShut {NoStop}%
\bibitem [{\citenamefont {Marvian}(2012)}]{Marvian_thesis}%
  \BibitemOpen
  \bibfield  {author} {\bibinfo {author} {\bibfnamefont {I.}~\bibnamefont
  {Marvian}},\ }\emph {\bibinfo {title} {Symmetry, Asymmetry and Quantum
  Information, PhD thesis}},\ \href@noop {} {Ph.D. thesis},\ \bibinfo  {school}
  {University of Waterloo}, \bibinfo {address}
  {https://uwspace.uwaterloo.ca/handle/10012/7088} (\bibinfo {year}
  {2012})\BibitemShut {NoStop}%
\bibitem [{\citenamefont {Marvian}\ and\ \citenamefont
  {Spekkens}(2013)}]{marvian2013theory}%
  \BibitemOpen
  \bibfield  {author} {\bibinfo {author} {\bibfnamefont {I.}~\bibnamefont
  {Marvian}}\ and\ \bibinfo {author} {\bibfnamefont {R.~W.}\ \bibnamefont
  {Spekkens}},\ }\bibfield  {title} {\bibinfo {title} {The theory of
  manipulations of pure state asymmetry: I. basic tools, equivalence classes
  and single copy transformations},\ }\href@noop {} {\bibfield  {journal}
  {\bibinfo  {journal} {New Journal of Physics}\ }\textbf {\bibinfo {volume}
  {15}},\ \bibinfo {pages} {033001} (\bibinfo {year} {2013})}\BibitemShut
  {NoStop}%
\bibitem [{\citenamefont {Bartlett}\ \emph {et~al.}(2007)\citenamefont
  {Bartlett}, \citenamefont {Rudolph},\ and\ \citenamefont
  {Spekkens}}]{QRF_BRS_07}%
  \BibitemOpen
  \bibfield  {author} {\bibinfo {author} {\bibfnamefont {S.~D.}\ \bibnamefont
  {Bartlett}}, \bibinfo {author} {\bibfnamefont {T.}~\bibnamefont {Rudolph}},\
  and\ \bibinfo {author} {\bibfnamefont {R.~W.}\ \bibnamefont {Spekkens}},\
  }\bibfield  {title} {\bibinfo {title} {Reference frames, superselection
  rules, and quantum information},\ }\href@noop {} {\bibfield  {journal}
  {\bibinfo  {journal} {Reviews of Modern Physics}\ }\textbf {\bibinfo {volume}
  {79}},\ \bibinfo {pages} {555} (\bibinfo {year} {2007})}\BibitemShut
  {NoStop}%
\bibitem [{\citenamefont {Marvian}\ and\ \citenamefont
  {Mann}(2008)}]{marvian2008building}%
  \BibitemOpen
  \bibfield  {author} {\bibinfo {author} {\bibfnamefont {I.}~\bibnamefont
  {Marvian}}\ and\ \bibinfo {author} {\bibfnamefont {R.}~\bibnamefont {Mann}},\
  }\bibfield  {title} {\bibinfo {title} {Building all time evolutions with
  rotationally invariant hamiltonians},\ }\href@noop {} {\bibfield  {journal}
  {\bibinfo  {journal} {Physical Review A}\ }\textbf {\bibinfo {volume} {78}},\
  \bibinfo {pages} {022304} (\bibinfo {year} {2008})}\BibitemShut {NoStop}%
\bibitem [{\citenamefont {Faist}\ \emph {et~al.}(2020)\citenamefont {Faist},
  \citenamefont {Nezami}, \citenamefont {Albert}, \citenamefont {Salton},
  \citenamefont {Pastawski}, \citenamefont {Hayden},\ and\ \citenamefont
  {Preskill}}]{faist2020continuous}%
  \BibitemOpen
  \bibfield  {author} {\bibinfo {author} {\bibfnamefont {P.}~\bibnamefont
  {Faist}}, \bibinfo {author} {\bibfnamefont {S.}~\bibnamefont {Nezami}},
  \bibinfo {author} {\bibfnamefont {V.~V.}\ \bibnamefont {Albert}}, \bibinfo
  {author} {\bibfnamefont {G.}~\bibnamefont {Salton}}, \bibinfo {author}
  {\bibfnamefont {F.}~\bibnamefont {Pastawski}}, \bibinfo {author}
  {\bibfnamefont {P.}~\bibnamefont {Hayden}},\ and\ \bibinfo {author}
  {\bibfnamefont {J.}~\bibnamefont {Preskill}},\ }\bibfield  {title} {\bibinfo
  {title} {Continuous symmetries and approximate quantum error correction},\
  }\href@noop {} {\bibfield  {journal} {\bibinfo  {journal} {Physical Review
  X}\ }\textbf {\bibinfo {volume} {10}},\ \bibinfo {pages} {041018} (\bibinfo
  {year} {2020})}\BibitemShut {NoStop}%
\bibitem [{\citenamefont {Hayden}\ \emph {et~al.}(2021)\citenamefont {Hayden},
  \citenamefont {Nezami}, \citenamefont {Popescu},\ and\ \citenamefont
  {Salton}}]{hayden2021error}%
  \BibitemOpen
  \bibfield  {author} {\bibinfo {author} {\bibfnamefont {P.}~\bibnamefont
  {Hayden}}, \bibinfo {author} {\bibfnamefont {S.}~\bibnamefont {Nezami}},
  \bibinfo {author} {\bibfnamefont {S.}~\bibnamefont {Popescu}},\ and\ \bibinfo
  {author} {\bibfnamefont {G.}~\bibnamefont {Salton}},\ }\bibfield  {title}
  {\bibinfo {title} {Error correction of quantum reference frame information},\
  }\href@noop {} {\bibfield  {journal} {\bibinfo  {journal} {PRX Quantum}\
  }\textbf {\bibinfo {volume} {2}},\ \bibinfo {pages} {010326} (\bibinfo {year}
  {2021})}\BibitemShut {NoStop}%
\bibitem [{\citenamefont {Woods}\ and\ \citenamefont
  {Alhambra}(2020)}]{woods2020continuous}%
  \BibitemOpen
  \bibfield  {author} {\bibinfo {author} {\bibfnamefont {M.~P.}\ \bibnamefont
  {Woods}}\ and\ \bibinfo {author} {\bibfnamefont {{\'A}.~M.}\ \bibnamefont
  {Alhambra}},\ }\bibfield  {title} {\bibinfo {title} {Continuous groups of
  transversal gates for quantum error correcting codes from finite clock
  reference frames},\ }\href@noop {} {\bibfield  {journal} {\bibinfo  {journal}
  {Quantum}\ }\textbf {\bibinfo {volume} {4}},\ \bibinfo {pages} {245}
  (\bibinfo {year} {2020})}\BibitemShut {NoStop}%
\bibitem [{\citenamefont {Meyer}\ \emph {et~al.}(2023)\citenamefont {Meyer},
  \citenamefont {Mularski}, \citenamefont {Gil-Fuster}, \citenamefont {Mele},
  \citenamefont {Arzani}, \citenamefont {Wilms},\ and\ \citenamefont
  {Eisert}}]{meyer2023exploiting}%
  \BibitemOpen
  \bibfield  {author} {\bibinfo {author} {\bibfnamefont {J.~J.}\ \bibnamefont
  {Meyer}}, \bibinfo {author} {\bibfnamefont {M.}~\bibnamefont {Mularski}},
  \bibinfo {author} {\bibfnamefont {E.}~\bibnamefont {Gil-Fuster}}, \bibinfo
  {author} {\bibfnamefont {A.~A.}\ \bibnamefont {Mele}}, \bibinfo {author}
  {\bibfnamefont {F.}~\bibnamefont {Arzani}}, \bibinfo {author} {\bibfnamefont
  {A.}~\bibnamefont {Wilms}},\ and\ \bibinfo {author} {\bibfnamefont
  {J.}~\bibnamefont {Eisert}},\ }\bibfield  {title} {\bibinfo {title}
  {Exploiting symmetry in variational quantum machine learning},\ }\href@noop
  {} {\bibfield  {journal} {\bibinfo  {journal} {PRX Quantum}\ }\textbf
  {\bibinfo {volume} {4}},\ \bibinfo {pages} {010328} (\bibinfo {year}
  {2023})}\BibitemShut {NoStop}%
\bibitem [{\citenamefont {Nguyen}\ \emph {et~al.}(2022)\citenamefont {Nguyen},
  \citenamefont {Schatzki}, \citenamefont {Braccia}, \citenamefont {Ragone},
  \citenamefont {Coles}, \citenamefont {Sauvage}, \citenamefont {Larocca},\
  and\ \citenamefont {Cerezo}}]{nguyen2022theory}%
  \BibitemOpen
  \bibfield  {author} {\bibinfo {author} {\bibfnamefont {Q.~T.}\ \bibnamefont
  {Nguyen}}, \bibinfo {author} {\bibfnamefont {L.}~\bibnamefont {Schatzki}},
  \bibinfo {author} {\bibfnamefont {P.}~\bibnamefont {Braccia}}, \bibinfo
  {author} {\bibfnamefont {M.}~\bibnamefont {Ragone}}, \bibinfo {author}
  {\bibfnamefont {P.~J.}\ \bibnamefont {Coles}}, \bibinfo {author}
  {\bibfnamefont {F.}~\bibnamefont {Sauvage}}, \bibinfo {author} {\bibfnamefont
  {M.}~\bibnamefont {Larocca}},\ and\ \bibinfo {author} {\bibfnamefont
  {M.}~\bibnamefont {Cerezo}},\ }\bibfield  {title} {\bibinfo {title} {Theory
  for equivariant quantum neural networks},\ }\href@noop {} {\bibfield
  {journal} {\bibinfo  {journal} {arXiv preprint arXiv:2210.08566}\ } (\bibinfo
  {year} {2022})}\BibitemShut {NoStop}%
\bibitem [{\citenamefont {Sauvage}\ \emph {et~al.}(2022)\citenamefont
  {Sauvage}, \citenamefont {Larocca}, \citenamefont {Coles},\ and\
  \citenamefont {Cerezo}}]{sauvage2022building}%
  \BibitemOpen
  \bibfield  {author} {\bibinfo {author} {\bibfnamefont {F.}~\bibnamefont
  {Sauvage}}, \bibinfo {author} {\bibfnamefont {M.}~\bibnamefont {Larocca}},
  \bibinfo {author} {\bibfnamefont {P.~J.}\ \bibnamefont {Coles}},\ and\
  \bibinfo {author} {\bibfnamefont {M.}~\bibnamefont {Cerezo}},\ }\bibfield
  {title} {\bibinfo {title} {Building spatial symmetries into parameterized
  quantum circuits for faster training},\ }\href@noop {} {\bibfield  {journal}
  {\bibinfo  {journal} {arXiv preprint arXiv:2207.14413}\ } (\bibinfo {year}
  {2022})}\BibitemShut {NoStop}%
\bibitem [{\citenamefont {Zheng}\ \emph {et~al.}(2023)\citenamefont {Zheng},
  \citenamefont {Li}, \citenamefont {Liu}, \citenamefont {Strelchuk},\ and\
  \citenamefont {Kondor}}]{zheng2023speeding}%
  \BibitemOpen
  \bibfield  {author} {\bibinfo {author} {\bibfnamefont {H.}~\bibnamefont
  {Zheng}}, \bibinfo {author} {\bibfnamefont {Z.}~\bibnamefont {Li}}, \bibinfo
  {author} {\bibfnamefont {J.}~\bibnamefont {Liu}}, \bibinfo {author}
  {\bibfnamefont {S.}~\bibnamefont {Strelchuk}},\ and\ \bibinfo {author}
  {\bibfnamefont {R.}~\bibnamefont {Kondor}},\ }\bibfield  {title} {\bibinfo
  {title} {Speeding up learning quantum states through group equivariant
  convolutional quantum ans{\"a}tze},\ }\href@noop {} {\bibfield  {journal}
  {\bibinfo  {journal} {PRX Quantum}\ }\textbf {\bibinfo {volume} {4}},\
  \bibinfo {pages} {020327} (\bibinfo {year} {2023})}\BibitemShut {NoStop}%
\bibitem [{\citenamefont {Barron}\ \emph {et~al.}(2021)\citenamefont {Barron},
  \citenamefont {Gard}, \citenamefont {Altman}, \citenamefont {Mayhall},
  \citenamefont {Barnes},\ and\ \citenamefont
  {Economou}}]{barron2021preserving}%
  \BibitemOpen
  \bibfield  {author} {\bibinfo {author} {\bibfnamefont {G.~S.}\ \bibnamefont
  {Barron}}, \bibinfo {author} {\bibfnamefont {B.~T.}\ \bibnamefont {Gard}},
  \bibinfo {author} {\bibfnamefont {O.~J.}\ \bibnamefont {Altman}}, \bibinfo
  {author} {\bibfnamefont {N.~J.}\ \bibnamefont {Mayhall}}, \bibinfo {author}
  {\bibfnamefont {E.}~\bibnamefont {Barnes}},\ and\ \bibinfo {author}
  {\bibfnamefont {S.~E.}\ \bibnamefont {Economou}},\ }\bibfield  {title}
  {\bibinfo {title} {Preserving symmetries for variational quantum eigensolvers
  in the presence of noise},\ }\href@noop {} {\bibfield  {journal} {\bibinfo
  {journal} {Physical Review Applied}\ }\textbf {\bibinfo {volume} {16}},\
  \bibinfo {pages} {034003} (\bibinfo {year} {2021})}\BibitemShut {NoStop}%
\bibitem [{\citenamefont {Shkolnikov}\ \emph {et~al.}(2021)\citenamefont
  {Shkolnikov}, \citenamefont {Mayhall}, \citenamefont {Economou},\ and\
  \citenamefont {Barnes}}]{shkolnikov2021avoiding}%
  \BibitemOpen
  \bibfield  {author} {\bibinfo {author} {\bibfnamefont {V.}~\bibnamefont
  {Shkolnikov}}, \bibinfo {author} {\bibfnamefont {N.~J.}\ \bibnamefont
  {Mayhall}}, \bibinfo {author} {\bibfnamefont {S.~E.}\ \bibnamefont
  {Economou}},\ and\ \bibinfo {author} {\bibfnamefont {E.}~\bibnamefont
  {Barnes}},\ }\bibfield  {title} {\bibinfo {title} {Avoiding symmetry
  roadblocks and minimizing the measurement overhead of adaptive variational
  quantum eigensolvers. arxiv},\ }\href@noop {} {\bibfield  {journal} {\bibinfo
   {journal} {arXiv preprint arXiv:2109.05340}\ } (\bibinfo {year}
  {2021})}\BibitemShut {NoStop}%
\bibitem [{\citenamefont {Gard}\ \emph {et~al.}(2020)\citenamefont {Gard},
  \citenamefont {Zhu}, \citenamefont {Barron}, \citenamefont {Mayhall},
  \citenamefont {Economou},\ and\ \citenamefont
  {Barnes}}]{gard2020efficientsymmetry}%
  \BibitemOpen
  \bibfield  {author} {\bibinfo {author} {\bibfnamefont {B.~T.}\ \bibnamefont
  {Gard}}, \bibinfo {author} {\bibfnamefont {L.}~\bibnamefont {Zhu}}, \bibinfo
  {author} {\bibfnamefont {G.~S.}\ \bibnamefont {Barron}}, \bibinfo {author}
  {\bibfnamefont {N.~J.}\ \bibnamefont {Mayhall}}, \bibinfo {author}
  {\bibfnamefont {S.~E.}\ \bibnamefont {Economou}},\ and\ \bibinfo {author}
  {\bibfnamefont {E.}~\bibnamefont {Barnes}},\ }\bibfield  {title} {\bibinfo
  {title} {Efficient symmetry-preserving state preparation circuits for the
  variational quantum eigensolver algorithm},\ }\href@noop {} {\bibfield
  {journal} {\bibinfo  {journal} {npj Quantum Information}\ }\textbf {\bibinfo
  {volume} {6}},\ \bibinfo {pages} {10} (\bibinfo {year} {2020})}\BibitemShut
  {NoStop}%
\bibitem [{\citenamefont {Streif}\ \emph {et~al.}(2021)\citenamefont {Streif},
  \citenamefont {Leib}, \citenamefont {Wudarski}, \citenamefont {Rieffel},\
  and\ \citenamefont {Wang}}]{streif2021quantum}%
  \BibitemOpen
  \bibfield  {author} {\bibinfo {author} {\bibfnamefont {M.}~\bibnamefont
  {Streif}}, \bibinfo {author} {\bibfnamefont {M.}~\bibnamefont {Leib}},
  \bibinfo {author} {\bibfnamefont {F.}~\bibnamefont {Wudarski}}, \bibinfo
  {author} {\bibfnamefont {E.}~\bibnamefont {Rieffel}},\ and\ \bibinfo {author}
  {\bibfnamefont {Z.}~\bibnamefont {Wang}},\ }\bibfield  {title} {\bibinfo
  {title} {Quantum algorithms with local particle-number conservation: Noise
  effects and error correction},\ }\href@noop {} {\bibfield  {journal}
  {\bibinfo  {journal} {Physical Review A}\ }\textbf {\bibinfo {volume}
  {103}},\ \bibinfo {pages} {042412} (\bibinfo {year} {2021})}\BibitemShut
  {NoStop}%
\bibitem [{\citenamefont {Wang}\ \emph {et~al.}(2020)\citenamefont {Wang},
  \citenamefont {Rubin}, \citenamefont {Dominy},\ and\ \citenamefont
  {Rieffel}}]{wang2020x}%
  \BibitemOpen
  \bibfield  {author} {\bibinfo {author} {\bibfnamefont {Z.}~\bibnamefont
  {Wang}}, \bibinfo {author} {\bibfnamefont {N.~C.}\ \bibnamefont {Rubin}},
  \bibinfo {author} {\bibfnamefont {J.~M.}\ \bibnamefont {Dominy}},\ and\
  \bibinfo {author} {\bibfnamefont {E.~G.}\ \bibnamefont {Rieffel}},\
  }\bibfield  {title} {\bibinfo {title} {X y mixers: Analytical and numerical
  results for the quantum alternating operator ansatz},\ }\href@noop {}
  {\bibfield  {journal} {\bibinfo  {journal} {Physical Review A}\ }\textbf
  {\bibinfo {volume} {101}},\ \bibinfo {pages} {012320} (\bibinfo {year}
  {2020})}\BibitemShut {NoStop}%
\bibitem [{\citenamefont {Barkoutsos}\ \emph {et~al.}(2018)\citenamefont
  {Barkoutsos}, \citenamefont {Gonthier}, \citenamefont {Sokolov},
  \citenamefont {Moll}, \citenamefont {Salis}, \citenamefont {Fuhrer},
  \citenamefont {Ganzhorn}, \citenamefont {Egger}, \citenamefont {Troyer},
  \citenamefont {Mezzacapo} \emph {et~al.}}]{barkoutsos2018quantum}%
  \BibitemOpen
  \bibfield  {author} {\bibinfo {author} {\bibfnamefont {P.~K.}\ \bibnamefont
  {Barkoutsos}}, \bibinfo {author} {\bibfnamefont {J.~F.}\ \bibnamefont
  {Gonthier}}, \bibinfo {author} {\bibfnamefont {I.}~\bibnamefont {Sokolov}},
  \bibinfo {author} {\bibfnamefont {N.}~\bibnamefont {Moll}}, \bibinfo {author}
  {\bibfnamefont {G.}~\bibnamefont {Salis}}, \bibinfo {author} {\bibfnamefont
  {A.}~\bibnamefont {Fuhrer}}, \bibinfo {author} {\bibfnamefont
  {M.}~\bibnamefont {Ganzhorn}}, \bibinfo {author} {\bibfnamefont {D.~J.}\
  \bibnamefont {Egger}}, \bibinfo {author} {\bibfnamefont {M.}~\bibnamefont
  {Troyer}}, \bibinfo {author} {\bibfnamefont {A.}~\bibnamefont {Mezzacapo}},
  \emph {et~al.},\ }\bibfield  {title} {\bibinfo {title} {Quantum algorithms
  for electronic structure calculations: Particle-hole hamiltonian and
  optimized wave-function expansions},\ }\href@noop {} {\bibfield  {journal}
  {\bibinfo  {journal} {Physical Review A}\ }\textbf {\bibinfo {volume} {98}},\
  \bibinfo {pages} {022322} (\bibinfo {year} {2018})}\BibitemShut {NoStop}%
\bibitem [{\citenamefont {Khemani}\ \emph {et~al.}(2018)\citenamefont
  {Khemani}, \citenamefont {Vishwanath},\ and\ \citenamefont
  {Huse}}]{khemani2018operator}%
  \BibitemOpen
  \bibfield  {author} {\bibinfo {author} {\bibfnamefont {V.}~\bibnamefont
  {Khemani}}, \bibinfo {author} {\bibfnamefont {A.}~\bibnamefont
  {Vishwanath}},\ and\ \bibinfo {author} {\bibfnamefont {D.~A.}\ \bibnamefont
  {Huse}},\ }\bibfield  {title} {\bibinfo {title} {Operator spreading and the
  emergence of dissipative hydrodynamics under unitary evolution with
  conservation laws},\ }\href@noop {} {\bibfield  {journal} {\bibinfo
  {journal} {Physical Review X}\ }\textbf {\bibinfo {volume} {8}},\ \bibinfo
  {pages} {031057} (\bibinfo {year} {2018})}\BibitemShut {NoStop}%
\bibitem [{\citenamefont {Rakovszky}\ \emph {et~al.}(2018)\citenamefont
  {Rakovszky}, \citenamefont {Pollmann},\ and\ \citenamefont {von
  Keyserlingk}}]{rakovszky2018diffusive}%
  \BibitemOpen
  \bibfield  {author} {\bibinfo {author} {\bibfnamefont {T.}~\bibnamefont
  {Rakovszky}}, \bibinfo {author} {\bibfnamefont {F.}~\bibnamefont
  {Pollmann}},\ and\ \bibinfo {author} {\bibfnamefont {C.}~\bibnamefont {von
  Keyserlingk}},\ }\bibfield  {title} {\bibinfo {title} {Diffusive
  hydrodynamics of out-of-time-ordered correlators with charge conservation},\
  }\href@noop {} {\bibfield  {journal} {\bibinfo  {journal} {Physical Review
  X}\ }\textbf {\bibinfo {volume} {8}},\ \bibinfo {pages} {031058} (\bibinfo
  {year} {2018})}\BibitemShut {NoStop}%
\bibitem [{\citenamefont {Nakata}\ and\ \citenamefont
  {Murao}(2020)}]{nakata2020generic}%
  \BibitemOpen
  \bibfield  {author} {\bibinfo {author} {\bibfnamefont {Y.}~\bibnamefont
  {Nakata}}\ and\ \bibinfo {author} {\bibfnamefont {M.}~\bibnamefont {Murao}},\
  }\bibfield  {title} {\bibinfo {title} {Generic entanglement entropy for
  quantum states with symmetry},\ }\href@noop {} {\bibfield  {journal}
  {\bibinfo  {journal} {Entropy}\ }\textbf {\bibinfo {volume} {22}},\ \bibinfo
  {pages} {684} (\bibinfo {year} {2020})}\BibitemShut {NoStop}%
\bibitem [{\citenamefont {Kong}\ and\ \citenamefont
  {Liu}(2021)}]{kong2021charge}%
  \BibitemOpen
  \bibfield  {author} {\bibinfo {author} {\bibfnamefont {L.}~\bibnamefont
  {Kong}}\ and\ \bibinfo {author} {\bibfnamefont {Z.-W.}\ \bibnamefont {Liu}},\
  }\bibfield  {title} {\bibinfo {title} {Charge-conserving unitaries typically
  generate optimal covariant quantum error-correcting codes},\ }\href@noop {}
  {\bibfield  {journal} {\bibinfo  {journal} {arXiv preprint arXiv:2102.11835}\
  } (\bibinfo {year} {2021})}\BibitemShut {NoStop}%
\bibitem [{\citenamefont {Hearth}\ \emph {et~al.}(2023)\citenamefont {Hearth},
  \citenamefont {Flynn}, \citenamefont {Chandran},\ and\ \citenamefont
  {Laumann}}]{hearth2023unitary}%
  \BibitemOpen
  \bibfield  {author} {\bibinfo {author} {\bibfnamefont {S.~N.}\ \bibnamefont
  {Hearth}}, \bibinfo {author} {\bibfnamefont {M.~O.}\ \bibnamefont {Flynn}},
  \bibinfo {author} {\bibfnamefont {A.}~\bibnamefont {Chandran}},\ and\
  \bibinfo {author} {\bibfnamefont {C.~R.}\ \bibnamefont {Laumann}},\
  }\bibfield  {title} {\bibinfo {title} {Unitary k-designs from random
  number-conserving quantum circuits},\ }\href@noop {} {\bibfield  {journal}
  {\bibinfo  {journal} {arXiv preprint arXiv:2306.01035}\ } (\bibinfo {year}
  {2023})}\BibitemShut {NoStop}%
\bibitem [{\citenamefont {Li}\ \emph {et~al.}(2023)\citenamefont {Li},
  \citenamefont {Zheng}, \citenamefont {Liu}, \citenamefont {Jiang},\ and\
  \citenamefont {Liu}}]{Li:2023mek}%
  \BibitemOpen
  \bibfield  {author} {\bibinfo {author} {\bibfnamefont {Z.}~\bibnamefont
  {Li}}, \bibinfo {author} {\bibfnamefont {H.}~\bibnamefont {Zheng}}, \bibinfo
  {author} {\bibfnamefont {J.}~\bibnamefont {Liu}}, \bibinfo {author}
  {\bibfnamefont {L.}~\bibnamefont {Jiang}},\ and\ \bibinfo {author}
  {\bibfnamefont {Z.-W.}\ \bibnamefont {Liu}},\ }\bibfield  {title} {\bibinfo
  {title} {{Designs from Local Random Quantum Circuits with SU(d) Symmetry}},\
  }\href@noop {} {\  (\bibinfo {year} {2023})},\ \Eprint
  {https://arxiv.org/abs/2309.08155} {arXiv:2309.08155 [quant-ph]} \BibitemShut
  {NoStop}%
\bibitem [{\citenamefont {Agrawal}\ \emph {et~al.}(2022)\citenamefont
  {Agrawal}, \citenamefont {Zabalo}, \citenamefont {Chen}, \citenamefont
  {Wilson}, \citenamefont {Potter}, \citenamefont {Pixley}, \citenamefont
  {Gopalakrishnan},\ and\ \citenamefont {Vasseur}}]{Agrawal2022}%
  \BibitemOpen
  \bibfield  {author} {\bibinfo {author} {\bibfnamefont {U.}~\bibnamefont
  {Agrawal}}, \bibinfo {author} {\bibfnamefont {A.}~\bibnamefont {Zabalo}},
  \bibinfo {author} {\bibfnamefont {K.}~\bibnamefont {Chen}}, \bibinfo {author}
  {\bibfnamefont {J.~H.}\ \bibnamefont {Wilson}}, \bibinfo {author}
  {\bibfnamefont {A.~C.}\ \bibnamefont {Potter}}, \bibinfo {author}
  {\bibfnamefont {J.~H.}\ \bibnamefont {Pixley}}, \bibinfo {author}
  {\bibfnamefont {S.}~\bibnamefont {Gopalakrishnan}},\ and\ \bibinfo {author}
  {\bibfnamefont {R.}~\bibnamefont {Vasseur}},\ }\bibfield  {title} {\bibinfo
  {title} {Entanglement and charge-sharpening transitions in u(1) symmetric
  monitored quantum circuits},\ }\href
  {https://doi.org/10.1103/PhysRevX.12.041002} {\bibfield  {journal} {\bibinfo
  {journal} {Phys. Rev. X}\ }\textbf {\bibinfo {volume} {12}},\ \bibinfo
  {pages} {041002} (\bibinfo {year} {2022})}\BibitemShut {NoStop}%
\bibitem [{\citenamefont {McCulloch}\ \emph {et~al.}(2023)\citenamefont
  {McCulloch}, \citenamefont {De~Nardis}, \citenamefont {Gopalakrishnan},\ and\
  \citenamefont {Vasseur}}]{McCulloch_2023}%
  \BibitemOpen
  \bibfield  {author} {\bibinfo {author} {\bibfnamefont {E.}~\bibnamefont
  {McCulloch}}, \bibinfo {author} {\bibfnamefont {J.}~\bibnamefont
  {De~Nardis}}, \bibinfo {author} {\bibfnamefont {S.}~\bibnamefont
  {Gopalakrishnan}},\ and\ \bibinfo {author} {\bibfnamefont {R.}~\bibnamefont
  {Vasseur}},\ }\bibfield  {title} {\bibinfo {title} {Full counting statistics
  of charge in chaotic many-body quantum systems},\ }\bibfield  {journal}
  {\bibinfo  {journal} {Physical Review Letters}\ }\textbf {\bibinfo {volume}
  {131}},\ \href {https://doi.org/10.1103/physrevlett.131.210402}
  {10.1103/physrevlett.131.210402} (\bibinfo {year} {2023})\BibitemShut
  {NoStop}%
\bibitem [{\citenamefont {Majidy}\ \emph {et~al.}(2023)\citenamefont {Majidy},
  \citenamefont {Braasch~Jr}, \citenamefont {Lasek}, \citenamefont {Upadhyaya},
  \citenamefont {Kalev},\ and\ \citenamefont
  {Yunger~Halpern}}]{majidy2023noncommuting}%
  \BibitemOpen
  \bibfield  {author} {\bibinfo {author} {\bibfnamefont {S.}~\bibnamefont
  {Majidy}}, \bibinfo {author} {\bibfnamefont {W.~F.}\ \bibnamefont
  {Braasch~Jr}}, \bibinfo {author} {\bibfnamefont {A.}~\bibnamefont {Lasek}},
  \bibinfo {author} {\bibfnamefont {T.}~\bibnamefont {Upadhyaya}}, \bibinfo
  {author} {\bibfnamefont {A.}~\bibnamefont {Kalev}},\ and\ \bibinfo {author}
  {\bibfnamefont {N.}~\bibnamefont {Yunger~Halpern}},\ }\bibfield  {title}
  {\bibinfo {title} {Noncommuting conserved charges in quantum thermodynamics
  and beyond},\ }\href@noop {} {\bibfield  {journal} {\bibinfo  {journal}
  {Nature Reviews Physics}\ }\textbf {\bibinfo {volume} {5}},\ \bibinfo {pages}
  {689} (\bibinfo {year} {2023})}\BibitemShut {NoStop}%
\bibitem [{\citenamefont {Marvian}(2022)}]{Marvian2022Restrict}%
  \BibitemOpen
  \bibfield  {author} {\bibinfo {author} {\bibfnamefont {I.}~\bibnamefont
  {Marvian}},\ }\bibfield  {title} {\bibinfo {title} {Restrictions on
  realizable unitary operations imposed by symmetry and locality},\ }\href
  {https://doi.org/10.1038/s41567-021-01464-0} {\bibfield  {journal} {\bibinfo
  {journal} {Nature Physics}\ }\textbf {\bibinfo {volume} {18}},\ \bibinfo
  {pages} {283–289} (\bibinfo {year} {2022})}\BibitemShut {NoStop}%
\bibitem [{\citenamefont {Marvian}\ \emph {et~al.}(2024)\citenamefont
  {Marvian}, \citenamefont {Liu},\ and\ \citenamefont
  {Hulse}}]{Marvian2024Rotationally}%
  \BibitemOpen
  \bibfield  {author} {\bibinfo {author} {\bibfnamefont {I.}~\bibnamefont
  {Marvian}}, \bibinfo {author} {\bibfnamefont {H.}~\bibnamefont {Liu}},\ and\
  \bibinfo {author} {\bibfnamefont {A.}~\bibnamefont {Hulse}},\ }\bibfield
  {title} {\bibinfo {title} {Rotationally invariant circuits: Universality with
  the exchange interaction and two ancilla qubits},\ }\bibfield  {journal}
  {\bibinfo  {journal} {Physical Review Letters}\ }\textbf {\bibinfo {volume}
  {132}},\ \href {https://doi.org/10.1103/physrevlett.132.130201}
  {10.1103/physrevlett.132.130201} (\bibinfo {year} {2024})\BibitemShut
  {NoStop}%
\bibitem [{\citenamefont {Marvian}\ \emph {et~al.}(2022)\citenamefont
  {Marvian}, \citenamefont {Liu},\ and\ \citenamefont
  {Hulse}}]{marvian2022quditcircuit}%
  \BibitemOpen
  \bibfield  {author} {\bibinfo {author} {\bibfnamefont {I.}~\bibnamefont
  {Marvian}}, \bibinfo {author} {\bibfnamefont {H.}~\bibnamefont {Liu}},\ and\
  \bibinfo {author} {\bibfnamefont {A.}~\bibnamefont {Hulse}},\ }\href
  {https://arxiv.org/abs/2105.12877} {\bibinfo {title} {Qudit circuits with
  su(d) symmetry: Locality imposes additional conservation laws}} (\bibinfo
  {year} {2022}),\ \Eprint {https://arxiv.org/abs/2105.12877} {arXiv:2105.12877
  [quant-ph]} \BibitemShut {NoStop}%
\bibitem [{\citenamefont {Marvian}(2024)}]{marvian2024theoryAbelian}%
  \BibitemOpen
  \bibfield  {author} {\bibinfo {author} {\bibfnamefont {I.}~\bibnamefont
  {Marvian}},\ }\href {https://arxiv.org/abs/2302.12466} {\bibinfo {title}
  {Theory of quantum circuits with abelian symmetries}} (\bibinfo {year}
  {2024}),\ \Eprint {https://arxiv.org/abs/2302.12466} {arXiv:2302.12466
  [quant-ph]} \BibitemShut {NoStop}%
\bibitem [{\citenamefont {Hulse}\ \emph {et~al.}(2024)\citenamefont {Hulse},
  \citenamefont {Liu},\ and\ \citenamefont {Marvian}}]{hulse2024framework}%
  \BibitemOpen
  \bibfield  {author} {\bibinfo {author} {\bibfnamefont {A.}~\bibnamefont
  {Hulse}}, \bibinfo {author} {\bibfnamefont {H.}~\bibnamefont {Liu}},\ and\
  \bibinfo {author} {\bibfnamefont {I.}~\bibnamefont {Marvian}},\ }\bibfield
  {title} {\bibinfo {title} {A framework for semi-universality:
  Semi-universality of 3-qudit su (d)-invariant gates},\ }\href@noop {}
  {\bibfield  {journal} {\bibinfo  {journal} {arXiv preprint arXiv:2407.21249}\
  } (\bibinfo {year} {2024})}\BibitemShut {NoStop}%
\bibitem [{\citenamefont {Bai}\ and\ \citenamefont
  {Marvian}(2024{\natexlab{a}})}]{marvian2024synthesis}%
  \BibitemOpen
  \bibfield  {author} {\bibinfo {author} {\bibfnamefont {G.}~\bibnamefont
  {Bai}}\ and\ \bibinfo {author} {\bibfnamefont {I.}~\bibnamefont {Marvian}},\
  }\bibfield  {title} {\bibinfo {title} {Synthesis of energy-conserving quantum
  circuits with xy interaction},\ }\href@noop {} {\bibfield  {journal}
  {\bibinfo  {journal} {Quantum Science and Technology}\ }\textbf {\bibinfo
  {volume} {9}},\ \bibinfo {pages} {045049} (\bibinfo {year}
  {2024}{\natexlab{a}})}\BibitemShut {NoStop}%
\bibitem [{\citenamefont {Zeier}\ and\ \citenamefont
  {Zimborás}(2015)}]{robert2015squares}%
  \BibitemOpen
  \bibfield  {author} {\bibinfo {author} {\bibfnamefont {R.}~\bibnamefont
  {Zeier}}\ and\ \bibinfo {author} {\bibfnamefont {Z.}~\bibnamefont
  {Zimborás}},\ }\bibfield  {title} {\bibinfo {title} {{On squares of
  representations of compact Lie algebras}},\ }\href
  {https://doi.org/10.1063/1.4928410} {\bibfield  {journal} {\bibinfo
  {journal} {Journal of Mathematical Physics}\ }\textbf {\bibinfo {volume}
  {56}},\ \bibinfo {pages} {081702} (\bibinfo {year} {2015})},\ \Eprint
  {https://arxiv.org/abs/https://pubs.aip.org/aip/jmp/article-pdf/doi/10.1063/1.4928410/15804949/081702\_1\_online.pdf}
  {https://pubs.aip.org/aip/jmp/article-pdf/doi/10.1063/1.4928410/15804949/081702\_1\_online.pdf}
  \BibitemShut {NoStop}%
\bibitem [{\citenamefont {Zhukas}\ \emph {et~al.}(2024)\citenamefont {Zhukas},
  \citenamefont {Wang}, \citenamefont {Katz}, \citenamefont {Monroe},\ and\
  \citenamefont {Marvian}}]{zhukas2024observation}%
  \BibitemOpen
  \bibfield  {author} {\bibinfo {author} {\bibfnamefont {L.~A.}\ \bibnamefont
  {Zhukas}}, \bibinfo {author} {\bibfnamefont {Q.}~\bibnamefont {Wang}},
  \bibinfo {author} {\bibfnamefont {O.}~\bibnamefont {Katz}}, \bibinfo {author}
  {\bibfnamefont {C.}~\bibnamefont {Monroe}},\ and\ \bibinfo {author}
  {\bibfnamefont {I.}~\bibnamefont {Marvian}},\ }\bibfield  {title} {\bibinfo
  {title} {Observation of the symmetry-protected signature of 3-body
  interactions},\ }\href@noop {} {\bibfield  {journal} {\bibinfo  {journal}
  {arXiv preprint arXiv:2408.10475}\ } (\bibinfo {year} {2024})}\BibitemShut
  {NoStop}%
\bibitem [{\citenamefont {Degen}\ \emph {et~al.}(2017)\citenamefont {Degen},
  \citenamefont {Reinhard},\ and\ \citenamefont
  {Cappellaro}}]{degen2017quantum}%
  \BibitemOpen
  \bibfield  {author} {\bibinfo {author} {\bibfnamefont {C.~L.}\ \bibnamefont
  {Degen}}, \bibinfo {author} {\bibfnamefont {F.}~\bibnamefont {Reinhard}},\
  and\ \bibinfo {author} {\bibfnamefont {P.}~\bibnamefont {Cappellaro}},\
  }\bibfield  {title} {\bibinfo {title} {Quantum sensing},\ }\href@noop {}
  {\bibfield  {journal} {\bibinfo  {journal} {Reviews of modern physics}\
  }\textbf {\bibinfo {volume} {89}},\ \bibinfo {pages} {035002} (\bibinfo
  {year} {2017})}\BibitemShut {NoStop}%
\bibitem [{\citenamefont {Huang}\ \emph {et~al.}(2023)\citenamefont {Huang},
  \citenamefont {Tong}, \citenamefont {Fang},\ and\ \citenamefont
  {Su}}]{huang2023learning}%
  \BibitemOpen
  \bibfield  {author} {\bibinfo {author} {\bibfnamefont {H.-Y.}\ \bibnamefont
  {Huang}}, \bibinfo {author} {\bibfnamefont {Y.}~\bibnamefont {Tong}},
  \bibinfo {author} {\bibfnamefont {D.}~\bibnamefont {Fang}},\ and\ \bibinfo
  {author} {\bibfnamefont {Y.}~\bibnamefont {Su}},\ }\bibfield  {title}
  {\bibinfo {title} {Learning many-body hamiltonians with heisenberg-limited
  scaling},\ }\href@noop {} {\bibfield  {journal} {\bibinfo  {journal}
  {Physical Review Letters}\ }\textbf {\bibinfo {volume} {130}},\ \bibinfo
  {pages} {200403} (\bibinfo {year} {2023})}\BibitemShut {NoStop}%
\bibitem [{\citenamefont {Kawada}\ and\ \citenamefont
  {It{\^o}}(1940)}]{kawada1940probability}%
  \BibitemOpen
  \bibfield  {author} {\bibinfo {author} {\bibfnamefont {Y.}~\bibnamefont
  {Kawada}}\ and\ \bibinfo {author} {\bibfnamefont {K.}~\bibnamefont
  {It{\^o}}},\ }\bibfield  {title} {\bibinfo {title} {On the probability
  distribution on a compact group. i},\ }\href@noop {} {\bibfield  {journal}
  {\bibinfo  {journal} {Proceedings of the Physico-Mathematical Society of
  Japan. 3rd Series}\ }\textbf {\bibinfo {volume} {22}},\ \bibinfo {pages}
  {977} (\bibinfo {year} {1940})}\BibitemShut {NoStop}%
\bibitem [{\citenamefont {Durrett}\ and\ \citenamefont
  {Durrett}(2019)}]{durrett2019probability}%
  \BibitemOpen
  \bibfield  {author} {\bibinfo {author} {\bibfnamefont {R.}~\bibnamefont
  {Durrett}}\ and\ \bibinfo {author} {\bibfnamefont {R.}~\bibnamefont
  {Durrett}},\ }\href {https://books.google.com/books?id=b22MDwAAQBAJ} {\emph
  {\bibinfo {title} {Probability: Theory and Examples}}},\ Cambridge Series in
  Statistical and Probabilistic Mathematics\ (\bibinfo  {publisher} {Cambridge
  University Press},\ \bibinfo {year} {2019})\BibitemShut {NoStop}%
\bibitem [{\citenamefont {Billingsley}(1968)}]{billingsley1968convergence}%
  \BibitemOpen
  \bibfield  {author} {\bibinfo {author} {\bibfnamefont {P.}~\bibnamefont
  {Billingsley}},\ }\href {https://books.google.com/books?id=O9oQAQAAIAAJ}
  {\emph {\bibinfo {title} {Convergence of Probability Measures}}},\ Wiley
  Series in Probability and Mathematical Statistics\ (\bibinfo  {publisher}
  {Wiley},\ \bibinfo {year} {1968})\BibitemShut {NoStop}%
\bibitem [{\citenamefont {Kloss}(1958)}]{kloss1958}%
  \BibitemOpen
  \bibfield  {author} {\bibinfo {author} {\bibfnamefont {B.~M.}\ \bibnamefont
  {Kloss}},\ }\bibfield  {title} {\bibinfo {title} {Probability distributions
  on bicompact topological groups},\ }\href {https://doi.org/10.1137/1104026}
  {\bibfield  {journal} {\bibinfo  {journal} {Theory of Probability and Its
  Applications}\ }\textbf {\bibinfo {volume} {4}},\ \bibinfo {pages} {237}
  (\bibinfo {year} {1958})}\BibitemShut {NoStop}%
\bibitem [{\citenamefont {Zimbor{\'a}s}\ \emph {et~al.}(2015)\citenamefont
  {Zimbor{\'a}s}, \citenamefont {Zeier}, \citenamefont
  {Schulte-Herbr{\"u}ggen},\ and\ \citenamefont
  {Burgarth}}]{zimboras2015symmetry}%
  \BibitemOpen
  \bibfield  {author} {\bibinfo {author} {\bibfnamefont {Z.}~\bibnamefont
  {Zimbor{\'a}s}}, \bibinfo {author} {\bibfnamefont {R.}~\bibnamefont {Zeier}},
  \bibinfo {author} {\bibfnamefont {T.}~\bibnamefont
  {Schulte-Herbr{\"u}ggen}},\ and\ \bibinfo {author} {\bibfnamefont
  {D.}~\bibnamefont {Burgarth}},\ }\bibfield  {title} {\bibinfo {title}
  {Symmetry criteria for quantum simulability of effective interactions},\
  }\href@noop {} {\bibfield  {journal} {\bibinfo  {journal} {Physical Review
  A}\ }\textbf {\bibinfo {volume} {92}},\ \bibinfo {pages} {042309} (\bibinfo
  {year} {2015})}\BibitemShut {NoStop}%
\bibitem [{\citenamefont {{D.P. DiVincenzo}}(1995)}]{DiVincenzo:95}%
  \BibitemOpen
  \bibfield  {author} {\bibinfo {author} {\bibnamefont {{D.P. DiVincenzo}}},\
  }\bibfield  {title} {\bibinfo {title} {{Two-bit gates are universal for
  quantum computation}},\ }\href@noop {} {\bibfield  {journal} {\bibinfo
  {journal} {Phys. Rev. A}\ }\textbf {\bibinfo {volume} {51}},\ \bibinfo
  {pages} {1015} (\bibinfo {year} {1995})}\BibitemShut {NoStop}%
\bibitem [{\citenamefont {Lloyd}(1995)}]{lloyd1995almost}%
  \BibitemOpen
  \bibfield  {author} {\bibinfo {author} {\bibfnamefont {S.}~\bibnamefont
  {Lloyd}},\ }\bibfield  {title} {\bibinfo {title} {Almost any quantum logic
  gate is universal},\ }\href@noop {} {\bibfield  {journal} {\bibinfo
  {journal} {Physical Review Letters}\ }\textbf {\bibinfo {volume} {75}},\
  \bibinfo {pages} {346} (\bibinfo {year} {1995})}\BibitemShut {NoStop}%
\bibitem [{\citenamefont {Dynkin}(1957)}]{dynkin1957maximal}%
  \BibitemOpen
  \bibfield  {author} {\bibinfo {author} {\bibfnamefont {E.~B.}\ \bibnamefont
  {Dynkin}},\ }\bibfield  {title} {\bibinfo {title} {The maximal subgroups of
  the classical groups},\ }\href@noop {} {\bibfield  {journal} {\bibinfo
  {journal} {Five papers on algebra and group theory}\ }\textbf {\bibinfo
  {volume} {6}},\ \bibinfo {pages} {245} (\bibinfo {year} {1957})}\BibitemShut
  {NoStop}%
\bibitem [{\citenamefont {Bai}\ and\ \citenamefont
  {Marvian}(2024{\natexlab{b}})}]{bai2024synthesis}%
  \BibitemOpen
  \bibfield  {author} {\bibinfo {author} {\bibfnamefont {G.}~\bibnamefont
  {Bai}}\ and\ \bibinfo {author} {\bibfnamefont {I.}~\bibnamefont {Marvian}},\
  }\href {https://arxiv.org/abs/2309.11051} {\bibinfo {title} {Synthesis of
  energy-conserving quantum circuits with xy interaction}} (\bibinfo {year}
  {2024}{\natexlab{b}}),\ \Eprint {https://arxiv.org/abs/2309.11051}
  {arXiv:2309.11051 [quant-ph]} \BibitemShut {NoStop}%
\bibitem [{\citenamefont {Kazi}\ \emph {et~al.}(2024)\citenamefont {Kazi},
  \citenamefont {Larocca},\ and\ \citenamefont
  {Cerezo}}]{kazi2024universality}%
  \BibitemOpen
  \bibfield  {author} {\bibinfo {author} {\bibfnamefont {S.}~\bibnamefont
  {Kazi}}, \bibinfo {author} {\bibfnamefont {M.}~\bibnamefont {Larocca}},\ and\
  \bibinfo {author} {\bibfnamefont {M.}~\bibnamefont {Cerezo}},\ }\bibfield
  {title} {\bibinfo {title} {On the universality of sn-equivariant k-body
  gates},\ }\href@noop {} {\bibfield  {journal} {\bibinfo  {journal} {New
  Journal of Physics}\ }\textbf {\bibinfo {volume} {26}},\ \bibinfo {pages}
  {053030} (\bibinfo {year} {2024})}\BibitemShut {NoStop}%
\bibitem [{\citenamefont {Harrow}(2005)}]{harrow2005applications}%
  \BibitemOpen
  \bibfield  {author} {\bibinfo {author} {\bibfnamefont {A.~W.}\ \bibnamefont
  {Harrow}},\ }\bibfield  {title} {\bibinfo {title} {Applications of coherent
  classical communication and the schur transform to quantum information
  theory},\ }\href@noop {} {\bibfield  {journal} {\bibinfo  {journal} {arXiv
  preprint quant-ph/0512255}\ } (\bibinfo {year} {2005})}\BibitemShut {NoStop}%
\bibitem [{\citenamefont {Rasala}(1977)}]{rasala1977minimal}%
  \BibitemOpen
  \bibfield  {author} {\bibinfo {author} {\bibfnamefont {R.}~\bibnamefont
  {Rasala}},\ }\bibfield  {title} {\bibinfo {title} {On the minimal degrees of
  characters of sn},\ }\href@noop {} {\bibfield  {journal} {\bibinfo  {journal}
  {Journal of Algebra}\ }\textbf {\bibinfo {volume} {45}},\ \bibinfo {pages}
  {132} (\bibinfo {year} {1977})}\BibitemShut {NoStop}%
\bibitem [{\citenamefont {Husain}(1966)}]{husain1966introduction}%
  \BibitemOpen
  \bibfield  {author} {\bibinfo {author} {\bibfnamefont {T.}~\bibnamefont
  {Husain}},\ }\href {https://books.google.com/books?id=x_7uAAAAMAAJ} {\emph
  {\bibinfo {title} {Introduction to Topological Groups}}},\ Saunders
  mathematics books\ (\bibinfo  {publisher} {Saunders},\ \bibinfo {year}
  {1966})\BibitemShut {NoStop}%
\bibitem [{\citenamefont {Sepanski}(2006)}]{sepanski2006compact}%
  \BibitemOpen
  \bibfield  {author} {\bibinfo {author} {\bibfnamefont {M.}~\bibnamefont
  {Sepanski}},\ }\href {https://books.google.com/books?id=F3NgD_25OOsC} {\emph
  {\bibinfo {title} {Compact Lie Groups}}},\ Graduate Texts in Mathematics\
  (\bibinfo  {publisher} {Springer New York},\ \bibinfo {year}
  {2006})\BibitemShut {NoStop}%
\bibitem [{\citenamefont {Munkres}(2000)}]{munkres2000topology}%
  \BibitemOpen
  \bibfield  {author} {\bibinfo {author} {\bibfnamefont {J.}~\bibnamefont
  {Munkres}},\ }\href {https://books.google.co.jp/books?id=XjoZAQAAIAAJ} {\emph
  {\bibinfo {title} {Topology}}},\ Featured Titles for Topology\ (\bibinfo
  {publisher} {Prentice Hall, Incorporated},\ \bibinfo {year}
  {2000})\BibitemShut {NoStop}%
\bibitem [{\citenamefont {Rudin}(1976)}]{rudin1976principles}%
  \BibitemOpen
  \bibfield  {author} {\bibinfo {author} {\bibfnamefont {W.}~\bibnamefont
  {Rudin}},\ }\href {https://books.google.co.jp/books?id=kwqzPAAACAAJ} {\emph
  {\bibinfo {title} {Principles of Mathematical Analysis}}},\ International
  series in pure and applied mathematics\ (\bibinfo  {publisher}
  {McGraw-Hill},\ \bibinfo {year} {1976})\BibitemShut {NoStop}%
\bibitem [{\citenamefont {Mitsuhashi}\ \emph
  {et~al.}(2024{\natexlab{a}})\citenamefont {Mitsuhashi}, \citenamefont
  {Suzuki}, \citenamefont {Soejima},\ and\ \citenamefont
  {Yoshioka}}]{Mitsuhashi:2024lti}%
  \BibitemOpen
  \bibfield  {author} {\bibinfo {author} {\bibfnamefont {Y.}~\bibnamefont
  {Mitsuhashi}}, \bibinfo {author} {\bibfnamefont {R.}~\bibnamefont {Suzuki}},
  \bibinfo {author} {\bibfnamefont {T.}~\bibnamefont {Soejima}},\ and\ \bibinfo
  {author} {\bibfnamefont {N.}~\bibnamefont {Yoshioka}},\ }\bibfield  {title}
  {\bibinfo {title} {{Unitary Designs of Symmetric Local Random Circuits}},\
  }\href@noop {} {\  (\bibinfo {year} {2024}{\natexlab{a}})},\ \Eprint
  {https://arxiv.org/abs/2408.13472} {arXiv:2408.13472 [quant-ph]} \BibitemShut
  {NoStop}%
\bibitem [{\citenamefont {Mitsuhashi}\ \emph
  {et~al.}(2024{\natexlab{b}})\citenamefont {Mitsuhashi}, \citenamefont
  {Suzuki}, \citenamefont {Soejima},\ and\ \citenamefont
  {Yoshioka}}]{Mitsuhashi:2024qlx}%
  \BibitemOpen
  \bibfield  {author} {\bibinfo {author} {\bibfnamefont {Y.}~\bibnamefont
  {Mitsuhashi}}, \bibinfo {author} {\bibfnamefont {R.}~\bibnamefont {Suzuki}},
  \bibinfo {author} {\bibfnamefont {T.}~\bibnamefont {Soejima}},\ and\ \bibinfo
  {author} {\bibfnamefont {N.}~\bibnamefont {Yoshioka}},\ }\bibfield  {title}
  {\bibinfo {title} {{Characterization of Randomness in Quantum Circuits of
  Continuous Gate Sets}},\ }\href@noop {} {\  (\bibinfo {year}
  {2024}{\natexlab{b}})},\ \Eprint {https://arxiv.org/abs/2408.13475}
  {arXiv:2408.13475 [quant-ph]} \BibitemShut {NoStop}%
\bibitem [{\citenamefont {Spencer}\ and\ \citenamefont
  {Florescu}(2014)}]{spencerasymptopia}%
  \BibitemOpen
  \bibfield  {author} {\bibinfo {author} {\bibfnamefont {J.}~\bibnamefont
  {Spencer}}\ and\ \bibinfo {author} {\bibfnamefont {L.}~\bibnamefont
  {Florescu}},\ }\href {https://books.google.com/books?id=uBMLugEACAAJ} {\emph
  {\bibinfo {title} {Asymptopia}}},\ Student mathematical library\ (\bibinfo
  {publisher} {American Mathematical Society},\ \bibinfo {year}
  {2014})\BibitemShut {NoStop}%
\end{thebibliography}%

\onecolumngrid

\appendix

\newpage

\section*{Appendix}

\newcommand\appitem[2]{\hyperref[{#1}]{\textbf{\cref*{#1} \nameref*{#1}}} \dotfill \pageref{#1}\\ \begin{minipage}[t]{0.8\textwidth} #2\end{minipage}}

\begin{itemize}
\item \appitem{app:U1}{}
\item \appitem{app:SU2}{}
\item \appitem{app:Zp}{}
\item \appitem{app:SUd}{}
\item \appitem{App:fail}{}
\item \appitem{app:design}{}
\end{itemize}

\newpage

\newpage

\section{\texorpdfstring{$\U(1)$}{U(1)} symmetry}\label{app:U1}

In this section, we study the $\U(1)$ example in more detail, and prove the identities used to determine $t_{\max}$ in this case. In particular, we determine the value of $\tr F_k C_l$ and show that it is zero when $l < k$, and we also determine the norm $\|F_k\|_1$.
The following lemma from the main text summarizes these identities.

\lemFk*
It is worth mentioning the interesting fact that the coefficients of the expansion of $2^n F_k$ in terms of $C_l$ operators are independent of $n$.

The $F_k$ operators are constructed with the property that it has support only on the low dimensional irreps, resulting in an almost symmetric support on $\{\Pi_w\}$ basis with respect to flipping the irreps with Hamming weight $w$ and $n-w$, which is implemented by the $X^{\otimes n}$ operator. When $k$ is odd, the support of $F_k$ has an even number of elements, and a symmetric support is possible. On the other hand, when $k=2s$ is even, the support of $F_k$ has an odd number of elements, and we have to make a choice. For $F_{2s}$, the choice is to fill $s$ with $s<\frac{n}{2}$ first. Another equally good choice is to fill $n-s$ first. This motivates us to define a set of operators $\widetilde{F}_k$ as
\begin{align}
  \widetilde{F}_k = (-1)^k X^{\otimes n} F_k X^{\otimes n} =\sum_{w = 0}^n (-1)^w \binom{n-\lfloor\frac{k}{2}\rfloor -w -1}{n-k} \Pi_{w}.
\end{align}
Using the identities
\begin{subequations}\label{eq:Xn}
\begin{align}
  X^{\otimes n} \Pi_w X^{\otimes n} &= \Pi_{n-w}\\
  X^{\otimes n} C_l X^{\otimes n} &= (-1)^l C_l ,
\end{align}
\end{subequations}
One can easily check that $\widetilde{F}_k$ is related to $F_k$ and $C_l$ as
\begin{align}
  \widetilde{F}_k = (-1)^k X^{\otimes n} F_k X^{\otimes n} =
  \begin{cases}
      2^{k-n} \sum\limits_l (-1)^l \binom{\lfloor l/2\rfloor}{\lfloor k/2\rfloor} C_l & \text{$k:$  even} \\
      2^{k-n} \sum\limits_{l:\mathrm{odd}} \binom{\lfloor l/2\rfloor}{\lfloor k/2\rfloor} C_l & \text{$k:$  odd }.
  \end{cases}
\end{align}
and in particular, when $k$ is odd we have
\begin{align}
  \widetilde{F}_k = F_k .
\end{align}

Note that $\|F_k\|_1 = \|\widetilde{F}_k\|_1$ for all $k$ and $\tr \widetilde{F}_k C_l = 0$ for all $l < k$. Thus, when $k$ is odd, $\widetilde{F}_{k + 1}$ gives an independent operator which acts as $Q$, but leads to the same value of $t_{\max}$ for $\mathcal{V}_{n, k}^{\U(1)}$.

In addition, we also have the relations
\begin{subequations}\label{eq:Zn}
\begin{align}
  Z^{\otimes n} \Pi_w &= (-1)^w \Pi_w, \\
  Z^{\otimes n} C_l &= C_{n-l} \label{eq:Znb}.
\end{align}
\end{subequations}
By sandwiching with $X^{\otimes n}$ and $Z^{\otimes n}$ respectively, and using \cref{eq:Xn,eq:Zn}, we can easily see \cref{eq:mirror-symmetry}:
\begin{align}\tag{re \ref{eq:mirror-symmetry}}
\Tr(\Pi_w C_l)=(-1)^l \Tr(\Pi_{n-w} C_l)=(-1)^w \Tr(\Pi_{w} C_{n-l}) \ .
\end{align}

\subsection{Some useful identities involving the binomial coefficients}
In this section, we gather some useful facts about the binomial coefficients that will be used repeatedly in our proofs.

Recall that for arbitrary complex number $\alpha$ and non-negative integers $k$, the binomial coefficient $\binom{\alpha}{k}$ is defined by 
\begin{align}\tag{re \ref{eq:binom-def}}
    \sum_{k=0}^\infty \binom{\alpha}{k} x^k = (1+x)^\alpha.
\end{align}
Another interpretation of this equation is that $(1+x)^\alpha$ can also be viewed as the generating function of the binomial coefficients in the lower index. Now we derive the generating function of $\binom{n}{k}$ in the upper index. Let $\alpha=n$ be a non-negative integer in \cref{eq:binom-def}, multiply both sides by $y^n$, sum over $n$ on both sides of \cref{eq:binom-def}, and we have
\begin{align}
  \sum_{n=0}^\infty \sum_{k=0}^\infty \binom{n}{k} x^k y^n = \sum_{n=0}^\infty (1+x)^n y^n = \frac{1}{1-y-xy} = \frac{1}{1-y} \frac{1}{1-x\frac{y}{1-y}} = \sum_{k=0}^\infty \frac{y^k}{(1-y)^{k+1}} x^k.
\end{align}
Comparing the coefficient of $x^k$ on both sides, we have the generating function of $\binom{n}{k}$ in the upper index $n$,
\begin{align}\label{eq:gen-binom}
  \sum_{n=0}^\infty \binom{n}{k} y^n = \frac{y^k}{(1-y)^{k+1}} .
\end{align}

Recall the equivalent definition of the binomial coefficients in \cref{eq:binom}
\begin{align}\tag{re \ref{eq:binom}}
    \binom{\alpha}{k} := \frac{1}{k!}\Big(\frac{\d}{\d x} \Big)^k (1+x)^\alpha \Big|_{x=0} = \frac{(\alpha)_k}{k!}\ .
\end{align}
As we noted in the main text, $\binom{\alpha}{k}$ is zero if and only if $0$ appears in the falling factorial, which is the case when $\alpha$ is a non-negative integer and $k>\alpha$. When $\alpha<0$, $\binom{\alpha}{k}$ is always non-vanishing, and using the definition we can rewrite it in terms of only positive indices as
\begin{align}\label{eq:extended-binomial}
    \binom{\alpha}{k} 
    = (-1)^k \binom{-\alpha+k-1}{k} .
\end{align}
Using \cref{eq:binom}, one can also easily verify
\begin{align}\label{eq:binomder}
  \binom{\alpha}{k}x^{\alpha-k} = \frac{1}{k!} \Big(\frac{\d}{\d x}\Big)^k x^\alpha.
\end{align}

Besides, the binomial coefficients also appear in the general Leibniz rule that we will use later,
\begin{align}\label{eq:general-Leibniz}
    \Big(\frac{\d}{\d x}\Big)^n \big(f(x)g(x)\big) = \sum_{k=0}^n \binom{n}{k} \Big(\big(\frac{\d}{\d x}\big)^k f(x) \Big) \Big(\big(\frac{\d}{\d x}\big)^{n-k} g(x) \Big) .
\end{align}

\subsection{\texorpdfstring{$k=1,2$}{k=1,2} examples}

We briefly outline how to verify that $F_2 = (n-1) \Pi_0 - \Pi_1 + \Pi_n$ and $F_3 = (n-1)(\Pi_0 - \Pi_n) - (\Pi_1 - \Pi_{n - 1})$ satisfy the conditions of \cref{thm:tdesign} for $G_{XX + YY, Z}$ and $\mathcal{V}_{n, 2}^{\U(1)}$, respectively. In particular, we show that $F_2$ is orthogonal to all 1-local operators, and $F_3$ is orthogonal to all 2-local operators.

Note that, because the symmetry group $\U(1)$ is Abelian, $\tr \Pi_w = m_w$. Hence, $\tr (F_2 \Pi_0) / \Tr \Pi_0 = n - 1$, $\tr (F_2 \Pi_1) / \tr \Pi_1 = -1$, and $\tr (F_2 \Pi_n) / \tr \Pi_n = 1$ are integer, with each other $\tr F_2 \Pi_w = 0$, and $\tr F_3 = 0$. For the generators of $G_{XX + YY, Z}$, we find
\begin{equation}
  \begin{split}
    \tr F_2 Z_i & = (n - 1) \tr \Pi_0 Z_i - \tr \Pi_1 Z_i + \tr \Pi_n Z_i \\
              & = n - 1 - (n - 2) - 1 = 0,
  \end{split}
\end{equation}
and $\tr F_2 (X_i X_j + Y_i Y_j) = 0$ since the $XY$ interaction is centerless, i.e., {$\tr \Pi_w  (X_i X_j + Y_i Y_j) = \tr C_l  (X_i X_j + Y_i Y_j) = 0$ for all $w,l\in\{0,\cdots, n\}$.}

Now for $\mathcal{V}_{n,2}^{\U(1)}$ we consider $F_3 = (n - 1) (\Pi_0 - \Pi_n) - (\Pi_1 - \Pi_{n - 1})$, which again has integer eigenvalues (and since the symmetry is Abelian, means its corresponding $\mathbf{q}$ vector is integer). Further, it is easy to see that $\tr F_3 = 0$, $\tr F_3 Z_i = 0$, and $\tr F_3 Z_i Z_j = 0$ using
\begin{equation}
  \begin{split}
    \tr Z_i Z_j \Pi_0 & = \tr Z_i Z_j \Pi_n = 1 \\
    \tr Z_i Z_j \Pi_1 & = \tr Z_i Z_j \Pi_{n - 1} = n - 1.
  \end{split}
\end{equation}

\subsection{Properties of operator \texorpdfstring{$C_l$}{Cl}}
In this subsection, we further study $\{C_l\}$ basis and
 its relation with $\{\Pi_w\}$ basis. In particular, we prove \cref{eq:reciprocity-U1} and the following proposition on $\bar{c}$.

Recall that in \cite{Marvian2022Restrict}, the operator $C_l$ is defined and calculated as 
\begin{align}\tag{re \ref{eq:u1cls}}
  C_l &:= \sum_{\mathbf{b} : \ell(\mathbf{b}) = l} \mathbf{Z}^{\mathbf{b}} = \sum_w c_l(w) \Pi_w,
\end{align}
where
\begin{align}\tag{re \ref{Eq:clw}}
  c_l(w) = \sum_{r = 0}^w (-1)^r \binom{n-w}{l-r} \binom{w}{r}.
\end{align}

We first normalize $\{\Pi_w\}$ and $\{C_l\}$ to $\{\overline\Pi_w\}$ and $\{\overline C_l\}$ with $\Tr\overline\Pi_w^2 = \Tr\overline C_l^2 = \Tr\id = 2^n$, 
\begin{subequations}\label{eq:Pi-C-normalized-U1}
  \begin{align}
    \overline\Pi_w &:= 2^{\frac{n}{2}} \big(\Tr\Pi_w\big)^{-\frac{1}{2}} \Pi_w = 2^{\frac{n}{2}} \binom{n}{w}^{-\frac{1}{2}} \Pi_w, \\
    \overline C_l &:= 2^{\frac{n}{2}} \big(\Tr C_l^2\big)^{-\frac{1}{2}} C_l = \binom{n}{l}^{-\frac{1}{2}} C_l.
  \end{align}
\end{subequations}
Then we have the following proposition
\begin{proposition}\label{prop:cbar-U1}
The two orthonormal bases $\{\overline\Pi_w\}$ and $\{\overline C_l\}$ are related as
\begin{subequations}
  \begin{align}
   \overline C_l &= \sum_w \bar{c}_{l,w} \overline\Pi_w, \tag{re \ref{eq:reciprocity-U1a}}\\
   \overline\Pi_w &= \sum_l \bar{c}_{l,w} \overline C_l, \tag{re \ref{eq:reciprocity-U1b}}
  \end{align}
\end{subequations}
where $\bar{c}$ is an $(n+1)\times (n+1)$ matrix  with matrix elements
\begin{align}
  \bar{c}_{l,w} = 
  \sqrt{\frac{\Tr(\Pi_w)}{\Tr(C_l^2)}} c_l(w) = 
  2^{-\frac{n}{2}} \sqrt{\frac{\binom{n}{w}}{\binom{n}{l}}} c_l(w) .
\end{align}
Furthermore, $\bar{c}$ is symmetric and orthogonal, namely,
\begin{align}\tag{re \ref{eq:cbar-symmetric-orthogonal}}
  \bar{c} = \bar{c}^\T = \bar{c}^{-1}.
\end{align}
Similarly, the matrix $\Tr(\Pi_w C_l)$ is also symmetric in $w$ and $l$, i.e., $\Tr(\Pi_w C_l)=\Tr(\Pi_l C_w)$.
\end{proposition}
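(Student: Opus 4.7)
The plan is to reduce the entire proposition to a single combinatorial identity, namely the symmetry $\Tr(\Pi_w C_l) = \Tr(\Pi_l C_w)$, by exploiting the fact that both $\{\overline\Pi_w\}$ and $\{\overline C_l\}$ are orthonormal bases for the same subspace under the Hilbert-Schmidt inner product $\langle A,B\rangle := 2^{-n}\Tr(A^\dagger B)$. First, I would verify orthonormality explicitly: $\Tr(\overline\Pi_w\overline\Pi_{w'}) = 2^n\delta_{w,w'}$ follows from $\Pi_w\Pi_{w'}=\delta_{w,w'}\Pi_w$ and $\tr\Pi_w = \binom{n}{w}$, while $\Tr(\overline C_l\overline C_{l'}) = 2^n\delta_{l,l'}$ follows from the orthogonality relation \eqref{eq:Cl-orthogonal-U1}. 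Next, the relation \eqref{eq:reciprocity-U1a} is essentially tautological: substituting $C_l = \sum_w c_l(w)\Pi_w$ into the definition of $\overline C_l$ and re-expressing $\Pi_w = 2^{-n/2}\binom{n}{w}^{1/2}\overline\Pi_w$ immediately yields the stated coefficients $\bar{c}_{l,w} = 2^{-n/2}\sqrt{\binom{n}{w}/\binom{n}{l}}\,c_l(w)$. An equivalent compact expression is $\bar{c}_{l,w} = \langle\overline C_l,\overline\Pi_w\rangle = 2^{-n}\Tr(\overline C_l\overline\Pi_w)$.

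The core step is proving $\Tr(\Pi_w C_l) = \Tr(\Pi_l C_w)$. I would expand both factors directly in the computational basis: since $\Pi_w = \sum_{\mathbf{x}:|\mathbf{x}|=w}\ketbra{\mathbf{x}}{\mathbf{x}}$ and $C_l = \sum_{\mathbf{b}:|\mathbf{b}|=l}\mathbf{Z}^{\mathbf{b}}$, and since $\mathbf{Z}^{\mathbf{b}}\ket{\mathbf{x}} = (-1)^{\mathbf{b}\cdot\mathbf{x}}\ket{\mathbf{x}}$, one obtains
\begin{align}
\Tr(\Pi_w C_l) = \sum_{\mathbf{b}:|\mathbf{b}|=l}\ \sum_{\mathbf{x}:|\mathbf{x}|=w} (-1)^{\mathbf{b}\cdot\mathbf{x}}.
\end{align}
This double sum is manifestly symmetric under the swap $(\mathbf{b},\mathbf{x}) \leftrightarrow (\mathbf{x},\mathbf{b})$, i.e., under $(l,w) \leftrightarrow (w,l)$, which is the desired identity. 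Combined with $\Tr(\Pi_w C_l) = \binom{n}{w}c_l(w)$, this gives the explicit form $\binom{n}{w}c_l(w) = \binom{n}{l}c_w(l)$.

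Finally, I would deduce the matrix properties of $\bar{c}$. The symmetry $\bar{c} = \bar{c}^\T$ is immediate from the symmetry of $\Tr(\Pi_w C_l)$ under $(l,w)\leftrightarrow(w,l)$ together with the fact that the normalization factor $2^{-n/2}\binom{n}{l}^{-1/2}\binom{n}{w}^{-1/2}$ in $\bar{c}_{l,w}$ is itself symmetric. The orthogonality $\bar{c}\bar{c}^\T = I$ is the standard consequence of $\bar{c}$ being the matrix of a change between two orthonormal bases of the same space: expanding $\delta_{l,l'} = \langle\overline C_l,\overline C_{l'}\rangle = \sum_{w,w'}\bar{c}_{l,w}\bar{c}_{l',w'}\langle\overline\Pi_w,\overline\Pi_{w'}\rangle = \sum_w \bar{c}_{l,w}\bar{c}_{l',w}$ gives $(\bar{c}\bar{c}^\T)_{l,l'}=\delta_{l,l'}$. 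Combining symmetry with orthogonality yields $\bar{c}^2 = I$, so $\bar{c} = \bar{c}^{-1}$, and the inverse change-of-basis formula \eqref{eq:reciprocity-U1b} follows by applying $\bar{c}^{-1}$ to \eqref{eq:reciprocity-U1a}. There is no serious obstacle here; the only substantive input is the elementary double-sum argument for the symmetry, which bypasses any direct manipulation of the alternating binomial sum defining $c_l(w)$.
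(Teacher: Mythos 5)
Your proof is correct, and the overall scaffolding (orthonormality of the two bases, reading off $\bar{c}_{l,w}$ from the eigendecomposition of $C_l$, and deducing $\bar{c}\bar{c}^\T=\id$ and hence \cref{eq:reciprocity-U1b} from the change-of-basis structure) matches the paper's. Where you genuinely diverge is in the key symmetry step $\Tr(\Pi_w C_l)=\Tr(\Pi_l C_w)$: the paper proves it by computing the bivariate generating function $\sum_{w,l}\Tr(C_l\Pi_w)\,x^w y^l=(1+x+y-xy)^n$ and observing that it is symmetric in $x$ and $y$, whereas you expand both operators in the computational basis to get $\Tr(\Pi_w C_l)=\sum_{\mathbf{b}:\ell(\mathbf{b})=l}\sum_{\mathbf{x}:\ell(\mathbf{x})=w}(-1)^{\mathbf{b}\cdot\mathbf{x}}$, which is manifestly invariant under $(\mathbf{b},\mathbf{x})\leftrightarrow(\mathbf{x},\mathbf{b})$. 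Your argument is more elementary and arguably more illuminating --- it makes the symmetry a one-line double-counting observation and completely bypasses the alternating binomial sum in \cref{Eq:clw} --- while the paper's generating-function route has the advantage of fitting the same machinery it reuses throughout \cref{app:U1} (e.g., for $\Tr(A_kC_l)$, $\Tr(F_kC_l)$, and $\Tr(C_l^2)$), where no such manifest symmetry is available. One small point worth making explicit in your write-up: since $\bar{c}$ is a square $(n+1)\times(n+1)$ matrix, $\bar{c}\bar{c}^\T=\id$ already gives $\bar{c}^\T=\bar{c}^{-1}$, which is what licenses applying the inverse to obtain \cref{eq:reciprocity-U1b}; you implicitly use this but it deserves a sentence.
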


We first calculate the generating function for $C_l$, i.e., an operator-valued function in $y$, that will be useful on several occasions.
\begin{align}\label{eq:Cl-Fourier}
  \sum_l C_l y^l &= \sum_w \sum_{r = 0}^w (-1)^r y^r \binom{w}{r} \Pi_w \sum_l \binom{n-w}{l-r} y^{l-r} = \sum_w \sum_{r = 0}^w (-1)^r y^r (1+y)^{n-w} \binom{w}{r} \Pi_w \nonumber\\
  &= \sum_w (1-y)^w (1+y)^{n-w} \Pi_w .
\end{align}

The first application of this generating function is to calculate $\Tr(C_l^2)$. The square of \cref{eq:Cl-Fourier} is
\begin{align}
  \Big(\sum_l C_l y^l\Big)^2 &= \sum_w (1-y)^{2w} (1+y)^{2(n-w)} \Pi_w .
\end{align}
Taking trace on the left-hand side, we have
\begin{align}
  \Tr \Big(\sum_l C_l y^l\Big)^2 = \sum_{l,m} \Tr(C_l C_m) y^l y^m = \sum_l \Tr(C_l^2) y^{2l}, 
\end{align}
where in the last step, we used the orthogonality of the $C_l$ operators. Taking the trace on the right-hand side, we have
\begin{align}
  \Tr \sum_w (1-y)^{2w} (1+y)^{2(n-w)} \Pi_w &= \sum_w \binom{n}{w} (1-y)^{2w} (1+y)^{2(n-w)} \nonumber\\
  &= (1+y)^{2n} \sum_w \binom{n}{w} \Big( \frac{1-y}{1+y} \Big)^{2w} = 2^n(1+y^2)^n = \sum_{l} 2^n \binom{n}{l} y^{2l}
\end{align}
Comparing the two sides of the equation, we know
\begin{align}
  \Tr(C_l^2) = 2^n \binom{n}{l}.
\end{align}

Note that this relation was already shown in \cite{Marvian2022Restrict} using the fact that $\Tr(\mathbf{Z}^{\mathbf{b}} \mathbf{Z}^{\mathbf{b}'})=2^n \delta_{\mathbf{b}, \mathbf{b}'}$, and the fact that the summation in \cref{eq:u1cls} contains $\binom{n}{l}$ such terms. 

Now we are ready to prove \cref{prop:cbar-U1}.
\begin{proof}
\cref{eq:reciprocity-U1a} is clear from \cref{eq:u1cls}, and here we prove \cref{eq:reciprocity-U1b}. Since both $\Pi_w$ and $C_l$ form an orthogonal basis of the center of $\U(1)$ invariant Hamiltonians, we can express $\Pi_w$ in the $C_l$ basis, i.e., the inverse of \cref{eq:u1cls}. Multiplying by $\Pi_w$ on both sides of \cref{eq:u1cls} and taking trace, we have
\begin{align}
  \Tr(C_l\Pi_w) = \binom{n}{w} \sum_{r = 0}^w (-1)^r \binom{n-w}{l-r} \binom{w}{r}.
\end{align}
Therefore
\begin{align}
  \Pi_w = \sum_l\frac{\Tr(\Pi_w C_l)}{\Tr(C_l^2)} C_l = 2^{-n} \binom{n}{w} \sum_l \binom{n}{l}^{-1} \sum_{r = 0}^w (-1)^r \binom{n-w}{l-r} \binom{w}{r} C_l .
\end{align}
This implies
\begin{align}
  \overline{\Pi}_w = 2^{-\frac{n}{2}} \sum_l \Big(\binom{n}{w}\Big/\binom{n}{l}\Big)^{\frac{1}{2}} \sum_{r = 0}^w (-1)^r \binom{n-w}{l-r} \binom{w}{r} \overline{C}_l = \sum_l \bar{c}_{l,w} \overline C_l.
\end{align}

From \cref{eq:reciprocity-U1}, it is clear that $\bar{c}$ is orthogonal, i.e., $\bar{c}^\T \bar{c} = \id$. Now we prove that $\bar{c}$ is also symmetric, i.e., $\bar{c}^\T = \bar{c}$. This is equivalent to proving that $\Tr(C_l\Pi_w) = \Tr(C_w\Pi_l)$, namely
\begin{align}
  \binom{n}{w} \sum_{r = 0}^w (-1)^r \binom{n-w}{l-r} \binom{w}{r} = \binom{n}{l} \sum_{r = 0}^l (-1)^r \binom{n-l}{w-r} \binom{l}{r},
\end{align}
Consider the generating function,
\begin{align}
  \sum_{w,l} \Tr(C_l\Pi_w) x^w y^l &= \sum_{w} \binom{n}{w} (1-y)^w (1+y)^{n-w} x^w \nonumber\\
  &= (1+y)^n \sum_{w} \binom{n}{w} \Big(\frac{x(1-y)}{1+y}\Big)^w \nonumber\\
  &= (1+y)^n \Big(1+ \frac{x(1-y)}{1+y}\Big)^n \nonumber\\
  &= (1+x+y-xy)^n 
\end{align}
which is symmetric in $x$ and $y$. This implies $\sum_{w,l} \Tr(C_l\Pi_w) x^w y^l = \sum_{w,l} \Tr(C_l\Pi_w) y^w x^l = \sum_{w,l} \Tr(C_w\Pi_l) x^w y^l$, and thus $\Tr(C_l\Pi_w) = \Tr(C_w\Pi_l)$.

\end{proof}

There is another interesting way to see why $\bar{c}$ is symmetric. We make the substitution $y = \frac{1-z}{1+z}$ in \cref{eq:Cl-Fourier} (note that this defines $z$ so that it is symmetric with $y$ in the sense that they satisfy $(1+y)(1+z)=2$),
\begin{align}
  \sum_l C_l \Big(\frac{1-z}{1+z}\Big)^l &= (1+y)^n \sum_w \Big(\frac{1-y}{1+y}\Big)^w \Pi_w \nonumber\\
  2^{-n} \sum_l C_l (1-z)^l (1+z)^{n-l} &= \sum_w z^w \Pi_w .
\end{align}
Note the similarity between this equation and \cref{eq:Cl-Fourier}. Therefore we know
\begin{align}
  \Pi_w = 2^{-n}\sum_l c_w(l) C_l \quad \implies \quad \overline\Pi_w = \sum_l \bar{c}_{w,l} \overline C_l.
\end{align}
Comparing this equation with \cref{eq:reciprocity-U1b}, we know $\bar{c}_{l,w} = \bar{c}_{w,l}$.

\subsection{Properties  of operator \texorpdfstring{$A_k$}{Ak}}

Recall the operator $A_k$ defined in \cref{Ak},
\begin{equation}\tag{re \ref{Ak}}
    A_k = \sum_{w = 0}^k (-1)^w \binom{n-w}{k-w} \Pi_w.
\end{equation}
In this section, we prove several properties of $A_k$, including,
\begin{align}\label{eq:gen-conj-Ak-U1}
    \Tr(A_k C_l) = \sum_{w = 0}^n (-1)^w \binom{n-w}{k-w} \binom{n}{w} \sum_{r = 0}^w (-1)^r \binom{n-w}{l-r} \binom{w}{r}  = 2^k \binom{n}{l} \binom{l}{k} ,
\end{align}
and 
\begin{equation}\label{eq:self-inv}
    \Pi_w = \sum_{k = 0}^w (-1)^k \binom{n - k}{w - k} A_k.
\end{equation}
The second equation means that the coefficient matrix of the $A_k$ operators in terms of the projectors $\Pi_w$ is self-inverse.
Both identities can be shown by considering the generating functions of the $A_k$ operators,
\begin{align}\label{eq:Ak-Fourier}
  \sum_k A_k x^k = \sum_{w = 0}^k (-x)^w \sum_k \binom{n-w}{k-w} x^{k-w} \Pi_w = \sum_{w = 0}^k (-x)^w (1+x)^{n-w} \Pi_w.
\end{align}
Taking trace on the product of the two generating functions $\sum_k A_k x^k$ and $\sum_l C_l y^l$, we have
\begin{align}
  \sum_{k=0}^n \sum_{l=0}^n \Tr(A_k C_l) x^k y^l &= \sum_{w = 0}^n (-x)^w (1+x)^{n-w} (1-y)^w (1+y)^{n-w} \binom{n}{w} \nonumber\\
  &= (1+x)^n (1+y)^n \sum_{w = 0}^n \binom{n}{w} \Big(\frac{-x}{1+x} \frac{1-y}{1+y}\Big)^w  \nonumber\\
  &= (1+x)^n (1+y)^n \Big(1- \frac{x}{1+x} \frac{1-y}{1+y}\Big)^n  \nonumber\\
  &= \big(1+y+2xy\big)^n  \nonumber\\
  &= \sum_{l=0}^n \binom{n}{l} \big(1+2x\big)^l y^l  \nonumber\\
  &= \sum_{l=0}^n \sum_{k=0}^l 2^k \binom{n}{l} \binom{l}{k} x^k y^l,
\end{align}
where we sum over $w$ in the third line, and then perform binomial expansions in the last two lines. Therefore, matching coefficients of the monomials $x^k y^l$, we have
\begin{align}
  \Tr(A_k C_l) = 2^k \binom{n}{l} \binom{l}{k} .
\end{align}

Making the substitution $x = -\frac{z}{1+z}$ in \cref{eq:Ak-Fourier} (similar to the $C_l$ operators, $x$ and $z$ are symmetric in the sense that they satisfy $(1+x)(1+z) = 1$), we have
\begin{align}
  \sum_k A_k (-z)^k (1+z)^n = \sum_{w = 0}^k z^w \Pi_w.
\end{align}
Again, noting the similarity between this equation and \cref{eq:Ak-Fourier}, we conclude
\begin{equation}\tag{re \ref{eq:self-inv}}
    \Pi_w = \sum_{k = 0}^w (-1)^k \binom{n - k}{w - k} A_k.
\end{equation}

Using \cref{eq:Cl-orthogonal-U1}, we know that $A_k$ can be expanded in the $C_l$ basis as
\begin{align}
  A_k = \sum_l \frac{\Tr(A_kC_l)}{\Tr(C_l^2)} C_l = 2^{k-n} \sum_l \binom{l}{k} C_l\ .
\end{align}
Since the eigenvalues of $A_k$ have alternating signs $(-1)^w$ and $C_n = Z^{\otimes n}$, we know that
\begin{align}
    \| A_k \|_1 = \Tr(A_k C_n) = 2^k \binom{n}{k} .
\end{align}
Alternatively, we can calculate directly from the definition,
\begin{equation}
\begin{split}
    \|A_k\|_1 &= \sum_{w = 0}^k \binom{n-w}{k-w} \tr \Pi_w = \sum_{w=0}^k \binom{n-w}{k-w} \binom{n}{w} = \binom{n}{k} \sum_{w = 0}^k \binom{k}{w} = \binom{n}{k} 2^{k}
\end{split}
\end{equation}
where we use the identity
\begin{equation}
    \binom{n-w}{k-w}\binom{n}{w} = \binom{n}{k}\binom{k}{w}.
\end{equation}

\subsection{Properties of operator \texorpdfstring{$F_k$}{Fk}}

Recall the operator $F_k$ defined in \cref{eq:Fk},
\begin{equation}\tag{re \ref{eq:Fk}}
  F_k = \sum_{w = 0}^n (-1)^{w} \binom{n - \floor{\frac{k + 1}{2}} - w}{\floor{\frac{k}{2}} - w} \Pi_w.
\end{equation}
The coefficients are tabulated for small $k$ in \cref{fig:Fk-coef}. Note that the fact that the support of the $F_k$ operators on $\set{\Pi_w}$ is strictly increasing for in $k = 0, \cdots, n$ implies that they are linearly independent.

\begin{table}[ht]
    \begin{tblr}{ccccccccc}
        \diagbox{$F_k$}{$\Pi_w$} & $0$ & $1$ & $2$ & $3$ & $\cdots$ & $n - 2$ & $n - 1$ & $n$ \\ \midrule
        $F_3$ & $n - 2$ & $-1$ & $0$ & $0$ & $\cdots$ & $0$ & $1$ & $2 - n$ \\
        $F_4$ & $\frac{(n - 3)(n - 2)}{2}$ & $3 - n$ & $1$ & $0$ & $\cdots$ & 0 & $-1$ & $n - 3$ \\
        $F_5$ & $\frac{(n - 4)(n - 3)}{2}$ & $4 - n$ & $1$ & $0$ & $\cdots$ & $-1$ & $n - 4$ & $-\frac{(n-4)(n-3)}{2}$ \\
        $F_6$ & $\frac{(n-5)(n-4)(n-3)}{6}$ & $-\frac{(n-5)(n-4)}{2}$ & $n-5$ & $-1$ & $\cdots$ & $1$ & $5 - n$ & $\frac{(n-5)(n-4)}{2}$
    \end{tblr}
    \caption{The coefficients of $F_k$ in the basis of projectors $\Pi_w$ for $k = 3, \cdots, 6$.}\label{fig:Fk-coef}
\end{table}

In the first version of this work on arXiv, we proposed a combinatorial conjecture as Eq. (86) that 
\begin{align}\label{eq:conj-Fk}
  \Tr(F_k C_l) = \sum_{w = 0}^n (-1)^{w} \binom{n - \lfloor\frac{k + 1}{2}\rfloor - w}{\lfloor\frac{k}{2}\rfloor - w} \binom{n}{w} \sum_{r = 0}^w (-1)^r \binom{n-w}{l-r} \binom{w}{r} = 0 \quad \text{when } l<k.
\end{align}
Here, we prove a stronger result for all $l$ and $k$,
\begin{align}\label{eq:TrFkCl}
  \Tr(F_k C_l) &= \sum_{w = 0}^n (-1)^{w} \binom{n - \lfloor\frac{k + 1}{2}\rfloor - w}{n-k} \binom{n}{w} \sum_{r = 0}^w (-1)^r \binom{n-w}{l-r} \binom{w}{r} \nonumber\\ 
  &= \Big(1- \frac{(1-(-1)^k) (1+(-1)^l)}{4} \Big) 2^k \binom{n}{l} \binom{\floor{\frac{l}{2}}}{\floor{\frac{k}{2}}} =
  \begin{cases}
      2^k \binom{n}{l} \binom{\lfloor l/2\rfloor}{k/2} & \text{when $k$ is even} \\
      \frac{1-(-1)^l}{2} 2^k \binom{n}{l} \binom{(l-1)/2}{(k-1)/2} & \text{when $k$ is odd}.
  \end{cases}
\end{align}
This identity implies \cref{eq:conj-Fk}: when $\lfloor\frac{l}{2}\rfloor < \lfloor\frac{k}{2}\rfloor$, $\Tr(F_k C_l)=0$ because $\binom{\floor{\frac{l}{2}}}{\floor{\frac{k}{2}}} = 0$; when $k$ is odd and $l=k-1$, $\binom{\floor{\frac{l}{2}}}{\floor{\frac{k}{2}}}=1$, but still $\Tr(F_k C_l)=0$ due to the factor $\frac{1-(-1)^l}{2} = 0$. In summary, we know $\Tr(F_k C_l)=0$ when $l<k$.

\begin{proof}
  Using \cref{eq:Cl-Fourier}, we have
  \begin{align}
  \sum_{l=0}^{n} \Tr(F_{k} C_{l}) y^l 
  &= (1+y)^n \sum_{w = 0}^n (-1)^{w} \binom{n - \lfloor\frac{k + 1}{2}\rfloor - w}{n-k} \binom{n}{w} u^{-w} ,
\end{align}
where introduce $u = \frac{1+y}{1-y}$.
Using \cref{eq:binomder}, we have
\begin{align}\label{eq:gen-l}
  \sum_{l=0}^{n} \Tr(F_{k} C_{l}) y^l &= (1+y)^n u^{-\lfloor k/2\rfloor} \sum_{w = 0}^n (-1)^{w} \binom{n}{w} \binom{n - \lfloor\frac{k + 1}{2}\rfloor - w}{n-k} u^{\lfloor k/2\rfloor-w} \nonumber\\
  &= (1+y)^n u^{-\lfloor k/2\rfloor} \sum_{w = 0}^n (-1)^{w} \binom{n}{w} \frac{1}{(n-k)!} \Big(\frac{\d}{\d u}\Big)^{n-k} u^{n - \lfloor\frac{k + 1}{2}\rfloor - w} \nonumber\\
  &= (1+y)^n u^{-\lfloor k/2\rfloor} \frac{1}{(n-k)!} \Big(\frac{\d}{\d u}\Big)^{n-k} u^{- \lfloor\frac{k + 1}{2}\rfloor} (u-1)^n \nonumber\\
  &= (1+y)^n u^{-\lfloor k/2\rfloor} \sum_{i=0}^{n-k}\frac{1}{i!} \Big(\frac{\d}{\d u}\Big)^i u^{- \lfloor\frac{k + 1}{2}\rfloor} \frac{1}{(n-k-i)!} \Big(\frac{\d}{\d u}\Big)^{n-k-i} (u-1)^n \nonumber\\
  &= (1+y)^n u^{-\lfloor k/2\rfloor} \sum_{i=0}^{n-k} \binom{- \lfloor\frac{k + 1}{2}\rfloor}{i} u^{- \lfloor\frac{k + 1}{2}\rfloor-i} \binom{n}{k+i} (u-1)^{k+i} \nonumber\\
  &= (1+y)^n \sum_{i=0}^{n-k} (-1)^i \binom{\lfloor\frac{k + 1}{2}\rfloor+i-1}{i} \binom{n}{k+i} \Big(\frac{2y}{1+y}\Big)^{k+i} ,
\end{align}
where we sum over $w$ in the third line, use the general Leibniz rule \cref{eq:general-Leibniz} in the fourth line, use the relation \cref{eq:binomder} in the fifth line, and in the last line we subsitute back $\frac{1+y}{1-y} = u$ and replace the negative binomial coefficient with its positive equivalent. Let's introduce the variable $z$ for the index $n$ and, noting that the sum over $i$ in \cref{eq:gen-l} can be replaced by an unbounded sum due to the binomial coefficient $\binom{n}{k + i}$, consider the generating function
\begin{align}\label{eq:gen-FkCln}
  \sum_{n = 0}^\infty \sum_{l = 0}^n \Tr(F_{k} C_{l}) y^l z^n &= \sum_{i = 0}^\infty (-1)^i \binom{\lfloor\frac{k + 1}{2}\rfloor+i-1}{i} \Big(\frac{2y}{1+y}\Big)^{k+i} \sum_n \binom{n}{k+i} z^n(1+y)^n \nonumber\\
  &= \sum_i (-1)^i \binom{\lfloor\frac{k + 1}{2}\rfloor+i-1}{\lfloor\frac{k + 1}{2}\rfloor-1} \Big(\frac{2yz}{1-z(1+y)}\Big)^{k+i} \frac{1}{1-z(1+y)} \nonumber\\
  &= (-1)^{1-\lfloor\frac{k + 1}{2}\rfloor}\Big(\frac{2yz}{1-z(1+y)}\Big)^{\lfloor\frac{k}{2}\rfloor+1} \frac{1}{1-z(1+y)} \sum_i \binom{\lfloor\frac{k + 1}{2}\rfloor+i-1}{\lfloor\frac{k + 1}{2}\rfloor-1} \Big(\frac{-2yz}{1-z(1+y)}\Big)^{\lfloor\frac{k + 1}{2}\rfloor+i-1} \nonumber\\
  &= \Big(\frac{2yz}{1-z(1+y)}\Big)^{\lfloor\frac{k}{2}\rfloor+1} \frac{1}{1-z(1-y)} \Big(\frac{2yz}{1-z(1-y)}\Big)^{\lfloor\frac{k + 1}{2}\rfloor-1} \nonumber\\
  &= (2yz)^k \Big(\frac{1}{1-z(1+y)}\Big)^{\lfloor\frac{k}{2}\rfloor+1}  \Big(\frac{1}{1-z(1-y)}\Big)^{\lfloor\frac{k + 1}{2}\rfloor} 
\end{align}
where in the second line we sum over $n$ using the generating function in \cref{eq:gen-binom}, the third line rearranges terms (recall that $k = \floor{\frac{k + 1}{2}} + \floor{\frac{k}{2}}$), and the fourth line sums over $i$ using
\begin{equation}\label{eq:gen-binomshift}
    \sum_{i = 0}^\infty \binom{i + \alpha}{\alpha} x^i = x^{-\alpha} \sum_{i = 0}^\infty \binom{i + \alpha}{\alpha} x^{i+\alpha} = x^{-\alpha} \sum_{j = \alpha}^\infty \binom{j}{\alpha} x^{j} = x^{-\alpha} \sum_{j = 0}^\infty \binom{j}{\alpha} x^{j} = \frac{1}{(1-x)^{\alpha+1}}.
\end{equation}

In the last line of \cref{eq:gen-FkCln}, we may expand the two factors in parenthesis as geometric series. Due to the leading term $(2yz)^k$, every term in the result will be of degree at least $k$ in $y$. Therefore, the coefficient of $y^l$ is $0$ when $l < k$, i.e. $\Tr(F_{k} C_{l})=0$.

When $k=2s+1$ is odd, we have
\begin{align}
    \sum_{s,l,n} \Tr(F_{2s+1} C_{l}) x^s y^l z^n &= \sum_s x^s (2yz)^{2s+1} \Big(\frac{1}{1-z(1+y)}\Big)^{s+1}  \Big(\frac{1}{1-z(1-y)}\Big)^{s+1} \nonumber\\
  &= \sum_s x^s (2yz)^{2s+1} \Big(\frac{1}{(1-z)^2-(yz)^2}\Big)^{s+1}  \nonumber\\
  &= \sum_s x^s (2yz)^{-1} \Big(\frac{4}{(\frac{1-z}{yz})^2-1}\Big)^{s+1}  \nonumber\\
  &= (2yz)^{-1} \frac{4}{(\frac{1-z}{yz})^2-1} \sum_s  \Big(\frac{4x}{(\frac{1-z}{yz})^2-1}\Big)^s  \nonumber\\
  &= \frac{2yz}{(1-z)^2-(4x+1)(yz)^2},
\end{align}
where in the fifth line we sum the geometric series.
Notice that the right-hand side is an odd function in $y$, and thus $\Tr(F_{k} C_{l}) = 0$ when $k$ is odd and $l$ is even. Expanding the denominator as a geometric series, we have
\begin{align}
  \sum_{s, n} \sum_{l = 0}^n \Tr(F_{2s+1} C_{l}) x^s y^l z^n &= \frac{2yz/(1-z)^2}{1-(4x+1)(\frac{yz}{1-z})^2} \nonumber\\
  &= \frac{2}{1-z} \sum_l (4x+1)^l(\frac{yz}{1-z})^{2l+1} \nonumber\\
  &= \frac{2}{1-z} \sum_{s,l} \binom{l}{s} (4x)^s y^{2l+1} (\frac{z}{1-z})^{2l+1} \nonumber\\
  &= \sum_{s,l,n} 2^{2s+1} \binom{n}{2l+1} \binom{l}{s}x^s y^{2l+1} z^n,
\end{align}
where in the third line we use the binomial theorem and the fourth uses \cref{eq:gen-binom} in the variable $z / (1 - z)$. Comparing the two sides, we have
\begin{align}
  \Tr(F_{k} C_{l}) &= \frac{1-(-1)^l}{2} 2^k \binom{n}{l} \binom{(l-1)/2}{(k-1)/2} .
\end{align}
When $k=2s$, we have
\begin{align}
  \sum_{s,n,l} \Tr(F_{2s} C_{l}) x^s y^l z^n &= \sum_s x^s (2yz)^{2s} \Big(\frac{1}{1-z(1+y)}\Big)^{s+1}  \Big(\frac{1}{1-z(1-y)}\Big)^s \nonumber\\
  &= \sum_s x^s (2yz)^{2s} \Big(\frac{1}{(1-z)^2-(yz)^2}\Big)^s \Big(\frac{1}{1-z(1+y)}\Big) \nonumber\\
  &= \sum_s \Big(\frac{4x}{(\frac{1-z}{yz})^2-1}\Big)^s \Big(\frac{1}{1-z(1+y)}\Big) \nonumber\\
  &= \frac{1-z+yz}{(1-z)^2-(4x+1)(yz)^2}
\end{align}
By performing a similar calculation, we have
\begin{align}
  \Tr(F_{k} C_{l}) &= 2^k \binom{n}{l} \binom{\lfloor l/2\rfloor}{k/2}.
\end{align}
\end{proof}
Note that $C_0 = \id$. Therefore we also know $\Tr(F_k) = 0$ for $k \geq 1$.

Using \cref{eq:Cl-orthogonal-U1}, we know that $F_k$ can be expanded in the $C_l$ basis as
\begin{align}
  F_k = \sum_l \frac{\Tr(F_kC_l)}{\Tr(C_l^2)} C_l
  &= 2^{k-n} \sum_l \Big(1- \frac{(1-(-1)^k) (1+(-1)^l)}{4} \Big) \binom{\lfloor\frac{l}{2}\rfloor}{\lfloor\frac{k}{2}\rfloor} C_l \nonumber\\
  &=
  \begin{cases}
      2^{k-n} \sum_l \binom{\lfloor l/2\rfloor}{k/2} C_l & \text{when $k$ is even} \\
      2^{k-n} \sum_l \frac{1-(-1)^l}{2} \binom{(l-1)/2}{(k-1)/2} C_l & \text{when $k$ is odd}.
  \end{cases}
\end{align}

Next, we show that
  \begin{align}\tag{re \ref{eq:Fk-norm}}
    \|F_{k}\|_1 &= \sum_{w = 0}^n \Bigg| \binom{n - \lfloor\frac{k + 1}{2}\rfloor - w}{n-k} \Bigg| \binom{n}{w}
    = 
    \begin{cases}
    2^k \binom{n/2}{k/2} &\text{ when $k$ is even} \\
    2^k \binom{(n-1)/2}{(k-1)/2} &\text{ when $k$ is odd} 
    \end{cases}.
  \end{align}

\begin{proof}
First, note that
\begin{align}
  F_{k} & = \sum_{w = 0}^n (-1)^w \binom{n - \lfloor\frac{k + 1}{2}\rfloor - w}{n-k} \Pi_w \nonumber\\
  & = \sum_{w = 0}^{ \lfloor\frac{k}{2}\rfloor} (-1)^w \binom{n - \lfloor\frac{k + 1}{2}\rfloor - w}{n-k} \Pi_w + \sum_{w = n-\lfloor\frac{k + 1}{2}\rfloor+1}^n (-1)^{w+n-k} \binom{-\lfloor\frac{k}{2}\rfloor + w-1}{n-k} \Pi_w \nonumber\\
  & = \sum_{w = 0}^{ \lfloor\frac{k}{2}\rfloor} (-1)^w \binom{n - \lfloor\frac{k + 1}{2}\rfloor - w}{n-k} \Pi_w + \sum_{w = 0}^{\lfloor\frac{k - 1}{2}\rfloor} (-1)^{w-k} \binom{n-w-\lfloor\frac{k}{2}\rfloor-1}{n-k} \Pi_{n-w}.
\end{align}
All binomials in this equation are positive. Therefore when $n-k$ is even, the terms with Hamming weight $w$ have sign of $(-1)^w$. Therefore using \cref{eq:TrFkCl}, we know when $n-k$ is even,
\begin{align}
  \| F_{k} \|_1 & = \Tr (F_k Z^{\otimes n})  = \Tr (F_k C_n) = 
  \begin{cases}
      2^k \binom{n/2}{k/2} & \text{when $k$ is even} \\
      2^k \binom{(n-1)/2}{(k-1)/2} & \text{when $k$ is odd}.
  \end{cases}
\end{align}
Now we would like to extend this equality to $n-k$ odd. First consider $k=2s+1$, and we have shown that for odd $n$, the following is true
\begin{align}
  \|F_{2s}\|_1 = \sum_{w = 0}^s \binom{n - s-1 - w}{s-w} \binom{n}{w} = 2^{2s+1} \binom{(n-1)/2}{s}.
\end{align}
For a fixed $s$, both sides can be viewed as a degree $s$ polynomial in $n$. We know that $s+1$ points uniquely determine a degree $s$ polynomial. Since we have already shown that the two polynomials coincide on all odd $n$, we know these two polynomials must be the same one. Therefore the identity is also true for all even $n$. For $k=2s$, using a similar argument we can argue \cref{eq:Fk-norm} is also true for odd $n$.
\end{proof}

\subsection{\texorpdfstring{$t_{\max}$}{tmax} and lower bound on \texorpdfstring{$n$}{n}}

From the discussion in the main text, we know that the set of all $k$-qubit $\U(1)$ invariant unitaries $\V_{n, k}^{\U(1)}$ is a $t$-design for all $\U(1)$ invariant unitaries $\V_{n, n}^{\U(1)}$ if and only if $t\leq t_{\max}$, where
\begin{align}\tag{re \ref{eq:tmax-U1}}
    t_{\max}+1 = \frac{1}{2}\| F_{k+1} \|_1 = 
    \begin{cases}
      2^k \binom{(n-1)/2}{k/2} & \text{when $k$ is even} \\
      2^k \binom{n/2}{(k+1)/2} & \text{when $k$ is odd} .
  \end{cases}
\end{align}

As mentioned in the main text, the above values are guaranteed to give $t_{\max}$ only when $n$ is above a certain threshold, which is expected to be much larger than $k$. This threshold can be estimated as follows. We first consider the case when $k$ is even. The above solution for $t_{\max}$ only involves $\lambda_i: i = 0, \cdots, k+1$. If we include the next irrep $\lambda_{k+2}$ and solve $q_i' : i = 0, \cdots, k+2$ with $|q_{k+2}'| \geq 1$, then we have the upper bound on $t_{\max}+1$ 
\begin{align}\label{rf}
  \frac{1}{2}\sum_{i=0}^{k+2} |q_i'| m_i \geq m_{k+2} = \binom{n}{\frac{k}{2}+1}.
\end{align}
Recall that for a fixed even $k$, $\|F_{k + 1}\|_1$ scales as $n^{\frac{k}{2}}$, whereas for sufficiently large $n$, the right-hand of \cref{rf} scales as $n^{\frac{k}{2}+1}$. Therefore, for sufficiently large $n$, $ \frac{1}{2} \|F_{k + 1}\|_1$ should be the optimal solution, i.e., 
\begin{align}\label{eq:bound-condition-U1}
  \binom{n}{\frac{k}{2}+1} \ge \frac{1}{2} \|F_{k + 1}\|_1\ .
\end{align}
To determine how large $n$ should be, 
we first give a lower bound on the ratio
\begin{align}
    \frac{\binom{n}{\frac{k}{2}+1}}{\frac{1}{2} \|F_{k + 1}\|} = \frac{\binom{n}{\frac{k}{2}+1}}{2^k \binom{(n-1)/2}{k/2}} = \frac{n}{2^\frac{k}{2} (\frac{k}{2}+1)} \frac{n-1}{n-1} \frac{n-2}{n-3} \cdots \frac{n-\frac{k}{2}}{n-k+1} \geq \frac{n}{2^\frac{k}{2} (\frac{k}{2}+1)}.
\end{align}
Therefore, if 
\begin{align}
    n \geq 2^\frac{k}{2} (\frac{k}{2}+1) = 2^{\frac{k-2}{2}} (k+2),
\end{align}
then the condition \cref{eq:bound-condition-U1} is satisfied, and the upper bound in \cref{eq:tmax-upper-U1} is tight.

Similarly, for odd $k$, the condition corresponding to \cref{eq:bound-condition-U1} is 
\begin{align}\label{eq:nbound-cond-kodd}
  \binom{n}{\frac{k+3}{2}} \ge \frac{1}{2} \|F_{k + 1}\|_1\ .
\end{align}
When $k\geq 5$, a lower bound on the ratio is
\begin{align}
    \frac{\binom{n}{(k+3)/2}}{\frac{1}{2} \|F_{k + 1}\|_1} = \frac{\binom{n}{(k+3)/2}}{2^k \binom{n/2}{(k+1)/2}} &= \frac{n(n-1)}{2^\frac{k-3}{2} (k+3)} \frac{n-2}{n} \frac{n-3}{n-2} \cdots \frac{n-\frac{k+1}{2}}{n-k+3} \frac{1}{n-k+1} \nonumber\\
    & \geq \frac{1}{2^\frac{k-3}{2} (k+3)} \frac{(n-1)(n-3)}{n-k+1} \geq \frac{n}{2^\frac{k-3}{2} (k+3)}.
\end{align}
Therefore, if 
\begin{align}
    n \geq 2^{\frac{k-3}{2}} (k+3) 
\end{align}
then the upper bound in \cref{eq:tmax-upper-U1} is tight.
Together, we can write
\begin{align}\tag{re \ref{eq:nbound-U1}}
  n \geq 2^{\lfloor\frac{k}{2}\rfloor} \lfloor\frac{k+3}{2}\rfloor ,
\end{align}
for any even $k$ and odd $k\geq 5$. The only exceptions to this formula is when $k=1$ and $k=3$, where we have $n \geq 3$ and $n \geq 7$ respectively (rather than $n \geq 2$ and $n \geq 6$ predicted by this formula). In the following, we will show that when $(k,n) = (3,6)$, while \cref{eq:nbound-cond-kodd} is not satisfied, our $t_{\max}$ given in \cref{eq:tmax-U1} is still correct. Therefore, the bound on $n$ given \cref{eq:nbound-U1} is also correct for $k=3$.

When $(k,n) = (3,6)$, we know that the solution to $Q$ in \cref{thm:tdesign} is three dimensional, and is spanned by $F_4$, $F_5$ and $F_6$. Equivalently, we know $\mathbf{q}^\T$ is in the span of the rows of
\begin{align}
    \mathbf{Q} = 
    \begin{pmatrix}
        6 & -3 & 1 & 0 & 0 & -1 & 3 \\
        8 & -3 & 0 & 1 & 0 & -3 & 8 \\
        3 & -1 & 0 & 0 & 1 & -3 & 6 \\
    \end{pmatrix},
\end{align}
or in other words
\begin{align}
    \mathbf{q}^\T = 
    \begin{pmatrix}
        a_1 & a_2 & a_3 \\
    \end{pmatrix}
    \mathbf{Q}\  ,
\end{align}
for arbitrary $a_1, a_2, a_3 \in\mathbb{Z}$.  Then  $t_{\max}+1 \leq \frac{1}{2}|\mathbf{q}|\cdot \mathbf{m}=\frac{1}{2} \sum_{w=0}^6 \Tr(\Pi_w) |q(w)| $. We define the $l_1$-norm of a vector $\mathbf{r}$, as 
\be\label{eq:l1}
\|\mathbf{r}\|_1=\sum_{w} |r(w)|=|\mathbf{r}|\cdot \mathbf{1}\ ,
\ee
where $\mathbf{1}$ is the vector with all 1 components. Then,
 $|\mathbf{q}|\cdot \mathbf{m}$ is equal to  $|D^T \mathbf{q}|\cdot \mathbf{1}=\|D^T \mathbf{q}\|_1$, where $\mathbf{1}$ is the vector with all 1 components, $D = \mathrm{diag}\{\Tr\Pi_0, \Tr\Pi_1, \cdots \Tr\Pi_6\} = 
\mathrm{diag}\{\binom{6}{0}, \binom{6}{1}, \cdots \binom{6}{6}\}$, and
\begin{align}
    (D\mathbf{q})^\T  = 
    \begin{pmatrix}
        a_1 & a_2 & a_3 \\
    \end{pmatrix}
    \begin{pmatrix}
        6 & -18 & 15 & 0 & 0 & -6 & 3 \\
        8 & -18 & 0 & 20 & 0 & -18 & 8 \\
        3 & -6 & 0 & 0 & 15 & -18 & 6 \\
    \end{pmatrix}.
\end{align}
Then, it is not hard to check that $|\mathbf{q}|\cdot \mathbf{m}$ is minimized when $(a_1, a_2, a_3) = (1, 0, 0)$ or $(0, 0, 1)$, corresponding to the operator $F_4$ and $\widetilde{F}_4$ respectively. Therefore we know that the $t_{\max}$ from \cref{eq:tmax-U1} also holds for $(k,n) = (3,6)$.

\newpage
\section{\texorpdfstring{$\SU(2)$}{SU(2)} symmetry}\label{app:SU2}

In this section, we study the $\SU(2)$ example in more detail, and prove the identities used to determine $t_{\max}$ in this case. In particular, we determine the value of $\tr A_{2s} C_{2m}$ and show that it is zero when $m < s$, and we also determine norm $\|A_k\|_1$.
The following lemma summarizes these identities.

\lemAk*

Recall that the isotypic decomposition $(\complex^2)^{\otimes n} = \bigoplus_{j = j_{\min}}^{n / 2} \hilbert[Q]_j \otimes \hilbert[M]_j$ implies that the Hermitian projector $\Pi_j$ to the charge sector with angular momentum $j$ satisfies $\tr \Pi_\lambda = d(j) m(n, j)$ where
\begin{equation}
    \dim \hilbert[Q]_j = d(j) = 2 j + 1,
\end{equation}
and $m(n, j) = \dim \hilbert[M]_j$ is its multiplicity. The multiplicities satisfy a recurrence relation \cite{Marvian2024Rotationally} which can be solved as
\begin{equation}
  m(n, j) = \binom{n}{n / 2 - j} \frac{2 j + 1}{n / 2 + j + 1}.
\end{equation}

\subsection{Properties of operator \texorpdfstring{$C_{l}$}{C2m}}
In this subsection and the following, we will use the index $i$ to label the irreps of $\SU(2)$, which is related to the angular momentum as $i = \frac{n}{2}-j$. $i$ takes value from $0, 1, \cdots, \lfloor \frac{n}{2} \rfloor$, and can also be understood as the number of boxes in the second row of the Young diagram.

Recall that in \cite{Marvian2024Rotationally}, the operator $C_l: l=0,2,\cdots $ as defined in \cref{eq:Cl-SU2} is shown to have the eigen-decomposition 
\begin{align}\label{eq:Cl-expression}
    C_l=C_{2m} = (2m-1)!! \binom{n}{2m} \sum_{i=0}^{\lfloor \frac{n}{2} \rfloor} \sum_{r=0}^m (-4)^r \binom{m}{r} \binom{n-2r}{i-r} \frac{n-2i+1}{n-i-r+1} \frac{\Pi_i}{m_i},
\end{align}
where we write 
\be
 m=\frac{l}{2}\ : l = 0, 2, \cdots, 2\floor{\frac{n}{2}}\ ,
\ee
and
\begin{align}
    m_i := m(n,\frac{n}{2}-i) = \binom{n}{i}\frac{n-2i+1}{n-i+1} = \binom{n}{i} - \binom{n}{i-1}.
\end{align}
The factor $(2m-1)!! \binom{n}{2m}$ in $C_{2m}$ is just an overall factor that makes the eigenvalues of $C_{2m}$ integer which does not make any difference for our purpose. Therefore, in order to simplify our calculation, in the following we will define another operator $\widetilde{C}_{2m}$ without this factor, i.e.,
\begin{align}\label{eq:Ctilde}
    \widetilde{C}_{2m} = \Big((2m-1)!! \binom{n}{2m}\Big)^{-1} C_{2m} =\sum_{i=0}^{\lfloor \frac{n}{2} \rfloor} \sum_{r=0}^m (-4)^r \binom{m}{r} \binom{n-2r}{i-r} \frac{n-2i+1}{n-i-r+1} \frac{\Pi_i}{m_i}.
\end{align}
In the following, we will prove several interesting properties of $\widetilde{C}_{2m}$, including $\widetilde{C}_{2m} = \widetilde{C}_{n-1-2m} \widetilde{C}_{n-1}$ when $n$ is odd, and we also calculate $\Tr(\widetilde{C}_{2m}^2)$.

We first introduce two different ways of rewriting the operator $\widetilde{C}_{2m}$, both of which will be useful in later calculations. The first one is straightforward: using
\begin{align}
    \binom{n-2r}{i-r} \frac{n-2i+1}{n-i-r+1} = \binom{n-2r}{i-r} - \binom{n-2r}{i-r-1},
\end{align}
we can write
\begin{align}
    \widetilde{C}_{2m} = \sum_{i=0}^{\lfloor \frac{n}{2} \rfloor} \sum_{r=0}^m (-4)^r \binom{m}{r} \Big( \binom{n-2r}{i-r} - \binom{n-2r}{i-r-1} \Big) \frac{\Pi_i}{m_i}.
\end{align}
While the label $i$ for the irreps of $\SU(2)$ is only meaningful for $i = 0, 1, \cdots, \lfloor \frac{n}{2} \rfloor$, the coefficients of the right-hand side of
\begin{align}\label{eq:C2mcoeff}
    \Tr(\widetilde{C}_{2m} m_i \frac{\Pi_i}{\Tr\Pi_i}) = \sum_{r=0}^m (-4)^r \binom{m}{r} \Big( \binom{n-2r}{i-r} - \binom{n-2r}{i-r-1} \Big)
\end{align}
are non-vanishing only for $i = 0, 1, \cdots, n+1$ and we can formally extend this label. From this expression, it is clear that these coefficients satisfy the relation
\begin{align}
    \Tr(\widetilde{C}_{2m} m_i \frac{\Pi_i}{\Tr\Pi_i}) = -\Tr(\widetilde{C}_{2m} m_{n+1-i} \frac{\Pi_{n+1-i}}{\Tr\Pi_{n+1-i}}).
\end{align}

Next we will argue that $\widetilde{C}_{2m}$ can also be written as
\begin{align}
    \widetilde{C}_{2m} = \sum_{i=0}^{\lfloor \frac{n}{2} \rfloor} \sum_r (-1)^r \binom{n-2m}{i-r} \binom{2m+1}{r} \frac{\Pi_i}{m_i}.
\end{align}
\begin{proof}
We consider the following generating function for the index $i$, where we are formally extending the index $i$ to $n + 1$ (in particular, the right-hand side is meaningful for all $i$ and encodes the values of \cref{eq:C2mcoeff} as the coefficients of $x^i$ for $i = 0, 1, \cdots, \floor{\frac{n}{2}}$):
\begin{align}
  \sum_{i=0}^{n+1} \Tr(\widetilde{C}_{2m} m_i \frac{\Pi_i}{\Tr\Pi_i}) x^i = \sum_{i=0}^{n+1} \sum_{r=0}^m (-4)^r \binom{m}{r} \Big( \binom{n-2r}{i-r} - \binom{n-2r}{i-r-1} \Big) x^i.
\end{align}
The generating function for the first term is
\begin{align}
  \sum_{i,r} (-4)^r \binom{m}{r} \binom{n-2r}{i-r} x^i
  &= \sum_{i,r} (-4x)^r \binom{m}{r} \binom{n-2r}{i-r} x^{i-r} \nonumber\\
  &= (1+x)^n \sum_r \binom{m}{r} \Big(\frac{-4x}{(1+x)^2}\Big)^r \nonumber\\
  &= (1+x)^n \Big(1-\frac{4x}{(1+x)^2}\Big)^m \nonumber\\
  &= (1+x)^{n-2m} (1-x)^{2m},
\end{align}
where in the second line we sum over $i$ using
\begin{equation}
    \sum_{i=0}^{n+1} \binom{n-2r}{i-r} x^i = x^r \sum_{i = r}^{n + 1} \binom{n-2r}{i-r} x^{i - r} =  (1+x)^{n-r}.
\end{equation}
Similarly, we can calculate
\begin{align}
  \sum_{i,r} (-4)^r \binom{m}{r} \binom{n-2r}{i-r-1} x^i
  &= x\sum_{i,r} (-4x)^r \binom{m}{r} \binom{n-2r}{i-r-1} x^{i-r-1} \nonumber\\
  &= x(1+x)^{n-2m} (1-x)^{2m}.
\end{align}
Putting these two results together, we formally have
\begin{align}\label{eq:TrC2mmixi}
  \sum_i\Tr(\widetilde{C}_{2m} m_i \frac{\Pi_i}{\Tr\Pi_i}) x^i &= (1+x)^{n-2m} (1-x)^{2m+1} = \sum_{i,r} (-1)^r \binom{n-2m}{i-r} \binom{2m+1}{r} x^{i},
\end{align}
from which we can read
\begin{align}
    \widetilde{C}_{2m} = \sum_{i=0}^{\lfloor \frac{n}{2} \rfloor} \Tr(\widetilde{C}_{2m} m_i \frac{\Pi_i}{\Tr\Pi_i}) \frac{\Pi_i}{m_i} = \sum_{i=0}^{\lfloor \frac{n}{2} \rfloor} \sum_r (-1)^r \binom{n-2m}{i-r} \binom{2m+1}{r} \frac{\Pi_i}{m_i}.
\end{align}
\end{proof}

In this form, it is easier to see that $\widetilde{C}_{n-1}$ has a particularly simple form when $n$ is odd:
\begin{align}\label{eq:Cn-1}
    \widetilde{C}_{n-1} = \sum_{i=0}^{\lfloor \frac{n}{2} \rfloor} \sum_r (-1)^r \binom{1}{i-r} \binom{n}{r} \frac{\Pi_i}{m_i} = \sum_{i=0}^{\lfloor \frac{n}{2} \rfloor} (-1)^i \Big(\binom{n}{i} - \binom{n}{i-1}\Big) \frac{\Pi_i}{m_i} = \sum_{i=0}^{\lfloor \frac{n}{2} \rfloor} (-1)^i \Pi_i.
\end{align}
This operator is similar to the $C_n = Z^{\otimes n}$ operator in the $\U(1)$ case, in the sense that their eigenvalues have alternating signs $(-1)^i$ and $(-1)^w$, respectively. Furthermore, similar to \cref{eq:Znb}, when $n$ is odd we have the following relation
\begin{align}
  \widetilde{C}_{2m}\widetilde{C}_{n-1} &= \sum_{i=0}^{\lfloor \frac{n}{2} \rfloor} \sum_r (-1)^{i-r} \binom{n-2m}{i-r} \binom{2m+1}{r} \frac{\Pi_i}{m_i} \nonumber\\
  &= \sum_{i=0}^{\lfloor \frac{n}{2} \rfloor} \sum_r (-1)^{r} \binom{n-2m}{r} \binom{2m+1}{i-r} \frac{\Pi_i}{m_i} = \widetilde{C}_{n-1-2m}.
\end{align}
This implies that 
\begin{align}\label{eq:Ctilde-symmetry}
  \Tr(\widetilde{C}_{2m} \Pi_i) = (-1)^i \Tr(\widetilde{C}_{2m} \widetilde{C}_{n-1} \Pi_i)\widetilde{C}_{n-1-2m} = (-1)^i \Tr(\widetilde{C}_{n-1-2m} \Pi_i).
\end{align}

Besides, \cref{eq:Cn-1} also implies that 
\begin{align}
    \| C_{n-1} \|_1 = n!! \| \widetilde{C}_{n-1} \|_1 = n!! 2^n \sim \sqrt{2n} \big( \frac{n}{e} \big)^{\frac{n}{2}} 2^n\ ,
\end{align}

When $n$ is even, we can write
\begin{align}
    \frac{1}{n+1}\widetilde{C}_n &= \frac{1}{n+1}\sum_{i=0}^{\lfloor \frac{n}{2} \rfloor} \sum_r (-1)^r \binom{0}{i-r} \binom{n+1}{r} d_i \frac{\Pi_i}{m_i d_i} \nonumber\\
    &= \frac{1}{n+1} \sum_{i=0}^{\lfloor \frac{n}{2} \rfloor} (-1)^i \binom{n+1}{i} (n+1-2i) \frac{\Pi_i}{m_i d_i}  \nonumber\\
    &= \sum_{i=0}^{\lfloor \frac{n}{2} \rfloor} (-1)^i \Big(\binom{n}{i} - \binom{n}{i-1}\Big) \frac{\Pi_i}{m_i d_i} = \sum_{i=0}^{\lfloor \frac{n}{2} \rfloor} (-1)^i \frac{\Pi_i}{d_i},
\end{align}
where $d_i = d(\frac{n}{2}-i) = n-2i+1$ is dimension of the irreps of $\SU(2)$ with label $i$. This implies that 
\begin{align}
    \| C_n \|_1 = (n-1)!! \binom{n}{n/2} \sim \sqrt{2(n-1)} \big( \frac{n-1}{e} \big)^{\frac{n-1}{2}} \frac{2^n}{\sqrt{n\pi/2}} \sim \frac{2}{\sqrt{\pi}} \big( \frac{n}{e} \big)^{\frac{n-1}{2}} 2^n.
\end{align}

Next, we will prove that
\begin{lemma}\label{lem:C2msquare}
The $\widetilde{C}_{2m}$ operators defined in \cref{eq:Ctilde} satisfy
\begin{align}
    \Tr(\widetilde{C}_{2m}^2) = (2m+1) 2^n \binom{n}{2m}^{-1},
\end{align}
or in terms of the $C_{2m}$ operators,
\begin{align}
    \Tr(C^2_{2m}) = (2m+1)!! (2m-1)!! 2^n \binom{n}{2m}.
\end{align}
\end{lemma}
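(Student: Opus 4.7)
The plan is to reduce the trace computation to the classical dual orthogonality of Krawtchouk polynomials. First, using the isotypic decomposition $\Tr \Pi_i = d_i m_i$ with $d_i = n-2i+1$ together with the identity $d_i/m_i = (n-i+1)/\binom{n}{i} = (n+1)/\binom{n+1}{i}$, I will rewrite
\[
\Tr(\widetilde{C}_{2m}^2) = \sum_{i=0}^{\lfloor n/2 \rfloor} \frac{a_i(m)^2 \Tr \Pi_i}{m_i^2} = (n+1) \sum_{i=0}^{\lfloor n/2 \rfloor} \frac{a_i(m)^2}{\binom{n+1}{i}},
\]
where $a_i(m) = [x^i] f_m(x)$ with $f_m(x) = (1+x)^{n-2m}(1-x)^{2m+1}$ being the generating function established in \eqref{eq:TrC2mmixi}. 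Then, using the antisymmetry $a_{n+1-i}(m) = -a_i(m)$ (which follows from $x^{n+1} f_m(1/x) = -f_m(x)$) together with the symmetry $\binom{n+1}{n+1-i} = \binom{n+1}{i}$, I will fold the partial sum into a full symmetric sum at the cost of a factor of 2:
\[
\Tr(\widetilde{C}_{2m}^2) = \frac{n+1}{2} \sum_{i=0}^{n+1} \frac{a_i(m)^2}{\binom{n+1}{i}}.
\]

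The heart of the proof is the observation that $a_i(m) = \sum_j (-1)^j \binom{2m+1}{j}\binom{n-2m}{i-j}$ is precisely the Krawtchouk polynomial $K_i(2m+1;\, n+1)$ of degree $i$. The dual orthogonality of the Krawtchouk family (obtained by inverting the standard orthogonality $\sum_x \binom{N}{x} K_n(x;N) K_{n'}(x;N) = 2^N \binom{N}{n} \delta_{n n'}$) reads
\[
\sum_{i=0}^{N} \frac{K_i(x; N)^2}{\binom{N}{i}} = \frac{2^N}{\binom{N}{x}},
\]
so applying it with $N = n+1$ and $x = 2m+1$ yields $\sum_{i=0}^{n+1} a_i(m)^2/\binom{n+1}{i} = 2^{n+1}/\binom{n+1}{2m+1}$. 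Combining the three steps with the elementary identity $(n+1)\binom{n}{2m} = (2m+1)\binom{n+1}{2m+1}$ produces $\Tr(\widetilde{C}_{2m}^2) = (2m+1)\, 2^n/\binom{n}{2m}$, and rescaling by the normalization $[(2m-1)!!\binom{n}{2m}]^2$ recovers the stated formula for $\Tr(C_{2m}^2)$. The main obstacle is the third step; if one prefers to avoid invoking Krawtchouk polynomials by name, the same identity can be derived directly from $f_m$ via the beta representation $1/\binom{N}{i} = (N+1)\int_0^1 t^i (1-t)^{N-i}\, dt$ combined with a Cauchy contour integral for $a_i(m)^2$, though the Krawtchouk viewpoint makes the structure of the answer transparent.
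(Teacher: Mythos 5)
Your proof is correct, and it takes a genuinely different route from the paper's. The paper forms the generating function $\sum_m \binom{n}{2m}\widetilde C_{2m}\, y^{2m}$, squares it, evaluates the trace through a fairly long three-term expansion in the functions $f_i(y)=(1-y)^i(1+y)^{n-i}$, and matches coefficients of $y^{4m}$, using the orthogonality $\Tr(\widetilde C_{2m}\widetilde C_{2m'})\propto\delta_{mm'}$ as an input to isolate the diagonal terms. You instead fix $m$, use $\Tr\Pi_i=d_i m_i$ and $d_i/m_i=(n+1)/\binom{n+1}{i}$ to reduce the trace to $(n+1)\sum_i a_i(m)^2/\binom{n+1}{i}$, fold the sum onto the full range $0\le i\le n+1$ via the antisymmetry $a_{n+1-i}(m)=-a_i(m)$ (which also kills the would-be middle term when $n$ is odd), and recognize $a_i(m)=K_i(2m+1;\,n+1)$ as a Krawtchouk polynomial, so that the dual orthogonality relation finishes the job; the generating function $\sum_i a_i(m)x^i=(1+x)^{n-2m}(1-x)^{2m+1}$ you rely on is \cref{eq:TrC2mmixi}, which the paper establishes before the lemma, so there is no circularity, and the final bookkeeping $(n+1)/\binom{n+1}{2m+1}=(2m+1)/\binom{n}{2m}$ is elementary. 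What your approach buys is transparency and brevity --- the answer $2^{n+1}/\binom{n+1}{2m+1}$ is literally a diagonal entry of the dual Krawtchouk orthogonality at $x=2m+1$ --- and as a bonus the same dual orthogonality at $x=2m+1$, $y=2m'+1$ yields the off-diagonal vanishing $\Tr(\widetilde C_{2m}\widetilde C_{2m'})=0$ for $m\neq m'$ for free, which the paper's argument takes as given. The cost is importing a classical external fact, whereas the paper's computation is entirely self-contained.
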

\begin{proof}
Similar to the $\U(1)$ case, we would first calculate the generating function $\sum_m \widetilde{C}_{2m} y^{2m}$. However, it turns out that in this case, it is easier to deal with the generating function $\sum_m \binom{n}{2m} \widetilde{C}_{2m} y^{2m}$.

Recall that from \cref{eq:TrC2mmixi} we know
\begin{align}
  \sum_i\Tr(\widetilde{C}_{2m} m_i \frac{\Pi_i}{\Tr\Pi_i}) x^i &= (1+x)^{n-2m} (1-x)^{2m+1} .
\end{align}
Therefore
\begin{align}
  \sum_{i,m} \binom{n}{2m} \Tr(\widetilde{C}_{2m} m_i \frac{\Pi_i}{\Tr\Pi_i}) x^i y^{2m} &= (1-x)(1+x)^n \sum_m \binom{n}{2m} \Big(\frac{y(1-x)}{1+x} \Big)^{2m} \nonumber\\
  &= \frac{1}{2} (1-x)(1+x)^n \Big(\Big(1+\frac{y(1-x)}{1+x} \Big)^n + \Big(1-\frac{y(1-x)}{1+x} \Big)^n \Big) \nonumber\\
  &= \frac{1}{2} (1-x) \big((1+x+y(1-x))^n + (1+x-y(1-x))^n \big) \nonumber\\
  &= \frac{1}{2} (1-x) \big((1+y+x(1-y))^n + (1-y+x(1+y))^n \big) \nonumber\\
  &= \frac{1}{2} (1-x) \sum_i \binom{n}{i} \Big( \Big(\frac{1-y}{1+y}\Big)^i (1+y)^n + \Big(\frac{1+y}{1-y}\Big)^i (1-y)^n \Big) x^i \nonumber\\
  &= \frac{1}{2} (1-x) \sum_i \binom{n}{i} \big( f_i(y) + f_i(-y) \big) x^i 
\end{align}
where in the second line we have used
\begin{equation}
    \sum_m \binom{n}{2m} a^{2m} = \frac{1}{2} \big( (1+a)^n + (1-a)^n \big),
\end{equation}
and in the last line we introduced $f_i(y) := (1-y)^i (1+y)^{n-i}$. Comparing the coefficients of $x^i$, we have
\begin{align}
  \sum_m \binom{n}{2m} \Tr(\widetilde{C}_{2m} m_i \frac{\Pi_i}{\Tr\Pi_i}) y^{2m} &= \binom{n}{i} \frac{1}{2}\big( f_i(y) + f_i(-y) \big)  - \binom{n}{i-1} \frac{1}{2}\big( f_{i-1}(y) + f_{i-1}(-y) \big) .
\end{align}
Dividing $m_i = \binom{n}{i} - \binom{n}{i-1}$ on both sides, we have
\begin{align}
  \sum_m \binom{n}{2m} \Tr(\widetilde{C}_{2m} \frac{\Pi_i}{\Tr\Pi_i}) y^{2m} &= \frac{n-i+1}{n-2i+1} \frac{1}{2}\big( f_i(y) + f_i(-y) \big) - \frac{i}{n-2i+1} \frac{1}{2}\big( f_{i-1}(y) + f_{i-1}(-y) \big).
\end{align}
Therefore we have the generating function 
\begin{align}
    \sum_m \binom{n}{2m} \widetilde{C}_{2m} y^{2m} = \sum_{i=0}^{\lfloor\frac{n}{2}\rfloor} \Big( \frac{n-i+1}{n-2i+1} \frac{1}{2}\big( f_i(y) + f_i(-y) \big) - \frac{i}{n-2i+1} \frac{1}{2}\big( f_{i-1}(y) + f_{i-1}(-y) \big) \Big) \Pi_i .
\end{align}
Taking trace of the square of both sides, we have
\begin{align}
    \Tr \Big(\sum_m \binom{n}{2m} \widetilde{C}_{2m} y^{2m}\Big)^2 = \sum_{i=0}^{\lfloor\frac{n}{2}\rfloor} \Big( \frac{n-i+1}{n-2i+1} \frac{1}{2}\big( f_i(y) + f_i(-y) \big) - \frac{i}{n-2i+1} \frac{1}{2}\big( f_{i-1}(y) + f_{i-1}(-y) \big) \Big)^2 m_i d_i.
\end{align}
Notice that the summand is invariant under the change of variable $i \mapsto n+1-i$. Therefore we can write
\begin{align}
  &\quad \sum_{i=0}^{\lfloor\frac{n}{2}\rfloor} \Big( \frac{n-i+1}{n-2i+1} \frac{1}{2}\big( f_i(y) + f_i(-y) \big) - \frac{i}{n-2i+1} \frac{1}{2}\big( f_{i-1}(y) + f_{i-1}(-y) \big) \Big)^2 m_i d_i \nonumber\\
  &= \frac{1}{8} \sum_{i=0}^n \binom{n}{i} \frac{1}{n-i+1} \Big((n-i+1) \big( f_i(y) + f_i(-y) \big) - i \big( f_{i-1}(y) + f_{i-1}(-y) \big)\Big)^2 \nonumber\\
  &= \frac{1}{8} \sum_i \binom{n}{i} \Big( (n-i+1) \big( f_i(y) + f_i(-y) \big)^2 - 2i \big( f_i(y) + f_i(-y) \big) \big( f_{i-1}(y) + f_{i-1}(-y) \big) \Big) + \binom{n}{i-1} i \big( f_{i-1}(y) + f_{i-1}(-y) \big)^2 
\end{align}
The first term is
\begin{align}
  &\quad \frac{1}{8} \sum_i \binom{n}{i} \Big( (n-i+1) \big( f_i(y) + f_i(-y) \big)^2 \nonumber\\
  &= \frac{1}{8} \sum_i \binom{n}{i} (n-i+1) \big( (1-y)^{2i} (1+y)^{2(n-i)} + 2(1-y^2)^n + (1+y)^{2i} (1-y)^{2(n-i)} \big) \nonumber\\
  &= \frac{1}{8} (n+1) \sum_i \binom{n}{i} \big( (1-y)^{2i} (1+y)^{2(n-i)} + 2(1-y^2)^n + (1+y)^{2i} (1-y)^{2(n-i)} \big) \nonumber\\
  &\quad - \frac{1}{8} n \sum_i \binom{n-1}{i-1} \big( (1-y)^{2i} (1+y)^{2(n-i)} + 2(1-y^2)^n + (1+y)^{2i} (1-y)^{2(n-i)} \big) \nonumber\\
  &= \frac{1}{8} (n+1) \big( 2^{n+1}(1+y^2)^n + 2^{n+1}(1-y^2)^n \big) - n \big( 2^n(1+y^2)^n + 2^n(1-y^2)^n \big) \nonumber\\
  &= (n+2)2^{n-3} \big( (1+y^2)^n + (1-y^2)^n \big)
\end{align}
The second (crossing) term is
\begin{align}
  &\quad -\frac{1}{4} \sum_i \binom{n}{i} i \big( f_i(y) + f_i(-y) \big) \big( f_{i-1}(y) + f_{i-1}(-y) \big) \nonumber\\
  &= -\frac{1}{4} n \sum_i \binom{n-1}{i-1} \big( (1-y)^{2i-1} (1+y)^{2n-2i+1} + 2(1-y^2)^{n-1}(1+y^2) + (1+y)^{2i-1} (1-y)^{2n-2i+1} \big) \nonumber\\
  &= -\frac{1}{4} n \big( 2^n(1-y^2) (1+y^2)^{n-1} + 2^n(1+y^2)(1-y^2)^{n-1} \big) \nonumber\\
  &= -2^{n-2}n \big( (1-y^2) (1+y^2)^{n-1} + (1+y^2)(1-y^2)^{n-1} \big) 
\end{align}
Note that each summand in the third term is related to the first term with $i \mapsto n+i-1$, we know the third term is equal to the first one. Therefore together we have
\begin{align}
  \Tr \Big(\sum_m \binom{n}{2m} \widetilde{C}_{2m} y^{2m}\Big)^2 &= (n+2) 2^{n-2} \big( (1+y^2)^n + (1-y^2)^n \big) - n 2^{n-2} \big( (1-y^2) (1+y^2)^{n-1} + (1+y^2)(1-y^2)^{n-1} \big) \nonumber\\
  &= 2^{n-2} \sum_m \binom{n}{2m} 2(n+2) y^{4m} - \sum_m n \Big( \binom{n-1}{m} (y^{2m}-y^{2m+2}) + \binom{n-1}{m} (-1)^m (y^{2m}+y^{2m+2}) \Big) \nonumber\\
  &= 2^{n-2} \sum_m \binom{n}{2m} 2(n+2) y^{4m} - 2n \Big( \binom{n-1}{2m} - \binom{n-1}{2m-1}\Big) y^{4m} \nonumber\\
  &= 2^{n-1} \sum_m \binom{n}{2m} (n+2) y^{4m} - \binom{n}{2m}(n-4m) y^{4m} \nonumber\\
  &= 2^n \sum_m \binom{n}{2m} (2m+1) y^{4m}.
\end{align}
Using orthogonality of the $\widetilde{C}_{2m}$ operators, we also know
\begin{align}
  \Tr \Big(\sum_m \binom{n}{2m} \widetilde{C}_{2m} y^{2m}\Big)^2 &= \sum_{m,m'} \binom{n}{2m} \binom{n}{2m'} \Tr \widetilde{C}_{2m} \widetilde{C}_{2m'} y^{2(m+m')} = \sum_{m} \binom{n}{2m}^2 \Tr \widetilde{C}_{2m}^2 y^{4m}.
\end{align}
Comparing the two sides, we have
\begin{align}\label{eq:TrC2m2}
  \Tr \widetilde{C}_{2m}^2 = (2m+1) 2^n \binom{n}{2m}^{-1} .
\end{align}
\end{proof}

Finally, using the completeness relation of the $\widetilde{C}_{2m}$ operators, we have
\begin{align}
  \Pi_i = \sum_m \frac{\Tr(\Pi_i \widetilde{C}_{2m})}{\Tr(\widetilde{C}_{2m}^2)} \widetilde{C}_{2m} &= 2^{-n} \sum_m \sum_{r=0}^m (-4)^r \binom{m}{r} \Big( \binom{n-2r}{i-r} - \binom{n-2r}{i-r-1} \Big) \frac{n-2i+1}{2m+1} \binom{n}{2m} \widetilde{C}_{2m} \nonumber\\
  &= 2^{-n} \sum_m \frac{d_i}{(2m+1)!!} \sum_{r=0}^m (-4)^r \binom{m}{r} \Big( \binom{n-2r}{i-r} - \binom{n-2r}{i-r-1} \Big) C_{2m},
\end{align}
and for comparison, we also present the expansion of $C_{2m}$ in the $\Pi_i$ basis
\begin{align}
    C_{2m} 
    & = \binom{n}{2m} \sum_{i=0}^{\lfloor \frac{n}{2} \rfloor} \frac{(2m-1)!!}{m_i}\sum_{r=0}^m (-4)^r \binom{m}{r} \Big( \binom{n-2r}{i-r} - \binom{n-2r}{i-r-1} \Big) \Pi_i\ .
\end{align}
Therefore, if we define 
\begin{subequations}\label{eq:Pi-C-normalized-SU2}
  \begin{align}
    \overline\Pi_i &= \sqrt{\frac{2^n}{\Tr\Pi_i}} \Pi_i ,\\
    \overline C_{2m} &= \sqrt{\frac{2^n}{\Tr C_{2m}^2}} C_{2m} = \frac{\sqrt{2m+1}}{(2m+1)!!} \binom{n}{2m}^{-\frac{1}{2}} C_{2m} = \frac{1}{\sqrt{2m+1}} \binom{n}{2m}^{\frac{1}{2}} \widetilde{C}_{2m},
  \end{align}
\end{subequations}
with the normalization $\Tr\overline\Pi_i^2 = \Tr\overline C_{2m}^2 = \Tr\id = 2^n$, then they satisfy
\begin{subequations}
  \begin{align}
    \overline C_{2m} &= \sum_i \bar{c}_{m,i} \overline\Pi_i , \tag{re \ref{eq:reciprocity-SU2a}}\nonumber\\
    \overline\Pi_i &= \sum_m \bar{c}_{m,i} \overline C_{2m} , \tag{re \ref{eq:reciprocity-SU2b}}
  \end{align}
\end{subequations}
where
\begin{align}\label{eq:c-mi}
  \bar{c}_{m,i} &:= \sqrt{\frac{\Tr(\Pi_i)}{\Tr(C_{2m}^2)}} c_{2m}(n/2-i). 
\end{align}

One can verify that $\frac{1}{\sqrt{2m+1}} \binom{n}{2m}^{\frac{1}{2}}$ is invariant under the transformation $2m \mapsto n-1-2m$, therefore $\overline{C}_{2m}$ satisfies the same symmetry as $\widetilde{C}_{2m}$ in \cref{eq:Ctilde-symmetry}, namely,
\begin{align}\tag{re \ref{eq:Cbar-symmetry}}
  \Tr(\overline{C}_{2m} \Pi_i) = (-1)^i \Tr(\overline{C}_{n-1-2m} \Pi_i).
\end{align}

\subsection{Properties of operator \texorpdfstring{$A_k$}{Ak}}
For simplicity, in this subsection and the following, we introduce the variable
\begin{align}
    s = \frac{k}{2}: k = 0, 2, \cdots, 2\lfloor\frac{n}{2} \rfloor.
\end{align}
Recall that $A_{2s} = A_k$ is defined as
\begin{align}\tag{re \ref{eq:Ak}}
  A_k=A_{2s} 
         &= \sum_{j=j_{\min}}^{n/2} (-1)^{\frac{n}{2}-j} \binom{\frac{n}{2}- s+j} {n-2s} m(n,j) \frac{\Pi_j}{\Tr\Pi_j}.
\end{align}
In the first version of this work on arXiv, we proposed a combinatorial conjecture as Eq. (120) that 
\begin{align}\label{eq:conj-Ak}
  \Tr(A_{2s} C_{2m}) = \sum_{j=j_{\min}}^{\frac{n}{2}} (-1)^{\frac{n}{2}-j} \binom{\frac{n}{2}-s+j}{-\frac{n}{2}+s+j} \sum_{r=0}^{\frac{n}{2}-j} (-4)^r \binom{m}{r} \binom{n-2r}{\frac{n}{2}-j-r} \frac{2j+1}{n/2+j-r+1} = 0 , \quad \text{when } m<s.
\end{align}
Here, we prove a stronger result
\begin{align}
  \Tr(A_{2s} C_{2m}) = 4^s (2m-1)!! \binom{n}{2m} \binom{m}{s}
\end{align}
for all $s$ and $m$, which implies \cref{eq:conj-Ak}.

In terms of label $i=\frac{n}{2}-j$, the operator $A_{2s}$ defined in \cref{eq:Ak} is
\begin{align}
  A_{2s} &= \sum_{i=0}^{\lfloor n/2\rfloor} (-1)^i \binom{n - s -i}{s-i} m_i \frac{\Pi_i}{\Tr\Pi_i}.
\end{align}

In the following, we will prove
\begin{align}\label{eq:TrA2sC2m}
  \Tr(A_{2s} \widetilde{C}_{2m}) = \sum_{i=0}^{\lfloor n/2\rfloor} (-1)^i \binom{n - s -i}{s-i} \sum_{r=0}^i (-4)^r \binom{m}{r} \Big(\binom{n-2r}{i-r} - \binom{n-2r}{i-r-1}\Big) = 4^s \binom{m}{s} ,
\end{align}
This identity immediately implies \cref{eq:conj-Ak}: when $m < s$, $\Tr(A_{2s} C_{2m})=0$ because $\binom{m}{s} = 0$.

\begin{proof}
First, we show that $\Tr(A_{2s} \widetilde{C}_{2m})$ can be rewritten compactly as
\begin{align}
  \Tr(A_{2s} \widetilde{C}_{2m}) = \sum_{i=0}^{n} (-1)^i \binom{n - s -i}{s-i} \sum_{r=0}^m (-4)^r \binom{m}{r} \binom{n-2r}{i-r} .
\end{align}
This is because when $i \geq n-s$, using \cref{eq:extended-binomial} we can write 
\begin{align}
  \sum_{i=n-s}^{n} (-1)^i \binom{n - s -i}{s-i} \binom{n-2r}{i-r} &= \sum_{i=n-s}^{n} (-1)^{i+n} \binom{-s+i-1}{n-2s} \binom{n-2r}{n-i-r} \nonumber\\
  &= \sum_{i=1}^{s+1} -(-1)^i \binom{n-s-i}{n-2s} \binom{n-2r}{i-r-1} \nonumber\\
  &= \sum_{i=0}^s -(-1)^i \binom{n-s-i}{s-i} \binom{n-2r}{i-r-1},
\end{align}
where in the second step, we made the substitution $i \mapsto n+1-i$, and for the last step we note that the summand is $0$ when $i = 0$ or $i = s + 1$. Then, using
\begin{align}
  \binom{n - s -i}{n-2s} = \frac{1}{(n-2s)!} \Big(\frac{\d}{\d x}\Big)^{n-2s} x^{n - s -i} \Big|_{x=1},
\end{align}
we have
\begin{align}
  \Tr(A_{2s} \widetilde{C}_{2m}) &= \sum_{i=0}^n (-1)^i \binom{n - s -i}{s-i} \sum_{r=0}^i (-4)^r \binom{m}{r} \binom{n-2r}{i-r} \nonumber\\
  &= \frac{1}{(n-2s)!} \Big(\frac{\d}{\d x}\Big)^{n-2s} x^{n - s} \sum_{r=0}^m (-4)^r \binom{m}{r} \sum_{i=0}^n (-1)^i \binom{n-2r}{i-r} x^{-i} \Big|_{x=1} \nonumber\\
  &= \frac{1}{(n-2s)!} \Big(\frac{\d}{\d x}\Big)^{n-2s} x^{- s} \sum_{r=0}^m (4x)^r \binom{m}{r} (x-1)^{n-2r} \Big|_{x=1} \nonumber\\ 
  &= \frac{1}{(n-2s)!} \Big(\frac{\d}{\d x}\Big)^{n-2s} x^{- s} (x-1)^{n-2m} (x+1)^{2m} \Big|_{x=1}
\end{align}
A necessary condition for this term to be non-vanishing is that $(x-1)^{n-2m}$ is fully killed by the derivative $(\frac{\d}{\d x})^{n-2s}$. Therefore $\Tr(A_{2s} \widetilde{C}_{2m})$ vanishes when $m<s$. When $m\geq s$, we also have to first fully kill $(x-1)^{n-2m}$. Therefore we have
\begin{align}
  \Tr(A_{2s} \widetilde{C}_{2m}) &= \frac{1}{(n-2s)!} \Big(\frac{\d}{\d x}\Big)^{n-2s} x^{- s} (x-1)^{n-2m} (x+1)^{2m} \Big|_{x=1} \nonumber\\
  &= \frac{1}{(2m-2s)!}  \Big(\frac{\d}{\d x}\Big)^{2m-2s} x^{- s} (x+1)^{2m} \Big|_{x=1} \nonumber\\
  &= \sum_{i=0}^{2m-2s} \frac{1}{i!}  \Big(\frac{\d}{\d x}\Big)^{i} x^{- s} \frac{1}{(2m-2s-i)!} \Big(\frac{\d}{\d x}\Big)^{2m-2s-i} (x+1)^{2m} \Big|_{x=0} \nonumber\\
  &= \sum_{i=0}^{2m-2s} \binom{-s}{i} x^{- s-i} \binom{2m}{2s+i} (x+1)^{2s+i} \Big|_{x=1} \nonumber\\
  &= 2^{2s}\sum_{i=0}^{2m-2s} (-2)^i \binom{s+i-1}{i} \binom{2m}{2s+i} .
\end{align}
Now consider the generating function
\begin{align}
  \sum_{m}\Tr(A_{2s} \widetilde{C}_{2m}) y^{2m} &= 2^{2s}\sum_{i=0}^{2m-2s} (-2)^i \binom{s+i-1}{i} \sum_{m}\binom{2m}{2s+i} y^{2m} \nonumber\\
  &= 2^{2s-1}\sum_{i=0}^{2m-2s} (-2)^i \binom{s+i-1}{i} \Big(\frac{y^{2s+i}}{(1-y)^{2s+i+1}} + \frac{(-y)^{2s+i}}{(1+y)^{2s+i+1}}\Big) .
\end{align}
The first term is
\begin{align}
  &\quad 2^{2s-1}\sum_{i=0}^{2m-2s} (-2)^i \binom{s+i-1}{i} \frac{y^{2s+i}}{(1-y)^{2s+i+1}}  \nonumber\\
  &= -(-2)^s\frac{y^{s+1}}{(1-y)^{s+2}} \sum_{i=0}^{2m-2s} \binom{s+i-1}{s-1} \frac{(-2y)^{s+i-1}}{(1-y)^{s+i-1}} \nonumber\\
  &= -(-2)^s\frac{y^{s+1}}{(1-y)^{s+2}} \frac{(-2y)^{s-1}}{(1+y)^{s-1}} \frac{1-y}{1+y} \nonumber\\
  &= 2^{2s-1} \frac{1}{1-y} \frac{y^{2s}}{(1-y^2)^{s}} .
\end{align}
Similarly, the second term is obtained by simply replacing $y$ with $-y$
\begin{align}
  2^{2s-1} \frac{1}{1+y} \frac{y^{2s}}{(1-y^2)^{s}}.
\end{align}
Together we have
\begin{align}
  \sum_{m}\Tr(A_{2s} \widetilde{C}_{2m}) y^{2m} = 2^{2s} \frac{y^{2s}}{(1-y^2)^{s+1}} = \sum_m 2^{2s} \binom{m}{s} y^{2m},
\end{align}
From which we can read 
\begin{align}
  \Tr(A_{2s} \widetilde{C}_{2m}) = 2^{2s} \binom{m}{s}.
\end{align}
\end{proof}

Using \cref{eq:TrC2m2}, we know that $A_{2s}$ can be expanded in the $\widetilde{C}_{2m}$ basis as
\begin{align}
  A_{2s} &= \sum_m \frac{\Tr(A_{2s} \widetilde{C}_{2m})}{\Tr(\widetilde{C}_{2m}^2)} \widetilde{C}_{2m} = \sum_m \frac{2^{2s-n}}{2m+1} \binom{n}{2m} \binom{m}{s} \widetilde{C}_{2m} \nonumber\\ &= \sum_m \frac{2^{2s-n}}{(2m+1)!!} \binom{m}{s} C_{2m} = \sum_m \frac{2^{2s-n}}{\sqrt{2m+1}} \binom{m}{s} \binom{n}{2m}^{\frac{1}{2}} \overline C_{2m}.
\end{align}

From the definition of $A_{2s}$, we see that its eigenvalues have alternating signs $(-1)^i$. On the other hand, when $n$ is odd, we know from \cref{eq:Cn-1} that the eigenvalues of $\widetilde{C}_{2n-1}$ are alternating signs $(-1)^i$. Therefore we know when $n$ is odd,
\begin{align}
  \| A_{2s} \|_1 = \sum_{i=0}^s \binom{n - s -i}{s-i} \Big(\binom{n}{i} - \binom{n}{i-1} \Big) = \Tr(A_{2s} \widetilde{C}_{n-1}) = 2^{2s} \binom{(n-1)/2}{s}.
\end{align}
Again, both sides are polynomials in $n$ of degree $s$. Using a similar argument as the $\U(1)$ case, we know if it is true for all odd $n$, it must also be true for all even $n$.

\subsection{\texorpdfstring{$t_{\max}$}{tmax} and lower bound on \texorpdfstring{$n$}{n}}

From the discussion in the main text, we know that the set of all $k$-qubit $\SU(2)$ invariant unitaries $\V_{n, k}^{\SU(2)}$ is a $t$-design for all $\SU(2)$ invariant unitaries $\V_{n, n}^{\SU(2)}$ if and only if $t\leq t_{\max}$, where
\begin{align}\tag{re \ref{eq:tmax-SU2}}
  t_{\max} + 1 = \frac{1}{2} \|A_{2(\lfloor\frac{k}{2}\rfloor+1)}\|_1 = 2^{2\lfloor\frac{k}{2}\rfloor+1} \binom{(n-1)/2}{\lfloor\frac{k}{2}\rfloor+1},
\end{align}
assuming that $n$ is large enough. Recall that $s = \lfloor\frac{k}{2}\rfloor$.
Similar to the $\U(1)$ section, \cref{eq:tmax-SU2} is a tight bound when it is not greater than $m_{s+2}$, and $m_{s+2} \leq m_{r}$ for any $r>s+2$. The first condition can be written as
\begin{align}\label{eq:bound-condition-SU2}
  \binom{n}{s+1} \frac{n-2s-3}{s+2} \geq 2^{2s+1} \binom{(n-1)/2}{s+1}.
\end{align}
When $s\geq 3$, a lower bound on the ratio is
\begin{align}
  \frac{\binom{n}{s+1} \frac{n-2s-3}{s+2}}{2^{2s+1} \binom{(n-1)/2}{s+1}} &= \frac{(n-2s-3)n(n-1)}{2^s(s+2)} \frac{n-2}{n-1} \frac{n-3}{n-3} \cdots \frac{n-s}{n-2s+3} \frac{1}{n-2s+1} \frac{1}{n-2s-1} \nonumber\\
  &\geq \frac{n}{2^s(s+2)} \frac{(n-2s-3)(n-2)}{(n-2s+1)(n-2s-1)} \sim \frac{n}{2^s(s+2)}.
\end{align}
The factor $\frac{(n-2s-3)(n-2)}{(n-2s+1)(n-2s-1)} \geq 1$ as long as $n > \frac{4s^2-4s-7}{2s-5}$. Assuming this is true, we have the lower bound
\begin{align}
  \frac{\binom{n}{s+1} \frac{n-2s-3}{s+2}}{2^{2s+1} \binom{(n-1)/2}{s+1}} \geq \frac{n}{2^s(s+2)}.
\end{align}
It can be checked that when $s\geq 3$, $2^s(s+2) > \frac{4s^2-4s-7}{2s-5}$. Therefore, for $s\geq 3$, if 
\begin{align}\tag{re \ref{eq:nbound-SU2}}
    n \geq 2^s(s+2),
\end{align}
then the condition \cref{eq:bound-condition-SU2} is satisfied. It is not hard to verify that when $s \geq 3$, $n \geq 2^s(s+2)$ also implies $s+2 < \frac{n}{2} - \sqrt{n\ln n}$, which in turn implies that $m_{s+2} \leq m_{r}$ for any $r>s+2$. Therefore when $s \geq 3$, $n \geq 2^s(s+2)$ implies that $\frac{1}{2} \|A_{2(s+1)}\|_1 \leq m_r$ for any $r\geq s+2$, and the upper bound in \cref{eq:tmax-SU2} is tight.

When $s=1$ and $s=2$, the exact solution to \cref{eq:bound-condition-SU2} is $n \geq 9$ and $n \geq 18$ respectively (rather than $n \geq 6$ and $n \geq 16$ predicted by \cref{eq:nbound-SU2}). From \cref{tab:imax}, we know that $m_{s+2} \leq m_{r}$ for any $r>s+2$ is true, if $n\geq 13$ when $s=1$, or $n\geq 15$ when $s=2$. Therefore, based on our current argument, for $s=1$, $t_{\max}$ in \cref{eq:tmax-SU2} is exact when $n\geq 13$. On the other hand, when $s=2$, similar to the $\U(1)$ case, we can show that while $n=16, 17$ does not satisfies \cref{eq:bound-condition-SU2}, $t_{\max}$ given in \cref{eq:tmax-SU2} is still correct, and therefore the bound on \cref{eq:nbound-SU2} holds also for $s=2$. Let's consider $n=16$ and $n=17$ separately in the following. 

When $n=16$, the irreps whose dimensions are smaller than $\frac{1}{2} \|A_{2(s+1)}\|_1 = 1430$ are $i = 0, 1, 2, 3, 4$, and we only need to consider solutions with support on these irreps. Therefore the solution to $Q$ in \cref{thm:tdesign} is two dimensional, and is spanned by $A_6$ and $A_8$. Equivalently, we know $\mathbf{q}^\T$ is in the span of the rows of
\begin{align}
    \mathbf{Q} = 
    \begin{pmatrix}
        -286 & 66 & -11 & 1 & 0 \\
        -2079 & 429 & -54 & 0 & 1 \\
    \end{pmatrix},
\end{align}
or in other words
\begin{align}
    \mathbf{q}^\T = 
    \begin{pmatrix}
        a_1 & a_2 \\
    \end{pmatrix}
    \mathbf{Q}.
\end{align}
Recall the definition of the $l_1$-norm of a vector in \cref{eq:l1}. Let $D = \mathrm{diag}\{m_0, m_1, \cdots m_4\} = \mathrm{diag}\{1, 15, 104, 440, 1260\}$, then $t_{\max}+1 \leq \frac{1}{2} \|D \mathbf{q}\|_1$, where 
\begin{align}
    (D\mathbf{q})^\T  = 
    \begin{pmatrix}
        a_1 & a_2 \\
    \end{pmatrix}
    \begin{pmatrix}
        -286 & 990 & -1144 & 440 & 0 \\
        -2079 & 6435 & -5616 & 0 & 1260 \\
    \end{pmatrix}.
\end{align}
It is not hard to check that $\|D \mathbf{q}\|_1$ is minimized when $(a_1, a_2) = (1, 0)$, corresponding to the operator $A_6$. Therefore we know that the $t_{\max}$ from \cref{eq:tmax-SU2} also holds for $(s,n) = (2,16)$.

For $n=17$, the irreps whose dimensions are smaller than $\frac{1}{2} \|A_{2(s+1)}\|_1 = 1792$ are also $i = 0, 1, 2, 3, 4$, and we only need to consider solutions with support on these irreps. Therefore the calculation is similar, where we have
\begin{align}
    \mathbf{Q} = 
    \begin{pmatrix}
        -364 & 78 & -12 & 1 & 0 \\
        -2925 & 560 & -65 & 0 & 1 \\
    \end{pmatrix},
\end{align}
$D = \mathrm{diag}\{m_0, m_1, \cdots m_4\} = \mathrm{diag}\{1, 16, 119, 544, 1700\}$, and
\begin{align}
    (D\mathbf{q})^\T  = 
    \begin{pmatrix}
        a_1 & a_2 \\
    \end{pmatrix}
    \begin{pmatrix}
        -364 & 1248 & -1428 & 544 & 0 \\
        -2925 & 8960 & -7735 & 0 & 1700 \\
    \end{pmatrix}.
\end{align}
Again, it is not hard to verify that $\|D \mathbf{q}\|_1$ is minimized when $(a_1, a_2) = (1, 0)$, corresponding to the operator $A_6$. Therefore we know that the $t_{\max}$ from \cref{eq:tmax-SU2} also holds for $(s,n) = (2,17)$.

In conclusion, we know that for $s\geq 2$, if 
\begin{align}\tag{re \ref{eq:nbound-SU2}}
    n \geq 2^s(s+2),
\end{align}
then the upper bound in \cref{eq:tmax-SU2} is tight.

\subsection{Dimensions of multiplicities}\label{app:SU2-dim}

In the calculation in the previous subsection, it is crucial to know the maximal $i$ such that $m_i \leq m_r$ for all $r>i$, denoted as $i_{\max}(n)$,
\begin{align}\label{dfref}
    i_{\max}(n) := \max\{i: m_i \leq m_r, \forall r>i\}.
\end{align}
In this section, we give an approximation on $i_{\max}(n)$, or equivalently, $j_0 := \frac{n}{2} - i_{\max}(n)$.

It is easy to verify that $m(n, \frac{n}{2}) = 1$. Further, for most values of $j$, $m(n, j) < m(n, j - 1)$. Using
\begin{equation}
  m(n, j - 1) = \binom{n}{\frac{n}{2} - j + 1} \frac{2 j - 1}{\frac{n}{2} + j} = \binom{n}{\frac{n}{2} - j} \frac{2 j - 1}{\frac{n}{2} - j + 1},
\end{equation}
we find
\begin{equation}
  m(n, j - 1) - m(n, j) = -4 \binom{n}{n / 2 - j} \frac{4 j^2 - (n + 2)}{4 j^2 - (n + 2)^2}.
\end{equation}
It is easy to see that the denominator of the right-hand side is always negative, since the maximum value is $4 j_{\max}^2 = n^2$. Hence $m(n, j - 1) > m(n, j)$ if and only if $4 j^2 > n + 2$.

In the following, we give an estimation of $j_0= \frac{n}{2} - i_{\max}(n)$. Since the peak of $m(n, j)$ for fixed $n$ is around $j \sim \frac{\sqrt{n}}{2}$, we know that $j_0$ scales at least as $n^{1/2}$, i.e., $j_0 = \Omega(n^{1/2})$. We now assume $j_0 = o(n^{2/3})$, perform the estimation, and check the consistency. This assumption allows us to use the approximation of the binomial coefficients as a Gaussian function \cite{spencerasymptopia},
\begin{align}
  \binom{n}{n/2-j} \sim \binom{n}{n/2} \e^{-\frac{2j^2}{n}}. 
\end{align}
Then we have
\begin{align}
  \frac{m(n, j)}{m(n, j_{\min})} \sim \e^{-\frac{2(j^2-j_{\min}^2)}{n}} \frac{2j + 1}{2j_{\min} + 1} \frac{(n/2 + j_{\min} + 1)}{n/2 + j + 1} < \e^{-\frac{2j^2}{n}} (2j + 1) \quad \text{when } j = o(n^{2/3}).
\end{align}
Using this relation, in the following, we determine a range of $j$ for which the ratio $\frac{m(n, j)}{m(n, j_{\min})}$ becomes less than 1. Defining  $\alpha=j/\sqrt{n\ln n}$, we find that the right-hand side of the above equation can be rewritten as 
\begin{align}
  \e^{-\frac{2j^2}{n}} (2j + 1) = \frac{1}{n^{2\alpha^2}}(2\alpha\sqrt{n\ln n} + 1). 
\end{align}

Therefore, we see that as long as $j \geq \alpha\sqrt{n\ln n}$ for some $\alpha > \frac{1}{2}$, we have $\frac{m(n, j)}{m(n, j_{\min})} < 1$ for sufficiently large $n$. This is consistent with our assumption. In practice, we can choose $\alpha = 1$, and it can be checked that we indeed have
\begin{align}
  \frac{m(n, \sqrt{n\ln n})}{m(n, j_{\min})} < 1 \quad \text{for } n = 2, 3, \cdots .
\end{align}
This implies $j_0 \leq \sqrt{n\ln n}$. One can also check that $i_{\max} \geq \lceil\frac{n}{2}- \sqrt{n\ln n} \rceil$ for all $n$. In \cref{tab:imax}, we list the true value of $i_{\max}$ and $\lceil\frac{n}{2}- \sqrt{n\ln n} \rceil$ for $n\leq 20$.

\begin{table}[htb]
  \centering
  \begin{tabular}{*{19}{c}}
    \toprule
    $n$ & 3 & 4 & 5 & 6 & 7 & 8 & 9 & 10 & 11 & 12 & 13 & 14 & 15 & 16 & 17 & 18 & 19 & 20 \\
    \midrule
    $i_{\max}$ & 0 & 0 & 1 & 1 & 2 & 1 & 2 & 2 & 3 & 2 & 4 & 3 & 4 & 4 & 5 & 4 & 6 & 5 \\
    $\lceil\frac{n}{2} - \sqrt{n\ln n} \rceil$ & 0 & 0 & 0 & 0 & 0 & 0 & 1 & 1 & 1 & 1 & 1 & 1 & 2 & 2 & 2 & 2 & 3 & 3 \\
    \bottomrule
  \end{tabular}
  \caption{$i_{\max}$ for each $n =3, 4, \cdots, 20$. $i_{\max}$ is the maximal $i$ such that $m_i \leq m_r$ for all $r>i$. In particular, we see that $i_{\max}\geq 3$ when $n\geq 13$, and $i_{\max}\geq 4$ when $n\geq 15$, which is used in the previous section. For comparison, we also list the value of $\lceil\frac{n}{2} - \sqrt{n\ln n} \rceil$.}
  \label{tab:imax}
\end{table}

\subsection{Calculation of the \texorpdfstring{$\mathbf{S}$}{S} matrix}\label{app:S-SU2}

In this subsection, we calculate the matrix element of $\mathbf{S}$, namely $s_{j'}(j)$. By definition in \cref{eq:S-def}, we have
\begin{align}
  s_{j'}(j) &= \int_{\SU(2)} \d g \Tr(u(g))^{n-k} f_{j'}(g) f_j^*(g).
\end{align}
Since the integrand is a class function, we can use the Weyl integral formula to simplify it to an integration over the maximal torus parametrized by $\theta \in [0, 2\pi)$,
\begin{align}
  s_{j'}(j) &= \frac{1}{\pi} \int_0^{2\pi} \d \theta \sin^2\theta \Tr(u(\theta))^{n-k} f_{j'}(\theta) f_j^*(\theta) .
\end{align}
The character of the maximal torus in the representation with angular momentum $j$ is
\begin{align}
  f_j(\theta) = \sum_{m=-j}^j \e^{\i 2m\theta} = \e^{-\i 2j\theta} \sum_{m=0}^{2j} \e^{\i 2m\theta} = \e^{-\i 2j\theta} \frac{1-\e^{\i 2(2j+1)\theta} }{1-\e^{\i 2\theta}} = \frac{\e^{-\i (2j+1)\theta}-\e^{\i (2j+1)\theta} }{\e^{-\i \theta} -\e^{\i \theta}} = \frac{\sin((2j+1)\theta)}{\sin\theta}.
\end{align}
Therefore we have
\begin{align}
  s_{j'}(j) &= \frac{1}{\pi} \int_0^{2\pi} \d \theta (2\cos\theta)^{n-k} \sin((2j'+1)\theta) \sin((2j+1)\theta) \nonumber\\
  &= -\frac{1}{4\pi} \int_0^{2\pi} \d \theta \sum_{l=0}^{n-k} \binom{n-k}{l} \e^{\i(2l-n+k)\theta} (\e^{\i(2j'+1)\theta} - \e^{-\i(2j'+1)\theta}) (\e^{\i(2j+1)\theta} - \e^{-\i(2j+1)\theta}) \nonumber\\
  &= -\frac{1}{2} \sum_{l=0}^{n-k} \binom{n-k}{l} (\delta_{n-k-2l, 2(j'+j)+2} - \delta_{n-k-2l, 2(j'-j)} - \delta_{n-k-2l, 2(j-j')} + \delta_{n-k-2l, -2(j'+j)-2}) \nonumber\\
  &= -\frac{1}{2} \Bigg(\binom{n-k}{\frac{n-k}{2}-j'-j-1} - \binom{n-k}{\frac{n-k}{2}-j'+j} -\binom{n-k}{\frac{n-k}{2}+j'-j} + \binom{n-k}{\frac{n-k}{2}+j'+j+1} \Bigg) \nonumber\\
  &= \binom{n-k}{\frac{n-k}{2}+j'-j} - \binom{n-k}{\frac{n-k}{2}+j'+j+1}.
\end{align}

As a sanity check, here we also show how $s_{j'}(j)$ can be calculated using \cref{eq:s-multiplicity}. Recall that from the rule of adding angular momenta, $m_{j' j}^l = 1$ when $|j-j'| \leq l \leq j+j'$ and zero otherwise. Therefore we have
\begin{align}
    s_{j'}(j) &= \sum_l m(n-k, l) \times  m_{j' j}^l = \sum_{l=|j-j'|}^{j+j'} m(n-k,l) \nonumber\\
    &= \sum_{l=|j-j'|}^{j+j'} \binom{n-k}{\frac{n-k}{2}-l} - \binom{n-k}{\frac{n-k}{2}-l-1} \nonumber\\
    &= \binom{n-k}{\frac{n-k}{2}+j'-j} - \binom{n-k}{\frac{n-k}{2}-j'-j-1},
\end{align}
which is consistent with the above calculation using integration over characters.

Alternatively, in terms of label $i=\frac{n}{2}-j$ and $i' = \frac{k}{2}-j'$, we can write \begin{align}
  s_{\frac{k}{2}-i'}(\frac{n}{2}-i) &= \binom{n-k}{i-i'} - \binom{n-k}{n-i-i'+1}.
\end{align}

\newpage

\section{\texorpdfstring{$\mathbb{Z}_p$}{Zp} symmetry}\label{app:Zp}

In this section, we consider the $\mathbb{Z}_p$ symmetry on $n$ qubits, and in particular prove \cref{eq:rnkS}, i.e. $\rk \mathbf{S} = p - 1$ when $p$ is even and $k \geq p$. Combined with \cref{thm:tdesign} and the fact that there are exacly $p$ inequivalent irreps of $\mathbb{Z}_p$, this implies the remaining claims of \cref{sec:Zp}.

Consider a Hilbert space spanned by $\ket{\alpha} : \alpha = 0, \dots, p - 1$. Define the operator $Z$ on this space by $Z \ket{\alpha} = \omega^\alpha \ket{\alpha}$ where $\omega = \e^{\i 2 \pi / p}$. Then $Z^p = \ident$ is the identity on this space. Let $\ket{\psi} = \sum_\alpha \ket{\alpha}$ be the (unnormalized) uniform superposition. It is easy to see that, for all $a = 1, \dots, p - 1$, $\qavg{Z^a}{\psi} = 0$. Hence, the vectors $\ket{\psi_a} = Z^a \ket{\psi}$ form an orthogonal basis for the space spanned by $\set{\ket{\alpha}: \alpha = 0, \dots, p - 1}$.

Then, when $p \leq k < n$, the $\mathbf{S}$ matrix has components
\begin{equation}
    s_\alpha(\beta) = \sum_{a = 0}^{p - 1} (1 - \omega^a)^{n - k} \qamp{\alpha}{Z^a}{\psi} \qamp{\psi}{Z^{\dagger a}}{\beta} = \sum_{a = 0}^{p - 1} (1 - \omega^a)^{n - k} \qinn{\alpha}{\psi_a} \qinn{\psi_a}{\beta}.
\end{equation}
The rank of $\mathbf{S}$ is clearly the rank of the operator $\sum_{a} (1 - \omega^a)^{n - k} \qproj{\psi_a}$, which is diagonal in the $\set{\ket{\psi_a}}$ basis. Furthermore, its eigenvalues are $(1 - \omega^a)^{n - k} : a = 0, \dots, p - 1$, which is zero if and only if $p$ is even and $a = p / 2$. Thus,
\begin{equation}
    \rk \mathbf{S} = \begin{cases} p - 1 & p \text{ even} \\
        p & p \text{ odd}. \end{cases}
\end{equation}

This, in particular, means that for even $p$,
\begin{equation}
   \dim(\operatorname{span}_\real \set{r^{n - p} f_\alpha : \alpha = 0, \cdots, p - 1}) = p - 1 .
\end{equation}

\newpage

\section{\texorpdfstring{$\SU(d)$}{SU(d)} symmetry}\label{app:SUd}

In this appendix, we determine the value of $t_{\max}$ for the group generated by $k$-local $\SU(d)$-invariant unitaries for $d \geq 3$ and $k = 4$. This is different for $d = 3$ and $d \geq 4$ since in the former case the charge vector of $\P_{(1234)}$ is not linearly independent of the permutations $\ident$, $\P_{12}$, $\P_{(123)}$, and $\P_{(12)(34)}$. (In our case, we only need that the rank of the character table restricted to the lowest dimensional irreps does not change whether or not we include $\P_{(1234)}$ when $d = 3$.) For the calculation in this appendix, the following fact will be crucial,

\begin{fact}
The character table for $\S_n$ on the seven lowest dimensional irreps when $n\geq 15$ is given by \cref{tab:character-table}.
\begin{table*}[htb]
  \centering
  \caption{Character table for $\S_n$: first rows in the Young diagrams are omitted for simplicity}
  \begin{tabular}{*{6}{c}}
    \toprule
    Irreps & (1) & (12) & (123) & (12)(34) & (1234) \\
    \midrule
    $\cdot$ & $1$ & $1$ & $1$ & $1$ & $1$ \\
    \midrule
    $\ydiag{1}$ & $n-1$ & $n-3$ & $n-4$ & $n-5$ & $n-5$ \\
    \midrule
    $\ydiag{2}$ & $\frac{n(n-3)}{2}$ & $\frac{(n-3)(n-4)}{2}$ & $\frac{(n-3)(n-6)}{2}$ & $\frac{n^2-11n+32}{2}$ & $\frac{(n-4)(n-7)}{2}$ \\
    \midrule
    $\ydiag{1,1}$ & $\frac{(n-1)(n-2)}{2}$ & $\frac{(n-2)(n-5)}{2}$ & $\frac{(n-4)(n-5)}{2}$ & $\frac{n^2-11n+26}{2}$ & $\frac{(n-5)(n-6)}{2}$ \\
    \midrule
    $\ydiag{3}$ & $\frac{n(n-1)(n-5)}{6}$ & $\frac{(n-3)(n-4)(n-5)}{6}$ & $\frac{(n-5)(n^2-10n+18)}{6}$ & $\frac{(n-5)(n^2-13n+48)}{6}$ & $\frac{(n-4)(n-5)(n-9)}{6}$ \\
    \midrule
    $\ydiag{1,1,1}$ & $\frac{(n-1)(n-2)(n-3)}{6}$ & $\frac{(n-2)(n-3)(n-7)}{6}$ & $\frac{(n-3)(n^2-12n+38)}{6}$ & $\frac{(n-3)(n-5)(n-10)}{6}$ & $\frac{(n-5)(n-6)(n-7)}{6}$ \\
    \midrule
    $\ydiag{2,1}$ & $\frac{n(n-2)(n-4)}{3}$ & $\frac{(n-2)(n-4)(n-6)}{3}$ & $\frac{(n-4)(n^2-11n+27)}{3}$ & $\frac{(n-4)(n-6)(n-8)}{3}$ & $\frac{(n-4)(n-6)(n-8)}{3}$ \\
    \bottomrule
  \end{tabular}
  \label{tab:character-table}
\end{table*}
    
\end{fact}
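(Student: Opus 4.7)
\medskip

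\noindent\textbf{Proof proposal.} The plan is to establish the table by directly applying the Murnaghan--Nakayama (MN) rule
\begin{align}
  \chi_\lambda(\sigma) \;=\; \sum_{T} (-1)^{\mathrm{ht}(T)}\ ,
\end{align}
where the sum runs over border-strip tableaux $T$ of shape $\lambda$ whose successive strips have sizes equal to the cycle lengths of $\sigma$, and $\mathrm{ht}(T)$ is the total number of rows occupied by the strips minus the number of strips. Concretely, I would process the five columns of the table in order of increasing cycle complexity: the identity column $(1)$, the transposition $(12)$, the $3$-cycle $(123)$, the double transposition $(12)(34)$, and the $4$-cycle $(1234)$. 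This lets me recycle partial computations: removing a single $1$-strip is trivial, so the identity column is just $\dim \lambda$ via the hook-length formula, which reproduces the first column of \cref{tab:min-dim} and confirms consistency with Fact~\ref{fact:dims}.

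For the remaining columns I would fix one of the seven shapes $\lambda \in \{[n],[n-1,1],[n-2,2],[n-2,1,1],[n-3,3],[n-3,1,1,1],[n-3,2,1]\}$ at a time, and exploit the fact that for $n\ge 15$ the first row $\lambda_1 \ge n-3$ is extremely long compared with what lies below. Concretely, to compute $\chi_\lambda(\sigma)$ I would recursively strip off a border strip whose size equals the length of one cycle of $\sigma$, obtaining a smaller shape $\lambda/\xi$; because $\lambda_1$ is large the strip $\xi$ either lies entirely in the first row (contributing sign $+1$ and leaving the lower rows untouched) or hooks down into one of the at most three short lower rows, and the possibilities for the latter can be enumerated by hand since there are only $O(1)$ cells below row one. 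Iterating this reduction over the cycle structure of each $\sigma \in \S_4$ gives a finite, manageable case analysis that produces the polynomial in $n$ listed in each entry. The $n$-dependence is automatically a polynomial of degree equal to the number of $1$-cycles, a standard consequence of the stability of MN-expansions for partitions with one very long first row (this is essentially the character-polynomial framework).

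I would organise the verification row by row. For $[n-1,1]$, $[n-2,2]$, $[n-2,1,1]$ the lower part has at most two cells, so the enumeration of border strips of sizes $2$, $3$, and $4$ is immediate. For $[n-3,3]$, $[n-3,1,1,1]$, $[n-3,2,1]$ the lower part has three cells arranged respectively as a row, a column, or a hook; here the $3$- and $4$-strips admit a handful of configurations each (e.g., a $3$-strip can either be horizontal in row $1$, hook down into one of the lower cells, or coincide with the entire lower part), and I would tabulate them explicitly, taking care with the $\mathrm{ht}$ sign. The double transposition column $(12)(34)$ requires composing two independent $2$-strip removals and is most efficiently handled by first removing a $2$-strip from $\lambda$ and then applying the already-computed characters on transpositions to the resulting shapes.

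The main obstacle will be bookkeeping: correctly enumerating all border-strip tableaux and their heights for the three-row shapes $[n-3,2,1]$, $[n-3,1,1,1]$, $[n-3,3]$ under the $(123)$, $(12)(34)$, and $(1234)$ reductions, where several strip configurations coexist and sign errors are easy to make. To guard against mistakes I would use two independent sanity checks: (i) evaluate the column-orthogonality relations $\sum_\lambda \chi_\lambda(\sigma)\,\overline{\chi_\lambda(\sigma')} = |C_\sigma|\,\delta_{\sigma,\sigma'}$ partially (restricted to the listed irreps, which for these small conjugacy classes already pins down the dominant contributions for large $n$), and (ii) compare against the transpose symmetry $\chi_{\lambda^\T}(\sigma) = \mathrm{sgn}(\sigma)\,\chi_\lambda(\sigma)$, which directly relates the pairs $([n-2,2],[n-2,1,1])$, $([n-3,3],[n-3,1,1,1])$, and makes $[n-3,2,1]$ self-conjugate up to sign on even/odd classes --- a strong internal consistency test for every entry of \cref{tab:character-table}.
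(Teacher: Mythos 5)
The paper offers no proof of this Fact: the character table is stated as a standard result from the representation theory of $\S_n$, so there is nothing to compare against line by line. Your Murnaghan--Nakayama strategy is the canonical way to derive it, and it would work: for the seven shapes in question the part below the first row has at most three cells, the border-strip enumeration is finite, and the resulting entries are exactly the classical character polynomials in the cycle counts $a_1,a_2,\dots$ (e.g.\ $\chi_{[n-1,1]}=a_1-1$ and $\chi_{[n-2,2]}=\binom{a_1}{2}-a_1+a_2$, which reproduce every entry of those two rows of \cref{tab:character-table}). Your observation that the degree in $n$ equals the number of fixed points, and the reduction of the $(12)(34)$ column to two successive $2$-strip removals, are both sound.

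There is, however, a genuine error in your sanity check (ii): the pairs you list are \emph{not} transpose-conjugate. The transpose of $[n-2,2]$ is $[2,2,1^{n-4}]$, not $[n-2,1,1]$ (whose transpose is $[3,1^{n-3}]$); similarly $[n-3,3]^{\T}=[2,2,2,1^{n-6}]\neq[n-3,1,1,1]$, and $[n-3,2,1]$ is not self-conjugate (its transpose is $[3,2,1^{n-5}]$). These pairs are only linked through \cref{fact:dims} and \cref{tab:min-dim} by having comparable dimensions; the identity $\chi_{\lambda^{\T}}(\sigma)=\sgn(\sigma)\chi_\lambda(\sigma)$ does not relate them. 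Indeed $\chi_{[n-2,1,1]}((12))=\frac{(n-2)(n-5)}{2}\neq-\frac{(n-3)(n-4)}{2}=-\chi_{[n-2,2]}((12))$, so applying your check would flag correct entries as wrong and could lead you to ``fix'' a correct table. A valid replacement is the tensor-square identity $\chi_{[n-1,1]}(\sigma)^2=\chi_{[n]}(\sigma)+\chi_{[n-1,1]}(\sigma)+\chi_{[n-2,2]}(\sigma)+\chi_{[n-2,1,1]}(\sigma)$ (which all five columns of the table do satisfy), the exterior-square formula $\chi_{[n-2,1,1]}(\sigma)=\tfrac12\bigl(\chi_{[n-1,1]}(\sigma)^2-\chi_{[n-1,1]}(\sigma^2)\bigr)$, or direct evaluation at a fixed small $n$ against a tabulated character table. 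Note also that column orthogonality summed over only seven of the irreps of $\S_n$ is not an exact identity, so your check (i) is at best heuristic, as you acknowledge.
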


In \cref{app:SU3}, we discuss the case $d=3$, where there are only three-row Young diagrams, and $\P_{(12)(34)}$ is included but not $\P_{(1234)}$. Then in \cref{app:P(12)(34)}, we still include $\P_{(12)(34)}$ and exclude $\P_{(1234)}$, but for general $d\geq 4$. Finally, in \cref{app:k=4}, we consider the case $d\geq 4$ for all $4$-local Hamiltonians, including both $\P_{(12)(34)}$ and $\P_{(1234)}$.

\subsection{\texorpdfstring{$d=3, k = 4$}{d=3,k=4}}\label{app:SU3}

The first four rows \cref{tab:character-table} do not contain Young diagrams with more than three rows. By checking the rank of the upper left $4\times 4$ matrix of the character table in \cref{tab:character-table}, we find its rank is 4. Therefore $t_{\max}+1 \geq \frac{1}{6}n(n-1)(n-5)$. In order to find a tight upper bound for $t_{\max}$, we have to include the next two irreps (excluding the one with four rows in \cref{tab:character-table}), the dimensions of which all scale as $n^3$. Then we can solve the $\mathbf{q}^\T$ vector, which is in the span of the rows of
\begin{align}
    \mathbf{Q} = 
    \begin{pmatrix}
        -\binom{n-3}{3} & \binom{n-4}{2} & -\binom{n-5}{1} & 0 & 1 & 0 & 0 \\
        -2\binom{n-2}{3} & 2\binom{n-3}{2} & -\binom{n-4}{1} & -\binom{n-4}{1} & 0 & 0 & 1 \\
    \end{pmatrix},
\end{align}
or in other words
\begin{align}
    \mathbf{q}^\T = 
    \begin{pmatrix}
        a_1 & a_2 \\
    \end{pmatrix}
    \mathbf{Q}
\end{align}
Multiplying the $i$-th column with $m_i$, we obtain a matrix whose all non-zero terms are of degree $3$. When $n$ is sufficiently large, we can keep only the cubic terms in $n$, whose coefficient matrix is given by
\begin{align}
    \frac{1}{6}
    \begin{pmatrix}
        -1 & 3 & -3 & 0 & 1 & 0 & 0 \\
        -2 & 6 & -3 & -3 & 0 & 0 & 2 \\
    \end{pmatrix}.
\end{align}
From the structure of this matrix, we can see that $(a_1, a_2) = (1,0)$ gives a $t_{\max}$ that has the smallest coefficient in the leading term, namely $\frac{2}{3}n^3$, as long as $n$ is not too small. More precisely, we have
\begin{align}
    \mathbf{q} = 
    \begin{pmatrix}
        1 & 0 \\
    \end{pmatrix}
    \mathbf{Q},
\end{align}
and
\begin{align}
    t_{\max} + 1 \leq \frac{2}{3}(n-1)(n-3)(n-5).
\end{align}
Including other irreps leads to quartic scaling in $n$, and when $n \geq 22$, even the smallest dimension of them is greater than $\frac{2}{3}(n-1)(n-3)(n-5)$. Therefore we know that when $n \geq 22$,
\begin{align}
    t_{\max} + 1 = \frac{2}{3}(n-1)(n-3)(n-5).
\end{align}

\subsection{\texorpdfstring{$d\geq 4, k = 4$}{d>=4,k=3} (with \texorpdfstring{$\P_{(12)(34)}$}{P(12)(34)} but no \texorpdfstring{$\P_{(1234)}$}{P(1234)})}\label{app:P(12)(34)}

The rank of the upper left $4\times 4$ matrix of the character table \cref{tab:character-table} is the same as the previous case, i.e., 4. Therefore $t_{\max}+1 \geq \frac{1}{6}n(n-1)(n-5)$. In order to find a tight upper bound for $t_{\max}$, we have to include the next three irreps, the dimensions of which all scale as $n^3$. Then we can solve the $\mathbf{q}^\T$ vector, which is in the span of the rows of
\begin{align}
    \mathbf{Q} = 
    \begin{pmatrix}
        -\binom{n-3}{3} & \binom{n-4}{2} & -\binom{n-5}{1} & 0 & 1 & 0 & 0 \\
        -\binom{n-1}{3} & \binom{n-2}{2} & 0 & -\binom{n-3}{1} & 0 & 1 & 0 \\
        -2\binom{n-2}{3} & 2\binom{n-3}{2} & -\binom{n-4}{1} & -\binom{n-4}{1} & 0 & 0 & 1 \\
    \end{pmatrix},
\end{align}
or in other words
\begin{align}
    \mathbf{q}^\T = 
    \begin{pmatrix}
        a_1 & a_2 & a_3 \\
    \end{pmatrix}
    \mathbf{Q}
\end{align}
Multiplying the $i$-th column with $m_i$, we obtain a matrix whose all non-zero terms are of degree $3$. When $n$ is sufficiently large, we can keep only the cubic terms in $n$, whose coefficient matrix is given by
\begin{align}
    \frac{1}{6}
    \begin{pmatrix}
        -1 & 3 & -3 & 0 & 1 & 0 & 0 \\
        -1 & 3 & 0 & -3 & 0 & 1 & 0 \\
        -2 & 6 & -3 & -3 & 0 & 0 & 2 \\
    \end{pmatrix}.
\end{align}
From the structure of this matrix, we can see that $(a_1, a_2, a_3) = (1,1,-1)$ gives a $t_{\max}$ that has the smallest coefficient in the leading term, namely $\frac{1}{3}n^3$, as long as $n$ is not too small. With this choice, we have
\begin{align}
    \mathbf{q} = 
    \begin{pmatrix}
        n-3 & 1 & 1 & -1 & 1 & 1 & -1 \\
    \end{pmatrix}^\T,
\end{align}
and
\begin{align}
    t_{\max} + 1 \leq \frac{1}{6}(n-3)(2n^2-3n+4).
\end{align}
Including other irreps leads to quartic scaling in $n$, and when $n \geq 22$, even the smallest dimension of them is greater than $\frac{1}{6}(n-3)(2n^2-3n+4)$. Therefore we know that when $n \geq 22$,
\begin{align}
    t_{\max} + 1 = \frac{1}{6}(n-3)(2n^2-3n+4).
\end{align}

\subsection{\texorpdfstring{$d\geq 4, k=4$}{d>=4,k=4} (with both \texorpdfstring{$\P_{(12)(34)}$}{P(12)(34)} and \texorpdfstring{$\P_{(1234)}$}{P(1234)})}\label{app:k=4}
When $d\geq 4$, we also have to include $\P_{(1234)}$, and now we have to check the rank of the upper left $5\times 5$ matrix of the character table \cref{tab:character-table}. However, since all irreps appearing here have at most $3$ rows, we know $\P_{(1234)}$ is not linearly independent of other permutations. Therefore without any calculation, we know that the lower bound of $t_{\max}$ is the same as the previous cases, i.e., $t_{\max} \geq \frac{1}{6}n(n-1)(n-5)-1$.

In order to find a tight upper bound for $t_{\max}$, we have to include the next two irreps whose dimensions all scale as $n^3$. Since $\P_{(1234)}$ will only be linearly independent when we include $[n-3,1,1,1]$, the only Young diagram that has $4$ or more rows in the previous case, we know the two solutions that do not have support on this Young diagram are still solutions in the present case. Therefore, interestingly, we find this case is the same as the case discussed in \cref{app:SU3}, i.e., when $n\geq 22$, 
\begin{align}
    t_{\max} + 1 = \frac{2}{3}(n-1)(n-3)(n-5).
\end{align}

\clearpage

\section{Failure of semi-universality (Proof of Proposition \ref{prop:con})}\label{App:fail}

In this section, we use a result of \cite{hulse2024framework} to prove \cref{prop:con} (as mentioned in the main text, this result also follows from the result of \cite{robert2015squares}). 

For any $G$-invariant unitary $V\in\mathcal{V}^G$ consider its decomposition as $V=\bigoplus_{\lambda\in\irreps_G} \mathbb{I}_{\mathcal{Q}_\lambda}\otimes v_\lambda$. For any irrep $\lambda\in\irreps_G$, let $\pi_{\lambda}(V)=v_{\lambda}$,
which defines a homomorphism from $\mathcal{V}^G$ to $\SU(\mathcal{M}_{\lambda})$.

\begin{lemma}[\cite{hulse2024framework}]\label{lem:formal}
  Suppose $\mathcal{W}^G\subseteq \mathcal{V}^G$ is not semi-universal, i.e., it does not contain $\mathcal{SV}^G=[\mathcal{V}^G,\mathcal{V}^G]$. Then, at least, one of the following holds:
  \begin{enumerate}
  \item There exists an irrep $\lambda\in\irreps_G$ such that $\pi_{\lambda}(\mathcal{W}^G) $ does not contain $\SU(\mathcal{M}_{\lambda})$.
  \item There exist two distinct irreps $\lambda_1,\lambda_2\in \irreps_G$ such that 
    $\dim(\mathcal{M}_{\lambda_1})=\dim(\mathcal{M}_{\lambda_2})$, and there exists an isometry $S: \mathcal{M}_{\lambda_1}\rightarrow \mathcal{M}_{\lambda_2}$ such that one of the following holds 
    \begin{itemize}
    \item For all $V\in \mathcal{V}^G: \pi_{\lambda_2}(V)=\e^{\i\theta} S\pi_{\lambda_1}(V)S^\dag$, or 
    \item 
      For all $V\in \mathcal{V}^G: \pi_{\lambda_2}(V)=\e^{\i\theta} S\pi_{\lambda_1}(V)^\ast S^\dag$,
    \end{itemize}
    where $\e^{\i\theta}$ is an unspecified phase that depends on $V$.
  \end{enumerate}
\end{lemma}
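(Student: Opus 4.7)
The plan is to proceed via a Goursat-type structural analysis at the Lie algebra level. Writing $\mathfrak{v} = \operatorname{Lie}(\mathcal{V}^G) = \mathfrak{z} \oplus \bigoplus_\lambda \mathfrak{su}(\mathcal{M}_\lambda)$ where $\mathfrak{z} = \operatorname{span}_\real\{\i\Pi_\lambda\}$ is the center and $[\mathfrak{v},\mathfrak{v}] = \bigoplus_\lambda \mathfrak{su}(\mathcal{M}_\lambda)$ is the Lie algebra of $\mathcal{SV}^G$. Let $\mathfrak{w}$ denote the Lie algebra of the closure of $\mathcal{W}^G$. I would assume that condition 1 fails, so for every $\lambda$ the projection $\pi_\lambda(\mathfrak{w})$ surjects onto $\mathfrak{su}(\mathcal{M}_\lambda)$. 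Since commutators annihilate the abelian part, the derived subalgebra $\mathfrak{s} := [\mathfrak{w},\mathfrak{w}]$ is a Lie subalgebra of $\bigoplus_\lambda \mathfrak{su}(\mathcal{M}_\lambda)$ that still projects surjectively onto each summand. If $\mathcal{W}^G \supseteq \mathcal{SV}^G$ there is nothing to prove, so I assume $\mathfrak{s}$ is a proper subalgebra.

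The next step is to invoke a Goursat-type lemma for direct sums of simple Lie algebras: a subalgebra $\mathfrak{s} \subseteq \bigoplus_i \mathfrak{g}_i$ with each $\mathfrak{g}_i$ simple, which surjects onto each factor, is uniquely described by an equivalence relation $\sim$ on the index set together with Lie-algebra isomorphisms $\varphi_{ij}:\mathfrak{g}_i\to\mathfrak{g}_j$ for $i\sim j$, and $\mathfrak{s}$ is the graph
\begin{equation*}
\mathfrak{s} = \bigl\{(X_i) \;:\; X_j = \varphi_{ij}(X_i) \text{ whenever } i \sim j\bigr\}.
\end{equation*}
This is obtained by iteratively applying the standard Goursat lemma, using that each $\mathfrak{g}_i$ has no nontrivial ideals. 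Multiplicities equal to one correspond to $\mathfrak{su}(1)=0$ and drop out of the analysis. Because $\mathfrak{s}$ is proper, at least one equivalence class must contain two distinct labels $\lambda_1\neq \lambda_2$, yielding an isomorphism $\varphi:\mathfrak{su}(\mathcal{M}_{\lambda_1})\to \mathfrak{su}(\mathcal{M}_{\lambda_2})$. This forces $\dim\mathcal{M}_{\lambda_1}=\dim\mathcal{M}_{\lambda_2}=:m$.

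Next I would use the classification of Lie-algebra automorphisms of $\mathfrak{su}(m)$: every such isomorphism is either inner, $\varphi(X)=SXS^\dag$, or (for $m\geq 3$) the composition with the unique nontrivial outer automorphism, $\varphi(X)=-S\overline{X}S^\dag$, where $S$ is some unitary from $\mathcal{M}_{\lambda_1}$ to $\mathcal{M}_{\lambda_2}$; for $m=2$ the outer case is absorbed into the inner one. Exponentiating and using simple connectedness of $\SU(m)$ lifts the Lie-algebra identity to the group, giving $\pi_{\lambda_2}(W) = \Phi(\pi_{\lambda_1}(W))$ on the identity component of $\overline{\mathcal{W}^G}$, where $\Phi$ is $\operatorname{Ad}_S$ or $\operatorname{Ad}_S\circ(\cdot)^\ast$. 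Because $\mathfrak{s}$ only captures the traceless part of $\mathfrak{w}$, the two sides may differ by the image of the abelian center under a one-dimensional character of $\mathcal{W}^G$; this accounts precisely for the unspecified phase $\e^{\i\theta}$ in the statement.

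The main obstacle will be the second step: formulating and proving the Goursat classification for $\bigoplus_\lambda \mathfrak{su}(\mathcal{M}_\lambda)$ cleanly when some multiplicities equal one and several equivalent factors may coexist, and then pinning down the role of the center $\mathfrak{z}$ so that the lifted group-level identity differs from an exact equality only by the scalar $\e^{\i\theta}$. A secondary technical point is handling the (possibly non-identity) connected components of $\overline{\mathcal{W}^G}$; this is mild since $\pi_\lambda$ is a continuous homomorphism and the conclusion of condition 2 need only hold up to the scalar phase, which is compatible with passing to components.
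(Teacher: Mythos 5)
The paper does not actually prove this lemma: it is quoted from the cited reference and used as a black box in the appendix to derive \cref{prop:con}, so there is no in-paper proof to compare your argument against. On its own terms, your outline is the natural argument and is essentially correct: pass to the derived subalgebra $\mathfrak{s}=[\mathfrak{w},\mathfrak{w}]\subseteq\bigoplus_\lambda\mathfrak{su}(\mathcal{M}_\lambda)$, invoke the fact that a subalgebra of a direct sum of simple Lie algebras surjecting onto every factor is a direct sum of diagonal graphs (your Goursat step is standard: the radical of $\mathfrak{s}$ projects to a solvable ideal of each simple factor and hence vanishes; each simple ideal of $\mathfrak{s}$ hits each factor either trivially or isomorphically; and no factor can be hit by two distinct ideals since they commute), and then classify isomorphisms of $\mathfrak{su}(m)$ as inner or inner composed with complex conjugation. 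Note also that the printed conclusion of item 2 ("for all $V\in\mathcal{V}^G$") must be read as $V\in\mathcal{W}^G$, since the blocks of a generic element of $\mathcal{V}^G$ are independent; your proof correctly delivers the $\mathcal{W}^G$ version, which is what the appendix actually uses.

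Three points need tightening. First, the step "$\mathfrak{s}$ is proper" draws on the hypothesis $\mathcal{W}^G\not\supseteq\mathcal{SV}^G$, but your $\mathfrak{s}$ is built from the closure $\overline{\mathcal{W}^G}$; if $\mathcal{W}^G$ were a dense proper subgroup of a group containing $\mathcal{SV}^G$, the implication would fail. You should either assume $\mathcal{W}^G$ is closed (the setting in which the lemma is applied, e.g., $\mathcal{W}^G=\mathcal{V}^G_{n,k}$ is compact) or observe that in the non-closed case condition 1 already holds. Second, "compatible with passing to components" is not an argument for the non-identity components of $\overline{\mathcal{W}^G}$; the correct reasoning is that any $W\in\overline{\mathcal{W}^G}$ normalizes the identity component, so $\mathrm{Ad}_W$ preserves the diagonal ideal of $\mathfrak{s}$ block by block, and rigidity of automorphisms of $\mathfrak{su}(m)$ then forces $\pi_{\lambda_2}(W)=\e^{\i\alpha}S\,\sigma(\pi_{\lambda_1}(W))\,S^\dag$ for these $W$ as well. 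Third, a minor convention slip: for anti-Hermitian generators the outer automorphism is $X\mapsto S\overline{X}S^\dag$ (no minus sign); your formula is correct only in the Hermitian convention, though your group-level conclusion $U\mapsto SU^\ast S^\dag$ is right either way. With these repairs the argument is complete.
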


Case 1 corresponds to restrictions of type $\mathbf{II}$ and $\mathbf{III}$. In this case, it turns out that if $\mathcal{W}^G$ is a connected compact group then 
\be
\mathrm{Comm}\{\pi_{\lambda_1}(V)\otimes \pi_{\lambda_1}(V): V\in \mathcal{W}^G\}\neq \mathrm{Comm}\{\pi_{\lambda_1}(V)\otimes \pi_{\lambda_1}(V): V\in \mathcal{V}^G\}=\mathrm{Comm}\{U\otimes U: U\in \SU(\mathcal{M}_\lambda)\} \ .
\ee
In particular, the right-hand side is a 2-dimensional space, whereas the left-hand side has a larger dimension \cite{dynkin1957maximal, robert2015squares, zimboras2015symmetry}. This immediately implies that the uniform distribution over $\mathcal{W}^G$ is not a 2-design for the uniform distribution over $\mathcal{V}^G$.

Now suppose case 1 does not hold, and therefore by this lemma case 2 holds. This corresponds to type $\mathbf{IV}$ constraints. In this case
\begin{align}\label{hf}
  \mathrm{Comm}\{\pi_{\lambda_1}(V)\otimes \pi_{\lambda_2}(V): V\in \mathcal{W}^G\}&\neq \mathrm{Comm}\{\pi_{\lambda_1}(V)\otimes \pi_{\lambda_2}(V): V\in \mathcal{V}^G\} \nonumber\\ 
  &=\mathrm{Comm}\{U_1\otimes U_2: U_1\in \SU(\mathcal{M}_{\lambda_1}), U_2\in \SU(\mathcal{M}_{\lambda_2})\} \nonumber\\ 
  &=\{c \mathbb{I}_{\mathcal{M}_{\lambda_1}}\otimes \mathbb{I}_{\mathcal{M}_{\lambda_2}}: c\in\mathbb{C}\} \ .
\end{align}
In particular, while the dimension of the right-hand side of \cref{hf} is one, the dimension of the left-hand side is two, which follows from the facts that
\be
\dim(\mathrm{Comm}\{U\otimes U: U\in \SU(d)\})=\dim(\mathrm{Comm}\{U\otimes U^\ast: U\in \SU(d)\})=2\ .
\ee

\newpage

\section{Comparing distributions over a compact manifold and its closed submanifold}\label{app:design}

In this section, we prove a general result that the uniform distribution over a closed submanifold cannot be an $t$-design of uniform distribution of the compact ambient manifold that contains it for arbitrarily large $t$.

\lemDesign*

\begin{proof}
  Since $N$ is a compact manifold and $M \subsetneq N$ is a proper closed submanifold, the complement $N \setminus M$ is a nonempty open subset of $N$. Since the normalized measure $\mu_N$ is strictly positive, which means every open set has a positive measure on $N$, we have $\mu_N(M) = c$ and $\mu_N(N \setminus M) = 1-c$ with $c<1$. Furthermore, the measure $\mu_N$ being (inner) regular means that
  \begin{align}\label{eq:inner-regular}
    \mu_N(N \setminus M) = \sup\{\mu_N(F): F\subseteq N \setminus M \text{ and } F \text{ is compact and measurable}\}.
  \end{align}
  Therefore, there exists a nonempty measurable closed subset $N' \subset N \setminus M$, such that $\mu_N(N') = c' > 0$, where $0<c'\leq 1-c$. (Otherwise, if $c'=0$ for any such $N'$, then $\mu_N(N \setminus M) = 0$ from \cref{eq:inner-regular}.)

  Because $M$ and $N'$ are disjoint closed subsets of $N$, Urysohn's Lemma \cite{munkres2000topology} guarantees the existence of a continuous function $g: N \rightarrow [0,1]$ such that
  \begin{align}
    g|_M \equiv 0 \quad \text{and} \quad g|_{N'} \equiv 1.
  \end{align}

  Consider the embedding $f: N \hookrightarrow \mathbb{R}^n$. Let $\mathcal{P}$ denote the set of polynomial functions on $\mathbb{R}^n$, and define the algebra $A$ of functions on $N$ by
  \begin{align}
    A = \{ p \circ f \mid p \in \mathcal{P} \}.
  \end{align}
  Since $f$ is continuous, $A$ is a subalgebra of $C(N,\mathbb{R})$, the space of continuous real-valued functions on $N$. Moreover, $A$ satisfies the following two properties,
  \begin{enumerate}
    \item \textbf{Separates points:} For any two distinct points $x, y \in N$, since $f$ is injective, $f(x) \neq f(y)$. The polynomials on $\mathbb{R}^n$ separate points in $\mathbb{R}^n$, so there exists $p \in \mathcal{P}$ such that $p(f(x)) \neq p(f(y))$. Thus, $A$ separates points on $N$.

    \item \textbf{Contains constants:} The constant functions are included in $A$ (e.g., $p(x) = c$ for some $c \in \mathbb{R}$).
  \end{enumerate}

  By the Stone-Weierstrass theorem \cite{rudin1976principles}, $A$ is dense in $C(N,\mathbb{R})$ with respect to the supremum norm. Therefore, for any $\varepsilon > 0$, there exists a polynomial $p \in \mathcal{P}$ such that
  \begin{align}
    \| g - p \circ f \|_\infty = \sup_{x \in N} | g(x) - p(f(x)) | < \varepsilon.
  \end{align}
  
  Since $g|_M \equiv 0$, it follows that for all $x \in M$,
  \begin{align}
    | p(f(x)) | = | p(f(x)) - g(x) | < \varepsilon.
  \end{align}
  Thus, using the fact that $\mu_M$ is normalized, we have
  \begin{align}
    \Big| \int_M \d\mu_M\, p \circ f \Big| \leq \int_M \d\mu_M\, | p \circ f | < \varepsilon.
  \end{align}

  On the other hand, using $g(x) - p(f(x)) \leq |g(x) - p(f(x))| < \varepsilon$, we have
  \begin{align}
    \int_N \d\mu_N\, p \circ f \geq \int_N \d\mu_N\, g - \int_N \d\mu_N\, |g(x) - p(f(x))| > \int_{N'} \d\mu_N\, g - \int_N \d\mu_N\, \varepsilon = c' - \varepsilon,
  \end{align}
  where in the last step we use $g|_{N'} \equiv 1$.
  
  Now, compute the difference between the integrals:
  \begin{align}
    \int_N \d\mu_N\, p \circ f  - \int_M \d\mu_M\, p \circ f > c' - 2\varepsilon.
  \end{align}
  Choose $\varepsilon > 0$ small enough so that $\varepsilon < \frac{c'}{2}$. Then
  \begin{align}
    \int_N \d\mu_N\, p \circ f \neq \int_M \d\mu_M\, p \circ f .
  \end{align}
  This completes the proof.
\end{proof}

\end{document}